\documentclass[aps,prx,superscriptaddress,twocolumn,longbibliography,floatfix,reprint]{revtex4-2}

\usepackage{amsmath,amssymb,amsthm,mathtools}
\usepackage{booktabs}
\usepackage{graphicx}
\usepackage{needspace}
\ifdefined\pdfminorversion\pdfminorversion=7\else\fi
\usepackage[dvipsnames]{xcolor}
\usepackage{tikz}
\usepackage{quantikz}
\usetikzlibrary{arrows.meta,positioning,shapes.geometric,calc}
\usepackage{enumitem}
\usepackage{microtype}
\usepackage[colorlinks=true,linkcolor=blue, citecolor=blue]{hyperref}
\usepackage{cleveref}
\usepackage[normalem]{ulem}

\hypersetup{colorlinks=true,linkcolor=blue!65!black,citecolor=blue!65!black,urlcolor=blue!65!black,
pdftitle={Syndrome measurements enable fault-tolerant logical T gates},
pdfauthor={Kishor Bharti, Tobias Haug, and Andrew Tanggara}}

\newtheorem{theorem}{Theorem}
\newtheorem{corollary}[theorem]{Corollary}
\newtheorem{proposition}[theorem]{Proposition}
\newtheorem{lemma}[theorem]{Lemma}
\theoremstyle{definition}
\newtheorem{definition}[theorem]{Definition}

\newtheorem{remark}[theorem]{Remark}

\newcommand{\cC}{\mathcal C}
\newcommand{\cD}{\mathcal D}
\newcommand{\cS}{\mathcal S}
\newcommand{\cN}{\mathcal N}
\newcommand{\cP}{\mathcal P}
\newcommand{\bbR}{\mathbb R}
\newcommand{\bbZ}{\mathbb Z}
\newcommand{\syn}{\operatorname{syn}}
\newcommand{\wt}{\operatorname{wt}}
\newcommand{\supp}{\operatorname{supp}}
\newcommand{\rank}{\operatorname{rank}}
\newcommand{\qcode}[3]{[\![#1,#2,#3]\!]}
\newcommand{\ind}{\mathbf 1}
\newcommand{\PC}{P_{\cC}}

\newcommand{\RM}{\mathrm{RM}}

\newcommand{\SMLong}{Appendix}
\newcommand{\SM}{App.}

\newcommand{\CQT}{Centre for Quantum Technologies, National University of Singapore, 3 Science Drive 2, Singapore 117543\looseness=-1}
\newcommand{\NTU}{Nanyang Quantum Hub, School of Physical and Mathematical Sciences, Nanyang Technological University, Singapore 639673\looseness=-1}
\def\QUICS{QuICS, NIST/University of Maryland, College Park, Maryland 20742, USA}
\def\UMIACS{UMIACS, University of Maryland, College Park, Maryland 20742, USA}
\def\TII{Quantum Research Center, Technology Innovation Institute, Abu Dhabi, United Arab Emirates}

\begin{document}

\title{\texorpdfstring{Syndrome measurements enable deterministic fault-tolerant $T$ gates}{Syndrome measurements enable deterministic fault-tolerant T gates}}

\author{Kishor Bharti}
\altaffiliation{Currently at IonQ Inc.}
\affiliation{\QUICS}
\affiliation{\UMIACS}

\author{Tobias Haug}
\affiliation{\TII}

\author{Andrew Tanggara}
\affiliation{\CQT}
\affiliation{\NTU}

\begin{abstract}

Non-Clifford gates are essential for universal quantum computation, yet implementing them fault-tolerantly remains a central challenge for stabilizer codes. Here, we show how a syndrome degree of freedom can mediate a logical non-Clifford gate. Releasing one stabilizer check makes an additional logical qubit available within the encoded data block. Two Pauli rotations, followed by syndrome measurement and Clifford feed-forward, then implement a deterministic logical $T$ gate on every stabilizer code of distance at least two, in a suitable logical basis. During the ideal rotations, the state remains in an intermediate stabilizer code whose distance we determine exactly. For pure codes with a balanced factorization, this distance grows with the original code distance, whereas the maximum weight of the stabilizer generators bounds the intermediate distance from above. We realize the mechanism in two circuits that tolerate a single fault under local stochastic circuit noise: a fixed 22-qubit construction admitting recursive error suppression and a direct Golay-code gate protected by measuring a stabilizer check transported through the non-Clifford rotation, which can recover the unknown encoded state even after
rejection.
Serial implementations with ancilla reuse, reset, and flexible two-qubit connectivity require at most 33 and 32 physical qubits, respectively. These results establish a general mechanism for logical non-Clifford gates and demonstrate how syndrome measurements and recovery can protect the intermediate evolution of encoded information.

\end{abstract}

\maketitle

\makeatletter
\let\oldaddcontentsline\addcontentsline
\renewcommand{\addcontentsline}[3]{%
  \begingroup
  \let\addtocontents\@gobbletwo
  \oldaddcontentsline{#1}{#2}{#3}%
  \endgroup}
\makeatother

\section{Introduction}\label{sec:intro}

Universal fault-tolerant quantum computation %
requires a protected non-Clifford gate. %
Storing quantum information and performing Clifford operations fault-tolerantly are well understood~\cite{shor1996fault,gottesman1998theory,knill2005quantum,aliferis2006quantum}. However, the Eastin--Knill theorem forbids a universal transversal gate set%
~\cite{eastin2009restrictions,zeng2011transversality}, and related no-go results constrain the local implementation of logical non-Clifford gates%
~\cite{bravyi2013classification,pastawski2015fault,jochym2018disjointness}. 
Every fault-tolerant architecture must therefore carefully decide how to generate its non-Clifford resource and, just as importantly, how the encoded information is protected while that resource is being applied.

 Existing methods protect the non-Clifford operation in different ways. Magic-state methods %
 use state preparation: a noisy resource state is injected, distilled, or cultivated until it is good enough, and then consumed by gate teleportation using only Clifford gates, measurements, and feed-forward~\cite{gottesman1999demonstrating,zhou2000methodology,bravyi2005universal,bravyi2012magic,litinski2019magic,chamberland2019fault,gidney2024cultivation,hirano2025cultivation,chen2026cultivation,rosenfeld2025cultivation}. Code switching and gauge fixing %
 can move between related codes with complementary transversal gates~\cite{paetznick2013universal,anderson2014fault,bombin2015gauge,kubica2015universal}; code deformation and
lattice surgery admit the same description~\cite{introVuillot2019Gauge}.
Recent protocols and trapped-ion experiments have made switching between small color codes practical and quantified its cost relative to distillation~\cite{butt2024fault,heussen2024efficient,daguerre2025code,beverland2021cost,pogorelov2025experimental,introDaguerre2025Magic}. Concatenated schemes combine codes with complementary transversal gates~\cite{jochym2014using,nikahd2017nonuniform,chamberland2017overhead,chamberland2016thresholds}. Pieceable fault tolerance divides a nontransversal gate into pieces separated by rounds of correction~\cite{yoder2016universal}.
Hybrid and nonuniform concatenation schemes combine selective encoding with code switching or pieceable fault tolerance to reduce the data-qubit requirements for universal fault-tolerant computation~\cite{lin2020concatenated,nikahd2023low}.
In each case, the protection available \emph{during} a logical gate, and the way faults propagate between correction steps, determine whether the gate is fault tolerant. %

A more recent line of work uses stabilizer projection itself as part of the logical operations. 
Physical rotations followed by syndrome measurement can prepare encoded resource states directly inside the computational code~\cite{gavriel2023transversal,introLiu2026InSitu}, drive analog logical rotations in partially fault-tolerant architectures~\cite{akahoshi2024partially,toshio2025practical,choi2023arbitrary,akahoshi2024compilation,toshio2026mutation,ismail2026transversal}, induce prescribed diagonal logical channels~\cite{introHu2022Channels}, generate random logical unitaries and unitary designs~\cite{cheng2025emergent}, and realize syndrome-resolved logical rotations through adaptive sequences of disjoint Pauli factors
~\cite{yoshioka2026syndrome}. Continuous logical rotations have been analyzed in the surface code~\cite{huang2025robust} and demonstrated in the Steane code~\cite{huang2026continuous}. 

These developments establish syndrome measurement as a tool for 
logical operations. They motivate a complementary question: how can a syndrome degree of freedom mediate a prescribed non-Clifford gate, and what protection is required throughout the resulting evolution?

Here, we use a syndrome degree of freedom to mediate an exact logical $T$ gate on every stabilizer code of distance $d\geq2$, in a suitable logical basis. Temporarily releasing one stabilizer constraint makes an additional logical qubit available within the same physical block. Two Pauli rotations, with overlapping anticommuting factors, followed by syndrome measurement and Clifford feed-forward, implement the same intended gate in both ideal branches. Each rotation acts on at most $\lceil(d+1)/2\rceil$ qubits. We determine the intermediate code distance for the ideal rotations exactly and show that bounded-weight checks impose an upper bound %
on this protection. We further identify fault paths that escape correction despite a large intermediate distance, establishing the additional requirements for a protected implementation.

We meet these requirements in two complementary fault-tolerant constructions. 
{Selective concatenation of a Clifford image of the Steane code, using a Reed--Muller block and a two-qubit repetition block, together with outcome-dependent Pauli measurements, yields a protected logical $T$ gate on 22 data qubits.}
The gate uses fifteen physical $T$-type gates and admits recursive error suppression below a positive threshold. 
On the 23-qubit Golay code, we protect the physical four-qubit rotation within the original block through a
transported stabilizer check
via the circuit-gauge method of Dasu and Criger~\cite{dasu2026flagging}. 
Even if an attempt is rejected, we can restore the unknown encoded input so that the gate can be attempted again.
Both circuits tolerate any single circuit fault under local stochastic noise, with %
maximum simultaneous qubit counts %
of 33 and 32 qubits under serial ancilla reuse with reset and flexible two-qubit connectivity. 
 
Together, they demonstrate how protection tailored to the intermediate evolution enables non-Clifford gates on encoded data.
For the fixed 22-qubit code, the protected Clifford operations of \SM{}~\ref{app:combined22}, together with this logical $T$ gate, give a universal fault-tolerant gate set.

\begin{figure}[!b]
\centering
\resizebox{0.99\columnwidth}{!}{%

\begin{tikzpicture}[x=1cm,y=1cm,font=\small,
  dot/.style={circle,fill=black,inner sep=1.5pt},
  flow/.style={-{Stealth[length=2mm]},thick,draw=black!65},
  box/.style={draw=black!50,rounded corners=1pt,
              inner sep=4pt,align=center}]
\definecolor{acol}{RGB}{210,105,0}
\definecolor{bcol}{RGB}{0,125,30}
\definecolor{lcol}{RGB}{20,50,205}

\node at (1.5,5.45) {$\mathcal C=[\![7,1,3]\!]$};
\node at (6.15,5.45)
  {$\mathcal D=\mathcal C\oplus A\mathcal C=[\![7,2,2]\!]$};
\node[text=acol,anchor=west] at (-.15,4.85) {$A=X_1Z_2$};
\node[text=bcol,anchor=west] at (-.15,4.42) {$B=Y_1Z_4$};
\node[text=lcol,anchor=west] at (-.15,3.99)
  {$\bar Z=Z_1Z_2Z_4$};

\node[box] (rotations) at (3.72,4.28)
  {$\color{bcol}R_B(\pi/4)$\\[-1pt]
   $\color{acol}R_A(\pi/2)$};
\draw[flow,rounded corners=3pt]
  (2.35,3.62) |- (rotations.west);
\draw[flow,rounded corners=3pt]
  (rotations.east) -| (5.35,3.62);

\foreach \s in {0,4.65}{%
 \begin{scope}[xshift=\s cm]
  \coordinate (q1) at (1.5,3.3);
  \coordinate (q2) at (0,0.9);
  \coordinate (q3) at (3,0.9);
  \coordinate (q4) at (.75,2.1);
  \coordinate (q5) at (2.25,2.1);
  \coordinate (q6) at (1.5,.9);
  \coordinate (q7) at (1.5,1.7);
  \filldraw[fill=red!11,draw=black!40]
    (q1)--(q4)--(q7)--(q5)--cycle;
  \filldraw[fill=green!11,draw=black!40]
    (q4)--(q2)--(q6)--(q7)--cycle;
  \filldraw[fill=blue!12,draw=black!40]
    (q5)--(q7)--(q6)--(q3)--cycle;
  \node at (1.5,2.4) {$h_1$};
  \node at (.82,1.37) {$h_2$};
  \node at (2.15,1.37) {$h_3$};
  \foreach \j in {1,...,7}
    {\node[dot] at (q\j) {};}
    \node[font=\scriptsize,above=3pt] at (q1) {{$1$}};
    \node[font=\scriptsize,below left=2pt] at (q2) {{$2$}};
    \node[font=\scriptsize,below right=2pt] at (q3) {{$3$}};
    \node[font=\scriptsize,right=4pt,yshift=1pt] at (q4) {{$4$}};
    \node[font=\scriptsize,above right=2pt] at (q5) {{$5$}};
    \node[font=\scriptsize,above right=2pt] at (q6) {{$6$}};
    \node[font=\scriptsize,above=2pt] at (q7) {{$7$}};
 \end{scope}
}

\draw[lcol,thick] (0,.9)--(1.5,3.3);
\foreach \p in {(1.5,3.3),(0,.9)}
  \draw[acol,line width=1.1pt] \p circle[radius=.077];
\draw[bcol,line width=1.1pt] (1.5,3.3) circle[radius=.108];
\draw[bcol,line width=1.1pt] (.75,2.1) circle[radius=.077];
\node[text=acol,anchor=west] at (1.70,3.32) {$X_1$};
\node[text=bcol,anchor=east] at (1.25,3.26) {$Y_1$};
\node[text=bcol,anchor=east] at (.52,2.07) {$Z_4$};
\node[text=acol,anchor=east] at (-.08,1.14) {$Z_2$};

\draw[red,thick,dashed]
  (6.15,3.3)--(5.4,2.1)--(6.15,1.7)--(6.9,2.1)--cycle;
\node[text=red,align=center] at (7.01,3.99)
  {omitted check\\[-1pt]$h=Z_{h_1}$};

\node[box] (recovery) at (4.85,.05)
  {syndrome measurement\\[-1pt]and feed-forward};
\node[box,text=lcol] (logical) at (1.5,.05)
  {$R_{\bar Z}(\pi/4)$};

\draw[flow,rounded corners=3pt]
  (7.25,.65) |- (recovery.east);
\draw[flow] (recovery.west) -- (logical.east);
\draw[flow,rounded corners=3pt]
  (logical.north) -- (1.5,.65);
\end{tikzpicture}

}
\caption{The two-rotation logical-$T$ gadget for the Steane code. The factors $A=X_1Z_2$ and $B=Y_1Z_4$ satisfy $AB=i\bar Z$, with $\bar Z=Z_1Z_2Z_4$, and have the same nonzero syndrome. The ideal rotations remain in the intermediate code $\mathcal D=\mathcal C\oplus A\mathcal C$, a $[\![7,2,2]\!]$ code obtained by releasing one independent check, $h=Z_{h_1}$. The operators $h$ and $A$ act as logical $Z$ and $X$ on the extra encoded qubit. Syndrome measurement and feed-forward return the state to $\mathcal C$ with $R_{\bar Z}(\pi/4)$ applied. The retained generator basis is described in Sec.~\ref{sec:intermediate}.}
\label{fig:surfacecode}
\end{figure}

\section{Two rotations and one syndrome measurement}\label{sec:identity}

\subsection{Notation}\label{subsec:notation}

We consider an $\qcode{n}{k}{d}$ stabilizer code $\cC$ with stabilizer group $\cS=\langle g_1,\ldots,g_{r_{\cS}}\rangle$, specified by $r_{\cS}=n-k$ independent generators, and normalizer $\cN(\cS)$. The independent generating set fixes the syndrome coordinates; redundant checks, as in many quantum LDPC codes, may also be measured and change neither the stabilizer group nor the code. Up to an overall phase, an $n$-qubit Pauli is a tensor product $E=E_1\otimes\cdots\otimes E_n$ with $E_j\in\{I,X,Y,Z\}$. Its syndrome is the vector $\syn(E)=(s_1,\ldots,s_{r_{\cS}})\in\mathbb F_2^{r_{\cS}}$, where $s_j=1$ exactly when $E$ anticommutes with $g_j$; syndromes add in the vector space $\mathbb F_2^{r_{\cS}}$, $\syn(EF)=\syn(E)+\syn(F)$. Throughout, Pauli cosets, normalizers, and set differences are understood modulo global phase; when a phase matters, we choose an explicit Hermitian representative. For a Hermitian Pauli $Q$ we write $R_Q(\theta)=e^{-i\theta Q/2}=\cos(\theta/2)I-i\sin(\theta/2)Q$, so that $R_L(\pi/4)=e^{-i\pi L/8}$ is the $T$-type gate about $L$ and $R_L(\pi/2)$ is the $S$-type Clifford. ``$T$-type'' always refers to this angle about an arbitrary Pauli axis; a physical $T$-type gate is such a rotation on a single qubit. $\PC$ denotes the projector onto $\cC$, $\wt$ the weight and $\supp$ the support of a Pauli. Barred Pauli symbols denote encoded-qubit operators; $A,B,L,h$ denote explicit physical Pauli representatives, with their action on the relevant code space stated when used.

\subsection{Factoring a logical Pauli}

Let $L\in\cN(\cS)\setminus\cS$ be a Hermitian logical Pauli. A \emph{factorization of $L$} is a pair of Hermitian Paulis $A,B$ with
\begin{equation}
AB=iL.
\label{eq:factor}
\end{equation}
Since $L$ is Hermitian and $AB=iL$ forces $BA=(AB)^\dagger=-iL$, the factors anticommute, $\{A,B\}=0$. Since syndromes add, $\syn(A)+\syn(B)=\syn(AB)=\syn(L)=0$, and hence
\begin{equation}
s:=\syn(A)=\syn(B).
\label{eq:common-syn}
\end{equation}
Unless stated otherwise, $\syn$ denotes the syndrome with respect to the original stabilizer group $\cS$; we write $\syn_{\cS_0}$ for the retained-check syndrome of the intermediate code introduced in Sec.~\ref{sec:intermediate}. We always require $s\neq0$, i.e. $A\notin\cN(\cS)$. If $\wt(A)<d$ this is equivalent to $A\notin\cS$, which holds automatically in a pure code (a code all of whose nonidentity stabilizers have weight at least $d$).

\subsection{The identity}\label{subsec:identity}

To implement the $T$-type logical gate, we apply physical rotation $R_B(\pi/4)$ followed by $R_A(\pi/2)$, then measure the syndrome of the code.
The rotations $R_B(\pi/4)$ and $R_A(\pi/2)$ take the state out of the code space, but because $A$ and $B$ share the syndrome $s$ they connect only the zero-syndrome and syndrome-$s$ sectors. A subsequent syndrome measurement therefore has just two possible outcomes, each of which selects one of two logical rotations, and an outcome-dependent Clifford correction makes their logical actions identical.

\begin{theorem}[Deterministic $T$]\label{cor:det-t}
For every encoded state $|\psi\rangle\in\cC$, perform the following operations in order:
\begin{enumerate}[label=\arabic*.]
\item Apply $R_B(\pi/4)$, then $R_A(\pi/2)$.
\item Measure the stabilizer generators of $\cC$. Only outcomes $0$ and $s$ occur, with probabilities
    \begin{equation}
        \Pr(0)=\Pr(s)=\frac12.
    \label{eq:det-t-probabilities}
    \end{equation}
\item For outcome $0$, apply no correction. For outcome $s$, apply $A$, then $R_L(\pi/2)$.
\end{enumerate}

Every corrected branch implements
\begin{equation}
    |\psi\rangle\longmapsto R_L(\pi/4)|\psi\rangle,
\label{eq:det-t-action}
\end{equation}
up to a global phase.
\end{theorem}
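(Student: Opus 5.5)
The proof is a direct expansion of the product of the two rotations, sorted by syndrome sector. Write $c=\cos(\pi/8)$ and $\sigma=\sin(\pi/8)$, so that $R_B(\pi/4)=cI-i\sigma B$ and $R_A(\pi/2)=(I-iA)/\sqrt2$. Multiplying gives
\begin{equation*}
R_A(\pi/2)R_B(\pi/4)=\tfrac{1}{\sqrt2}\bigl(cI-i\sigma B-icA-\sigma AB\bigr).
\end{equation*}
Using the factorization $AB=iL$ of Eq.~\eqref{eq:factor}, the last term equals $-i\sigma L$. The operator therefore splits as
\begin{equation*}
\tfrac{1}{\sqrt2}(cI-i\sigma L)\;-\;\tfrac{i}{\sqrt2}(cA+\sigma B).
\end{equation*}
Since $A^2=I$, the factorization also gives $B=A(AB)=iAL$. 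Hence $cA+\sigma B=A(cI+i\sigma L)=A\,R_L(-\pi/4)$. The first group is exactly $\tfrac1{\sqrt2}R_L(\pi/4)$. The state after step 1 is therefore
\begin{equation*}
\tfrac1{\sqrt2}R_L(\pi/4)|\psi\rangle-\tfrac{i}{\sqrt2}A\,R_L(-\pi/4)|\psi\rangle .
\end{equation*}

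Next I identify the syndrome sectors. Since $L\in\cN(\cS)$, both $R_L(\pm\pi/4)$ map $\cC$ to itself, so the first term lies in the zero-syndrome space. For each generator, $g_jA=(-1)^{s_j}Ag_j$, with $s=\syn(A)$ as in Eq.~\eqref{eq:common-syn}. Thus $A$ maps $\cC$ into the joint eigenspace with syndrome $s$, and the second term lies in the syndrome-$s$ sector. Because $s\neq0$, these two eigenspaces of the generator measurement are orthogonal and distinct. Consequently only outcomes $0$ and $s$ can occur.

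The probabilities follow because $R_L(\pm\pi/4)$ and $A$ are unitary. Each term therefore has squared norm $\tfrac12\langle\psi|\psi\rangle=\tfrac12$, which gives Eq.~\eqref{eq:det-t-probabilities}.

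For the post-measurement states, consider each outcome in turn.
\begin{itemize}
\item Outcome $0$: the state is $R_L(\pi/4)|\psi\rangle$, as required.
\item Outcome $s$: the state is $A R_L(-\pi/4)|\psi\rangle$ up to a phase. Applying $A$ returns $R_L(-\pi/4)|\psi\rangle\in\cC$. Then $R_L(\pi/2)R_L(-\pi/4)=R_L(\pi/4)$, since rotations about the same axis compose additively.
\end{itemize}
Both corrected branches thus yield $R_L(\pi/4)|\psi\rangle$ up to global phase, which is Eq.~\eqref{eq:det-t-action}.

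The only step needing care is the phase bookkeeping, namely deriving $B=iAL$ from $AB=iL$ with the paper's Hermitian-representative conventions. The sign of $i$ there determines whether the syndrome-$s$ branch carries $R_L(-\pi/4)$, corrected by $R_L(+\pi/2)$, or the opposite sign. I will double-check this sign against the Steane example of Fig.~\ref{fig:surfacecode}, where $A=X_1Z_2$ and $B=Y_1Z_4$ give $AB=X_1Y_1\,Z_2Z_4=iZ_1Z_2Z_4$, consistent with $L=\bar Z$.
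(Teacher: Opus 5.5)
Your proposal is correct and follows the paper's argument: the paper expands $e^{-i\alpha A}e^{-i\beta B}$ for general angles into the branches $K_0$ and $K_s$ (Theorem~\ref{thm:two-rotations}) and then specializes to $\alpha=\pi/4$, $\beta=\pi/8$, obtaining $K_0=R_L(\pi/4)/\sqrt2$ and $AK_s=-\tfrac{i}{\sqrt2}R_L(-\pi/4)$; you instead compute directly at these fixed angles and reach the same branches. Your sign bookkeeping via $B=iAL$ follows from $AB=iL$ and $A^2=I$ alone, so the Steane cross-check is optional and not required for the proof.
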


\emph{Proof sketch.} Expanding the two rotations and using $AB=iL$,
\begin{equation}
R_A(\pi/2)R_B(\pi/4)=\tfrac{1}{\sqrt2}R_L(\pi/4)-\tfrac{i}{\sqrt2}\bigl(\cos\tfrac{\pi}{8}\,A+\sin\tfrac{\pi}{8}\,B\bigr).
\label{eq:sketch}
\end{equation}
The first term preserves $\cC$ while the second maps $\cC$ into the syndrome-$s$ sector, so the two terms are the orthogonal, equally likely branches of the syndrome measurement. Multiplying the second term by $A$ returns it to $\cC$ as $-\tfrac{i}{\sqrt2}R_L(-\pi/4)$, and $R_L(\pi/2)R_L(-\pi/4)=R_L(\pi/4)$. The complete proof of Theorem~\ref{cor:det-t} and the identity for arbitrary rotation angles are given in \SM{}~\ref{app:core-proofs}. %

Both $R_A(\pi/2)$ and the outcome-dependent logical Clifford $R_L(\pi/2)$ can be replaced by Pauli measurements, so that every executed correction is a Pauli. If $h\in\cS$ anticommutes with $A$, then $G=iAh$ is a Hermitian Pauli observable. Apply $R_B(\pi/4)$ and measure $G$, obtaining $y=\pm1$. If $y=-1$, measure $M=AhL$, obtaining $r=\pm1$.
Then measure $h$, obtaining $z=\pm1$, and apply $A$ if $z=-1$. Finally, apply $L$ if $y=-1$ and $rz=-1$. For $y=+1$ the state already carries the logical rotation $R_L(\pi/4)$. For $y=-1$ it carries $R_L(-\pi/4)$, and the $M$ measurement followed by the $h$ measurement changes the logical angle from $-\pi/4$ to $-\pi/4+rz\pi/2$: this is $\pi/4$ when $rz=1$, while for $rz=-1$ the final $L$ correction gives the same gate up to phase. \SM{}~\ref{app:pm} proves the identities and gives the complete outcome rule.

\subsection{Optimal factor weights}

The support of each factor in $AB=iL$ determines how many data qubits its rotation touches. 
When a multi-qubit Pauli rotation is compiled into CNOTs and a single-qubit rotation, as in Fig.~\ref{fig:parity-compilation}, a weight-$w$ factor uses $2(w-1)$ CNOTs~\cite{cowtan2020phase}. 
We therefore ask how small the two factors can be; a  balanced factorization 
minimizes the larger support. 

\begin{figure}[!htbp]
\centering
\resizebox{\columnwidth}{!}{%
\begin{quantikz}[row sep=0.24cm,column sep=0.17cm]
\lstick{$q_1$} & \gate{F_1} & \ctrl{2} & \qw & \qw & \qw & \ctrl{2} & \gate{F_1^\dagger} & \qw \\
\lstick{$q_2$} & \gate{F_2} & \qw & \ctrl{1} & \qw & \ctrl{1} & \qw & \gate{F_2^\dagger} & \qw \\
\lstick{$q_3$} & \gate{F_3} & \targ{} & \targ{} & \gate{R_Z(\theta)} & \targ{} & \targ{} & \gate{F_3^\dagger} & \qw
\end{quantikz}}
\caption{Physical compilation of $R_{P_1P_2P_3}(\theta)$, read from left to right. The Clifford $F_j$ obeys $F_jP_jF_j^\dagger=Z_j$: take $F_j$ as $H$, $HS^\dagger$, or $I$ for $P_j$ equal to $X$, $Y$, or $Z$, respectively.
Every wire is one physical qubit. Only the central single-qubit rotation is non-Clifford when $\theta=\pi/4$. With fault-free operations, the displayed circuit implements $R_{P_1P_2P_3}(\theta)$ exactly.  Fault tolerance requires the additional protection described in Secs.~\ref{sec:selective} and~\ref{sec:golay}.
}
\label{fig:parity-compilation}
\end{figure}

\begin{theorem}[Weights]\label{thm:weights}
    Let $L$ be a logical Pauli of weight $\wt(L)$ and $(A,B)$ any factorization $AB=iL$. 
    Then 
    \begin{equation}
        \wt(A)+\wt(B)\geq\wt(L)+1.
    \label{eq:factor-weight-bound}
    \end{equation}
    Equality is attained by the following construction. 
    Choose a shared qubit $q\in\supp(L)$ and partition the other support qubits into disjoint sets $S_A,S_B$, so $\supp(L)=\{q\}\sqcup S_A\sqcup S_B$. Choose single-qubit Paulis $A_q,B_q$ with $A_qB_q=iL_q$, where $L_j$ is the single-qubit Pauli factor of $L$ on qubit $j$, and set 
    \begin{equation}
    \begin{aligned}
    A&=A_q\otimes\bigotimes_{j\in S_A}L_j,\quad
    B&=B_q\otimes\bigotimes_{j\in S_B}L_j,
    \end{aligned}
    \label{eq:split-pair}
    \end{equation}
    with identities elsewhere.
    Thus $S_A=\supp(A)\setminus\{q\}$ and $S_B=\supp(B)\setminus\{q\}$.
    If $\wt(L)=d\geq2$, every  balanced factorization $|S_A|=\lceil(d-1)/2\rceil$, $|S_B|=\lfloor(d-1)/2\rfloor$ has $\syn(A)=\syn(B)\neq0$ and
    \begin{equation}
    \begin{aligned}
    \max\{\wt(A),\wt(B)\}&=\left\lceil\frac{d+1}{2}\right\rceil,\\
    \min\{\wt(A),\wt(B)\}&=\left\lfloor\frac{d+1}{2}\right\rfloor,
    \end{aligned}
    \label{eq:weights}
    \end{equation}
    and its maximum factor weight is optimal among all factorizations of any minimum-weight logical.
\end{theorem}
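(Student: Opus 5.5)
The plan has three parts: a qubit-by-qubit counting argument for the lower bound, a direct check of the construction, and a combination of the two for the balanced case.

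\emph{Lower bound.} We compare $A$, $B$ and $L$ one qubit at a time. Since $AB=iL$, on each qubit $j$ we have $A_jB_j\propto L_j$ up to phase.
\begin{itemize}
\item If $j\in\supp(L)$, then $A_j$ and $B_j$ cannot both be the identity, so $j\in\supp(A)\cup\supp(B)$.
\item If $j\notin\supp(L)$, then $A_j\propto B_j$.
\end{itemize}
Hence $\wt(A)+\wt(B)=|\supp A\cup\supp B|+|\supp A\cap\supp B|\geq\wt(L)+|\supp A\cap\supp B|$.

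Next I show that the intersection is nonempty. The factors anticommute, and two Paulis anticommute exactly when the number of qubits on which both act nontrivially with differing single-qubit Paulis is odd. So there is at least one qubit in $\supp A\cap\supp B$, which gives $\wt(A)+\wt(B)\geq\wt(L)+1$.

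\emph{Construction.} The split pair \eqref{eq:split-pair} can be verified directly.
\begin{itemize}
\item On $S_A$ and $S_B$ the product reproduces $L_j$ with no phase.
\item On $q$, the product is $A_qB_q=iL_q$. Such $A_q,B_q$ exist because $L_q\neq I$: take the other two Paulis, in the cyclic order that gives the factor $+i$.
\item Outside $\supp(L)$ both factors are the identity.
\end{itemize}
So $AB=iL$, and the weights are $|S_A|+1$ and $|S_B|+1$, which sum to $\wt(L)+1$.

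\emph{Balanced case, $\wt(L)=d\geq2$.} With $|S_A|=\lceil(d-1)/2\rceil$ and $|S_B|=\lfloor(d-1)/2\rfloor$, the weights are $\lceil(d+1)/2\rceil$ and $\lfloor(d+1)/2\rfloor$. This is \eqref{eq:weights}.

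For the syndrome claim, $\syn(A)=\syn(B)$ already follows from \eqref{eq:common-syn}, so it remains to show $s\neq0$. Suppose $s=0$, so that $A\in\cN(\cS)$. Because $\wt(A)\leq\lceil(d+1)/2\rceil<d$ for $d\geq3$, $A$ cannot be a nontrivial logical, so $A\in\cS$ up to phase. Then $B\propto AL$ acts as the same logical as $L$ while having weight $\lfloor(d+1)/2\rfloor<d$, which contradicts the distance. For $d=2$ both factors have weight $1$, and the same argument goes through as long as neither is in $\cS$. A weight-$1$ stabilizer $A$ would make $B$ a weight-$1$ nontrivial logical, again contradicting $d=2$. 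So in every case $A,B\notin\cN(\cS)$, i.e. $s\neq0$.

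\emph{Optimality.} Any factorization of a weight-$d$ logical satisfies $\max\{\wt(A),\wt(B)\}\geq(d+1)/2$ by the lower bound. Since weights are integers, this gives $\max\geq\lceil(d+1)/2\rceil$, which the balanced construction attains.

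The main obstacle is the nontriviality of $s$. It relies on ruling out the possibility that a factor is a nontrivial logical or a stabilizer. Both rely on both factor weights being below $d$, so that a stabilizer factor forces the other factor to be a low-weight representative of $L$. This needs care at small $d$, in particular $d=2$.
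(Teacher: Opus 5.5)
Your lower bound, construction, balanced weights and optimality step are correct and follow the paper's proof essentially verbatim: union/intersection counting, plus the odd-overlap criterion for anticommutation. The gap is in your proof that $s\neq0$ when $d=2$.

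There you assert that both factors have weight $1$. But the balanced choice $|S_A|=\lceil 1/2\rceil=1$, $|S_B|=\lfloor 1/2\rfloor=0$ gives $\wt(A)=2$ and $\wt(B)=1$. Two weight-one factors would in fact violate the bound $\wt(A)+\wt(B)\geq\wt(L)+1=3$ that you just proved. With the correct weights, your weight-based contradiction does not go through:
\begin{itemize}
\item If $s=0$, the factor $A$ has weight $2=d$ and may legitimately be a nontrivial logical.
\item If you instead deduce $B\in\cS$ from $\wt(B)=1<d$, the partner $A\propto LB$ is a weight-$2$ representative of $L$, which the distance does not forbid.
\end{itemize}
Your closing remark, that the argument needs both factor weights below $d$, names exactly the hypothesis that fails at $d=2$.

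The missing idea is that a stabilizer commutes with every element of $\cN(\cS)$. Apply it to the smaller factor, as the paper does. Since $\wt(B)=\lfloor(d+1)/2\rfloor<d$ for every $d\geq2$, $s=0$ would put $B\in\cN(\cS)$ below the distance, hence $B\in\cS$ up to phase. Because $A\in\cN(\cS)$ as well, this forces $[A,B]=0$, contradicting $\{A,B\}=0$. This one line handles all $d\geq2$ uniformly. Your separate $d\geq3$ argument via $A$ is valid but becomes unnecessary.
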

The proof, by counting supports, is given in \SM{}~\ref{app:core-proofs}.

\emph{State-injection limit.}\label{rem:injection}
 In the  factorization construction of Theorem~\ref{thm:weights}, take $S_B=\varnothing$. Then $B$ is a single-qubit Pauli and $A$ has weight $\wt(L)$, so Theorem~\ref{cor:det-t} transfers one physical $T$-type rotation to the logical qubit through syndrome projection and Clifford feed-forward. This induces the same ideal logical channel as state injection and gate teleportation~\cite{gottesman1999demonstrating,zhou2000methodology,bravyi2005universal,gavriel2023transversal}; the physical circuits differ because conventional injection prepares and consumes an ancillary magic state, whereas the unbalanced gadget rotates a data qubit and uses the complementary weight-$\wt(L)$ Clifford rotation. We say ``the same ideal logical channel'' deliberately: the noisy channels and resource costs of the two methods are not equivalent. The  balanced factorization used below trades this weight-one non-Clifford factor for a smaller largest rotation support.

 Explicit factorizations for small codes, the Golay code, and a BCH code, together with their intermediate-code parameters, are collected in \SM{}~\ref{app:intermediate-details}, Table~\ref{tab:split}; the Steane example of Fig.~\ref{fig:surfacecode} uses two weight-two rotations and has a $\qcode{7}{2}{2}$ intermediate code.

\section{The intermediate code}\label{sec:intermediate}

The rotations $R_B(\pi/4)$ and $R_A(\pi/2)$ are not logical operations of $\cC$, so after the rotations the state leaves the code space. Not all protection is lost, however: the checks that commute with both rotation axes can still be measured, and we will see that half of the stabilizer group indeed survives. To identify the surviving checks, write $A\cC=\{A|\psi\rangle:|\psi\rangle\in\cC\}$ for the syndrome-$s$ image of the code space. Expanding the first rotation, $R_B(\theta)|\psi\rangle=\cos(\theta/2)|\psi\rangle-i\sin(\theta/2)B|\psi\rangle$, where $B|\psi\rangle\in A\cC$ because $B=iAL$ and $L$ preserves $\cC$. Both $A$ and $B$ exchange $\cC$ and $A\cC$, so both rotations, including $R_A(\pi/2)$, preserve the direct sum $\cC\oplus A\cC$. This is the space preserved throughout the ideal rotations, and we now determine its stabilizers and distance.

Let $\cS_0=\{g\in\cS:[g,A]=0\}$ be the subgroup of stabilizers that commute with $A$. These are precisely the checks that remain fixed while the state belongs to $\cC\oplus A\cC$. Because $A$ has nonzero syndrome, at least one stabilizer anticommutes with it; since the commutation sign with $A$ is a homomorphism from $\cS$ to $\{\pm1\}$, exactly half of the stabilizers commute with $A$, and $\cS_0$ has $r_{\cS}-1$ independent generators. Choose any $h\in\cS\setminus\cS_0$ as the omitted check.
In Fig.~\ref{fig:surfacecode}, $A$ and $B$ anticommute with $X_{h_2}$ and $Z_{h_1}$. Replacing $X_{h_2}$ by $X_{h_2}Z_{h_1}$ gives a generator that commutes with both factors, leaving $h=Z_{h_1}$ as the single omitted check.

\begin{definition}\label{def:intermediate}
The \emph{intermediate code} of the factorization $AB=iL$ is the stabilizer code $\cD$ of $\cS_0$. %

\end{definition}

Removing one independent stabilizer increases the code-space dimension from $2^k$ to $2^{k+1}$. 
Thus, the intermediate code $\cD$ encodes $k+1$ qubits: the $k$ logical qubits of $\cC$, and one additional logical qubit.
This additional logical qubit, labeled $p$, has logical Pauli operators $\Bar{X}_p=A$ and $\Bar{Z}_p=h$, where $p$ labels the additional logical qubit encoded in the same $n$ physical qubits.

The operator $A$ exchanges the two syndrome sectors, while $h$ distinguishes them, so they act as logical $X$ and $Z$ on this additional qubit. 
For a logical Pauli representative $Q$ of $\cC$, define $\widetilde Q=Qh^{\epsilon_Q}$, where $\epsilon_Q=0$ if $[Q,A]=0$ and $\epsilon_Q=1$ otherwise. These representatives commute with $A$ and $h$, preserve the original logical commutation relations, and act as $Q$ on $\cC$.
For $|\Bar{\psi}\rangle\in\cC$, the identification
$\cD\simeq\cC\otimes\mathbb C^2$ is specified by
\begin{equation}
\begin{aligned}
    |\Bar{\psi}\rangle
    \longleftrightarrow|\Bar{\psi}\rangle\otimes|0\rangle_p,\quad
    A|\Bar{\psi}\rangle
    \longleftrightarrow|\Bar{\psi}\rangle\otimes|1\rangle_p.
\end{aligned}
\end{equation}
The representatives $\widetilde Q$ act on the first factor, while $A$ and $h$ act on qubit $p$.

Under this logical tensor-product identification, the two rotations act about $\bar L_h\otimes\bar Y_p$ and $\bar X_p$, where $\bar L_h$ is represented by $Lh$ and acts on the original $k$ logical qubits; the complete logical-operator description of $\cD$ is given in
\SM{}~\ref{app:sector}, Eq.~\eqref{eq:path-logical-basis}.
Now, let us define
\begin{align}
    \mu(s)&=\min\{\wt(E):E\in\cP_n,\ \syn(E)=s\},\label{eq:mu}\\
    \nu(A)&=\min\{\wt(g):g\in\cS\setminus\cS_0\}.\label{eq:nu}
\end{align}
Here $\mu(s)$ is the smallest weight of a Pauli with syndrome $s$, and $\nu(A)$ is the smallest weight of a stabilizer that anticommutes with $A$.
The distance of the intermediate code determines which Pauli errors the remaining checks can detect.

\begin{theorem}[Intermediate distance]\label{thm:intermediate-distance}
    $\cD$ is an $\qcode{n}{k+1}{\delta}$ code with
    \begin{equation}
        \delta=\min\{d,\mu(s),\nu(A)\}.
    \label{eq:delta}
    \end{equation}
\end{theorem}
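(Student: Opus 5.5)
The plan is to compute the distance of $\cD$ directly as the minimum weight of a nontrivial logical operator, $\delta=\min\{\wt(E):E\in\cN(\cS_0)\setminus\cS_0\}$, working modulo phases throughout. The parameters $n$ and $k+1$ were already established before the definition: $\cS_0$ has $r_{\cS}-1$ independent generators, so only the distance needs proof. The key step is to describe $\cN(\cS_0)$ explicitly in terms of syndromes with respect to the original group $\cS$.

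\textbf{Step 1 (normalizer of $\cS_0$).} For a Pauli $E$, the map $g\mapsto$ (commutation sign of $g$ with $E$) is a character $\chi_E$ of $\cS$, and $\chi_E$ is determined by $\syn(E)$. The condition $E\in\cN(\cS_0)$ says exactly that $\chi_E$ is trivial on $\cS_0$. Since $\cS_0$ has index $2$ in $\cS$, only two characters of $\cS$ vanish on $\cS_0$:
\begin{itemize}[label={}]
\item the trivial character, corresponding to syndrome $0$;
\item the character that is nontrivial exactly on $\cS\setminus\cS_0$, which is $\chi_A$, corresponding to syndrome $s$.
\end{itemize}
Hence
\begin{equation*}
\cN(\cS_0)=\cN(\cS)\sqcup\{E:\syn(E)=s\}.
\end{equation*}
A counting check confirms this. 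Symplectic duality gives $|\cN(\cS_0)|=4^n/2^{r_{\cS}-1}$ up to phases, which equals $2|\cN(\cS)|$, and the syndrome-$s$ coset $A\,\cN(\cS)$ has size $|\cN(\cS)|$. This step, identifying the commutant correctly, is the only genuinely structural point. I expect it to be the main place where care is needed, particularly in keeping the phase conventions consistent.

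\textbf{Step 2 (split the logical set).} Since $s\neq0$, no element of syndrome $s$ lies in $\cS_0\subseteq\cS$. Using $\cS_0\subseteq\cS\subseteq\cN(\cS)$, I decompose
\begin{equation*}
\cN(\cS_0)\setminus\cS_0=\bigl(\cN(\cS)\setminus\cS\bigr)\sqcup\bigl(\cS\setminus\cS_0\bigr)\sqcup\{E:\syn(E)=s\}.
\end{equation*}
The three pieces have minimum weights as follows:
\begin{itemize}[label={}]
\item $\cN(\cS)\setminus\cS$ has minimum weight $d$, by the definition of the distance of $\cC$ (nonempty since $k\geq1$);
\item $\cS\setminus\cS_0$ has minimum weight $\nu(A)$, by Eq.~\eqref{eq:nu};
\item the syndrome-$s$ coset has minimum weight $\mu(s)$, by Eq.~\eqref{eq:mu}.
\end{itemize}
Each set is nonempty: $h\in\cS\setminus\cS_0$ and $A$ has syndrome $s$.

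\textbf{Step 3 (conclude).} Taking the minimum over the three pieces gives $\delta=\min\{d,\mu(s),\nu(A)\}$, which is Eq.~\eqref{eq:delta}. I would close with a remark interpreting the pieces. The stabilizers anticommuting with $A$, such as $h$, become logical $\bar Z_p$ operators of $\cD$. Operators with syndrome $s$, such as $A$ and $B$, act as logical $\bar X_p$-type operators. Both classes therefore count toward the intermediate distance alongside the original logicals.
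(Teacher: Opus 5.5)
Your proposal is correct and follows essentially the same route as the paper: you identify $\cN(\cS_0)$ as the Paulis with syndrome in $\{0,s\}$, and you then split $\cN(\cS_0)\setminus\cS_0$ into the three disjoint nonempty pieces with minimum weights $d$, $\nu(A)$ and $\mu(s)$. Your index-two character argument is just a repackaging of the paper's decomposition $g=g_0h^b$, and the counting check is a harmless extra.
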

The proof is given in \SM{}~\ref{app:intermediate-details}; the terms are the minimum weights of the three kinds of nontrivial logical operators of $\cD$, namely the logicals of $\cC$, the stabilizers of $\cC$ that anticommute with $A$, and the Paulis with syndrome $s$. Equation~\eqref{eq:delta} shows two ways in which the intermediate distance can be smaller than $d$: a low-weight Pauli can have syndrome $s$, giving small $\mu(s)$, or a low-weight stabilizer can anticommute with $A$ and become a logical operator of $\cD$, giving small $\nu(A)$.

\subsection{Pure codes and bounded checks}

A distance-$d$ stabilizer code is \emph{pure} if every nonidentity stabilizer has weight at least $d$~\cite{calderbank1998gf4}. Purity controls $\nu(A)$, and for a  balanced factorization the %
 factor weight $\wt(B)$ 
controls $\mu(s)$, so the intermediate distance of a pure code grows with $d$. Bounded-weight checks give the opposite limit.
\begin{theorem}[Pure codes and bounded-weight checks]\label{thm:pure}\label{prop:bounded-checks}
    (a)
    Let $\cC$ be a pure stabilizer code of distance $d$, let $L$ be a minimum-weight logical, and let $(A,B)$ be the  balanced factorization of Theorem~\ref{thm:weights} with $\wt(B)=\lfloor(d+1)/2\rfloor\leq\wt(A)$. Then
    \begin{equation}
        \left\lfloor\frac d2\right\rfloor\leq\delta=\mu(s)\leq\left\lfloor\frac{d+1}{2}\right\rfloor.
    \label{eq:pure-window}
    \end{equation}
    (b) If $\cS$ is generated by stabilizer checks of weight at most $w$, then for every factorization with $s\neq0$ the intermediate distance satisfies $\delta\leq\nu(A)\leq w$.
\end{theorem}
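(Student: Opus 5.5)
The argument rests on Theorem~\ref{thm:intermediate-distance}, $\delta=\min\{d,\mu(s),\nu(A)\}$. In each part we bound the three terms separately.

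\emph{Part (a).} First, $\nu(A)\geq d$: every element of $\cS\setminus\cS_0$ is a nonidentity stabilizer, and purity gives it weight at least $d$. Hence $\delta=\min\{d,\mu(s)\}$, and it suffices to show $\lfloor d/2\rfloor\leq\mu(s)\leq\lfloor(d+1)/2\rfloor\leq d$.

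The upper bound is immediate. By Eq.~\eqref{eq:common-syn}, $B$ has syndrome $s$, and by Theorem~\ref{thm:weights} its weight is $\lfloor(d+1)/2\rfloor$. Since $d\geq2$ this is at most $d$, so $\delta=\mu(s)$ as soon as the lower bound holds.

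For the lower bound, let $E$ have $\syn(E)=s$ and $\wt(E)=\mu(s)$. Then $\syn(EA)=0$, so $EA\in\cN(\cS)$. We split into two cases.
\begin{itemize}
\item If $EA\notin\cS$, it is a nontrivial logical, so $d\leq\wt(EA)\leq\wt(E)+\wt(A)$. This gives $\mu(s)\geq d-\lceil(d+1)/2\rceil=\lfloor(d-1)/2\rfloor$.
\item If $EA\in\cS$, then $EA$ is a stabilizer. It is not the identity, because $E\neq A$ up to phase would otherwise fail only if $\wt(A)=\mu(s)$, and that case already satisfies the bound. Purity then gives $\wt(EA)\geq d$, so again $\mu(s)\geq\lfloor(d-1)/2\rfloor$.
\end{itemize}
Using $A$ alone loses one unit, so we refine by using $B$ as well. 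The product $EB$ also lies in $\cN(\cS)$, and so does $EA\cdot EB\sim AB\sim L$.

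The key step is to show that $EA$ and $EB$ cannot both lie in $\cS$. If they did, their product $AB=iL$ would be a stabilizer, contradicting $L\notin\cS$. So at least one of $EA$, $EB$ is a nontrivial logical. Purity further ensures that neither is a nonidentity stabilizer of weight below $d$. Therefore either $EA$ or $EB$ has weight at least $d$, giving
\begin{equation}
\mu(s)\geq d-\max\{\wt(A),\wt(B)\}.
\end{equation}

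This is still only $\lfloor(d-1)/2\rfloor$, so a sharper combination is needed. We consider $EA$ and $EB$ simultaneously. Since $\supp(A)\cup\supp(B)=\supp(L)$ and the two supports overlap only at $q$, we have
\begin{equation}
\wt(EA)+\wt(EB)\leq 2\wt(E)+\wt(A)+\wt(B)-2\,|\supp(E)\cap\ldots|.
\end{equation}
The difficulty is that each of $EA$ and $EB$ must individually be either a stabilizer (weight at least $d$ by purity, unless it is the identity) or a logical (weight at least $d$). Identity is excluded for both, since $E=A$ and $E=B$ cannot hold simultaneously. Hence both $EA$ and $EB$ have weight at least $d$. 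Adding the two inequalities and using $\wt(A)+\wt(B)=d+1$ from Theorem~\ref{thm:weights} gives $2d\leq 2\wt(E)+d+1$, so $\wt(E)\geq(d-1)/2$. By integrality, $\wt(E)\geq\lceil(d-1)/2\rceil=\lfloor d/2\rfloor$.

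The main obstacle is the identity case: one of $EA$ or $EB$ may be trivial, i.e.\ $E\propto A$ or $E\propto B$. In that case $\wt(E)\geq\wt(B)=\lfloor(d+1)/2\rfloor\geq\lfloor d/2\rfloor$, so the bound holds directly.

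\emph{Part (b).} Because $A$ has nonzero syndrome, some generator $g_j$ of weight at most $w$ anticommutes with $A$. Then $g_j\in\cS\setminus\cS_0$, so $\nu(A)\leq w$, and Theorem~\ref{thm:intermediate-distance} gives $\delta\leq\nu(A)\leq w$.
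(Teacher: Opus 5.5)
Your proof is correct and follows the paper's structure. Purity gives $\nu(A)\geq d$, the factor $B$ witnesses $\mu(s)\leq\lfloor(d+1)/2\rfloor$, and the lower bound comes from multiplying a syndrome-$s$ Pauli $E$ by a factor so that the product lies in $\cN(\cS)$. Part~(b) is the same as the paper's argument.

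The only difference is the lower-bound step. The paper multiplies by the lighter factor alone: for $E\not\propto B$, $EB$ is a nonidentity element of $\cN(\cS)$. Distance or purity then gives $\wt(EB)\geq d$, hence $\wt(E)\geq d-\wt(B)=d-\lfloor(d+1)/2\rfloor=\lfloor d/2\rfloor$ in one line, with no averaging or integrality argument. Your first attempt lost a unit only because you multiplied by the heavier factor $A$. Summing the bounds for $EA$ and $EB$ recovers $\lfloor d/2\rfloor$ for the balanced factorization, but in general it yields only $\lceil(d-1)/2\rceil$. Using $d-\min\{\wt(A),\wt(B)\}$ is never worse.

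For a clean write-up, delete the abandoned material:
\begin{itemize}
\item the unfinished display containing $|\supp(E)\cap\ldots|$;
\item the garbled identity-case sentence in your second bullet;
\item the true but unused observation that $EA$ and $EB$ cannot both be stabilizers;
\item the sentence ``Identity is excluded for both,'' which is false as stated and is only repaired by your later case split on $E\propto A$ or $E\propto B$.
\end{itemize}
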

{The proofs of parts~(a) and~(b) are given in \SM{}~\ref{app:proof-pure-bounds}.} 
Part~(b) gives an upper bound on the intermediate distance: 
for any family whose stabilizer generators have weight at most a constant $w$, including every quantum LDPC family, such as surface and color codes~\cite{kitaev2003fault,bombin2006topological}, 
this direct factorization has $\delta\leq w$ no matter how large $d$ becomes. %

The Golay and BCH examples (\SM{}~\ref{app:golay-details} and~\ref{app:bch}, respectively; see also Table~\ref{tab:split}) use higher-weight checks and still obey $\delta\leq w$.
Selective concatenation in Sec.~\ref{sec:selective} also obeys this bound for the concatenated code while increasing the protection of the qubits touched by the rotations. 
Purity is sufficient but not necessary for a  balanced factorization to satisfy Eq.~\eqref{eq:pure-window}.
For example, in \SM{}~\ref{app:purity-remark} we give an impure example with the same $\delta$ as the pure code from which it is obtained by appending a qubit in $|0\rangle$.
In \SM{}~\ref{app:bch}, Proposition~\ref{cor:bch}, we give a pure BCH family with encoding rate approaching one, dense checks, and intermediate distance three or four, after a balanced factorization.

\section{Why intermediate distance is not sufficient for fault tolerance}\label{sec:bare}

An intermediate distance $\delta$ guarantees detection of Pauli errors of weight below $\delta$ and correction of arbitrary errors on at most $\lfloor(\delta-1)/2\rfloor$ qubits at a fixed stage of the gate. 
Theorem~\ref{thm:pure} permits $\delta\geq3$ and even growth of $\delta$ with $d$, so it does not by itself show that the gadget fails to be one-fault tolerant. 
A single faulty location can nevertheless produce an error on the full support of a rotation axis, as Proposition~\ref{prop:bare-failure} and its compiled-circuit example below demonstrate.
Fault tolerance additionally requires correction after the errors propagate through all subsequent operations and measurements. Individual basis changes and CNOTs in the compiled rotation need not preserve $\cD$ during a compiled rotation; the circuit proof therefore controls propagated faults at the specified correction and measurement steps.
 When a Pauli error anticommutes with the axis of a later rotation, commuting the error through that rotation reverses the remaining angle, and an angle reversal can become an undetectable logical error.

We use a common circuit model for both protected constructions: single-qubit Clifford gates, CNOTs, $T$ or $T^\dagger$, preparations, measurements, resets and idles. Each elementary operation is a fault location, including waits during feed-forward. Classical processing is reliable. 
If $\mathcal F$ is the random set of faulty locations, then
$\Pr(R\subseteq\mathcal F)\leq p^{|R|}$ for every specified set $R$ (see Eq.~\eqref{eq:local-stochastic-budget}); a fault acts
arbitrarily on the participating qubits. %
Thus any specified $r$ elementary locations are all faulty with probability at most $p^r$; the faults need not be independent. Each fault may act arbitrarily on the qubits of that operation.
Figure~\ref{fig:parity-compilation} gives the standard decomposition of a multi-qubit Pauli rotation into single-qubit basis changes, CNOTs and a single-qubit rotation. 
The protected implementations use this elementary gate set with their specified circuits: a transversal non-Clifford layer for the fixed 22-qubit code, transversal CNOT layers with inner correction in Remark~\ref{pm:ft} and Corollary~\ref{pm:support}, and the restricted supports and ordering of Theorem~\ref{thm:golay-complete} for Golay.
The specified protected implementations preserve the verification, corrections and gate order in Theorems~\ref{thm:main22} and~\ref{thm:golay-complete}.
The Golay rotation can use the decomposition of Fig.~\ref{fig:parity-compilation} because all of its gates act within the four known rotation qubits, whose faults are covered by the monitor and recovery analysis.

The fixed 22-qubit code retains its transversal layer of fifteen physical $T^\dagger$ gates; replacing this layer by one physical $T$ gate and Clifford operations would lose the single-fault guarantee (Proposition~\ref{pm:oneT}).
We call a gadget one-fault tolerant if, whenever the number of incoming physical errors plus internal faulty locations is at most one, it implements the ideal logical gate up to a correctable output error or, for a rejected attempt, recovers the unknown input for a restart.

\begin{proposition}[Failure of the unprotected gadget]\label{prop:bare-failure}
The unprotected two-rotation gadget is not one-fault tolerant. A Pauli error $B$ occurring after $R_B(\pi/4)$ and before $R_A(\pi/2)$ 
leaves the syndrome outcome unchanged but produces the logical error $L$. 
Under a noise model in which this event has probability $\Theta(p)$, the {logical gate infidelity} is $\Theta(p)$ for any distances of $\cC$ and $\cD$.
\end{proposition}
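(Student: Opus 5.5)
The plan is to compute the erroneous circuit exactly and compare it with the ideal gadget of Theorem~\ref{cor:det-t}. With a Pauli error $B$ inserted between the rotations, the operator applied before syndrome measurement is $R_A(\pi/2)\,B\,R_B(\pi/4)$.

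Since $B$ anticommutes with $A$, we have $R_A(\theta)B=BR_A(-\theta)$. The faulty evolution is therefore
\[
B\,R_A(-\pi/2)R_B(\pi/4).
\]

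\textbf{Step 1: expand the rotations.} Expand $R_A(-\pi/2)R_B(\pi/4)$ exactly as in Eq.~\eqref{eq:sketch}, using $AB=iL$, but with the sign of $A$ flipped. This gives
\[
\tfrac1{\sqrt2}\bigl[\cos\tfrac\pi8-i\sin\tfrac\pi8\,B\bigr]+\tfrac{i}{\sqrt2}A\bigl[\cos\tfrac\pi8-i\sin\tfrac\pi8\,B\bigr].
\]
The term $\tfrac{i}{\sqrt2}\sin\tfrac\pi8\,AB\cdot(-i)$ contributes the $L$ piece.

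Collecting terms, the result is
\[
\tfrac1{\sqrt2}R_L(-\pi/4)+\tfrac{i}{\sqrt2}\bigl(\cos\tfrac\pi8\,A-\sin\tfrac\pi8\,B\bigr).
\]
The first summand preserves $\cC$. The second maps $\cC$ into the syndrome-$s$ sector, because $\syn(A)=\syn(B)=s$.

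\textbf{Step 2: left-multiply by the error $B$.} Because $B$ has syndrome $s$, the two sectors are swapped. The zero-syndrome branch becomes $\tfrac{i}{\sqrt2}B(\cos\tfrac\pi8 A-\sin\tfrac\pi8 B)$. Substituting $BA=-iL$ and $B^2=I$, this equals a phase times $R_L(\mp\pi/4)$. The syndrome-$s$ branch becomes $\tfrac1{\sqrt2}BR_L(-\pi/4)$.

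The outcome probabilities therefore stay at $\tfrac12$ each. In this sense the syndrome statistics are unchanged, so no check flags the error.

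\textbf{Step 3: apply the feed-forward rule and compare.}
\begin{itemize}
\item On outcome $0$, no correction is applied. The net logical action is $R_L(-\pi/4)$ up to phase, which equals $L\,R_L(\pi/4)$ up to phase, since $R_L(\pi/2)\propto(1-iL)$. After tracking the constants this yields $L\,R_L(\pi/4)$.
\item On outcome $s$, the rule applies $A$ and then $R_L(\pi/2)$. Then $AB=iL$ gives $R_L(\pi/2)\,iL\,R_L(-\pi/4)\propto L\,R_L(\pi/4)$.
\end{itemize}
Both branches thus deliver $L$ times the ideal output.

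I expect the main obstacle to be this phase and sign bookkeeping, in particular checking that both branches give exactly $L$ rather than, say, $R_L(-\pi/4)$ in one branch. I would guard against slips by verifying the result on the Steane example of Fig.~\ref{fig:surfacecode}.

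\textbf{Infidelity.} $L\in\cN(\cS)\setminus\cS$ is a nontrivial logical, so for a suitable input, such as an eigenstate of a logical Pauli that anticommutes with $L$, the output is orthogonal to the ideal output. Hence the event causes infidelity of order one. If the event has probability $\Theta(p)$ and other contributions are $O(p)$, the logical infidelity is $\Theta(p)$. Neither $d$ nor $\delta$ enters the argument, because the error leaves the syndrome outcome unchanged and acts as a logical after correction. This shows that the unprotected gadget is not one-fault tolerant.
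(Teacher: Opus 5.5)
Your strategy of commuting $B$ through $R_A(\pi/2)$, expanding, and reading off the two branches is sound, and your outcome-$s$ branch is correct. The outcome-$0$ branch, however, rests on false identities.

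In Step~2 the zero-syndrome branch is
\[
\tfrac{i}{\sqrt2}B\bigl(\cos\tfrac\pi8\,A-\sin\tfrac\pi8\,B\bigr)=\tfrac{1}{\sqrt2}\bigl(\cos\tfrac\pi8\,L-i\sin\tfrac\pi8\bigr)=\tfrac{1}{\sqrt2}\,L\,R_L(\pi/4)=-\tfrac{i}{\sqrt2}\,R_L(-3\pi/4).
\]
This is not a phase times $R_L(\mp\pi/4)$. In Step~3 you then say the outcome-$0$ action is $R_L(-\pi/4)$ and that this ``equals $L\,R_L(\pi/4)$ up to phase.'' That is false: $R_L(-\pi/4)$ and $L\,R_L(\pi/4)\propto R_L(-3\pi/4)$ differ by the non-scalar Clifford $R_L(\pi/2)$. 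The $R_L(-\pi/4)$ you quote is the first summand before left-multiplication by $B$, and $B$ moves it into the syndrome-$s$ sector. If it really were the outcome-$0$ action, the logical error would be $R_L(-\pi/2)$, not $L$. Your final claim that both branches deliver $L\,R_L(\pi/4)$ is correct, but for outcome $0$ it is asserted rather than derived. Replace that step with the direct computation displayed above.

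The paper avoids the branch bookkeeping altogether. Besides $R_A(\pi/2)B=BR_A(-\pi/2)$, it uses $R_A(-\pi/2)=iA\,R_A(\pi/2)$. Hence $R_A(\pi/2)\,B\,R_B(\pi/4)=iBA\,V=L\,V$ with $V=R_A(\pi/2)R_B(\pi/4)$. Since $L$ commutes with every stabilizer, each syndrome outcome occurs with its ideal probability and its post-measurement state is $L$ times the ideal one. The prescribed corrections, $A$ (which anticommutes with $L$) and $R_L(\pi/2)$ (which commutes with it), then leave $L$ up to a sign.

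The paper also notes that a single $Z_w$ fault immediately after the central $R_Z(\pi/4)$ of the compiled circuit in Fig.~\ref{fig:parity-compilation} propagates to $B$. The malignant event therefore arises from one elementary location, which justifies calling it a single fault. Your infidelity argument is otherwise fine.
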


The mechanism is the angle reversal just described: $B$ commutes with $R_B(\pi/4)$ and anticommutes with $A$, so $R_A(\pi/2)B=BR_A(-\pi/2)=iBA\,R_A(\pi/2)$, and $BA=-iL$. 

Equation~\eqref{eq:pauli-angle-propagation} gives this angle-reversal mechanism for an error occurring between arbitrary partial rotation angles.
The proof of Proposition~\ref{prop:bare-failure} is given in \SM{}~\ref{app:proof-bare-failure}.
\par

In the compiled circuit, a $Z$ fault immediately after the single-qubit $R_Z(\pi/4)$ propagates through the remaining CNOTs and basis changes to become $B$ (Fig.~\ref{fig:parity-compilation}). 
After the second rotation, this produces the undetected logical error $L$ of Proposition~\ref{prop:bare-failure}.
The gadget therefore needs a way to detect or filter faults on the support of the non-Clifford rotation before the Clifford rotation converts them into logical errors.
Section~\ref{sec:golay} gives such a filter, and the constructions of Sec.~\ref{sec:selective} avoid the exposed rotation altogether.

{In Proposition~\ref{prop:bare-failure}, the fault-free path and the $B$-fault path have the same syndrome but differ by the nontrivial logical $L$, so they cannot share any recovery channel and thus result in a logical error (see the outcome-conditioned Knill--Laflamme criterion in Theorem~\ref{thm:record-kl} of \SM{}~\ref{app:bare-details}).}
An intermediate-distance bound controls errors of limited weight at one stage, but a single circuit fault may produce an error of greater weight as it propagates.

\section{Fault-tolerant implementations by selective concatenation}\label{sec:selective}

Concatenation enhances protection by encoding a code within another code. While standard concatenation encodes each physical qubit of the original code, here we employ \emph{selective concatenation} to encode only the qubits requiring additional protection during the rotations.

In particular, we choose different inner codes according to the Pauli operators that require protection. The fixed-code construction below uses a $\qcode{15}{1}{3}$ Reed--Muller block, whose transversal $T$-type gate supplies the non-Clifford layer~\cite{steane1999quantum,anderson2014fault,paetznick2013universal}, and a two-qubit $Z$-basis repetition block; the remaining outer qubits are unencoded.

For the Steane code of Fig.~\ref{fig:surfacecode} the rotations touch outer qubits $1$, $2$, and $4$. 
\SM{}~\ref{app:selective} gives the selective concatenation and the exact parameters of the resulting  codes used before, during, and after the gate, and the protected primitives (verified-cat extraction and verified stabilizer-state teleportation~\cite{shor1996fault,divincenzo1996fault,steane1997active,gottesman1998theory,knill2005quantum,chamberland2018flag,chao2018quantum}) from which every circuit in this section is built. 
 The code used before and after the gate is specified for each construction; selective concatenation replaces chosen outer qubits by inner code blocks.
All results are single-fault guarantees under local stochastic circuit noise, in the sense of Sec.~\ref{sec:bare}.

\begin{figure}[!t]
\centering
\begin{tikzpicture}[x=\columnwidth,y=1cm,font=\small,
  box/.style={draw=black!60,rounded corners=2pt,align=center,
    minimum height=0.95cm,text width=0.265\columnwidth,
    minimum width=0.29\columnwidth,inner sep=2.5pt,
    fill=black!3},
  nc/.style={box,fill=orange!18,draw=orange!75!black},
  ec/.style={box,fill=blue!8,draw=blue!55!black},
  ms/.style={box,fill=green!12,draw=green!45!black},
  pf/.style={box,fill=red!8,draw=red!55!black},
  arr/.style={-{Stealth[length=2mm]},thick,draw=black!65}]
\node[ec] (ec0) at (0,0)
  { EC\\on $\cC'_{22}$};
\node[nc] (t) at (0.34,0)
  {$T^{\dagger\otimes15}$ on the\\Reed--Muller block};
\node[ec] (ec1) at (0.68,0)
  {retained EC on\\$\cD'_{22}$ (20 checks)};
\draw[arr] (ec0)--(t);
\draw[arr] (t)--(ec1);

\node[ms] (g) at (0,-2.2)
  {measure $\widehat G'$ (wt 6)\\$\to y=\pm1$};
\node[ms,dashed] (m) at (0.34,-2.2)
  {if $y=-1$: measure\\$\widehat M'$ (wt 10) $\to r$};
\node[ms] (h) at (0.68,-2.2)
  {measure $\widehat h'$ (wt 10)\\$\to z=\pm1$};
\draw[arr] (ec1.south) -- ++(0,-0.5)
  -- ++(-0.68,0) -- (g.north);
\draw[arr] (g)--(m);
\draw[arr] (m)--(h);

\node[pf] (pf) at (0,-4.5)
  {Pauli frame update:\\$\widehat A'$ if $z=-1$,\\
   $\widehat L'$ if $y=-1$ and $rz=-1$};
\node[ec] (ec2) at (0.34,-4.5)
  { EC\\on $\cC'_{22}$};
\node[box,fill=white] (out) at (0.68,-4.5)
  {$R_{Z_{\rm log}}(\pi/4)\,|\psi\rangle$\\on $\cC'_{22}$};
\draw[arr] (h.south) -- ++(0,-0.5)
  -- ++(-0.68,0) -- (pf.north);
\draw[arr] (pf)--(ec2);
\draw[arr] (ec2)--(out);
\end{tikzpicture}
\caption{
The protected fixed-code gate of Sec.~\ref{subsec:gate22} (\SM{}~\ref{app:combined22}). The 22-qubit data block consists of a $\qcode{15}{1}{3}$ Reed--Muller block at outer {qubit} $1$, a two-qubit $Z$-basis repetition block at {qubit} $2$, and five unencoded qubits.
Hats denote physical Pauli representatives after selective encoding, and primes denote conjugation of the outer operators by $C_B$; the measured representatives are given in Eq.~\eqref{c22:measurement-lifts}.
The only non-Clifford operation is one transversal $T$-type layer; everything after it is a protected Pauli measurement or a Pauli-frame update, and the dashed measurement is executed only when $y=-1$. Each logical measurement consists of three  measurements using verified cat states interleaved with error correction (EC) using only the stabilizer checks of the intermediate code 
and a majority vote. With at most ten cat qubits and one verification ancilla, at most 33 qubits are occupied in the serial schedule.
}
\label{fig:gate22}
\end{figure}

\subsection{A fixed-code gate on 22 data qubits}\label{subsec:gate22}

Our most compact construction {encodes} outer {qubit} $1$ of the Steane code, which is shared by both factors, with a Reed--Muller block, {encodes} {qubit} $2$ with a two-qubit $Z$-basis repetition code, and leaves {qubits} $3$, $4$, $5$, $6$, and $7$ unencoded.
The qubit numbering is shown in Fig.~\ref{fig:surfacecode} (see also Eq.~\eqref{pm:steane}).
The repetition check raises the distance of the retained intermediate code $\cD'_{22}$ to three. The fixed code $\cC'_{22}$ before and after the gate, and the codes obtained by fixing the measured logical operators, also have distance three, even though the repetition code alone has distance one (\SM{}~\ref{app:rep}). Two further choices remove the second Clifford rotation and the conditional logical Clifford rotation. %
First, we fix the encoding once and for all as the Clifford image $C_B\cC_0$ of the outer code in which the non-Clifford factor is $B'=Z_1$, so that the non-Clifford rotation is a single transversal $T$-type layer on the Reed--Muller block and no CNOTs are needed for that rotation. Second, we replace the  second rotation $R_A(\pi/2)$ and the conditional logical Clifford by the adaptive Pauli measurements of Sec.~\ref{subsec:identity}, so that every executed correction is a Pauli. Figure~\ref{fig:gate22} shows the resulting circuit; 
{the signed operators and explicit stabilizer generators are given in \SM{}~\ref{app:c22-signed}, and the serial cat-state schedule is specified in \SM{}~\ref{app:c22-correction}.}

\begin{theorem}[Protected fixed-code gate]\label{thm:main22}
\label{c22:ft}\label{c22:peak}
The circuit of Fig.~\ref{fig:gate22} implements the logical $T$ gate $R_{Z_{\rm log}}(\pi/4)$ on the fixed $\qcode{22}{1}{3}$ code $\cC'_{22}$, up to at most one correctable output error, whenever the number of incoming physical errors plus faulty circuit locations is at most one. It uses fifteen physical $T$-type gates, no CNOT layers for the non-Clifford rotation, and only Pauli measurements and Pauli corrections after the transversal layer. 

In a serial schedule with reset and flexible two-qubit connectivity it occupies at most $33$ qubits, including verified ancillas, and the same bound suffices to prepare a verified logical $|+\rangle$ and hence an encoded $T$ state.
\end{theorem}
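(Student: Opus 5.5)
The proof splits into correctness in the fault-free case, a case analysis over every single fault or incoming error, and a qubit count for the serial schedule.

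\emph{Ideal action.} Conjugate the outer Steane construction by $C_B$, so the non-Clifford factor becomes $B'=Z_1$ and $R_{B'}(\pi/4)=R_{Z_1}(\pi/4)$. Under selective encoding this becomes the logical $Z$-rotation of the $\qcode{15}{1}{3}$ Reed--Muller block. Its transversal realization is the layer $T^{\dagger\otimes15}$, which implements the logical $T$-type gate up to the known orientation convention. Concatenation preserves the Pauli algebra, so the identities of Sec.~\ref{subsec:identity} carry over. These are Theorem~\ref{cor:det-t} in its adaptive-measurement form, with $G$, $M=AhL$, $h$ and the Pauli frame rule proven in \SM{}~\ref{app:pm}. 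They apply verbatim to the lifted operators $\widehat G'$, $\widehat M'$, $\widehat h'$, $\widehat A'$, $\widehat L'$ of Eq.~\eqref{c22:measurement-lifts}. Hence the fault-free circuit of Fig.~\ref{fig:gate22} outputs $R_{Z_{\rm log}}(\pi/4)|\psi\rangle$ on $\cC'_{22}$ with only Pauli measurements and Pauli corrections.

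\emph{Distances.} Next I will verify that every code involved has distance three by enumerating low-weight Paulis against the explicit generators of \SM{}~\ref{app:c22-signed}. The codes are $\cC'_{22}$, the retained code $\cD'_{22}$ with its 20 checks, and the codes obtained by fixing each measured logical. I will use Theorem~\ref{thm:intermediate-distance} with the inner codes: the repetition check removes the weight-one $Z$ on qubit $2$ that would otherwise realize $\mu(s)$ or $\nu(A)$, and the Reed--Muller block covers qubit $1$. This gives the required $[\![22,1,3]\!]$ and $\delta=3$ statements (\SM{}~\ref{app:rep}).

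\emph{Single-fault analysis.} This step is the core. I will proceed location by location.
\begin{enumerate}[label=(\roman*)]
\item An incoming error or a fault in the first EC leaves at most a weight-one error on $\cC'_{22}$. Conjugating it through the transversal $T^\dagger$ layer produces an error supported on one Reed--Muller qubit, a Clifford multiple of a Pauli. The retained EC on $\cD'_{22}$ corrects it because $\delta=3$.
\item A fault in the transversal layer is likewise a single-qubit error, so no CNOT spreading and no angle reversal of the type in Proposition~\ref{prop:bare-failure} occur. The rotation is transversal, so the $B$-type error of weight $\wt(B)$ cannot arise from one location.
\item A fault in a logical measurement is handled by the repetition. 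Each of $\widehat G'$, $\widehat M'$, $\widehat h'$ is measured three times with verified cat states, interleaved with retained EC, followed by a majority vote. The standard Shor-style argument then gives a correct outcome and at most a weight-one residual data error.
\item A fault in the feed-forward wait or the final EC is handled by distance three of $\cC'_{22}$.
\end{enumerate}

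I expect the main obstacle to be (iii) combined with the adaptive branch structure. A wrong majority outcome is excluded, but a data error injected between the repeated measurements must not flip the logical angle. It could do so by anticommuting with $\widehat M'$ or $\widehat h'$ before the later measurement. I would first check that every weight-one error either commutes with all three measured operators modulo retained checks, or is detected and fixed by the interleaved EC before the next repetition. This uses that the stabilizer-fixed codes have distance three.

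\emph{Qubit count.} Finally, the 33-qubit bound follows by counting. There are 22 data qubits, plus at most ten cat qubits for the largest measured operator (weight 10), plus one verification ancilla. EC steps reuse the same ancillas after reset. Preparing logical $|+\rangle$ uses the same measurement machinery, so the same bound applies, and the $T$ state follows by applying the gate.
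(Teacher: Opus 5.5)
Your decomposition matches the paper's proof. The ideal action comes from Lemma~\ref{rep:rm} together with Theorem~\ref{pm:adaptive} applied to the lifted operators. The single-fault argument goes location by location and rests on Lemmas~\ref{c22:cat} and~\ref{pm:measurement}. The space count is $22+10+1=33$.

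\textbf{Distance mechanism.} This step is wrong as written. In the $C_B$ frame, the retained outer code has exactly two nontrivial logicals of weight at most two, $Z_1$ and $X_2Y_4$ (see the syndrome table in the proof of Proposition~\ref{c22:distances}). Here $Z_1=B'$ itself, so the outer intermediate distance in this frame is one.
\begin{itemize}
\item The Reed--Muller block lifts $Z_1$ to weight three through $\lambda_1(Z)=3$.
\item The repetition block lifts $X_2Y_4$ to $X_{2a}X_{2b}Y_4$, of weight three, because its $X$ and $Y$ logicals have weight two.
\item A $Z$-basis repetition code has the weight-one logical $\widehat Z_2=Z_{2a}$. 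It therefore cannot remove a weight-one $Z$ on qubit~2; it protects the $X$ and $Y$ components there (Remark~\ref{rep:minimal}).
\end{itemize}
The right tool is the weighted distance formula of Proposition~\ref{rep:weights}. Theorem~\ref{thm:intermediate-distance} is not the right tool, because its $\mu$ and $\nu$ refer to an unconcatenated code. Exhaustive enumeration would still give the correct parameters, but the mechanism you propose to verify is not the one at work.

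\textbf{The obstacle in (iii).} This needs no operator-by-operator check. Each raw extraction is followed by clean correction with the twenty retained checks, and $\cD'_{22}$ has distance three. So any single data error is removed without changing the $\cD'_{22}$ logical state, including the auxiliary qubit. This holds whether or not the error anticommutes with $\widehat G'$, $\widehat M'$ or $\widehat h'$. Such an error can flip at most one raw record, compared in the corrected Pauli frame, and never the logical angle. Only the distance of $\cD'_{22}$ is used in this step, not the distances of the sign-fixed codes.

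\textbf{Missing cases.} You should add two cases.
\begin{itemize}
\item A faulty retained correction in step~3 leaves one residual error. The leading correction inside the protected $\widehat G'$ measurement removes it.
\item A single fault cannot trigger a located failure, meaning both cat candidates rejected or no agreeing pair of syndrome records within four rounds. This follows from statement~4 of Lemma~\ref{c22:cat}, and you need it for the gate to complete at all.
\end{itemize}
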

The complete proof of Theorem~\ref{thm:main22}, including the 33-qubit bound and state preparation, is given in \SM{}~\ref{app:gate_single_fault_proof}.

The  code used before and after the gate, the intermediate code, and every code obtained by fixing one of the measured logical operators have distance three (Proposition~\ref{c22:distances}).
A finite set of protected Clifford gates, preparations, measurements and error-correction circuits gives arbitrarily small logical error below a positive, implementation-dependent threshold by concatenation~\cite{aliferis2006quantum}.
The proof groups each logical operation with its error-correction steps and controls failures recursively using the local child-failure rule of \SM{}~\ref{app:ft-budget} (Corollary~\ref{c22:recursion}). At level $\ell$, there are $22^\ell$ data qubits and at most $15^\ell$ physical $T$-type gates before ancilla and Clifford overhead. A logical failure requires at least $2^\ell$ faults, whereas the distance of the code before and after the gate is at least $3^\ell$.
At the first encoding level, 22 data qubits and at most 11 ancillas are used. Additional ancillas are required for the encoded operations at higher levels. 

Further protected selective-concatenation constructions are given in Remark~\ref{pm:ft} and Corollary~\ref{pm:support}; the paragraph following the proof of Corollary~\ref{cor:49} identifies its 49-qubit  code used before and after the gate with the earlier nonuniform construction~\cite{nikahd2017nonuniform,chamberland2017overhead}.

\begin{table}[!htbp]
\centering
\footnotesize
\setlength{\tabcolsep}{3pt}
\begin{tabular}{p{0.55\columnwidth}rrr}
\toprule
Construction & Data & Max. & $T$-gates\\
\midrule
Fixed code, $\qcode{22}{1}{3}$ (Sec.~\ref{subsec:gate22}, App.~\ref{app:combined22}) & 22 & 33 & 15\\
Protected Golay, $\qcode{23}{1}{7}$ (Sec.~\ref{sec:golay}, App.~\ref{app:golay-completion}) & 23 & 32 & 22\\
\bottomrule
\end{tabular}
\caption{Resource summary of the two protected circuits: data qubits (Data), maximum number of simultaneously occupied qubits including verified ancillas (Max.), and single-qubit $T$-type gates per attempt for the fully compiled circuit. 
Both rows satisfy the single-fault guarantee, including the incoming-error budget stated in Sec.~\ref{sec:bare}. The Golay count consists of one $T$-type gate for the initial rotation and 21 for its protected check measurements. The qubit counts assume serial preparation, reset, and flexible two-qubit connectivity.
}

\label{tab:main-resources}
\end{table}

\section{Protected rotations on the Golay code}\label{sec:golay}

An alternative to further encoding is to detect faults using {the retained stabilizer checks and the transported check $G_B=U_BhU_B^\dagger$, where $U_B=R_B(\pi/4)$ and $h$ is the omitted stabilizer}. 
We implement this construction on the Golay code. %
We keep one pure distance-$7$ $\qcode{23}{1}{7}$ Golay block~\cite{paetznick2012fault} and insert check measurements after the rotation;  we call the circuit for these measurements a \emph{monitor}; its check measurements follow the initial rotation, while the two cat states used to measure $G_B$ are prepared and verified beforehand. The rotation is decomposed into the elementary gates of Fig.~\ref{fig:parity-compilation}.
 
The protected gate tolerates any single circuit fault and uses at most 32 qubits and 22 single-qubit $T$-type gates per attempt in a fully compiled serial schedule (Theorem~\ref{thm:golay-complete}). 

For a  balanced factorization, we have $\mu(s)=4$ and $\nu(A)=8$.
So, the intermediate code is $\qcode{23}{2}{4}$.
The explicit Paulis $L,A,B$, the omitted check $h$, the generators, and the syndrome convention are given in \SM{}~\ref{app:golay-details}, Eq.~\eqref{eq:golay-pair}.

\subsection{The transported check}

Before the rotation, the omitted stabilizer $h$ has eigenvalue $+1$. Conjugating it by the rotation gives an observable with the same known eigenvalue on the rotated state. We call this observable the \emph{transported check} $G_B$. Together with the retained stabilizer checks, measuring $G_B$ tests membership in the rotated code space.

\begin{lemma}[Transported check]\label{lem:transported}
Let $U_B=R_B(\pi/4)$. The operator $G_B=U_BhU_B^\dagger=(h-iBh)/\sqrt2$ is Hermitian, squares to the identity, and satisfies $G_BU_B\PC=U_B\PC$: the state after the non-Clifford rotation is a $+1$ eigenstate of $G_B$, while the retained checks $\cS_0$ are unchanged.
\end{lemma}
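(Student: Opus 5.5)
The plan is to compute $U_BhU_B^\dagger$ directly from the expansion $R_B(\pi/4)=\cos(\pi/8)I-i\sin(\pi/8)B$. First we need the commutation relation between $h$ and $B$. By construction $h\in\cS\setminus\cS_0$ anticommutes with $A$. Since $B=iAL$ (from $AB=iL$ and $A^2=I$) and $L\in\cN(\cS)$ commutes with $h$, it follows that $h$ also anticommutes with $B$. This is consistent with $\syn(A)=\syn(B)=s$ in Eq.~\eqref{eq:common-syn}.

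Using $\{h,B\}=0$, we move $h$ through the rotation: $hR_B(\theta)=R_B(-\theta)h$. Hence
\[
G_B=R_B(\pi/4)\,h\,R_B(-\pi/4)=R_B(\pi/2)\,h .
\]
Expanding $R_B(\pi/2)=(I-iB)/\sqrt2$ gives $G_B=(h-iBh)/\sqrt2$, which is the claimed form.

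Hermiticity and $G_B^2=I$ hold automatically, because $G_B$ is a unitary conjugate of the Hermitian involution $h$. As a direct check, $(iBh)^\dagger=-ihB=iBh$, so $iBh$ is a Hermitian Pauli that anticommutes with $h$. Then the cross terms in $G_B^2$ cancel and $G_B^2=(I+I)/2=I$.

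For the eigenstate property, note that $h\PC=\PC$ because $h\in\cS$. Therefore
\[
G_BU_B\PC=U_BhU_B^\dagger U_B\PC=U_Bh\PC=U_B\PC .
\]

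For the last claim, every $g\in\cS_0$ commutes with $A$ and with $L$, hence with $B=iAL$, and so commutes with $U_B$. Thus $U_BgU_B^\dagger=g$, and $gU_B\PC=U_Bg\PC=U_B\PC$: the retained checks are unchanged and still stabilize the rotated state.

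The only real obstacle is the sign and phase bookkeeping in the closed form: deciding whether the result is $h-iBh$ or $h+iBh$ under the convention $R_Q(\theta)=e^{-i\theta Q/2}$, and confirming that $iBh$ rather than $Bh$ is the Hermitian representative. Moving $h$ through the rotation via $hR_B(\theta)=R_B(-\theta)h$ settles this without expanding all four terms of the product. Everything else follows from the anticommutation of $h$ with $B$.
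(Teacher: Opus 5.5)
Your proposal is correct and follows essentially the same route as the paper, which also writes $G_B=U_B^2h=R_B(\pi/2)h$ using $\{B,h\}=0$, gets Hermiticity and $G_B^2=I$ by unitary conjugation of $h$, uses $h\PC=\PC$ for the eigenstate property, and notes that every $g\in\cS_0$ commutes with $B$ and hence with $U_B$. The only difference is that you derive $\{h,B\}=0$ from $\{h,A\}=0$ and $[h,L]=0$ via $B=iAL$, whereas the paper builds this anticommutation into its setup.
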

{See the \hyperref[app:proof-transported]{proof of Lemma~\ref{lem:transported}} in \SM{}~\ref{app:golay-extension}.} %
It is a Hermitian Clifford operator that is not a Pauli: the conjugate of a Pauli by the $\pi/4$ Pauli rotation $U_B=R_B(\pi/4)$, so measuring it is equivalent to measuring $h$ in the rotated frame.

\begin{lemma}[Exact angle filter]\label{lem:angle-filter}
Let $\Pi_+=(I+G_B)/2$. For every $\varepsilon$ and $|\psi\rangle\in\cC$,
\begin{equation}
\Pi_+R_B(\pi/4+\varepsilon)|\psi\rangle=\cos(\varepsilon/2)R_B(\pi/4)|\psi\rangle.
\label{eq:angle-filter}
\end{equation}
When the measured value of $G_B$ is $+1$, the state is projected onto $R_B(\pi/4)|\psi\rangle$, with probability $\cos^2(\varepsilon/2)$.
\end{lemma}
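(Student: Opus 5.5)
We reduce the statement to a two-dimensional computation inside $\cC\oplus A\cC$. First we write $R_B(\pi/4+\varepsilon)=R_B(\pi/4)R_B(\varepsilon)$, because rotations about the same axis compose additively. Next we conjugate the projector through the fixed rotation:
\[
\Pi_+=U_B\,\tfrac{1}{2}(I+h)\,U_B^\dagger .
\]
This follows from Lemma~\ref{lem:transported}, since $G_B=U_BhU_B^\dagger$. Together these give
\[
\Pi_+R_B(\pi/4+\varepsilon)|\psi\rangle=U_B\,\tfrac{1}{2}(I+h)\,R_B(\varepsilon)|\psi\rangle .
\]
The whole statement now reduces to evaluating $\tfrac{1}{2}(I+h)R_B(\varepsilon)|\psi\rangle$ for $|\psi\rangle\in\cC$.

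To do this, expand
\[
R_B(\varepsilon)|\psi\rangle=\cos(\varepsilon/2)|\psi\rangle-i\sin(\varepsilon/2)B|\psi\rangle .
\]
Since $h\in\cS$, the first term satisfies $h|\psi\rangle=|\psi\rangle$. For the second term we need $\{h,B\}=0$. Because $B=iAL$ and $L$ commutes with every stabilizer, $B$ anticommutes with $h$ exactly when $A$ does, and $A$ does because $h\in\cS\setminus\cS_0$. It follows that $hB|\psi\rangle=-Bh|\psi\rangle=-B|\psi\rangle$. The projector $\tfrac{1}{2}(I+h)$ therefore keeps the first term and annihilates the second, leaving $\cos(\varepsilon/2)|\psi\rangle$. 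Applying $U_B$ gives Eq.~\eqref{eq:angle-filter}.

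For the probability statement, take the squared norm of the projected vector: it is $\cos^2(\varepsilon/2)\,\|R_B(\pi/4)|\psi\rangle\|^2=\cos^2(\varepsilon/2)$, because $U_B$ is unitary and $|\psi\rangle$ is normalized. Dividing the projected vector by its norm shows that the post-measurement state is exactly $R_B(\pi/4)|\psi\rangle$, with no residual angle error.

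The only delicate point is justifying $\{h,B\}=0$ from the common syndrome $s$. This relation is also what makes $G_B$ Hermitian with $G_B^2=I$ in Lemma~\ref{lem:transported}, so it is consistent with the paper's earlier setup. Everything else is routine algebra.
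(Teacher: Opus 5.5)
Your proof is correct and follows the paper's argument exactly: rewrite $\Pi_+=U_B\frac{1}{2}(I+h)U_B^\dagger$, split $R_B(\pi/4+\varepsilon)=U_BR_B(\varepsilon)$, and use $h|\psi\rangle=|\psi\rangle$ and $hB|\psi\rangle=-B|\psi\rangle$ to annihilate the $B$ term. Your explicit derivation of $\{h,B\}=0$ from $B=iAL$ fills in a step the paper states without proof, and note that the normalized accepted state equals $R_B(\pi/4)|\psi\rangle$ only up to the sign of $\cos(\varepsilon/2)$, which is a harmless global phase.
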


{See the \hyperref[app:proof-angle-filter]{proof of Lemma~\ref{lem:angle-filter}} in \SM{}~\ref{app:golay-extension}.}
Among faults commuting with every retained check, Theorem~\ref{thm:local-filter} below shows that only $I$ and $B$ remain on the rotation support $\Omega_B:=\supp(B)$.
For these ideal measurements, we accept if all retained checks and $G_B$ give $+1$. The identity leaves the $G_B$ outcome at $+1$, while a $B$ error changes it to $-1$. After rejection, the unknown input is recovered and the gate is attempted once more, as described below.

\subsection{Local filter and the monitor}

The transported check detects the $B$ component of an error on the rotation support, while the retained stabilizers detect every other nonidentity Pauli component.
The next theorem establishes this division of error detection for a pure code whenever the rotation support has size less than the code distance.
\par

\begin{theorem}[Local filter]\label{thm:local-filter}
Let $(A,B)$ be a factorization of a logical Pauli with $\syn(A)=\syn(B)\neq0$. Let $\cC$ be pure with distance $d$, and let $B$ have weight $w<d$ and support $\Omega_B:=\supp(B)$. Then the restriction of $\cS_0$ to $\Omega_B$ has binary rank $2w-1$ in the symplectic Pauli representation, and the only Paulis on $\Omega_B$ commuting with every element of $\cS_0$ are $I$ and $B$. Consequently, after the rotation $R_B(\pi/4)$, every nonidentity Pauli fault supported on $\Omega_B$ is detected with certainty by measuring the retained checks and $G_B$.
\end{theorem}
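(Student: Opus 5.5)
The plan is to work in the binary symplectic representation restricted to $\Omega_B$. First I show that purity forces the restriction of the full group $\cS$ to have full rank $2w$. Then I pass to the index-two subgroup $\cS_0$ and identify its commutant on $\Omega_B$ as $\{I,B\}$. Finally I translate this into certain detection.

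\emph{Step 1 (full rank for $\cS$).} Let $\rho:\cS\to\mathbb F_2^{2w}$ send a stabilizer to the symplectic vector of its restriction to $\Omega_B$. The symplectic form on $\mathbb F_2^{2w}$ is nondegenerate. Hence the symplectic complement of $\operatorname{im}\rho$ is exactly the set of Paulis $E$ supported on $\Omega_B$ that commute with every element of $\cS$, i.e.\ $E\in\cN(\cS)$. Such an $E$ has $\wt(E)\le w<d$, so it cannot be a nontrivial logical, and therefore $E\in\cS$. By purity every nonidentity stabilizer has weight at least $d>w$, so $E=I$. The complement is therefore trivial and $\rank\rho(\cS)=2w$.

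\emph{Step 2 (rank $2w-1$ for $\cS_0$ and commutant $\{I,B\}$).} Since $\syn(A)=\syn(B)$, a stabilizer commutes with $A$ iff it commutes with $B$. Thus $\cS_0$ is exactly the set of stabilizers commuting with $B$. As $B$ is supported on $\Omega_B$, this says $B|_{\Omega_B}=B$ lies in the symplectic complement of $\rho(\cS_0)$. Because $B\neq I$, this gives $\rank\rho(\cS_0)\le 2w-1$. Conversely, $\cS_0$ has index two in $\cS$, so $\rho(\cS_0)$ has codimension at most one in $\rho(\cS)=\mathbb F_2^{2w}$, giving $\rank\rho(\cS_0)\ge 2w-1$. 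Hence the rank is exactly $2w-1$. Its complement then has dimension one, so it equals $\{I,B\}$ (modulo phase).

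\emph{Step 3 (detection).} The state after the rotation is $U_B|\psi\rangle$ with $|\psi\rangle\in\cC$. Every $g\in\cS_0$ commutes with $B$, hence with $U_B$, so $U_B|\psi\rangle$ is a $+1$ eigenstate of all of $\cS_0$. By Lemma~\ref{lem:transported} it is also a $+1$ eigenstate of $G_B$.

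Consider a nonidentity Pauli fault $E$ on $\Omega_B$ with $E\neq B$. By Step 2 some $g\in\cS_0$ anticommutes with $E$, so $E U_B|\psi\rangle$ is a $-1$ eigenstate of $g$ and the fault is detected with certainty.

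If $E=B$, note that $h\notin\cS_0$ anticommutes with $A$, hence with $B$. Then $BhB=-h$ and $B(Bh)B=-Bh$, so $BG_BB=-G_B$. Consequently $BU_B|\psi\rangle$ has $G_B$ eigenvalue $-1$ and is rejected.

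\emph{Main obstacle.} The only delicate point is Step 1: the complement argument must use that a low-weight element of $\cN(\cS)$ is forced into $\cS$ by $w<d$, and then to the identity by purity. Everything afterwards is linear algebra and a one-line conjugation.
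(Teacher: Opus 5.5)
Your proof is correct, and it rests on the same key fact as the paper's: a Pauli supported on $\Omega_B$ that normalizes $\cS$ has weight below $d$, so it must lie in $\cS$, and purity then forces it to be the identity. The two proofs use this fact in opposite orders.

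The paper starts from an arbitrary $F$ on $\Omega_B$ commuting with $\cS_0$. It uses $\cS=\langle\cS_0,h\rangle$ and applies the key fact to $F$ when $[F,h]=0$, and to $BF$ when $\{F,h\}=0$. This gives the centralizer $\{I,B\}$ directly, and the rank $2w-1$ then follows from nondegeneracy of the symplectic form.

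You instead first show that $\cS$ restricted to $\Omega_B$ has full rank $2w$. You then fix $\rank\rho(\cS_0)=2w-1$ by combining the index-two codimension bound with the observation that $B\in\rho(\cS_0)^\perp$, and you read off the centralizer last.

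The paper's case split avoids any dimension counting and shows concretely why $B$ is the unique survivor: multiplying by $B$ is exactly what flips the commutation sign with $h$. Your version makes explicit that the full stabilizer detects every nonidentity Pauli on $\Omega_B$, so releasing the single check $h$ costs detection of exactly one Pauli there.

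Your Step 3 matches the paper's closing line. You additionally spell out that the rotated state is a $+1$ eigenstate of $\cS_0$ and of $G_B$, and verify that $BG_BB=-G_B$.
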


{See the \hyperref[app:proof-local-filter]{proof of Theorem~\ref{thm:local-filter}} in \SM{}~\ref{app:golay-extension}.} 
Because the accepted projector annihilates each nonidentity Pauli on the rotation support, the normalized accepted state is ideal for arbitrary coherent errors and quantum channels on that support whenever acceptance has nonzero probability, and the success probability is independent of the encoded input; \SM{}~\ref{app:golay-extension} proves this extension and stronger support and recovery bounds for Golay.

The protected circuit uses all 21 retained checks. After rejection, ideal full-syndrome extraction and a decoder restricted to the modeled support recover the input even after arbitrary errors on all four qubits in the rotation support and one unknown qubit outside that support (Theorem~\ref{ge:restart}).

\subsection{A protected Golay circuit}\label{subsec:golayprotected}

The monitor protects the non-Clifford rotation, but two further ingredients are needed for a fault-tolerant gate. 
The transported check $G_B$ requires a protected extraction circuit: its ideal implementation applies $U_B^\dagger$, measures $h$, and reapplies $U_B$, which contains additional non-Clifford rotations. 
An anticommuting single-qubit fault on $\supp(A)$ between the monitor and the  second rotation $R_A(\pi/2)$ creates a fault-path ambiguity that the final syndrome cannot resolve (see Lemma~\ref{lem:bare-fault-paths}, item~\ref{fp:between-a} in \SM{}~\ref{app:proof-bare-failure}). 

 We %
 supply both ingredients without changing the 23-qubit data block.
Two verified four-qubit cat states control the transported-check circuit. A second control detects correlated faults that could otherwise change both the data and the first check outcome. After the adjacent $ZZ$ parities and the product of the $X$ outcomes of each cat state have been checked, all 21 retained stabilizers are measured on the data.
All non-Clifford blocks act on data only within the four-qubit rotation support; they may also act on the cat controls. Each terminal controlled rotation touches one data qubit. This construction applies the recursive circuit-gauge method of Dasu and Criger~\cite{dasu2026flagging} to the released Golay syndrome. If a check rejects, every single-fault branch lies in the span of Pauli errors on $\Omega_B$ and at most one qubit outside it. The original Golay syndrome distinguishes all $14\,848$ such Pauli hypotheses, so recovery restores the unknown input, including its entanglement with another system. After acceptance, error correction using the retained checks and the Pauli-measurement rule of Sec.~\ref{subsec:identity} complete the logical gate; this replaces the unprotected second rotation and removes the corresponding single-fault ambiguity.

\begin{figure}[!htbp]

\centering
\resizebox{\columnwidth}{!}{%
\begin{tikzpicture}[font=\small,box/.style={draw,rounded corners=2pt,align=center,inner sep=5pt},flow/.style={-{Stealth[length=2mm]},thick}]
\node[box,text width=1.7cm,minimum height=1.3cm] (lead) at (0,0) {Golay correction};
\node[box,text width=2.0cm,minimum height=1.3cm] (monitor) at (2.55,0) {Protected rotation and monitor};
\node[box,text width=3.0cm,minimum height=1.3cm] (finish) at (6.5,0) { Correction using retained checks\\Pauli measurements\\and feed-forward\\Golay correction};
\node[box,text width=2.6cm] (recover) at (0,-2.7) {Recover the input\\and retry once};
\node[box,text width=2.7cm] (abort) at (5.25,-2.7) {Located abort};
\draw[flow] (lead) -- (monitor);
\draw[flow] (monitor) -- node[above]{accept} (finish);
\draw[flow] (monitor.south) |- (recover.east);
\node[anchor=east,align=right] at (2.45,-1.7) {first\\rejection};
\draw[flow] (recover.north) -- (lead.south);
\draw[flow] (monitor.south) -- (2.55,-1.8) -| (abort.north);
\node[above] at (3.85,-1.8) {second rejection};
\end{tikzpicture}%
}
\caption{Overview of the protected Golay logical $T$ gate on an encoded input. Acceptance is followed by  error correction using the retained checks, protected adaptive Pauli measurements and Pauli feed-forward, and final Golay correction. The first rejection triggers recovery of the original input and one retry; the second rejection gives a located abort. Exhausting an ancilla or correction retry cap, or finding no recovery hypothesis, also gives a located abort as specified in \SM{}~\ref{app:gc-control}. Figure~\ref{fig:golay-complete} gives the complete circuit and monitor sequence.}
\label{fig:golay-overview}
\par
\end{figure}

The Golay circuit summarized in Fig.~\ref{fig:golay-overview} and detailed in Fig.~\ref{fig:golay-complete}, consisting of  leading correction, the rotation and protected monitor, recovery and a possible retry,  completion by protected Pauli measurements and corrections, and trailing correction, corrects one incoming error or tolerates one internal fault (Theorem~\ref{thm:golay-complete}). 
Two gate attempts suffice in this one-fault regime, and two rejected attempts are a located abort.
A located abort is a reported unsuccessful gate invocation, as defined in Definition~\ref{def:located-failure}; its flag is computed from the recorded measurement outcomes and their circuit-step and attempt labels using the stopping rules in \SM{}~\ref{app:gc-control}. The flag identifies the failed gate invocation, not a faulty elementary operation.
With a schedule consisting of an ordered execution of elementary circuit operations, bounded by two gate attempts, two candidates per requested cat state and four rounds per correction procedure (\SM{}~\ref{app:gc-recovery}), logical failure or abort requires at least two faulty locations under local stochastic circuit noise. 
Resetting and serially reusing the eight cat-state qubits and their verification ancilla gives a maximum of 32 physical qubits and 22 physical $T$-type gates per attempt when fully compiled, or at most 44 for two attempts. 
\SM{}~\ref{app:golay-completion} gives the full circuit, retained generators, recovery, and proof. 
The monitor is protected for an encoded input in $\cC$; a protected measurement of $G_B$ on an arbitrary state of $\cD$ is not claimed. For an ideally encoded input, the finite circuit gives the analytical bound $\Pr(\text{logical failure or located abort})\leq\binom{N_G}{2}p^2$, where $N_G$ is the total number of elementary-operation and idle slots in a fixed padded schedule for the entire two-attempt circuit, as defined in \SM{}~\ref{app:gc-failure} 
(see Eq.~\eqref{gc:failure-bound}).

\section{Discussion}\label{sec:discussion}

We have shown that a syndrome degree of freedom can realize a deterministic logical $T$ gate. Releasing one stabilizer check provides an auxiliary logical qubit within the existing physical block. Two Pauli rotations couple this qubit to the encoded data, and syndrome measurement with Clifford feed-forward completes the gate and returns the state to the original code space.
During the ideal rotations, the intermediate code
determines the available error detection and correction 
by specifying the checks that remain measurable and the errors they detect or distinguish for correction. 
This links the synthesis of a logical gate directly to the protection of the spaces it visits.

The two Paulis for the rotation are given by a factorization of a logical operator. The maximally unbalanced factorization implements the same ideal logical channel as magic-state injection. 
When one factor has weight one, the gadget rotates a single data qubit by a $T$-type angle and uses syndrome projection and feed-forward to transfer that rotation to the logical qubit, which is the same ideal logical channel that injection and gate teleportation produce~\cite{gottesman1999demonstrating,zhou2000methodology,bravyi2005universal,gavriel2023transversal}. 
In contrast, a balanced factorization distributes the two rotations so that each acts on at most $\lceil(d+1)/2\rceil$ qubits.

We fully characterize the  distance of the intermediate code  via $\delta=\min\{d,\mu(s),\nu(A)\}$. For pure codes with balanced factorizations, the intermediate distance grows with $d$. 
For quantum LDPC codes, including surface and color codes~\cite{kitaev2003fault,bombin2006topological,bhardwaj2026mitten,zheng2026canonical,yang2026gala,hong2026rate} it cannot exceed the maximum weight in a stabilizer generating set, which means that increasing the code distance alone never yields an unboundedly protected intermediate code; a weight-nine check family, for instance, gives $\delta\leq9$ for every  factorization with nonzero syndrome. %
Obtaining a larger intermediate distance requires changing the implementation, for example by concatenation, as we do here, or through auxiliary encoded degrees of freedom. The high-rate BCH family shows a different trade-off: encoding rate approaching one, but dense checks and an intermediate distance that stays at three or four. Growing intermediate distance, high rate, and cheap syndrome extraction are three separate demands that no single code family in this paper meets at once. \SM{}~\ref{app:rate-design} gives the rate cost of protecting selected logical rotations and the dense-check extraction cost.

Fault tolerance also requires controlling how errors propagate through the physical rotations. 
The clearest example is a Pauli error on the axis of the first rotation, which survives to the second rotation, reverses its angle, and produces a logical error that the final syndrome cannot see. %
We use the outcome-conditioned Knill--Laflamme criterion to test whether the propagated faults admit a common recovery.
The same separation between ``detectable in the intermediate code'' and ``correctable after the whole operation'' appears in pieceable fault tolerance and in error-transparent gates~\cite{yoder2016universal,vy2013error,kapit2018error,ma2020error}, and we expect it to be useful in analyzing syndrome-resolved and continuous-angle logical operations more generally~\cite{introHu2022Channels,cheng2025emergent,yoshioka2026syndrome,huang2025robust,huang2026continuous}.

Based on our two-rotation construction, we introduce two protected gates robust against a single fault, demonstrating that our approach can be made fault-tolerant. %
The fixed $\qcode{22}{1}{3}$ gate admits arbitrarily small logical error through recursive concatenation of a finite set of protected gate circuits, with failed child procedures handled locally as specified in \SM{}~\ref{app:recursive-failures}.
Its 33-qubit serial implementation establishes a space bound for one level; reusing ancillas reduces simultaneous space at the cost of additional time. A numerical threshold or cost at a target logical error requires the complete elementary circuit schedule. The same circuit can prepare an encoded $T$ state by applying the gate to an encoded $|+\rangle$, if a resource state for gate teleportation is desired.

The Golay construction implements a protected logical $T$ gate directly on the 23-qubit data block, using at most 32 qubits and 22 single-qubit $T$-type gates per compiled attempt. A rejected single-fault attempt recovers the unknown encoded input for a restart.
The transported check is a Hermitian Clifford observable, equivalent to $(Z+X)/\sqrt2$ after a Clifford change of basis. Its accepted eigenspace contains the rotated encoded input, rather than a prescribed resource state~\cite{gidney2024cultivation,gottesman2026surviving}. With ideal, correctly calibrated extraction, acceptance removes arbitrary errors on the rotation support, including coherent over-rotation. Rejected attempts use a decoder whose allowed Pauli errors act on the four known rotation qubits and at most one unknown qubit outside them. The protected circuit handles one incoming physical error or one internal fault, including faults in preparation, measurement, reset and waits. 
The proof uses the known $G_B=+1$ value on the ideally rotated encoded input. If the rotation and its check circuit share the same angle error, the check can accept a state with the wrong rotation angle 
(\SM{}~\ref{app:golay-extension}).

Two future directions seem most promising. At the smallest scale, our two-rotation identity suggests a concrete experiment for the seven-qubit Steane code, using two weight-two rotations and one syndrome round on the same platforms that have demonstrated continuous-angle logical rotations~\cite{huang2026continuous}; such an experiment would directly observe coherent evolution between syndrome sectors followed by a deterministic logical action.
A second direction is to identify code and circuit conditions for a protected implementation with low overhead. For quantum LDPC codes, gauge degrees of freedom or measurement-based logical operations may allow constructions outside the assumptions of the direct intermediate-distance bound~\cite{cohen2022low,quintavalle2023partitioning,huang2023homomorphic,he2025extractors,xu2024fast,williamson2024low}.
Phantom codes implement in-block logical CNOTs through qubit relabeling~\cite{koh2026phantom} and potentially could reduce the cost of the CNOT network that accumulates the Pauli parity in the compiled rotations.
Theorem~\ref{thm:record-kl} determines when the measurement record permits recovery from a specified set of fault paths. It remains to identify code and circuit properties that guarantee this condition and allow the recovery itself to be implemented fault-tolerantly. Such criteria could guide selective concatenation and transported-check constructions, accounting for protection throughout the gate rather than only in the original code.

\section*{Acknowledgments}

A.T. is supported by the CQT PhD Scholarship, the Google PhD Fellowship Program and the CQT Young Researcher Career Development Grant. K.B.~is supported by a Hartree Fellowship from the Joint Center for Quantum Information and Computer Science (QuICS) at the University of Maryland, College Park. Generative artificial intelligence tools were used to assist with %
developing ideas,
language editing, organization, and the preparation of portions of the manuscript. 

\section*{Author contributions}
All authors contributed equally to the conceptualization, methodology,
formal analysis, verification, and interpretation of the results. All
authors contributed equally to drafting, reviewing, and editing the
manuscript and approved its final version.

\bibliography{references}

\clearpage
\newpage

\let\addcontentsline\oldaddcontentsline

\renewcommand{\appendixname}{\SMLong{}}
\appendix

\onecolumngrid
\newpage

\setcounter{secnumdepth}{3}
\setcounter{equation}{0}
\setcounter{figure}{0}
\setcounter{section}{0}

\renewcommand{\thesection}{\Alph{section}}
\makeatletter
\def\p@subsection{}
\def\p@subsubsection{}
\makeatother
\renewcommand{\thesubsection}{\thesection.\arabic{subsection}}
\renewcommand{\thesubsubsection}{\thesubsection.\arabic{subsubsection}}
\setcounter{tocdepth}{2}
\renewcommand*{\theHsection}{\thesection}

\clearpage
\begin{center}
\textbf{\large \SMLong{}}
\end{center}
\setcounter{equation}{0}
\setcounter{figure}{0}
\setcounter{table}{0}

\makeatletter
\renewcommand{\thefigure}{S\arabic{figure}}
\renewcommand{\thetable}{S\arabic{table}}
\renewcommand*{\theHfigure}{S\arabic{figure}}
\renewcommand*{\theHtable}{S\arabic{table}}
\begingroup
\let\l@subsubsection\@gobbletwo
\@starttoc{toc}
\endgroup
\makeatother

\section*{Appendix Guide}

The appendix follows the development of the main text: ideal gate identities and elementary rotation circuits, intermediate codes, fault propagation and recovery, selective concatenation and the protected fixed-code gate, the direct Golay gate, and finally resource bounds.
The introductory overview in Sec.~\ref{sec:intro} announces later results; the technical order follows their use in Secs.~\ref{sec:identity}--\ref{sec:discussion}, with definitions and lemmas placed before the arguments that need them.
Ideal identities and code distances are kept distinct from the circuit-level single-fault guarantees.
The table identifies the appendix material supporting each main-text result or claim.

\begin{center}
\small
\begin{tabular}{@{}p{0.36\textwidth}p{0.60\textwidth}@{}}
\toprule
Main-text result or claim & Appendix details\\
\midrule
Two-rotation identity and optimal factor weights (Sec.~\ref{sec:identity}) & \SM{}~\ref{app:core-proofs}: the general two-rotation identity and proofs of Theorems~\ref{cor:det-t} and~\ref{thm:weights}.\\
Adaptive completion by Pauli measurements (Sec.~\ref{subsec:identity}) & \SM{}~\ref{app:pm}: branch identities, outcome probabilities, Pauli corrections and the inverse-gate rule (Theorems~\ref{pm:identity} and~\ref{pm:adaptive}).\\
Elementary implementation of Pauli rotations (Sec.~\ref{sec:identity}) & \SM{}~\ref{subsec:rotation-compilation}: the parity-circuit identity, elementary gate counts and phase conventions.\\
Intermediate-code distance and code families (Sec.~\ref{sec:intermediate}) & \SM{}~\ref{app:intermediate-details}: proofs of Theorems~\ref{thm:intermediate-distance} and~\ref{thm:pure}, purity and examples; \SM{}~\ref{app:sector}: the sector representation; \SM{}~\ref{app:golay-details} and~\ref{app:bch}: Golay code data and the BCH family.\\
Failure of the unprotected circuit and the circuit-noise model (Sec.~\ref{sec:bare}) & \SM{}~\ref{app:fault-analysis}: local stochastic noise and compiled-fault bounds; \SM{}~\ref{app:bare-details}: fault paths, the outcome-conditioned recovery criterion (Theorem~\ref{thm:record-kl}) and the limitation with one unprotected $T$ gate (Proposition~\ref{pm:oneT}).\\
Selective concatenation and the protected 22-qubit gate (Sec.~\ref{sec:selective}) & \SM{}~\ref{app:selective-details}: the distance formula; \SM{}~\ref{app:component-codes}: Steane and Reed--Muller conventions; \SM{}~\ref{app:combined22}: the fixed code, protected primitives, gate proof and 33-qubit schedule ({proof of Theorem~\ref{thm:main22} in \SM{}~\ref{app:gate_single_fault_proof}}).\\
Error suppression by recursive concatenation (Sec.~\ref{subsec:gate22}) & \SM{}~\ref{app:ft-budget}: the finite set of gate circuits, truncation convention, error recurrence and resource counts (Corollary~\ref{c22:recursion}).\\
Golay filtering and input recovery (Sec.~\ref{sec:golay}) & \SM{}~\ref{app:golay-details}: code data and retained checks; \SM{}~\ref{app:golay-extension}: proofs of Lemmas~\ref{lem:transported} and~\ref{lem:angle-filter} and Theorem~\ref{thm:local-filter}, together with exact channel filtering, restart recovery and the calibration limitation.\\
Protected Golay circuit and its costs (Sec.~\ref{subsec:golayprotected}) & \SM{}~\ref{app:golay-completion}: control preparation, completion by Pauli measurements and corrections, resource counts and rejected-branch recovery, followed by the single-fault proof and failure-probability bound (Theorem~\ref{thm:golay-complete}).\\
High rate and extraction costs (Sec.~\ref{sec:discussion}) & \SM{}~\ref{app:rate-design}: the rate cost of protecting selected logical supports and the cost of directly measuring dense checks.\\
\bottomrule
\end{tabular}
\end{center}

\par

\section{Logical identities and elementary circuits}\label{app:identities}

Here we give the proofs supporting Sec.~\ref{sec:identity}.
The syndrome branches for arbitrary rotation angles give the deterministic gate of Theorem~\ref{cor:det-t}, and adaptive Pauli measurements provide its alternative completion.
A support-counting argument proves Theorem~\ref{thm:weights}; elementary rotation circuits and products of Paulis with a common syndrome complete the ideal construction.
\par

\subsection{Proofs of the two-rotation identity and factor-weight bound}\label{app:core-proofs}

\subsubsection{Branch operators and the logical rotation}

We use $\operatorname{atan2}(y,x)=\arg(x+iy)\in(-\pi,\pi]$.
For $(x,y)\neq(0,0)$, $\arg(x+iy)$ is the unique angle
$\phi\in(-\pi,\pi]$ such that
\[
x+iy=\sqrt{x^2+y^2}\,e^{i\phi}.
\]
Equivalently, $\cos\phi=x/\sqrt{x^2+y^2}$ and
$\sin\phi=y/\sqrt{x^2+y^2}$. The argument is undefined at
$(x,y)=(0,0)$; below, this case corresponds to a syndrome
outcome of probability zero.

\begin{theorem}[Two rotations]\label{thm:two-rotations}
    Let $(A,B)$ be a factorization of $L$ with $s=\syn(A)\neq0$ and let $\alpha,\beta\in\bbR$.
    For every encoded state $|\psi\rangle\in\cC$,
    \begin{equation}
    e^{-i\alpha A}e^{-i\beta B}|\psi\rangle=K_0|\psi\rangle+K_s|\psi\rangle,
    \label{eq:two-rot}
    \end{equation}
    where {$K_0$} maps $\cC$ to itself and {$K_s$} maps $\cC$ to the syndrome-$s$
    subspace, with
    \begin{equation}
        K_0=\cos\alpha\cos\beta\,I-i\sin\alpha\sin\beta\,L,\quad
        K_s=-i(\sin\alpha\cos\beta\,A+\cos\alpha\sin\beta\,B).
    \label{eq:two-rot-branches}
    \end{equation}
    Measuring the syndrome yields:
    \begin{enumerate}[label=\arabic*.]
        \item syndrome outcome $0$ with probability, independent of $|\psi\rangle$,
        \begin{equation}
            p_0=\cos^2\alpha\cos^2\beta+\sin^2\alpha\sin^2\beta.
        \label{eq:two-rot-p0}
        \end{equation}
        When $p_0>0$, the normalized post-measurement state is, up to global phase, $R_L(\theta_0)|\psi\rangle$, where
        \begin{equation}
            \theta_0=2\operatorname{atan2}
            (\sin\alpha\sin\beta,\cos\alpha\cos\beta).
        \label{eq:two-rot-theta0}
        \end{equation}
        Equivalently, 
        \begin{equation}
            \tan(\theta_0/2)=\tan\alpha\tan\beta,
        \label{eq:two-rot-tan0}
        \end{equation}
        whenever this expression is defined.
        \item syndrome outcome $s$ with probability $1-p_0$. When $1-p_0>0$, applying $A$ returns the normalized state to $\cC$ as $R_L(\theta_s)|\psi\rangle$, up to global phase, where
        \begin{equation}
            \theta_s=2\operatorname{atan2}
            (-\cos\alpha\sin\beta,\sin\alpha\cos\beta).
        \label{eq:two-rot-thetas}
        \end{equation}
        Equivalently, 
        \begin{equation}
            \tan(\theta_s/2)=-\tan\beta/\tan\alpha,
        \label{eq:two-rot-tans}
        \end{equation}
        whenever this expression is defined.
    \end{enumerate}
\end{theorem}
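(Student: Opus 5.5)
The proof is a direct expansion. For a Hermitian Pauli $Q$ we have $Q^2=I$, so $e^{-i\alpha A}=\cos\alpha\,I-i\sin\alpha\,A$ and $e^{-i\beta B}=\cos\beta\,I-i\sin\beta\,B$. Multiplying these gives four terms. The $AB$ term has coefficient $(-i)^2\sin\alpha\sin\beta=-\sin\alpha\sin\beta$, and substituting $AB=iL$ from Eq.~\eqref{eq:factor} turns it into $-i\sin\alpha\sin\beta\,L$. Grouping the $I$ and $L$ terms into $K_0$ and the $A$ and $B$ terms into $K_s$ gives Eqs.~\eqref{eq:two-rot}--\eqref{eq:two-rot-branches}.

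Next we identify the sectors. Since $L\in\cN(\cS)$, $K_0$ maps $\cC$ into $\cC$. By Eq.~\eqref{eq:common-syn}, $A$ and $B$ have common syndrome $s$, so for each generator $g_j$ and $|\psi\rangle\in\cC$ we get $g_jA|\psi\rangle=(-1)^{s_j}A|\psi\rangle$, and likewise for $B$. Hence $K_s\cC$ lies in the syndrome-$s$ eigenspace. Because $s\neq0$, this eigenspace is orthogonal to $\cC$. Measuring the generators is therefore a projective measurement with only the outcomes $0$ and $s$, and the unnormalized branch states are $K_0|\psi\rangle$ and $K_s|\psi\rangle$.

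For the probabilities, write $c=\cos\alpha\cos\beta$ and $t=\sin\alpha\sin\beta$. Since $L$ is Hermitian and $L^2=I$,
\[
K_0^\dagger K_0=(cI+itL)(cI-itL)=(c^2+t^2)I .
\]
Thus $p_0=\langle\psi|K_0^\dagger K_0|\psi\rangle=c^2+t^2$ for every input, which is Eq.~\eqref{eq:two-rot-p0}. The full operator is unitary, so the probabilities sum to one and outcome $s$ has probability $1-p_0$. One can also check directly that the cross terms in $K_s^\dagger K_s$ cancel: $\{A,B\}=0$, and both coefficients are real.

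For the post-measurement state on outcome $0$, set $c+it=\sqrt{p_0}\,e^{i\phi}$ with $\phi=\operatorname{atan2}(t,c)$. Then
\[
K_0=\sqrt{p_0}\,(\cos\phi\,I-i\sin\phi\,L)=\sqrt{p_0}\,R_L(2\phi),
\]
which gives Eq.~\eqref{eq:two-rot-theta0}. Dividing $\sin\phi$ by $\cos\phi$ gives $\tan(\theta_0/2)=t/c=\tan\alpha\tan\beta$, which is Eq.~\eqref{eq:two-rot-tan0}.

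For outcome $s$, we multiply by $A$ and use $AB=iL$:
\[
AK_s=-i(\sin\alpha\cos\beta\,I+i\cos\alpha\sin\beta\,L)=-i\bigl(xI-i yL\bigr),
\]
with $x=\sin\alpha\cos\beta$ and $y=-\cos\alpha\sin\beta$. This is $-i\sqrt{1-p_0}\,R_L(\theta_s)$ with $\theta_s=2\operatorname{atan2}(y,x)$, which gives Eqs.~\eqref{eq:two-rot-thetas}--\eqref{eq:two-rot-tans}. Here $x^2+y^2=1-p_0$ follows from the trigonometric identity $\sin^2\alpha\cos^2\beta+\cos^2\alpha\sin^2\beta=1-(\cos^2\alpha\cos^2\beta+\sin^2\alpha\sin^2\beta)$.

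The only delicate points are sign and phase bookkeeping. One must confirm that $AB=iL$ (not $BA$) fixes the sign of the $L$ term in both branches, and that the $\operatorname{atan2}$ convention chooses the correct representative angle. Degenerate cases with zero probability are excluded by hypothesis.
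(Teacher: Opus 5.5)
Your proposal is correct and follows essentially the same route as the paper's proof: expand both exponentials with $AB=iL$, separate the $\cC$ and syndrome-$s$ sectors by orthogonality, compute $K_0^\dagger K_0=p_0 I$, and read off $\theta_0$ and $\theta_s$ through $\operatorname{atan2}$ after normalizing $K_0$ and $AK_s$. Your extra remark that the cross terms of $K_s^\dagger K_s$ cancel by $\{A,B\}=0$ is a valid direct check. The paper instead obtains $1-p_0$ from orthogonality and unitarity, which you also state.
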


\begin{proof}
    Expand both exponentials,
    $e^{-i\alpha A}e^{-i\beta B}
    =(\cos\alpha-i\sin\alpha A)(\cos\beta-i\sin\beta B)$,
    and use $AB=iL$ for the last term; this gives
    Eq.~\eqref{eq:two-rot}:
    \[
    \begin{aligned}
        e^{-i\alpha A}e^{-i\beta B}
        &=(\cos\alpha\,I-i\sin\alpha\,A) (\cos\beta\,I-i\sin\beta\,B)\\
        &=\cos\alpha\cos\beta\,I -i\sin\alpha\cos\beta\,A -i\cos\alpha\sin\beta\,B -\sin\alpha\sin\beta\,AB\\
        &=\cos\alpha\cos\beta\,I-i\sin\alpha\sin\beta\,L -i(\sin\alpha\cos\beta\,A +\cos\alpha\sin\beta\,B)\\
        &=K_0+K_s.
    \end{aligned}
    \]
    This identifies the branch operators in Eq.~\eqref{eq:two-rot-branches}.
    
    The operator $L$ preserves $\cC$, whereas $A$ and $B$
    map $\cC$ to the syndrome-$s$ subspace, so the
    {vectors $K_0|\psi\rangle$ and $K_s|\psi\rangle$}
    are orthogonal.
    Indeed, $AB=iL$ and $\syn(L)=0$ imply
    $\syn(A)+\syn(B)=0$, so $\syn(B)=s$.
    For every stabilizer generator $g_j$,
    \[
    \begin{aligned}
        g_jA|\psi\rangle&=(-1)^{s_j}A|\psi\rangle,\quad
        g_jB|\psi\rangle&=(-1)^{s_j}B|\psi\rangle.
    \end{aligned}
    \]
    Since $s\neq0$, some $g_j$ has eigenvalue $-1$ on the
    syndrome-$s$ subspace and $+1$ on $\cC$. These eigenspaces
    are orthogonal, giving
    $\langle\psi|K_0^\dagger K_s|\psi\rangle=0$.
    
    For state $|\psi\rangle$, the probability of outcome $0$ is $p_0=\langle\psi|K_0^\dagger K_0|\psi\rangle$.
    Since $L^\dagger=L$ and $L^2=I$,
    \[
    \begin{aligned}
        K_0^\dagger K_0
        &=(\cos^2\alpha\cos^2\beta
          +\sin^2\alpha\sin^2\beta)I -i\cos\alpha\cos\beta\sin\alpha\sin\beta\,L +i\sin\alpha\sin\beta\cos\alpha\cos\beta\,L\\
        &=(\cos^2\alpha\cos^2\beta
          +\sin^2\alpha\sin^2\beta)I.
    \end{aligned}
    \]
    Thus the two mixed terms cancel, giving the stated $p_0$ in Eq.~\eqref{eq:two-rot-p0} independently of $|\psi\rangle$. Orthogonality of the branches and unitarity of the rotations give the other outcome probability
    \[
        1-p_0=\sin^2\alpha\cos^2\beta+\cos^2\alpha\sin^2\beta.
    \]
    
    For $p_0>0$, the definition of $\theta_0$ in Eq.~\eqref{eq:two-rot-theta0} gives
    \[
    \begin{aligned}
        \cos(\theta_0/2)&=\frac{\cos\alpha\cos\beta}{\sqrt{p_0}},\quad
        \sin(\theta_0/2)&=\frac{\sin\alpha\sin\beta}{\sqrt{p_0}}.
    \end{aligned}
    \]
    Therefore
    \[
    \begin{aligned}
        \frac{K_0}{\sqrt{p_0}}
        &=\cos(\theta_0/2)I-i\sin(\theta_0/2)L
        =R_L(\theta_0),
    \end{aligned}
    \]
    so the normalized outcome-$0$ state is $R_L(\theta_0)|\psi\rangle$.
    Taking the ratio of the sine and cosine gives Eq.~\eqref{eq:two-rot-tan0} whenever the stated expression is defined.
    The coefficient $\cos\alpha\cos\beta$ can vanish even when $p_0>0$; using $\operatorname{atan2}$ avoids dividing by this coefficient.
    
    For outcome $s$, use
    $AK_s=-i\sin\alpha\cos\beta\,I+\cos\alpha\sin\beta\,L$.
    This follows from $A^2=I$ and $AB=iL$, and gives the branch operator after the Pauli correction $A$, which returns the state to $\cC$. 
    For $1-p_0>0$, the definition of $\theta_s$ in Eq.~\eqref{eq:two-rot-thetas} gives
    \[
    \begin{aligned}
        \cos(\theta_s/2)
        &=\frac{\sin\alpha\cos\beta}{\sqrt{1-p_0}},\quad
        \sin(\theta_s/2)
        &=-\frac{\cos\alpha\sin\beta}{\sqrt{1-p_0}}.
    \end{aligned}
    \]
    Hence
    \[
    \begin{aligned}
        \frac{AK_s}{\sqrt{1-p_0}}
        &=-i\bigl[\cos(\theta_s/2)I-i\sin(\theta_s/2)L\bigr]
        =-iR_L(\theta_s).
    \end{aligned}
    \]
    The normalized corrected state is therefore
    $-iR_L(\theta_s)|\psi\rangle$, with $-i$ an irrelevant
    global phase. 
    Taking the ratio of the sine and cosine gives Eq.~\eqref{eq:two-rot-tans} whenever the stated expression is defined.
    Again, $\operatorname{atan2}$ allows
    $\sin\alpha\cos\beta=0$ whenever this outcome has
    nonzero probability.
\end{proof}

\subsubsection{{Proof of Theorem~\ref{cor:det-t}}}\label{app:proof_cor:det-t}

\begin{proof}
    For $\alpha=\pi/4$, Theorem~\ref{thm:two-rotations} gives
    $\theta_0=2\beta$ and $\theta_s=-2\beta$
    {modulo $4\pi$} for every $\beta$, and $p_0=1/2$.
    Equation~\eqref{eq:two-rot-p0} therefore gives the two probabilities in Eq.~\eqref{eq:det-t-probabilities}.
    More explicitly, for $\alpha=\pi/4$ and $\beta=\pi/8$,
    \[
    \begin{aligned}
        K_0&=\frac{1}{\sqrt2}R_L(\pi/4),\quad
        AK_s&=-\frac{i}{\sqrt2}R_L(-\pi/4).
    \end{aligned}
    \]
    Thus outcome $0$ yields the normalized state
    $R_L(\pi/4)|\psi\rangle$. Outcome $s$, after applying $A$,
    yields $-iR_L(-\pi/4)|\psi\rangle$.
    Applying the logical Clifford $R_L(\pi/2)$ to this
    latter state gives
    \[
    \begin{aligned}
        &R_L(\pi/2)\bigl[-iR_L(-\pi/4)|\psi\rangle\bigr]
        =-iR_L(\pi/4)|\psi\rangle,
    \end{aligned}
    \]
    since rotations about the same Pauli add their angles.
    Both outcomes therefore implement the same logical
    $T$-type gate, up to global phase.
    This proves the corrected logical action in Eq.~\eqref{eq:det-t-action}.
\end{proof}

\subsubsection{Adaptive Pauli measurements}\label{app:pm}

Here we will show that Pauli measurements and outcome-dependent Pauli corrections replace the  second Clifford rotation and the conditional logical Clifford introduced in Sec.~\ref{subsec:identity}.
After the non-Clifford rotation, these measurements return the state to the  code used before and after the gate with the desired logical gate applied.
These are ideal identities; circuit protection requires separate correction and measurement procedures.
\par

We use the Pauli factors $A,B$ and stabilizer group $\cS$ of Sec.~\ref{sec:identity}, and the retained group $\cS_0$ and intermediate code $\cD$ of Sec.~\ref{sec:intermediate}.
Choose $h\in\cS$ anticommuting with $A$, and put $G=iAh$, $\widetilde L=hL$, and $M=A\widetilde L=AhL$. These are Hermitian Paulis in $\cN(\cS_0)$. In logical coordinates on $\cD$,
\begin{equation}
 A=\bar X_p,\quad h=\bar Z_p,\quad G=\bar Y_p,\quad B=\bar Y_p\widetilde L,\quad M=\bar X_p\widetilde L.
\label{pm:coordinates}
\end{equation}
Indeed, $A$ anticommutes with both $h$ and $L$, so $\widetilde L$ commutes with $A,h$, and $G\widetilde L=iAL=B$. 
The input state lies in $|0\rangle_p\otimes\cC$. 
For a Hermitian Pauli $P$, define the projector $\Pi_u(P)=(I+uP)/2$, for $u\in\{+1,-1\}$.

\begin{theorem}[Measurement identity]\label{pm:identity}
Applying $R_B(\theta)$, measuring $G$ with outcome $y$, measuring $h$ with outcome $z$, and applying $A^{(1-z)/2}$ gives the operator
\begin{equation}
     A^{(1-z)/2}\Pi_z(h)\Pi_y(G)R_B(\theta)\PC
     =\frac{(iy)^{(1-z)/2}}{2}R_L(y\theta)\PC.
\label{pm:branch}
\end{equation}
Each pair $(y,z)$ has probability $1/4$, independently of the input.
\end{theorem}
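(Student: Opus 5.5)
The plan is to verify Eq.~\eqref{pm:branch} directly as an operator identity on $\cC$, using the logical coordinates of Eq.~\eqref{pm:coordinates}. The idea is to split every operator into a factor acting on the original logical qubits and a factor acting on the extra qubit $p$. I first check that $\widetilde L=hL$ commutes with $A$, $h$ and $G$. It commutes with $A$ because both $h$ and $L$ anticommute with $A$. It commutes with $h$ because $h\in\cS$ and $L\in\cN(\cS)$. It then commutes with $G=iAh$ as well. Moreover $\widetilde L\PC=L\PC$, since $h\PC=\PC$. The identity $B=G\widetilde L$ was already noted, and $G$ and $\widetilde L$ are commuting Hermitian involutions. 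This gives the expansion
\[
R_B(\theta)=\cos(\theta/2)\,I-i\sin(\theta/2)\,G\widetilde L .
\]

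\textbf{Step 1: absorb $G$ into its eigenvalue.} Since $\Pi_y(G)G=y\,\Pi_y(G)$, I get $\Pi_y(G)R_B(\theta)=\Pi_y(G)R_{\widetilde L}(y\theta)$. The rotation $R_{\widetilde L}(y\theta)$ commutes with $A$, $\Pi_z(h)$ and $\Pi_y(G)$. I can therefore pull it to the far left, where it acts on $\PC$ as $R_L(y\theta)$ because $\widetilde L$ preserves $\cC$ and acts as $L$ there. One point needs care: $R_{\widetilde L}$ commutes with the operator $A^{(1-z)/2}\Pi_z(h)\Pi_y(G)$, and this operator maps $\cC$ back into $\cC$ (checked in Step 2). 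Hence on that range $R_{\widetilde L}$ can be replaced by $R_L$.

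\textbf{Step 2: the single-qubit computation.} It remains to show
\[
A^{(1-z)/2}\Pi_z(h)\Pi_y(G)\PC=\frac{(iy)^{(1-z)/2}}{2}\PC .
\]
I would do this without coordinates. Using $h\PC=\PC$, I get $G\PC=iAh\PC=iA\PC$, so $\Pi_y(G)\PC=\tfrac12(I+iyA)\PC$. Applying $\Pi_z(h)$ uses two facts: $h\PC=\PC$, and $hA\PC=-Ah\PC=-A\PC$. Hence $\Pi_z(h)\PC=\delta_{z,+1}\PC$ and $\Pi_z(h)A\PC=\delta_{z,-1}A\PC$. This leaves $\tfrac12\PC$ for $z=+1$ and $\tfrac{iy}{2}A\PC$ for $z=-1$. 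Applying $A$ in the latter case gives $\tfrac{iy}{2}\PC$, since $A^2=I$. In both cases the result lies in $\cC$, which also justifies Step 1.

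\textbf{Step 3: probabilities.} The probability of each pair $(y,z)$ is $\|\cdot\|^2$ of the branch vector. The final Pauli $A^{(1-z)/2}$ is unitary and does not change the norm. The scalar $(iy)^{(1-z)/2}/2$ has modulus $1/2$, and $R_L(y\theta)$ is unitary, so each probability equals $1/4$ for every normalized $|\psi\rangle\in\cC$.

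The only real obstacle is bookkeeping of signs and phases. Specifically, I must confirm that $G=iAh$ is Hermitian given $\{A,h\}=0$, and that $A$ is applied after $\Pi_z(h)$, so the factor $iy$ arises from $G$ rather than from reordering. I expect the commutation claims for $\widetilde L$ in the first paragraph to be routine.
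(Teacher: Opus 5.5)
Your proof is correct and follows the paper's route. You write $B=G\widetilde L$, absorb $G$ into its eigenvalue $y$ to turn $R_B(\theta)$ into $R_{\widetilde L}(y\theta)$, compute the $h$ projection and the $A$ correction on the extra qubit, and use $\widetilde L\PC=L\PC$; the paper does the same single-qubit computation in the explicit coordinates $|0\rangle_p,|1\rangle_p,|Y_y\rangle_p$ of $\cD\simeq\cC\otimes\mathbb C^2$, whereas you do it coordinate-free via $G\PC=iA\PC$ and $hA\PC=-A\PC$.
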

\begin{proof}
Let $|\psi\rangle$ be a state in $\cC$, and write
$c=\cos(\theta/2)$ and $s=\sin(\theta/2)$.
Since $\bar Y_p|0\rangle_p=i|1\rangle_p$,
Eq.~\eqref{pm:coordinates} gives
\[
\begin{aligned}
    R_B(\theta)|0\rangle_p|\psi\rangle
    &=(cI-is\bar Y_p\widetilde L)|0\rangle_p|\psi\rangle
    =c|0\rangle_p|\psi\rangle
      +s|1\rangle_p\widetilde L|\psi\rangle.
\end{aligned}
\]
The normalized eigenstate of $G$ with eigenvalue $y$ is
$|Y_y\rangle_p=(|0\rangle_p+iy|1\rangle_p)/\sqrt2$.
Hence projection onto outcome $y$ gives the unnormalized state
\[
\begin{aligned}
    \Pi_y(G)R_B(\theta)|0\rangle_p|\psi\rangle
    &=\frac{|Y_y\rangle_p}{\sqrt2}
      (cI-iys\widetilde L)|\psi\rangle
    =\frac{|Y_y\rangle_p}{\sqrt2}
      R_{\widetilde L}(y\theta)|\psi\rangle.
\end{aligned}
\]
The last equality uses $[h,L]=0$ and $h^2=L^2=I$, so
$\widetilde L^2=(hL)^2=I$. Since $y=\pm1$,
\[
\begin{aligned}
    R_{\widetilde L}(y\theta)
    =e^{-iy\theta\widetilde L/2}
    =\cos(y\theta/2)I-i\sin(y\theta/2)\widetilde L
    =cI-iys\widetilde L.
\end{aligned}
\]
Its squared norm is $1/2$, so $\Pr(y)=1/2$.
The subsequent $h=\bar Z_p$ projection gives
\[
    \Pi_z(h)|Y_y\rangle_p=
    \begin{cases}
    |0\rangle_p/\sqrt2,&z=+1,\\
    iy|1\rangle_p/\sqrt2,&z=-1.
    \end{cases}
\]
Applying $A^{(1-z)/2}=\bar X_p^{(1-z)/2}$ therefore gives
\[
\begin{aligned}
&A^{(1-z)/2}\Pi_z(h)\Pi_y(G)R_B(\theta)
  |0\rangle_p|\psi\rangle
    =\frac{(iy)^{(1-z)/2}}{2}
  |0\rangle_pR_{\widetilde L}(y\theta)|\psi\rangle.
\end{aligned}
\]
On this $h=+1$ sector, $\widetilde L=hL$ agrees with $L$.
Since the equality holds for every input in $\cC$, we obtain Eq.~\eqref{pm:branch}.
Its scalar prefactor has squared modulus $1/4$. 
\end{proof}

\begin{theorem}[A deterministic $T$ gate with Pauli corrections]\label{pm:adaptive}
Consider the following protocol:
\begin{enumerate}
\item Apply $R_B(\pi/4)$.
\item Measure $G$, obtaining $y\in\{+1,-1\}$.
\item If $y=-1$, measure $M$, obtaining $r\in\{+1,-1\}$.
If $y=+1$, skip this measurement.
\item Measure $h$, obtaining $z\in\{+1,-1\}$, and apply
$A^{(1-z)/2}$.
\item If $y=-1$, also apply $L$ when $rz=-1$.
For $y=+1$, no additional $L$ correction is applied.
\end{enumerate}
Then, every corrected branch implements $R_L(\pi/4)$. 
The two branches with $y=+1$ have probability $1/4$ each; the four branches with $y=-1$ have probability $1/8$ each.
\end{theorem}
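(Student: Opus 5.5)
The plan is to work entirely in the logical coordinates of Eq.~\eqref{pm:coordinates} on $\cD\simeq\cC\otimes\mathbb C^2$. We reuse the first half of the proof of Theorem~\ref{pm:identity} verbatim up to the $G$ measurement. With $\theta=\pi/4$, that argument gives
\[
\Pi_y(G)R_B(\pi/4)|0\rangle_p|\psi\rangle=\tfrac{1}{\sqrt2}|Y_y\rangle_p\,R_{\widetilde L}(y\pi/4)|\psi\rangle,
\qquad \Pr(y)=\tfrac12 .
\]
Then we split on $y$.

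For $y=+1$ nothing new is needed. Theorem~\ref{pm:identity} with $\theta=\pi/4$ and $y=+1$ already shows that measuring $h$ and applying $A^{(1-z)/2}$ yields $R_L(\pi/4)|\psi\rangle$ up to the phase $(i)^{(1-z)/2}$. Each $z$ occurs with probability $1/4$.

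For $y=-1$ the state is $|Y_-\rangle_p\otimes|\phi\rangle$ with $|\phi\rangle=R_{\widetilde L}(-\pi/4)|\psi\rangle$. We apply $\Pi_r(M)=(I+r\bar X_p\widetilde L)/2$ and then $\Pi_z(\bar Z_p)$, reading off the $|0\rangle_p$ and $|1\rangle_p$ amplitudes. Using $|Y_-\rangle_p=(|0\rangle_p-i|1\rangle_p)/\sqrt2$ and $\bar X_p|Y_-\rangle_p=(|1\rangle_p-i|0\rangle_p)/\sqrt2$, the two outcomes give:
\begin{itemize}
\item $z=+1$: the operator $(I-ir\widetilde L)/(2\sqrt2)=R_{\widetilde L}(r\pi/2)/2$ times $|0\rangle_p$;
\item $z=-1$: the operator $-i(I+ir\widetilde L)/(2\sqrt2)=-iR_{\widetilde L}(-r\pi/2)/2$ times $|1\rangle_p$.
\end{itemize}
After applying $A^{(1-z)/2}=\bar X_p^{(1-z)/2}$, both cases read $|0\rangle_p\otimes R_{\widetilde L}(rz\pi/2)R_{\widetilde L}(-\pi/4)|\psi\rangle/2$ up to phase. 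On the $h=+1$ sector, $\widetilde L$ acts as $L$, so the logical angle is $-\pi/4+rz\pi/2$. This matches the informal description in Sec.~\ref{subsec:identity}.

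Finishing is then arithmetic. If $rz=1$ the angle is $\pi/4$. If $rz=-1$ the angle is $-3\pi/4$, and since $L=iR_L(\pi)$ the final $L$ correction adds $\pi$, giving $\pi/4$ up to global phase. For probabilities, each of the four $(r,z)$ branches has an operator of the form (unitary on $\cC$)$\,\times\,$(scalar of modulus $1/(2\sqrt2)$) applied to the normalized input. Hence each has probability $1/8$, independent of $|\psi\rangle$, consistent with $\Pr(y=-1)=1/2$. The $y=+1$ branches have probability $1/4$ each from Theorem~\ref{pm:identity}.

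The only delicate point is the phase bookkeeping in the $y=-1$ branch. We must check $\bar X_p|Y_-\rangle_p$ carefully, use that $M$ and $h$ commute with $\widetilde L$, and confirm that applying $A$ does not alter the $\widetilde L$ factor; the last holds because $[A,\widetilde L]=0$. We also need that all operators involved preserve $\cD$, which follows since $G,M,h\in\cN(\cS_0)$. Once those signs are fixed, the angle identity $-\pi/4+rz\pi/2$ and the $L$ correction finish the proof.
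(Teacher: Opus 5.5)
Your proposal is correct and takes essentially the same route as the paper. You work in the logical coordinates of Eq.~\eqref{pm:coordinates}, reuse Theorem~\ref{pm:identity} for the $G$ step and the $y=+1$ branches, project with $\Pi_r(M)$ and then $\Pi_z(h)$ to obtain the angle $-\pi/4+rz\pi/2$, and finish with $LR_L(-3\pi/4)=iR_L(\pi/4)$. The only cosmetic difference is that the paper derives the conditional operators after the $M$ and $h$ measurements for general $y$ before setting $y=-1$, whereas you set $y=-1$ from the start; your phases and probabilities agree with the paper's.
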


\begin{proof}
Let $|\psi\rangle\in\cC$.
{
{For $y=+1$, Eq.}~\eqref{pm:branch} with $\theta=\pi/4$ gives
$R_L(\pi/4)$ after $A^{(1-z)/2}$, up to a scalar factor.
Thus $A$ is still applied if $z=-1$, but no additional $L$ correction is needed.}
Conditioned on the first outcome $y$, the normalized state after measuring $G$ is $|Y_y\rangle_p|\phi_y\rangle$, where
$|\phi_y\rangle=R_{\widetilde L}(y\pi/4)|\psi\rangle$.
To calculate the optional $M$ measurement, use
$M=\bar X_p\widetilde L$ and
$\Pi_r(M)=(I+r\bar X_p\widetilde L)/2$:
\[
\begin{aligned}
    &\Pi_r(M)|Y_y\rangle_p|\phi_y\rangle
    =\frac{1}{2\sqrt2}
    \bigl(|0\rangle_p(I+iry\widetilde L) + |1\rangle_p(iyI+r\widetilde L)\bigr)
    |\phi_y\rangle.
\end{aligned}
\]
The $h$ measurement selects the $|0\rangle_p$ term for $z=+1$
and the $|1\rangle_p$ term for $z=-1$; $A^{(1-z)/2}$ then
returns that qubit to $|0\rangle_p$.
The corresponding operators on $|\phi_y\rangle$ are therefore
\begin{equation}\label{eq:pm-conditional-operators}
\begin{gathered}
     z=+1:\quad \frac{I+iry\widetilde L}{2\sqrt2}
     =\frac12 R_{\widetilde L}(-ry\pi/2)
     \\ \text{and} \\
     z=-1:\quad \frac{iyI+r\widetilde L}{2\sqrt2}
     =\frac{iy}{2}R_{\widetilde L}(ry\pi/2).
\end{gathered}
\end{equation}

For $y=-1$ the total angle is $-\pi/4+rz\pi/2$. It equals $\pi/4$ when $rz=1$. Otherwise $LR_L(-3\pi/4)=iR_L(\pi/4)$ supplies the required Pauli correction. 
The first measurement has $\Pr(y)=1/2$, as shown in the proof
of Theorem~\ref{pm:identity}.
For $y=+1$, $M$ is skipped and each $h$ outcome has conditional
probability $1/2$, since
$|\langle0|Y_+\rangle|^2=|\langle1|Y_+\rangle|^2=1/2$.
For $y=-1$, each operator in Eq.~\eqref{eq:pm-conditional-operators} is one half
of a unitary, up to a phase, so each pair $(r,z)$ has
conditional probability $1/4$.
Consequently
\[
\begin{aligned}
    \Pr(y=+1,z)
    &=\Pr(y=+1)\Pr(z\mid y=+1) =\frac12\cdot\frac12=\frac14,\\
    \Pr(y=-1,r,z)
    &=\Pr(y=-1)\Pr(r,z\mid y=-1)=\frac12\cdot\frac14=\frac18.
\end{aligned}
\]
The first equality holds for each $z\in\{+1,-1\}$,
and the second for each $(r,z)\in\{+1,-1\}^2$.
These probabilities are independent of the logical input state $|\psi\rangle$, and Pauli corrections do not change them.
\end{proof}

For the inverse logical gate, replace $R_B(\pi/4)$ by $R_B(-\pi/4)$ and apply the additional $L$ correction when $y=-1$ and $rz=+1$. All other measurements and Pauli corrections are unchanged. Indeed, the $y=-1$ branch then has angle $\pi/4+rz\pi/2$: it equals $-\pi/4$ when $rz=-1$, while multiplication by $L$ gives the same rotation up to phase when $rz=+1$. This implements $R_L(-\pi/4)$ in the same fixed encoding.

The protocol above shares a feature with gate teleportation:
the use of an auxiliary qubit, measurements, and outcome-dependent corrections to obtain the same logical gate in every branch~\cite{gottesman1999demonstrating,zhou2000methodology}.
Here the auxiliary degree of freedom is the additional logical
qubit $p$ of $\cD$, and the non-Clifford resource is supplied by
applying $R_B(\pi/4)$ directly to the data.

The final return to $\cC$ can be seen as a gauge-fixing step: regard $p$ as a gauge qubit, with stabilizer $\cS_0$ and gauge group $\langle\cS_0,A,h\rangle$.
Measuring $h$ and applying $A^{(1-z)/2}$ fixes $h=+1$, giving stabilizer $\langle\cS_0,h\rangle=\cS$ and returning to $\cC$; this is a gauge-fixing step~\cite{bombin2015gauge}.
The optional $M=\bar X_p\widetilde L$ measurement also acts on the original logical subsystem and supplies the logical angle correction derived above.

\subsubsection{Proof of Theorem~\ref{thm:weights}}

\begin{proof}
    On every qubit $L_j\propto A_jB_j$, so $\supp(L)\subseteq\supp(A)\cup\supp(B)$. 
    Two Pauli strings anticommute only if their single-qubit factors anticommute at an odd number of sites in their common support, so $|\supp A\cap\supp B|\geq1$. 
    Hence
    \[
        \wt(A)+\wt(B)=|\supp A\cup\supp B|+|\supp A\cap\supp B|
        \geq\wt(L)+1.
    \]
    This proves Eq.~\eqref{eq:factor-weight-bound}.
     The factorization pair $A=A_q\otimes\bigotimes_{j\in S_A}L_j$,
    $B=B_q\otimes\bigotimes_{j\in S_B}L_j$ from Eq.~\eqref{eq:split-pair} of Theorem~\ref{thm:weights} attains equality.
    Here $L_q$ is the single-qubit Pauli factor of $L$ on physical qubit $q$,
    $A_qB_q=iL_q$, and
    $\supp(L)=\{q\}\sqcup S_A\sqcup S_B$.
    Both factors act nontrivially on $q$, and their other
    supports are disjoint, so
    \[
    \begin{aligned}
        \wt(A)+\wt(B)
        &=(1+|S_A|)+(1+|S_B|) =\wt(L)+1.
    \end{aligned}
    \]

    Now take $L$ to be minimum-weight, with $\wt(L)=d\geq2$.
    A  balanced factorization is the above construction with
    \[
        |S_A|=\left\lceil\frac{d-1}{2}\right\rceil,\qquad
        |S_B|=\left\lfloor\frac{d-1}{2}\right\rfloor.
    \]
    Thus
    \[
    \begin{aligned}
        \wt(A)&=\left\lceil\frac{d+1}{2}\right\rceil,\quad
        \wt(B)&=\left\lfloor\frac{d+1}{2}\right\rfloor<d,
    \end{aligned}
    \]
    with $\wt(B)\leq\wt(A)$; the weights may be equal.

    Suppose for contradiction that the common syndrome is zero.
    Then $B\in\cN(\cS)$. By the definition of distance,
    \[
        d=\min\{\wt(E):E\in\cN(\cS)\setminus\cS\}.
    \]
    Since $\wt(B)<d$, $B$ cannot belong to
    $\cN(\cS)\setminus\cS$, and hence $B\in\cS$ modulo global phase.
    Then $A\in\cN(\cS)$ as well
    {because $\syn(A)=\syn(B)=0$},
    but every stabilizer commutes with every normalizer Pauli,
    contradicting $\{A,B\}=0$.
    Thus the common syndrome is nonzero, proving $\syn(A)=\syn(B)\neq0$.
    Finally, $\max\{\wt A,\wt B\}\geq\lceil(d+1)/2\rceil$ whenever $\wt A+\wt B\geq d+1$.
    Together with Eq.~\eqref{eq:factor-weight-bound}, the attaining weights in Eq.~\eqref{eq:weights} prove the stated optimality.
\end{proof}

\subsubsection{\texorpdfstring{Elementary rotation circuits}{Elementary rotation circuits}}
\label{subsec:rotation-compilation}

Here, we describe the compiled Pauli rotation gate as illustrated in Figure~{\ref{fig:parity-compilation}}.
For a weight-$w$ Pauli $P$, choose a product of single-qubit Cliffords $F$ such that $FPF^\dagger=Z_1\cdots Z_w$. Apply $F$, CNOTs from qubits $1,\ldots,w-1$ into qubit $w$, the single-qubit rotation $R_{Z_w}(\theta)$, the CNOTs in reverse order, and $F^\dagger$. This implements $R_P(\theta)$ with $2(w-1)$ CNOTs and one single-qubit rotation~\cite{cowtan2020phase,toshio2025practical}. Thus a factorization with $\wt(A)+\wt(B)=d+1$ uses $2(d-1)$ CNOTs in the two rotation circuits, before syndrome extraction and protection. 
The single-qubit $R_Z$ gates in the circuits for $R_B$ and $R_A$ have angles $\pi/4$ and $\pi/2$, respectively.

Writing $C$ for the CNOT parity map, $C^\dagger Z_wC=Z_1\cdots Z_w$ gives
\[
R_P(\theta)=F^\dagger C^\dagger R_{Z_w}(\theta)CF.
\]
For the angles $\pm\pi/4$ and $\pi/2$ used here, the elementary rotation is a $T$-type or Clifford gate up to global phase. A general angle requires the corresponding elementary $R_Z(\theta)$; the identity does not assert exact finite Clifford+$T$ synthesis for every angle. A negative sign of $P$ is absorbed into $\theta$. 
For a unitary $U$ and a phase $\phi$, the controlled operation on control qubit $a$ is
\[
C_a(e^{i\phi}U)=|0\rangle\langle0|_a\otimes I+e^{i\phi}|1\rangle\langle1|_a\otimes U.
\]
Thus the factor $e^{i\phi}$ changes the phase between the two control branches and cannot be discarded as a global phase of the controlled operation.
For example, $R_Z(\pi)=-iZ$, but a controlled-$R_Z(\pi)$ applies the factor $-i$ only when the control is $|1\rangle$. It therefore equals controlled-$Z$ followed by $S^\dagger$ on the control. These phases are retained in Eqs.~\eqref{gc:W} and~\eqref{gc:phase-polynomial}.

\subsubsection{{Products of Paulis with the same syndrome}}

\begin{remark}\label{rem:several}
    {If $m$ Paulis share a nonzero syndrome $s$, their product has syndrome $(m\bmod2)s$. Thus an odd number cannot have a logical Pauli as their product. Expanding a product of rotations about these Paulis still produces only syndrome sectors $0$ and $s$. This observation does not exclude products of an even number of Paulis; the identity and factor-weight bounds proved here concern two Paulis.}
\end{remark}

\section{Intermediate-code bounds and examples}\label{app:intermediate-details}

Here we prove the distance statements of Sec.~\ref{sec:intermediate} and specify the code examples used there.

\subsection{Logical operators and proof of Theorem~\ref{thm:intermediate-distance}}

The space $A\cC$ is stabilized by $\cS_0$ and has syndrome $s$ with respect to $\cS$, so $\cC\oplus A\cC$ is the joint $+1$ eigenspace of $\cS_0$. 
Recall that $h\in\cS\setminus\cS_0$ is the omitted check chosen in Sec.~\ref{sec:intermediate}; it anticommutes with $A$ and satisfies $\cS=\langle\cS_0,h\rangle$.
The operator $A$ swaps the two summands, while $h$ acts as $+1$ on $\cC$ and $-1$ on $A\cC$. 
Thus $\bar X_p=A$ and $\bar Z_p=h$ act on the additional logical qubit. 
For any logical Pauli representative $Q$ of $\cC$, the representative $\widetilde Q=Qh^{\epsilon_Q}$ introduced in Sec.~\ref{sec:intermediate} commutes with $A$: the factor $h^{\epsilon_Q}$ cancels precisely the commutation sign of $Q$. 
It also commutes with $h$, and multiplying original logical representatives by powers of $h$ preserves their pairwise commutation signs because $h$ commutes with every $Q\in\cN(\cS)$. Since $h=I$ on $\cC$, their actions there agree with the original logical operators. 
Let $\bar L_h$ be represented by $Lh$. It commutes with $A$ and $h$ and equals $L$ on $\cC$. {In particular, $\epsilon_L=1$ and $B=(iAh)(Lh)$, so $B$ represents $\bar Y_p\bar L_h$.} The two rotations act within $\cD$ about $\bar L_h\otimes\bar Y_p$ and $\bar X_p$, respectively.

\begin{proof}[{Proof of Theorem~\ref{thm:intermediate-distance}}]
    Every stabilizer $g\in\cS$ has the form $g=g_0h^b$, with $g_0\in\cS_0$ and $b\in\mathbb F_2$.
    Suppose that a Pauli operator $E$ commutes with all of $\cS_0$.
    If $E$ also commutes with $h$, it commutes with all of $\cS$ and has syndrome $\syn(E)=0$.
    If $E$ anticommutes with $h$, it has the same commutation sign as $A$ with every $g\in\cS$, and hence $\syn(E)=s$.
    Conversely, a Pauli with syndrome $0$ commutes with all of $\cS$, hence with $\cS_0$. 
    A Pauli $E$ with syndrome $\syn(E)=s=\syn(A)$ has the same commutation signs as $A$ with every element of $\cS$, since these signs are determined by those of the stabilizer generators. 
    It therefore commutes with every element of $\cS_0$, as $A$ does.
    Therefore
    \[
        E\in\cN(\cS_0)
        \quad\Longleftrightarrow\quad
        \syn(E)\in\{0,s\}.
    \]
    
    Since $\cN(\cS)=\ker\syn$ and $\syn(A)=s$, we can write the set of all Paulis with syndrome $s$ as {a coset of $\cN(\cS)$ in the Pauli group $\mathcal{P}_n$ modulo global phases}
    \begin{equation}
        A\cN(\cS)=\{E\in\cP_n:\syn(E)=s\}.
    \label{eq:syndrome-s-coset}
    \end{equation}
    Thus $\cN(\cS_0)=\cN(\cS)\cup A\cN(\cS)$, where $\cN(\cS), A\cN(\cS)$ are disjoint. 
    The nontrivial logical Pauli representatives of $\cD$ are the elements of $\cN(\cS_0)\setminus\cS_0$.
    Since $\cS_0\subseteq\cS\subseteq\cN(\cS)$, subtracting
    $\cS_0$ gives
    \begin{equation}
    \begin{aligned}
        \cN(\cS_0)\setminus\cS_0
        ={}&(\cN(\cS)\setminus\cS) \sqcup(\cS\setminus\cS_0) \sqcup A\cN(\cS).
    \end{aligned}
    \label{eq:intermediate-logical-sets}
    \end{equation}
    
    The nontrivial logical Paulis of $\cD$ fall into three sets: $\cN(\cS)\setminus\cS$, with minimum weight $d$; $\cS\setminus\cS_0$, with minimum weight $\nu(A)$; and the coset $A\cN(\cS)$ in Eq.~\eqref{eq:syndrome-s-coset}, with minimum weight $\mu(s)$.
    Also, each set in Eq.~\eqref{eq:intermediate-logical-sets}
    is nonempty, since
    \[
    \begin{aligned}
        L&\in\cN(\cS)\setminus\cS,\quad
        h\in\cS\setminus\cS_0,\quad
        A\in A\cN(\cS).
    \end{aligned}
    \]
    The three minimum weights are $d$, $\nu(A)$, and $\mu(s)$ by the definition of the distance of $\cC$ and Eqs.~\eqref{eq:nu} and \eqref{eq:mu}, respectively.
    Their minimum weight is therefore $\min\{d,\mu(s),\nu(A)\}$, proving Eq.~\eqref{eq:delta}.
\end{proof}

\subsection{Structure of the intermediate code and the sector representation}\label{app:sector}

Let $\widetilde g_1,\ldots,\widetilde g_{r_{\cS}-1}$ be independent generators of $\cS_0$. 
Choose Hermitian destabilizers $\widetilde D_1,\ldots,\widetilde D_{r_{\cS}-1}$ satisfying $\{\widetilde D_j,\widetilde g_i\}=0$ iff $i=j$, and choose them to commute with logical pairs $\{(\bar X_\ell,\bar Z_\ell)\}_{\ell=1}^k$ and $(A,h)$ of $\cD$. 
For each logical Pauli representative $Q$ of the original code $\cC$, use $Q$ if it commutes with $A$, and $hQ$ otherwise. These representatives commute with $A,h$ and define the logical pairs $(\bar X_\ell,\bar Z_\ell)$ on $\cD$. In particular, $L$ is replaced by $hL=\widetilde L$. 
For $e\in\mathbb F_2^{r_{\cS}-1}$  let $T_e=\prod_j\widetilde D_j^{e_j}$. Every physical state has a unique decomposition
\begin{equation}
|\Psi\rangle=\sum_{e\in\mathbb F_2^{r_{\cS}-1}}T_e|v_e\rangle_{\cD},
\qquad v_e\in\mathbb C^{2^{k+1}}.
\label{eq:sector-decomposition}
\end{equation}
For a Pauli $E$, set $e'=e+\syn_{\cS_0}(E)$. Then $N=T_{e'}^\dagger ET_e\in\cN(\cS_0)$ and, up to an element of $\cS_0$,
\begin{equation}
N=i^\eta\prod_{\ell=1}^k\bar X_\ell^{a_\ell}\bar Z_\ell^{b_\ell}A^ch^d,
\label{eq:path-logical-basis}
\end{equation}
where $\eta\in\bbZ_4$ and the binary exponents follow from the symplectic commutation relations with the chosen logical pairs.
There are $2^{r_{\cS}-1}$ sectors, each of dimension $2^{k+1}$, so their total dimension is $2^{r_{\cS}-1}2^{k+1}=2^n$.

\subsection{Proofs of the pure-code and bounded-check bounds}\label{app:proof-pure-bounds}

\begin{proof}[Proof of Theorem~\ref{thm:pure}(a)]
    Every element of $\cS\setminus\cS_0$ is a nonidentity stabilizer, so purity and Eq.~\eqref{eq:nu} give $\nu(A)\geq d$.
    Equation~\eqref{eq:delta} therefore reduces to
    \[
        \delta=\min\{d,\mu(s),\nu(A)\}=\min\{d,\mu(s)\},
    \]
    since $\nu(A)$ cannot lower the minimum below $d$.
    Moreover, $\syn(B)=s$ by Eq.~\eqref{eq:common-syn}, so $B$ is one of the Paulis over which the minimum in Eq.~\eqref{eq:mu}
    is taken. Hence
    \[
        \mu(s)\leq\wt(B)
        =\left\lfloor\frac{d+1}{2}\right\rfloor,
    \]
    where the equality follows from the  balanced factorization in
    Eq.~\eqref{eq:weights}.
    Consequently $\delta=\mu(s)\leq\lfloor(d+1)/2\rfloor$.

    Now, consider Pauli $E$ with syndrome $\syn(E)=s$ that differs from $B$ modulo phase.
    Syndrome additivity gives
    \[
        \syn(EB)=\syn(E)+\syn(B)=s+s=0,
    \]
    so $EB\in\cN(\cS)$.
    Since $E\neq B$ modulo phase and $B^2=I$, then $EB\neq I$ modulo phase.
    If $EB\in\cN(\cS)\setminus\cS$, then $\wt(EB)\geq d$ by the definition of the code distance.
    If $EB\in\cS$, then $\wt(EB)\geq d$ by purity, because $EB$ is a nonidentity stabilizer modulo phase.
    
    Since $\wt(EB)\leq\wt(E)+\wt(B)$, we obtain $\wt(E)\geq d-\lfloor(d+1)/2\rfloor=\lfloor d/2\rfloor$. 
    If $E$ equals $B$ modulo phase, then $\wt(E)=\wt(B)$, which satisfies the same lower bound.
    Thus every Pauli $E$ with syndrome $s$ has weight at least $\lfloor d/2\rfloor$, so Eq.~\eqref{eq:mu} gives $\mu(s)\geq\lfloor d/2\rfloor$.
    Combining this with $\delta=\mu(s)$ and the upper bound above yields
    \[
        \left\lfloor\frac d2\right\rfloor
        \leq\delta=\mu(s)
        \leq\left\lfloor\frac{d+1}{2}\right\rfloor,
    \]
    which is Eq.~\eqref{eq:pure-window}.
\end{proof}

\begin{proof}[Proof of Theorem~\ref{prop:bounded-checks}(b)]
    Since $s\neq0$, some chosen stabilizer generator $g$ anticommutes with $A$.
    Note that $g$ belongs to $\cS\setminus\cS_0$ and has weight at most $w$.
    Since $g\in\cS\setminus\cS_0$, it is one of the stabilizers
    over which the minimum in Eq.~\eqref{eq:nu} is taken, so
    $\nu(A)\leq\wt(g)$.
    Also, Eq.~\eqref{eq:delta} gives
    $\delta=\min\{d,\mu(s),\nu(A)\}\leq\nu(A)$.
    Together with $\wt(g)\leq w$, these inequalities give
    \[
        \delta\leq\nu(A)\leq\wt(g)\leq w,
    \]
    which completes the proof.
\end{proof}

\subsection{Purity and the intermediate distance}\label{app:purity-remark}

Theorem~\ref{thm:intermediate-distance} gives the following necessary and sufficient condition for the intermediate code to correct every error on at most $t$ qubits:
\begin{equation}
d\geq2t+1,\qquad \mu(s)\geq2t+1,\qquad \nu(A)\geq2t+1.
\label{eq:intermediate-design-criterion}
\end{equation}
This concerns the full error space on each such support; fault propagation through the gate requires a separate analysis. 
For a pure code with $d\geq2t+1$, every nonidentity stabilizer has weight at least $d$, so $\nu(A)\geq d\geq2t+1$. More generally, Eq.~\eqref{eq:nu} minimizes only over $\cS\setminus\cS_0$, the stabilizers that anticommute with $A$.
A stabilizer commuting with $A$ belongs to $\cS_0$, so it is excluded from this minimization and remains a stabilizer of $\cD$.
For example, append $|0\rangle$ to a pure code with $d\geq2$ and let $A,B$ act trivially on the appended qubit. The resulting code is impure because $Z_{n+1}$ is a weight-one stabilizer, whereas its intermediate code is $\cD\otimes|0\rangle$ and has the same distance as $\cD$. Any normalizer Pauli has $I$ or $Z$ on that qubit, and multiplying by $Z_{n+1}$ removes the latter without changing the logical action.

\subsection{Golay code and retained checks}\label{app:golay-details}

Here, we consider the CSS code of the self-orthogonal $[23,11,8]$ code spanned by the eleven octads
\begin{align*}
O_1&=\{1,2,9,14,15,16,17,21\},\\
O_2&=\{1,5,7,8,10,17,20,23\},\\
O_3&=\{6,9,12,13,17,19,20,22\},\\
O_4&=\{6,8,11,14,15,16,17,18\},\\
O_5&=\{1,2,3,4,15,17,20,23\},\\
O_6&=\{2,7,8,11,12,15,18,22\},\\
O_7&=\{3,5,9,10,12,15,16,21\},\\
O_8&=\{2,3,6,9,13,16,21,22\},\\
O_9&=\{4,5,8,10,11,14,17,19\},\\
O_{10}&=\{5,15,16,17,18,20,21,23\},\\
O_{11}&=\{1,6,16,17,18,19,21,22\},
\end{align*}
with checks $X_{O_j}$ and $Z_{O_j}$, $j=1,\ldots,11$; its dual is the $[23,12,7]$ Golay code. 
Let $H$ be the $11\times23$ binary matrix with $H_{jq}=1$ if $q\in O_j$ and $0$ otherwise. Its rows generate the self-orthogonal $[23,11,8]$ code $C$, and $H$ is a parity-check matrix of its dual $C^\perp$, the $[23,12,7]$ Golay code. The CSS check matrices are $H_X=H_Z=H$. Since $HH^{\mathsf T}=0$, the checks
\[
X_{O_j}=\prod_{q\in O_j}X_q,\qquad Z_{O_j}=\prod_{q\in O_j}Z_q
\]
commute. Their $22$ independent generators encode $23-22=1$ logical qubit. A generator matrix of $C^\perp$ is a parity-check matrix of $C$.

Let $u=(1,\ldots,1)\in\mathbb F_2^{23}$ be a row vector.
Each octad has even size, so $Hu^{\mathsf T}=0$, whereas $u\notin C$ because every string in $C$ has even weight.
Thus $C^\perp=C+\langle u\rangle$, and a parity-check matrix of $C$ is
\[
H_C=\begin{pmatrix}H\\u\end{pmatrix}.
\]
The quantum code is the simultaneous $+1$ eigenspace of all $X_{O_j}$ and $Z_{O_j}$, with encoded basis
\[
|b\rangle_{\rm Golay}=\frac{1}{\sqrt{|C|}}\sum_{c\in C}|c+bu\rangle,\qquad b\in\{0,1\}.
\]
The $Z$ checks restrict computational-basis strings to $C^\perp$, while the $X$ checks require equal amplitudes within each coset of $C$.
Since $C^\perp$ has minimum distance seven and $C$ has minimum distance eight, the minimum weight in $C^\perp\setminus C$ is seven, giving the quantum parameters $\qcode{23}{1}{7}$.
\par

We take $L=Z_{\{1,10,12,13,14,15,21\}}$ (a weight-$7$ nontrivial logical Pauli), and the  balanced factorization
\begin{equation}
\begin{aligned}
A&=X_1Z_{10}Z_{12}Z_{13}, & B&=Y_1Z_{14}Z_{15}Z_{21},
\end{aligned}
\label{eq:golay-pair}
\end{equation}
so that $AB=iL$, where $A,B$ has a common syndrome $s=(01000101100\mid11001000001)$ (bits ordered $X_{O_1},\ldots,X_{O_{11}},Z_{O_1},\ldots,Z_{O_{11}}$). Exhaustive search gives $\mu(s)=4$ (attained by $A$) and $\nu(A)=8$, so $\cD$ is a $\qcode{23}{2}{4}$ code: the retained checks detect every Pauli of weight at most three and permit correction of one physical error between completed operations. The omitted check and rotation are
\begin{equation}
\begin{aligned}
h&=Z_1Z_2Z_9Z_{14}Z_{15}Z_{16}Z_{17}Z_{21},\\
U&=R_B(\pi/4),
\end{aligned}
\label{gc:operators}
\end{equation}
for $\Omega_B=\supp(B)=\{1,14,15,21\}$.
Thus $\{h,B\}=\{h,A\}=0$ and $h\PC=\PC$. 
Table~\mbox{\ref{tab:gc-retained}} gives one set of $21$ independent retained generators, all of weight eight. 
The protected circuit uses this complete retained syndrome.

\begin{table}[t]
\centering
\caption{An independent generating set for the full retained Golay stabilizer. Each row denotes the indicated Pauli on its listed support, with positive sign. Adding $h$ from Eq.~\eqref{gc:operators} gives a generating set for the original Golay code.}
\label{tab:gc-retained}
\begin{tabular}{cl}
\hline
Pauli&Support\\\hline
$X$&$1,2,9,14,15,16,17,21$\\
$Y$&$1,5,7,8,10,17,20,23$\\
$X$&$6,9,12,13,17,19,20,22$\\
$Z$&$6,9,12,13,17,19,20,22$\\
$X$&$6,8,11,14,15,16,17,18$\\
$Z$&$6,8,11,14,15,16,17,18$\\
$X$&$1,2,3,4,15,17,20,23$\\
$Z$&$3,4,9,14,16,20,21,23$\\
$Z$&$2,3,4,5,7,8,10,15$\\
$Z$&$2,7,8,11,12,15,18,22$\\
$Y$&$1,6,7,9,12,15,21,22$\\
$X$&$3,5,9,10,12,15,16,21$\\
$Z$&$3,5,9,10,12,15,16,21$\\
$Z$&$2,3,6,9,13,16,21,22$\\
$Y$&$1,3,6,13,14,15,17,22$\\
$Z$&$4,5,8,10,11,14,17,19$\\
$X$&$1,4,7,11,14,19,20,23$\\
$X$&$5,15,16,17,18,20,21,23$\\
$Z$&$5,15,16,17,18,20,21,23$\\
$X$&$1,6,16,17,18,19,21,22$\\
$Z$&$2,6,9,14,15,18,19,22$\\\hline
\end{tabular}
\end{table}

\subsection{High-rate BCH family}\label{app:bch}

Here, we establish the high-rate example stated at the end of Sec.~\ref{sec:intermediate}.
The construction gives a pure CSS family with asymptotic encoding rate approaching one at fixed distance $d=7$ (see Proposition~{\ref{cor:bch}}); applying Theorem~\ref{thm:pure}(a) to a  balanced factorization of a minimum-weight logical Pauli gives $\delta\in\{3,4\}$.
The weights of its checks and the cost of syndrome extraction are discussed in \SM{}~\mbox{\ref{app:rate-design}}.
This code-family construction does not by itself supply a protected circuit or a family-wide logical gate set.

We consider the classical binary BCH code family $\{C_m\}_m$~\cite{aly2007quantum} to construct this CSS family, where the vectors in $C_m^\perp$ specify both its $X$- and $Z$-type stabilizer checks.
We say that a classical binary BCH code family $\{C_m\}_m$ is \emph{primitive} when the block length is $n=2^m-1$.
Let $\alpha$ be a primitive element of $\mathbb F_{2^m}$. A primitive narrow-sense binary BCH code of designed distance $d_{\mathrm{des}}$ is the cyclic code whose generator polynomial is the least common multiple over $\mathbb F_2$ of the minimal polynomials of $\alpha,\alpha^2,\ldots,\alpha^{d_{\mathrm{des}}-1}$.
Here $d_{\mathrm{des}}=7$, so $C_m$ consists of the binary coefficient vectors of
\[
c(z)=\sum_{i=0}^{n-1}c_{\alpha^i}z^i,
\qquad c(\alpha^j)=0\quad(j=1,\ldots,6).
\]
The designed distance specifies the prescribed
consecutive zeros, whereas the BCH bound guarantees $d(C_m)\geq d_{\mathrm{des}}$, but equality need not hold in general. 
For this family, equality will follow from an explicit weight-seven string, as we will establish in Proposition~{\ref{cor:bch}} below.
For the BCH examples, the coordinate convention uses the primitive polynomial $x^5+x^2+1$ for $m=5$ and $x^6+x+1$ for $m=6$, with coordinate $j$ corresponding to $\alpha^{j-1}$.

    The defining zeros impose $c(\alpha^j)=\sum_x c_xx^j=0$
    for $j=1,\ldots,6$, where the sum is over $\mathbb F_{2^m}^{\times}=\mathbb F_{2^m}\setminus\{0\}$.
    These are linear constraints
    on the bits $c_x$, with sums evaluated in $\mathbb F_{2^m}$.
    Since $c_x^2=c_x$, squaring the condition at exponent $j$
    gives the condition at exponent $2j$. Thus the conditions
    at exponents $2,4,6$ follow by squaring those at $1,2,3$,
    respectively, and membership in $C_m$ is equivalent to
    \begin{equation}
        \sum_x c_xx=\sum_x c_xx^3=\sum_x c_xx^5=0.
        \label{eq:bch-field-checks}
    \end{equation}
    \par
    For $m\geq5$, the dimension formula $\dim C_m=n-3m$ and dual containment
    $C_m^\perp\subseteq C_m$ follow from Theorem~10 and Corollary~6 of Ref.~\cite{aly2007quantum},
    respectively, with $q=2$ and designed distance $7$.
    For $m\geq5$, their designed-distance conditions hold because
    $7\leq2^{\lceil m/2\rceil}-1$. The dimension formula specializes
    to $n-m\lceil6(1-1/2)\rceil=n-3m$.
To construct the CSS code parity check matrices $H_X,H_Z$ explicitly, fix an $\mathbb F_2$-basis
of $\mathbb F_{2^m}$ and let $\mathbf v(x)\in\mathbb F_2^m$ be
the coordinate column of $x$ in this basis. Define
$H_m\in\mathbb F_2^{3m\times n}$ by its column indexed by
$x\in\mathbb F_{2^m}\setminus\{0\}$:
\[
    (H_m)_{\cdot,x}=
    \begin{pmatrix}
        \mathbf v(x)\\
        \mathbf v(x^3)\\
        \mathbf v(x^5)
    \end{pmatrix}.
\]
The field parity conditions in Eq.~\eqref{eq:bch-field-checks} give $C_m=\ker H_m$, so the binary row space of $H_m$ is $C_m^\perp$. Take
\[
    H_X=H_Z=H_m.
\]
For each row $u=(u_x)_x$ of $H_m$, the corresponding stabilizer
generators are $\prod_xX_x^{u_x}$ and $\prod_xZ_x^{u_x}$.
The dual containment $C_m^\perp\subseteq C_m$ established above gives
 $H_XH_Z^T=H_mH_m^T=0$, so these generators commute.
\par

\begin{proposition}[BCH family]\label{cor:bch}
    For every $m\geq5$, the primitive narrow-sense binary BCH code of length $2^m-1$ and designed distance $7$ contains its dual, and the associated CSS code is a pure $\qcode{2^m-1}{2^m-1-6m}{7}$ code. 
    Every minimum-weight logical Pauli admits a  balanced factorization into two weight-$4$ factors with intermediate distance $\delta\in\{3,4\}$, and the rate approaches one while the distance of the CSS code before and after the gate remains seven.
\end{proposition}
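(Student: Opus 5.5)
The plan has four steps: the CSS parameters, the distance seven and purity, the factorization, and the limit of the rate. The dual containment $C_m^\perp\subseteq C_m$ and $\dim C_m=n-3m$ for $m\geq5$ were already cited from Ref.~\cite{aly2007quantum} above. With $H_X=H_Z=H_m$, the CSS code therefore has $n-2\cdot 3m=2^m-1-6m$ logical qubits.

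\textbf{Distance and purity.} The $X$- and $Z$-type logical operators are supported on words of $C_m\setminus C_m^\perp$. The BCH bound gives $d(C_m)\geq7$, so the quantum distance is at least seven.
\begin{itemize}
\item For equality, I will exhibit an explicit weight-seven word of $C_m$. I will use field sums: choose seven distinct nonzero $x_i\in\mathbb F_{2^m}$ with $\sum x_i=\sum x_i^3=\sum x_i^5=0$. One route is to take such a word for $m=5$, namely a minimum-weight codeword of the $[31,16,7]$ BCH code, and embed it for larger $m$. The embedding does not go via subfields, since $\mathbb F_{32}\not\subseteq\mathbb F_{2^m}$ in general. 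Instead I will either find a direct algebraic family of solutions, or cite the known fact that the narrow-sense primitive BCH codes of designed distance $2t+1$ have true minimum distance $2t+1$. I will probably cite the latter; Peterson's result covers $2t+1$ with $t=3$.
\item Such a word is not in $C_m^\perp$, because purity below forces every nonzero word of $C_m^\perp$ to have weight above seven. Hence $d=7$.
\item Purity means every nonidentity stabilizer has weight at least seven. A CSS stabilizer $X(u)Z(v)$ with $u,v\in C_m^\perp$ has weight at least $\max(\wt u,\wt v)$. It therefore suffices that every nonzero word of $C_m^\perp$ has weight at least seven. $C_m^\perp$ is a punctured dual BCH code, and the Carlitz--Uchiyama (Weil) bound gives minimum weight at least $2^{m-1}-2\cdot2^{m/2}$ for $t=3$. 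This is at least seven once $m\geq6$. The case $m=5$ I would verify directly; its dual is the $[31,15]$ code with minimum weight eight.
\item I expect this purity and distance bookkeeping, especially the small-$m$ cases, to be the main obstacle.
\end{itemize}

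\textbf{Factorization.} Take a minimum-weight logical $L$ of weight $7$ and apply Theorem~\ref{thm:weights} with $d=7$. This gives $|S_A|=|S_B|=3$ and two weight-$4$ factors with $\syn(A)=\syn(B)\neq0$. Theorem~\ref{thm:pure}(a) then gives $\lfloor7/2\rfloor=3\leq\delta\leq4$.

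\textbf{Rate.} The rate is $(2^m-1-6m)/(2^m-1)\to1$, while the distance stays seven for every $m\geq5$.
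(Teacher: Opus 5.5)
\textbf{Gap: the bound $d(C_m)\le 7$ is not established.} Your outline matches the paper's: count the logical qubits, prove purity, exhibit a weight-seven logical, then apply Theorems~\ref{thm:weights} and~\ref{thm:pure}(a). The one step that needs a real idea is showing $d(C_m)\leq7$ for every $m\geq5$, and your proposal does not establish it. The fact you plan to cite is not true in the generality you state. Primitive narrow-sense binary BCH codes can have true minimum distance larger than the designed distance $2t+1$; Kasami and Tokura exhibited such codes. Peterson's theorem in its divisibility form needs $\delta\mid n$, and $7\mid 2^m-1$ only when $3\mid m$. So it misses $m=5,7,8,10,\dots$.

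\textbf{How to close it.} What does hold for all $m$ is the special case $\delta=2^h-1$: the code contains the punctured $\RM(m-h,m)$ code, whose minimum weight is $2^h-1$. The paper proves the case $h=3$ directly, and this is the ``direct algebraic family'' you were looking for. Take any three-dimensional $\mathbb F_2$-subspace $U\subseteq\mathbb F_{2^m}$ and let $c$ be the indicator of $U\setminus\{0\}$. Write $x=t_1u_1+t_2u_2+t_3u_3$. Frobenius is $\mathbb F_2$-linear, so $x^2$ and $x^4$ are linear in the bits. Hence $x$, $x^3=x\cdot x^2$ and $x^5=x\cdot x^4$ are polynomials of degree at most two in $(t_1,t_2,t_3)$. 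Every such monomial omits some bit, so its sum over the cube vanishes in characteristic two. This gives $\sum_{x\in U}x=\sum_{x\in U}x^3=\sum_{x\in U}x^5=0$, so $c\in C_m$ has weight seven. That removes both the citation and your $m=5$ embedding detour.

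\textbf{Purity: correct, but a different route.} You use the Carlitz--Uchiyama (Weil) bound, $\wt\geq 2^{m-1}-2^{m/2+1}$, which gives at least $16$ for $m\geq6$, and handle $m=5$ through the known $[31,15,8]$ dual. To exclude the weight-seven logical from $C_m^\perp$ you need weight strictly greater than seven, not ``at least seven''. Your bound does supply this, and self-orthogonality of $C_m^\perp$ forces even weights anyway. The paper instead notes that every dual word is the evaluation of $\operatorname{Tr}(ax+bx^3+cx^5)$. That is a Boolean function of degree at most two with $f(0)=0$, so the word is a punctured $\RM(2,m)$ word of weight at least $2^{m-2}\geq8$. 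The paper's argument is elementary and uniform for all $m\geq5$, with no exceptional case. Yours gives a much stronger stabilizer-weight bound, near $2^{m-1}$, for large $m$, at the cost of a number-theoretic input and a separate small case. Your factorization and rate steps coincide with the paper's.
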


\begin{proof}
    Write $n=2^m-1$ and index the binary coordinates by $x\in\mathbb F_{2^m}^{\times}$ 
    , where $\mathbb F_{2^m}^{\times}=\mathbb F_{2^m}\setminus\{0\}$ denotes the nonzero field elements. 
    Since $C_m=\ker H_m$, we have $\rank H_m=n-\dim C_m=3m$.
    The $X$- and $Z$-type generators impose $3m$ independent
    constraints each, so the number of encoded qubits is
    \[
        K=n-\rank H_X-\rank H_Z=n-6m.
    \]
    \par
    
    Choose any three-dimensional binary subspace $U\subseteq\mathbb F_{2^m}$ and let $c$ be the indicator of $U\setminus\{0\}$.
    Explicitly, $c_x=1$ when $x\in U\setminus\{0\}$ and $c_x=0$ otherwise.
    Choose an $\mathbb F_2$-basis $u_1,u_2,u_3$ of $U$.
    Each $x\in U$ is uniquely
    \[
        x=t_1u_1+t_2u_2+t_3u_3,
        \qquad(t_1,t_2,t_3)\in\{0,1\}^3.
    \]
    These three bits are the binary coordinates. Since
    \[
        x^2=\sum_{i=1}^3t_i u_i^2,\qquad
        x^4=\sum_{i=1}^3t_i u_i^4,
    \]
    the expressions $x$, $x^3=x\,x^2$, and $x^5=x\,x^4$
    are polynomials of total degree at most two in $t_1,t_2,t_3$,
    with coefficients in $\mathbb F_{2^m}$.
    \par
    Every monomial omits at least one of the three bits.
    In the sum over all eight triples in $\{0,1\}^3$, pair
    triples differing only in an omitted bit. Each pair
    contributes the same field element twice, which is zero
    in characteristic two. Consequently
    \[
        \sum_{x\in U}x=\sum_{x\in U}x^3=\sum_{x\in U}x^5=0.
    \]
    The $x=0$ term vanishes, so the indicator vector satisfies
    Eq.~\eqref{eq:bch-field-checks}.
    \par
    Therefore $c\in C_m$ has weight seven
    , since $|U|=2^3$ and the indicator of $U\setminus\{0\}$ has $|U|-1=7$ entries equal to one.
    The BCH bound gives $d(C_m)\geq7$, hence $d(C_m)=7$.
    
    Every dual string is the evaluation on $\mathbb F_{2^m}^{\times}$ of a Boolean function $f(x)=\operatorname{Tr}(ax+bx^3+cx^5)$ with $a,b,c\in\mathbb F_{2^m}$, where $\operatorname{Tr}$ is the field trace to $\mathbb F_2$.
    To see this, recall that $C_m^\perp$ is the binary row space of $H_m$. 
    By nondegeneracy of the trace pairing, every $\mathbb F_2$-linear functional on $\mathbb F_{2^m}$ has the form $y\mapsto\operatorname{Tr}(ay)$ for some $a\in\mathbb F_{2^m}$. 
    Indeed, the trace pairing is
    $(a,y)\mapsto\operatorname{Tr}(ay)$. The trace polynomial
    $z+z^2+\cdots+z^{2^{m-1}}$ is nonzero and has degree below
    $2^m$, so it cannot vanish on the entire field. Hence some
    $z$ has $\operatorname{Tr}(z)=1$. For any $a\ne0$, taking
    $y=a^{-1}z$ gives $\operatorname{Tr}(ay)=1$. Thus the map
    sending $a$ to the functional
    $y\mapsto\operatorname{Tr}(ay)$ has trivial kernel.
    Its domain and the space of binary linear functionals both
    have dimension $m$ over $\mathbb F_2$, so this map is also
    surjective.
    Thus the binary row combinations of the $m$ coordinate checks for exponent $j$ are precisely the vectors $(\operatorname{Tr}(ax^j))_x$. 
    They impose the binary parity constraints
    \[
        \sum_x c_x\operatorname{Tr}(ax^j)
        =\operatorname{Tr}\!\left(a\sum_x c_xx^j\right)=0.
    \]
    Adding the row combinations for $j=1,3,5$ gives exactly the stated evaluation vectors of $f$.
    \par
    In any binary field basis, $f$ has degree at most two and $f(0)=0$. 
    Here the variables are the $m$ binary coordinates of $x$; the Frobenius maps and trace are $\mathbb F_2$-linear, while $x^3$ and $x^5$ are products of two linear expressions.
    A nonzero degree-at-most-two Boolean function on $m$ bits has weight at least $2^{m-2}$, the minimum-distance bound for $\RM(2,m)$~\cite[Ch.~13, Theorem~3]{macwilliams1977theory}.
    The evaluation vector of $f$ on all $2^m$ field elements belongs to $\RM(2,m)$. For a nonzero dual string, this extended vector is nonzero and therefore has weight at least $2^{m-2}$.
    Removing its coordinate $f(0)=0$ does not change that weight.
    Taking the minimum over nonzero dual strings gives
    \[
        d(C_m^\perp)\geq2^{m-2}\geq8,
    \]
    where the last inequality uses $m\geq5$.
    \par
    For binary vectors $u,v$, the Pauli $\prod_xX_x^{u_x}Z_x^{v_x}$ has support $\supp(u)\cup\supp(v)$. Since $H_X=H_Z=H_m$, it normalizes the CSS stabilizer exactly when $u,v\in C_m$, and is a stabilizer exactly when $u,v\in C_m^\perp$, with global phases ignored. 
    A nonidentity normalizer therefore has
    weight at least $d(C_m)=7$, and a nonidentity stabilizer
    has weight at least $d(C_m^\perp)\geq8$. The weight-seven
    vector $c$ constructed above belongs to
    $C_m\setminus C_m^\perp$, because no nonzero dual string
    has weight seven. Hence $\prod_{x:c_x=1}Z_x$ is a
    weight-seven nontrivial logical Pauli.
    The quantum code is therefore pure with distance seven. 
    Theorem~\ref{thm:pure} then gives $3\leq\delta\leq4$ for a  balanced factorization.
\end{proof}

This is a code-family statement. The intermediate-distance bound applies to a chosen minimum-weight logical Pauli; it does not supply a protected circuit or a family-wide logical gate set.  
These distance statements alone do not bound the number of gates for syndrome extraction or the cost of encoding the supports of several chosen logical Pauli rotations; \SM{}~\ref{app:rate-design} analyzes these costs.
More precisely, Eq.~\eqref{eq:bch-check-interactions} bounds direct data--ancilla couplings, while Eq.~\eqref{eq:active-union-rate} and Proposition~\ref{prop:logical-access-rate} quantify the additional data qubits needed to encode several logical supports.

\subsection{Examples}

Table~\mbox{\ref{tab:split}} compares the Steane and Golay examples with the impure Shor code and the length-$31$ BCH code. The Shor example has intermediate distance one; the Steane example has intermediate distance two~\cite{shor1995scheme,steane1996error,steane1996simple}.
The table records the intermediate-code parameters for the chosen factorization; the family-wide BCH argument does not depend on a finite search.
The BCH family proof above applies to every $m\geq5$, independently of these finite checks.

\begin{table}[h]
\centering
\caption{Factorizations $AB=iL$. Here $\mu(s)$ is the minimum weight of a Pauli with syndrome $s$, $\nu(A)$ is the minimum weight of a stabilizer anticommuting with $A$, and $\delta$ is the intermediate-code distance.}
\label{tab:split}
\small
\begin{tabular}{@{}l l l l l c c c@{}}
\toprule
Code & $L$ & $A$ & $B$ & $\cD$ & $\mu(s)$ & $\nu(A)$ & $\delta$\\
\midrule
$\qcode{7}{1}{3}$ Steane & $Z_1Z_2Z_4$ & $X_1Z_2$ & $Y_1Z_4$ & $\qcode{7}{2}{2}$ & 2 & 4 & 2\\
$\qcode{9}{1}{3}$ Shor & $X_1X_2X_3$ & $Z_1X_2$ & $-Y_1X_3$ & $\qcode{9}{2}{1}$ & 1 & 2 & 1\\
$\qcode{23}{1}{7}$ Golay & $Z_{\{1,10,12,13,14,15,21\}}$ & $X_1Z_{10}Z_{12}Z_{13}$ & $Y_1Z_{14}Z_{15}Z_{21}$ & $\qcode{23}{2}{4}$ & 4 & 8 & 4\\
$\qcode{31}{1}{7}$ BCH & $Z_{\{1,10,12,14,15,17,20\}}$ & $X_1Z_{10}Z_{12}Z_{14}$ & $Y_1Z_{15}Z_{17}Z_{20}$ & $\qcode{31}{2}{4}$ & 4 & 8 & 4\\
\bottomrule
\end{tabular}
\end{table}

\section{Fault propagation and recovery}\label{app:fault-analysis}

Here we give the circuit-noise model and the fault-propagation and recovery arguments supporting Sec.~\ref{sec:bare}.
The outcome-conditioned recovery criterion precedes its application to the unprotected gadget.
We also state the support and probability bounds needed when the protected constructions are compiled into elementary gates.
\par

\subsection{{Local circuit noise}}
We consider local stochastic circuit noise: for every specified set $R$ of elementary locations,
\begin{equation}
 \Pr(R\subseteq\mathcal F)\leq p^{|R|},
\label{eq:local-stochastic-budget}
\end{equation}
where $\mathcal F$ is the set of faulty locations. Each fault acts only on the qubits participating at that location. Preparations, measurements, gates, resets, and idles are included; classical processing is reliable. All-to-all two-qubit connectivity is assumed. 

\subsection{Fault paths and outcome-dependent recovery}\label{app:bare-details}

\subsubsection{Propagation through a Pauli rotation}

If a Pauli error $E$ occurs after angle $\theta_1$ and before the remaining angle $\theta_2$ of a rotation about $Q\in\{A,B\}$, then
\begin{equation}
    R_Q(\theta_2)E R_Q(\theta_1)=
    \begin{cases}
    E R_Q(\theta_1+\theta_2), &[E,Q]=0,\\
    E R_Q(\theta_1-\theta_2), &\{E,Q\}=0.
    \end{cases}
\label{eq:pauli-angle-propagation}
\end{equation}
A commuting Pauli error preserves the rotation angle in the sense of Refs.~\cite{vy2013error,kapit2018error,ma2020error}, whereas an anticommuting error reverses the remaining angle.

\subsubsection{Outcome-conditioned Knill--Laflamme criterion}

The relevant recovery condition concerns the propagated fault paths sharing a complete measurement record, rather than only error weights at an intermediate stage.
The outcome-conditioned variant of the Knill--Laflamme criterion~\cite{knill1997theory} tests whether those paths admit a common recovery.

\begin{theorem}[Outcome-conditioned Knill--Laflamme]\label{thm:record-kl}
Let $U:\cC\to\mathcal H_{\rm out}$ be the desired isometry, so $U^\dagger U=\PC$ when extended by zero outside $\cC$.
For a gate returning to $\cC$, $U$ is its logical unitary; for input recovery, $U$ is the identity on $\cC$.
Let $\sigma$ denote the complete classical measurement record.
Each unnormalized branch operator $F_j^{(\sigma)}$ is the time-ordered product of ideal gates, the chosen fault Kraus operators and measurement operators along a fault path with record $\sigma$, including the prescribed feed-forward.
The index $j$ labels the fault path and any unrecorded Kraus choices.
A completely positive trace-preserving recovery channel $\mathcal R_\sigma$ restores $U$ on their linear span when \[\mathcal R_\sigma(F\omega F^\dagger)=\lambda_F U\omega U^\dagger\] for every $F\in\operatorname{span}\{F_j^{(\sigma)}\}$ and every density operator $\omega$ supported on $\cC$, with $\lambda_F\geq0$ independent of $\omega$.
Such channels exist if and only if, for every record $\sigma$ and all $j,k$, \[\PC F_j^{(\sigma)\dagger}F_k^{(\sigma)}\PC=c_{jk}^{(\sigma)}\PC\] for scalars $c_{jk}^{(\sigma)}$.
\end{theorem}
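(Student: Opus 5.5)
The plan is to reduce the statement to the standard Knill--Laflamme theorem applied record by record, after pulling back through the desired isometry $U$. The record $\sigma$ is classical and available to the recovery, so the conditions for different records decouple. It therefore suffices to fix one record $\sigma$. For that record, we show that a channel $\mathcal R_\sigma$ restoring $U$ on $\operatorname{span}\{F_j^{(\sigma)}\}$ exists if and only if $\PC F_j^{(\sigma)\dagger}F_k^{(\sigma)}\PC=c_{jk}^{(\sigma)}\PC$. Throughout we treat the unnormalized $F_j^{(\sigma)}$ as error operators acting on the code $\cC$, with outputs in the space where the branch lives.

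\emph{Necessity.} Suppose $\mathcal R_\sigma$ exists, with Kraus operators $R_m$. For each $F$ in the span and each pure $\omega=|\psi\rangle\langle\psi|$ on $\cC$, we have $\sum_m R_mF|\psi\rangle\langle\psi|F^\dagger R_m^\dagger=\lambda_F U|\psi\rangle\langle\psi|U^\dagger$. A rank-one output forces $R_mF|\psi\rangle=a_m(F,\psi)U|\psi\rangle$.

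Linearity in $\psi$ then shows that $a_m$ is independent of $\psi$. The argument is the usual one: apply the relation to $\psi_1+\psi_2$ and use that $U$ is injective on $\cC$.

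Next, linearity in $F$ gives $R_mF_j\PC=a_{mj}U\PC$. Finally, trace preservation, $\sum_m R_m^\dagger R_m=I$, yields
\[
\PC F_j^\dagger F_k\PC=\sum_m \PC F_j^\dagger R_m^\dagger R_m F_k\PC=\Bigl(\sum_m \bar a_{mj}a_{mk}\Bigr)U^\dagger U=c_{jk}\PC .
\]
Here we use $U^\dagger U=\PC$.

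\emph{Sufficiency.} This is the constructive direction, and I expect it to be the main bookkeeping step. The matrix $c=(c_{jk})$ is Hermitian and positive semidefinite. Diagonalize it as $c=V\,\mathrm{diag}(\lambda_\ell)V^\dagger$ and set $G_\ell=\sum_j V_{j\ell}F_j$. Then $\PC G_\ell^\dagger G_{\ell'}\PC=\lambda_\ell\delta_{\ell\ell'}\PC$.

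For each $\lambda_\ell>0$, the operator $G_\ell\PC/\sqrt{\lambda_\ell}$ is an isometry from $\cC$ onto a subspace $W_\ell$, and the subspaces $W_\ell$ are mutually orthogonal. For $\lambda_\ell=0$ we have $G_\ell\PC=0$. Define
\[
R_\ell=\lambda_\ell^{-1/2}\,U\,\PC G_\ell^\dagger\ \text{ on } W_\ell,
\]
and complete these to a trace-preserving channel by adding any channel on the complement of $\bigoplus_\ell W_\ell$.

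Then $R_\ell G_{\ell'}\PC=\delta_{\ell\ell'}\sqrt{\lambda_\ell}\,U\PC$. Every $F$ in the span expands as $F\PC=\sum_\ell b_\ell G_\ell\PC$, and each Kraus operator maps it to a scalar multiple of $U$ acting on the input. Hence $\mathcal R_\sigma(F\omega F^\dagger)=\bigl(\sum_\ell\lambda_\ell|b_\ell|^2\bigr)U\omega U^\dagger$ with $\lambda_F\ge0$ independent of $\omega$.

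Two points need care. First, $U$ maps into $\mathcal H_{\rm out}$ and not necessarily into the space containing $W_\ell$, so the Kraus maps must be written as maps between different spaces. This is harmless once we fix the domains as above. Second, the Kraus choices that are not recorded are included among the indices $j$. This is what justifies working with a single span for each record, and it is exactly the content of the claim for every $\sigma$.
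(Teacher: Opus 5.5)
Your proof is correct and follows the paper's route: fix the record $\sigma$ and reduce to the Knill--Laflamme theorem with the target isometry $U$ attached. The difference is that the paper cites Knill--Laflamme as a black box (composing the standard recovery with $U$ for sufficiency, and for necessity composing $\mathcal R_\sigma$ with a channel that inverts $U$ on its range and then normalizing), whereas you rederive both directions with $U$ built directly into the Kraus operators, which also makes the zero-trace case automatic.
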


\begin{proof}
Fix an outcome $\sigma$ and choose a finite spanning set of branch operators. If the displayed conditions hold, the standard Knill--Laflamme recovery channel corrects the branch operators $F_j^{(\sigma)}$ on $\cC$, and hence their linear span~\cite{knill1997theory}. Composing this channel with the target isometry $U$ gives $\mathcal R_\sigma$.

Conversely, suppose that $\mathcal R_\sigma$ satisfies the stated recovery property. Compose it with a CPTP channel that inverts $U$ on its range; such a channel exists for every isometry. This recovers the input for the completely positive map $\mathcal E_\sigma(\omega)=\sum_jF_j^{(\sigma)}\omega F_j^{(\sigma)\dagger}$, up to its input-independent trace. If this trace is nonzero, normalize $\mathcal E_\sigma$ on the input code and apply the necessity of the Knill--Laflamme conditions. If it is zero, every $F_j^{(\sigma)}\PC$ vanishes and the conditions hold trivially. Distinct classical records allow their recovery channels to be chosen independently.
\end{proof}

\subsubsection{Additional single-fault paths and proof of Proposition~\ref{prop:bare-failure}}\label{app:proof-bare-failure}

Besides the $B$ fault stated in the main text, two further ambiguities are useful.
\begin{lemma}[Single-fault paths of the unprotected gadget]\label{lem:bare-fault-paths}
Write $V=R_A(\pi/2)R_B(\pi/4)$ for the ideal rotation sequence. Here $L=-iAB$ is the nontrivial logical Pauli being rotated, and $|\psi\rangle\in\cC$ is the encoded input state.
\begin{enumerate}[label=\arabic*.]
    \item\label{fp:between-b} A $B$ fault between the two rotations produces $LV|\psi\rangle$ and leaves the syndrome outcome unchanged.
    \item\label{fp:between-a} A single-qubit fault $E$ with $\{E,A\}=0$ between the rotations produces $EAV|\psi\rangle$ up to phase. The same syndrome outcome can arise from $E$ after the second rotation, but the two paths require different logical corrections.
    \item\label{fp:pre-b} A single-qubit Pauli $E$ with $[E,A]=0$ and $\{E,B\}=0$ immediately before the first rotation changes the non-Clifford angle to $-\pi/4$. For a factorization of Eq.~\eqref{eq:split-pair} one may take $E=A_q$ on the qubit shared by both factors $q$. After removing the shifted Pauli error and applying the prescribed correction, outcomes $0$ and $s$ implement $R_L(-\pi/4)$ and $R_L(3\pi/4)$, respectively.
\end{enumerate}
\end{lemma}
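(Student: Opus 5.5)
All three items follow from one tool: commute the fault $E$ to the end of the circuit using the angle-propagation rule Eq.~\eqref{eq:pauli-angle-propagation}, then read off the resulting branch operators from Theorem~\ref{thm:two-rotations} and Theorem~\ref{cor:det-t}. For each path I will write the operator in the form (Pauli)$\times$(rotation sequence with modified angles). I will then compare syndrome outcomes and post-correction logical actions. Ambiguity between two paths means the same record $\sigma$ with $\PC F_j^{(\sigma)\dagger}F_k^{(\sigma)}\PC\not\propto\PC$, which is checked against Theorem~\ref{thm:record-kl}.

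\emph{Item~\ref{fp:between-b}.} The operator $B$ anticommutes with $A$, so $R_A(\pi/2)B=BR_A(-\pi/2)$. We also have $R_A(-\pi/2)=R_A(\pi/2)R_A(-\pi)=R_A(\pi/2)\cdot iA$. This gives
\[
R_A(\pi/2)BR_B(\pi/4)=iBA\,V\propto LV,
\]
using $BA=-iL$. Since $L$ commutes with every stabilizer, the syndrome distribution and record are unchanged. Conditioned on either outcome, the corrected state differs from the ideal output by $L$, because $A$ and $R_L(\pi/2)$ commute with $L$ up to sign. Hence $\PC F_0^\dagger F_B\PC\propto\PC L\PC\not\propto\PC$, and no common recovery exists. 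If this path has probability $\Theta(p)$ (e.g.\ the compiled $Z$ fault after the central $R_Z$), the infidelity is $\Theta(p)$, independently of distances. This yields Proposition~\ref{prop:bare-failure}.

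\emph{Item~\ref{fp:between-a}.} We have $R_A(\pi/2)E=ER_A(-\pi/2)=E\,iA\,R_A(\pi/2)$, so the path is $EAV$ up to phase. The fault $E$ after the second rotation gives $EV$. Because $\syn(A)=s$, the outcomes are exchanged: path $EAV$ in outcome $0$ corresponds, as a sum of branches, to $EA K_s$, while path $EV$ in outcome $\syn(E)$ corresponds to $EK_0$. I will match records by requiring the same measured syndrome $\syn(E)$ versus $\syn(E)+s$. Under the rule ``apply $A$ on outcome $s$,'' the logical actions are $R_L(\pm\pi/4)$ with opposite signs, or differ by the Clifford correction. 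So the common record demands two different logical corrections, $R_L(\pi/4)$ versus $R_L(-\pi/4)$, which differ by the non-Pauli $R_L(\pi/2)$. I will verify this by direct substitution into Eq.~\eqref{eq:two-rot-branches}.

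\emph{Item~\ref{fp:pre-b}.} Since $\{E,B\}=0$, $R_B(\pi/4)E=ER_B(-\pi/4)$. Also $[E,A]=0$, so $E$ passes $R_A$ unchanged. The fault is therefore equivalent to running $\beta=-\pi/8$ and then applying $E$. Theorem~\ref{thm:two-rotations} with $\alpha=\pi/4$ gives $\theta_0=-\pi/4$ and $\theta_s=+\pi/4$. After the prescribed $R_L(\pi/2)$ on outcome $s$, this becomes $3\pi/4$. For the split pair, $A_q$ commutes with $A$ on $q$ and is disjoint elsewhere, while it anticommutes with $B_q$ because $A_qB_q=iL_q$. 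Hence $E=A_q$ works.

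The main obstacle is the bookkeeping in item~\ref{fp:between-a}: identifying precisely which outcomes coincide after the shift by $A$, and confirming that the induced logical actions differ by a non-scalar on $\cC$.
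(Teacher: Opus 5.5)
Your proposal is correct and follows the paper's proof essentially step for step. You use the angle reversal of Eq.~\eqref{eq:pauli-angle-propagation} for items~1 and~2, compare the before- and after-$R_A(\pi/2)$ branch operators at the common record, giving $\PC F_{\rm after}^{(e)\dagger}F_{\rm before}^{(e)}\PC\propto R_L(-\pi/2)\PC$, together with Theorem~\ref{thm:record-kl}, and apply Theorem~\ref{thm:two-rotations} at $\beta=-\pi/8$ for item~3.

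Two small slips in item~2. The branch $EAK_s$ is observed at syndrome $\syn(E)$, not $0$, as your next sentence implicitly corrects. Also, the Knill--Laflamme condition fails because $R_L(\pi/2)$ acts as a non-scalar on $\cC$, since $L$ is a nontrivial logical, not because it is non-Pauli.
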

\par

\begin{proof}[Proof of Lemma~\ref{lem:bare-fault-paths}]
For item 1, $B$ commutes with $R_B(\pi/4)$ and anticommutes with $A$, so $R_A(\pi/2)B=BR_A(-\pi/2)=iBAR_A(\pi/2)$. Since $BA=-iL$, the state is $LV|\psi\rangle$. 
The logical Pauli $L$ commutes with every stabilizer, so the syndrome is unchanged. The prescribed corrections leave $L$ as a logical error up to a global phase. 

For item 2, similarly $R_A(\pi/2)E=ER_A(-\pi/2)=iEAR_A(\pi/2)$. The final syndrome is $\syn(E)+s$ on one ideal branch and $\syn(E)$ on the other; the same outcomes arise from $E$ after $R_A(\pi/2)$ on the opposite branches, with logical actions differing by $R_L(\pi/2)$. 
To see the incompatibility explicitly, set $e=\syn(E)$ and let $F_{\rm before}^{(e)}$ and $F_{\rm after}^{(e)}$ denote the unnormalized branch operators for outcome $e$ when $E$ occurs immediately before or after $R_A(\pi/2)$, respectively.
The ideal branches are $R_L(\pi/4)\PC/\sqrt2$ and $-iAR_L(-\pi/4)\PC/\sqrt2$, so
\[
F_{\rm before}^{(e)}\PC=\frac{1}{\sqrt2}ER_L(-\pi/4)\PC,\qquad
F_{\rm after}^{(e)}\PC=\frac{1}{\sqrt2}ER_L(\pi/4)\PC.
\]
Consequently,
\[
\PC F_{\rm after}^{(e)\dagger}F_{\rm before}^{(e)}\PC=\frac12R_L(-\pi/2)\PC,
\]
which is not a scalar multiple of $\PC$ because $L$ is a nontrivial logical Pauli.
The same prescribed outcome-dependent correction on both paths leaves this cross product unchanged.
Theorem~\ref{thm:record-kl} therefore rules out one outcome-dependent recovery for both paths. 

For item 3, the commutation assumptions give $R_B(\pi/4)E=ER_B(-\pi/4)$ and $R_A(\pi/2)E=ER_A(\pi/2)$. Theorem~\ref{thm:two-rotations} with $\beta=-\pi/8$ gives $\theta_0=-\pi/4$ and $\theta_s=\pi/4$, and the prescribed outcome-$s$ correction produces $R_L(3\pi/4)$.
\end{proof}

The proof of Proposition~\ref{prop:bare-failure} below uses item~\ref{fp:between-b} of Lemma~\ref{lem:bare-fault-paths}.
Item~\ref{fp:between-a} motivates replacing the unprotected second rotation by protected Pauli measurements in Sec.~\ref{sec:golay}, while item~\ref{fp:pre-b} illustrates how a single-qubit error before the non-Clifford rotation changes the resulting logical angles even after its Pauli part is corrected.

Now, we are ready to prove Proposition~\ref{prop:bare-failure}.

\begin{proof}[Proof of Proposition~\ref{prop:bare-failure}]
Lemma~\ref{lem:bare-fault-paths}, item~\ref{fp:between-b}, shows that a $B$ fault between the rotations produces the nontrivial logical error $L$, preserves the syndrome outcome and leaves $L$ after the prescribed correction.
\par
In the elementary rotation circuit of Fig.~\ref{fig:parity-compilation}, the central rotation acts on qubit $w$.
A $Z_w$ fault immediately after the central rotation propagates through the remaining gates to $P$. Taking $P=B$ therefore realizes the malignant $B$ fault of Proposition~\ref{prop:bare-failure} at a single elementary circuit location. This makes the unprotected circuit's failure independent of whether a many-qubit rotation is a hardware operation.
Thus a single elementary fault can cause an uncorrectable logical error, proving that the unprotected gadget is not one-fault tolerant.
Under the noise assumption of Proposition~\ref{prop:bare-failure}, this gives logical gate infidelity $\Theta(p)$.
Indeed, when this is the sole fault, the nontrivial logical Pauli has a nonzero gate infidelity independent of $p$, so this event supplies the $\Omega(p)$ lower bound, while the total probability of any fault in a fixed finite circuit is $O(p)$.
\end{proof}

\subsubsection{\texorpdfstring{Limitation of a single unprotected $T$ gate}{Limitation of a single unprotected T gate}}

The adaptive measurements of Theorem~\ref{pm:adaptive} determine the logical gate in every ideal branch, but their correctness does not by itself protect the physical non-Clifford rotation.
For a circuit containing only one physical $T$-type gate, the next proposition shows that adding stabilizer ancillas, Clifford gates and Pauli measurements cannot both retain an exact non-Clifford logical action with positive success probability and correct or certainly reject the phase fault immediately after that gate.
This explains why stabilizer operations alone cannot repair the exposed single-$T$ implementation of Proposition~\ref{prop:bare-failure}.
\par

\begin{proposition}[One unprotected $T$ cannot be repaired by stabilizer operations]\label{pm:oneT}
Let $m\in\mathcal M$ denote a complete record of the circuit's measurement outcomes and classical random choices, and let $\mathcal M_{\mathrm{acc}}\subseteq\mathcal M$ be the fixed set of records for which the circuit declares success.
Write $p_{\mathrm{acc}}$ for its fault-free acceptance probability.
\par
A finite adaptive circuit using stabilizer ancillas, Clifford gates, Pauli measurements, stabilizer recovery and encoding, and exactly one physical $T$-type gate cannot implement an exact non-Clifford logical unitary $U$ on $k$ logical qubits if (i) a fault-free run is accepted with probability $p_{\mathrm{acc}}>0$, and (ii) a Pauli fault immediately after the gate along its rotation axis is either corrected on every accepted run or rejected with certainty.
For $T$ or $T^\dagger$, this is a $Z$ fault on the gate qubit.
Acceptance means that $m\in\mathcal M_{\mathrm{acc}}$ and the circuit returns a logical output; conditioned on acceptance, the output must equal $U\rho U^\dagger$ for every logical input density operator $\rho$.
The same set $\mathcal M_{\mathrm{acc}}$ is used in the fault-free and faulty circuits.
No other nonstabilizer resource is allowed.
\end{proposition}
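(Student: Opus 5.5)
The plan is to compare the fault-free circuit with the circuit carrying the axis fault, and to show that the two disagree in a way stabilizer operations cannot undo. Since all operations other than the single gate are stabilizer operations, every Pauli fault can be pushed forward through the Clifford remainder, so the faulty circuit equals the ideal one with the rotation angle reversed.

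Concretely, I write the circuit as $\mathcal E_{\rm after}\circ(\mathcal T\otimes\mathrm{id})\circ\mathcal E_{\rm before}$, where $\mathcal T$ is conjugation by $R_Z(\pi/4)$ on the gate qubit. Both $\mathcal E_{\rm before}$ (encoding, ancillas, Cliffords, Pauli measurements, adaptivity) and $\mathcal E_{\rm after}$ are stabilizer operations, decomposed into record-indexed branches. A $Z$ fault right after the gate gives $Z R_Z(\pi/4)=R_Z(\pi/4)Z$ on the gate qubit.

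The key identity is $R_Z(\pi/4)\rho R_Z(\pi/4)^\dagger = \tfrac{1}{2}(\rho+Z\rho Z)+\tfrac{1}{\sqrt2}\bigl(\tfrac{\rho - Z\rho Z}{2}\bigr)+\tfrac{-i}{\sqrt2}\bigl(\tfrac{Z\rho-\rho Z}{2}\bigr)$, up to the convention for the last coefficient. The $Z$-faulted version flips only the sign of the parts odd under conjugation by $Z$, namely the $Z\rho - \rho Z$ combinations. So the gate map splits as $\mathcal T=\mathcal T_{\rm even}+\mathcal T_{\rm odd}$, where $\mathcal T_{\rm even}=\tfrac12(\mathrm{id}+\mathcal Z)+\tfrac{1}{\sqrt2}\cdot\tfrac12(\mathrm{id}-\mathcal Z)$ is a mixture of stabilizer-type maps ($\mathcal Z$ denotes conjugation by $Z$). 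The odd part is $\mathcal T_{\rm odd}(\rho)=\tfrac{-i}{2\sqrt2}[Z,\rho]$ up to sign. The faulty gate is $\mathcal Z\circ\mathcal T=\mathcal T_{\rm even}-\mathcal T_{\rm odd}$ modulo the $\mathcal Z$ part absorbed into even terms. Equivalently, and more cleanly, I would use $R_Z(\pi/4)=\cos(\pi/8)I-i\sin(\pi/8)Z$. Restricting to accepted records $m$ gives unnormalized outputs $\Phi_m(\rho)=\Phi_m^{(0)}(\rho)+\Phi_m^{(1)}(\rho)$ (fault-free) and $\Phi^{\rm f}_m=\Phi_m^{(0)}-\Phi_m^{(1)}$ (faulty). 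Here $\Phi^{(0)}$ collects the $II$ and $ZZ$ terms and $\Phi^{(1)}$ collects the cross terms $IZ$ and $ZI$.

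Then I use the hypotheses. By (i) and exactness, $\sum_{m\in\mathcal M_{\rm acc}}\Phi_m(\rho)=p_{\rm acc}U\rho U^\dagger$. By (ii), either the faulty accepted output is also proportional to $U\rho U^\dagger$ (corrected), or $\sum_{m}\Phi^{\rm f}_m=0$ (rejected with certainty). In the rejection case, adding the two sums gives $2\sum\Phi^{(0)}_m(\rho)=p_{\rm acc}U\rho U^\dagger$. In the correction case, the sums give $\sum\Phi^{(0)}=\tfrac12(p_{\rm acc}+q)U\rho U^\dagger$ with $q$ the faulty acceptance weight. I will check that $q$ is input-independent by comparing traces and using linearity. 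Then $\sum_m\Phi_m^{(0)}$ is a positive multiple of $U\cdot U^\dagger$. But $\Phi^{(0)}$ is $\cos^2(\pi/8)\,\mathcal S_I+\sin^2(\pi/8)\,\mathcal S_Z$, where $\mathcal S_P$ is the whole circuit with the $T$ gate replaced by the Pauli $P$. It is therefore a positive combination of stabilizer-operation branches, i.e. a trace-nonincreasing map implementable by stabilizer operations with postselection on $\mathcal M_{\rm acc}$.

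The final step, which I expect to be the main obstacle, is arguing that a stabilizer circuit (Cliffords, Pauli measurements, stabilizer ancillas, adaptivity, postselection on a fixed record set, classical randomness from mixing $\mathcal S_I,\mathcal S_Z$) cannot produce a nonzero multiple of $U\rho U^\dagger$ for all $\rho$ with $U$ non-Clifford. I would prove it via the Choi state: apply the map to half of a maximally entangled logical state. Each branch $\mathcal S_P$ restricted to a record is a Clifford-plus-Pauli-projection map, so it sends a stabilizer state to a stabilizer state or zero. The accepted output is then a convex mixture of stabilizer states, so it lies in the stabilizer polytope, while $U$'s pure Choi state lies outside it for non-Clifford $U$. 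A pure state in the convex hull of stabilizer states must itself be a stabilizer state (extremality of pure states), and the Choi state of $U$ is a stabilizer state iff $U$ is Clifford. The contradiction would complete the proof; the care needed is in handling adaptivity and the encoding/decoding maps within this branchwise argument.
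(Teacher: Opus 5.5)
Your overall strategy is sound and, once repaired, takes a different route from the paper's. However, the computation it rests on is wrong as written.

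With $a=\cos(\pi/8)$ and $b=\mp i\sin(\pi/8)$, the fault replaces $aI+bZ$ by $Z(aI+bZ)=bI+aZ$. The faulty gate channel is therefore $\sin^2(\pi/8)\,\rho+\cos^2(\pi/8)\,Z\rho Z$ minus the original cross term. The weights of the $II$ and $ZZ$ terms are exchanged, not preserved. In your even/odd language, $\tfrac12(\rho-Z\rho Z)$ is also odd under post-composition with $\mathcal Z$, so only the dephasing part $\tfrac12(\mathrm{id}+\mathcal Z)$ is invariant. Hence $\Phi^{\rm f}_m=\Phi^{(0)}_m-\Phi^{(1)}_m$ fails, and so do your equations $2\sum\Phi^{(0)}_m=p_{\rm acc}U\rho U^\dagger$ and $\sum\Phi^{(0)}_m=\tfrac12(p_{\rm acc}+q)U\rho U^\dagger$.

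Your opening remark is also incorrect. $Z$ commutes with $R_Z(\pi/4)$, so the fault yields a rotation by $\pi/4+\pi$, i.e.\ $ZT=T^\dagger S^\dagger$. That is the reversed rotation followed by an extra Clifford that is not in the ideal circuit. Angle reversal as in Eq.~\eqref{eq:pauli-angle-propagation} needs an anticommuting Pauli.

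The repair is one line. Write $\mathcal S_{P,m}$ for the record-$m$ map with the gate replaced by $P$, and $\mathcal X_m$ for the cross terms. The fault-free accepted map is $\cos^2(\pi/8)\mathcal S_{I,m}+\sin^2(\pi/8)\mathcal S_{Z,m}+\mathcal X_m$, and the faulty one is $\sin^2(\pi/8)\mathcal S_{I,m}+\cos^2(\pi/8)\mathcal S_{Z,m}-\mathcal X_m$. Adding the two hypotheses gives $\sum_{m\in\mathcal M_{\rm acc}}(\mathcal S_{I,m}+\mathcal S_{Z,m})=(p_{\rm acc}+q)\,U(\cdot)U^\dagger$. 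Here $q\ge0$ is the input-independent faulty acceptance weight, and $q=0$ covers certain rejection, so no case split is needed. This is still a nonnegative combination of postselected stabilizer branches with total weight $p_{\rm acc}+q>0$, so your Choi-state step goes through unchanged.

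How the corrected route compares with the paper's:
\begin{itemize}
\item The paper refines each accepted record by computational-basis outcomes on the discarded qubits, so every refined record carries a single Kraus operator $K_\omega(O)$, linear in the replaced operator $O$.
\item Exactness and the Choi vector then force $K_{0,\omega}$ and $K_{1,\omega}$ to be multiples of $U$ branch by branch.
\item Inverting $\left(\begin{smallmatrix}a&b\\b&a\end{smallmatrix}\right)$, which has determinant one, shows that $K_{I,\omega}$ or $K_{Z,\omega}$ is a nonzero multiple of $U$. A single pure postselected stabilizer branch therefore maps a Bell state to $|J_U\rangle$.
\end{itemize}
Your channel-level sum cancels the cross terms automatically, because $|a|^2+|b|^2=1$ and the cross coefficients are opposite. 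You therefore need neither the refinement nor per-branch proportionality nor the matrix inversion.

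The price is that your branches output mixed states after discarding qubits. You must state that partial traces of stabilizer states are mixtures of pure stabilizer states, and then use extremality of the pure Choi state, as you indicate.

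Both routes end with the same fact: a stabilizer Choi state forces $U$ to be Clifford. The paper proves this from the maximally mixed marginals of $|J_U\rangle$. Both routes also rely on the input-independence of the acceptance weights, which you correctly plan to verify.
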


\begin{proof}
We argue by contradiction.
Suppose such a circuit implements a non-Clifford unitary $U$ on $k$ logical qubits.
Thus $U$ acts on $(\mathbb C^2)^{\otimes k}$, the fault-free acceptance probability is non-zero, and every accepted output, both without the specified fault and with it, has the exact logical action $U$.
We will show that these assumptions force $U$ to be Clifford.
A single-qubit Clifford change of basis maps the gate's rotation axis and its specified fault to $Z$.
Absorbing this change into the surrounding Clifford operations, it suffices to consider a unique $T$ or $T^\dagger$ gate and a $Z$ fault immediately after it on the same qubit.

An accepted record is an element $m\in\mathcal M_{\mathrm{acc}}$ for which the circuit declares success and returns a logical output.
The outputs of rejected runs are discarded.
For each accepted run, we record all measurement outcomes and any classical random choices in $m$.

For each qubit that would otherwise be discarded, also record the outcome of a computational-basis measurement immediately before discarding it, and collect these additional outcomes in a string $d$.
These added outcomes do not affect the original acceptance rule or subsequent gates; summing over them reproduces the original partial trace.
Explicitly, if $D$ is a discarded register and $\tau$ is its joint state with the retained register, then $\operatorname{Tr}_D\tau=\sum_d(I\otimes\langle d|)\tau(I\otimes|d\rangle)$, where $d$ runs over the computational basis of $D$.
The discarded qubits need not be in stabilizer states.
The partial-trace identity holds for every $\tau$, including entangled states and states produced using the $T$ gate.
Include the Clifford encoding, all outcome-dependent corrections, and ideal Clifford decoding in the description of each record.
All ancillary output qubits of the decoding are included among the discarded qubits; the remaining input and output registers each contain $k$ qubits.

The refined record is $\omega=(m,d)$, where $m$ contains the original measurement outcomes and classical random choices, and $d$ contains only the added computational-basis outcomes on discarded qubits.
The accepted refined records form
\begin{equation}
\Omega_{\mathrm{acc}}=\{(m,d):m\in\mathcal M_{\mathrm{acc}}\},
\label{pm:oneT-records}
\end{equation}
where $d$ ranges over the computational-basis outcomes on the qubits discarded in that run.
Acceptance depends on $m$ alone.
\par
For a fixed $\omega$, let $K_\omega(O)$ be the $2^k\times2^k$ linear operator obtained by replacing the sole physical $T$ or $T^\dagger$ gate by a single-qubit operator $O$ acting on the same physical qubit at the same circuit location.
Fixing all recorded outcomes in $\omega=(m,d)$ fixes every adaptive choice and every measurement operator, so the encoded circuit followed by decoding gives one unnormalized output vector $K_\omega(O)|\psi\rangle$ for each logical input $|\psi\rangle$.
Without fixing $d$, the original record $m$ would in general specify a sum of Kraus branches, rather than a single operator on the retained logical register.

Let $c_\ell$ be the $\ell$th classical random choice in $m$, let $m_{<\ell}$ be its preceding original record, and let $r_\ell(c_\ell\mid m_{<\ell})$ be the probability assigned to that choice by the circuit.
If $F_\omega(O)$ denotes the fixed-outcome operator with these classical probability factors omitted, then
\begin{equation}
r_m=\prod_\ell r_\ell(c_\ell\mid m_{<\ell}),\qquad K_\omega(O)=\sqrt{r_m}\,F_\omega(O),
\label{pm:oneT-random-weights}
\end{equation}
with $r_m=1$ when there are no classical random choices.
The measurement probabilities are already contained in the norms of the projected vectors and require no additional factors.
Consequently, summing the refined accepted branches gives the original accepted map:
\begin{equation}
\begin{aligned}
&\sum_{m\in\mathcal M_{\mathrm{acc}}}r_m\sum_d F_{(m,d)}(O)\rho F_{(m,d)}(O)^\dagger
=\sum_{\omega\in\Omega_{\mathrm{acc}}}K_\omega(O)\rho K_\omega(O)^\dagger.
\end{aligned}
\label{pm:oneT-refined-map}
\end{equation}
\par

For a normalized input $|\psi\rangle$, the record has probability $\|K_\omega(O)|\psi\rangle\|^2$ whenever $O$ is the gate used in that circuit.
Define $K_{I,\omega}=K_\omega(I)$ and $K_{Z,\omega}=K_\omega(Z)$; these two replacements leave only stabilizer operations.
Since the replaced gate appears once, $K_\omega(O)$ is linear in $O$.
Up to an irrelevant global phase, write the gate as $aI+bZ$, where
\[
a=\cos(\pi/8),\qquad b=
\begin{cases}
-i\sin(\pi/8),&T,\\
+i\sin(\pi/8),&T^\dagger.
\end{cases}
\]
The fault-free branch operator is therefore
\[
K_{0,\omega}=K_\omega(aI+bZ)=aK_{I,\omega}+bK_{Z,\omega}.
\]
An immediately following $Z$ fault changes the gate to $Z(aI+bZ)=bI+aZ$, giving 
\[
K_{1,\omega}=K_\omega(bI+aZ)=bK_{I,\omega}+aK_{Z,\omega}.
\]
Thus $K_{1,\omega}$ describes the same recorded outcomes and corrections when the physical $T$-type gate is followed by the specified $Z$ fault.

Thus $K_{0,\omega}$ and $K_{1,\omega}$ use the same refined record $\omega$ and the same conditional corrections; the index $0$ means no fault, and the index $1$ means that the specified $Z$ fault follows the gate.
Let $\mathcal E_0$ and $\mathcal E_1$ be the accepted decoded maps, summed over all refined accepted records, without and with this fault.
They are trace-nonincreasing maps from $k$-qubit density operators to unnormalized $k$-qubit outputs, with $\mathcal E_j(\rho)=\sum_{\omega\in\Omega_{\mathrm{acc}}}K_{j,\omega}\rho K_{j,\omega}^\dagger$, where $j=0$ denotes no fault and $j=1$ denotes the specified fault.
Exact implementation and correction or certain rejection require 
\[
\mathcal E_j(\rho)=p_{\mathrm{acc},j}U\rho U^\dagger,\qquad j\in\{0,1\},\qquad p_{\mathrm{acc},0}=p_{\mathrm{acc}}>0,\quad p_{\mathrm{acc},1}\geq0.
\]
This equality holds for every $k$-qubit density operator $\rho$, and $p_{\mathrm{acc},j}=\operatorname{Tr}\mathcal E_j(\rho)$ is the probability of acceptance.
The acceptance probabilities $p_{\mathrm{acc},j}$ are input-independent by linearity and the exact action on every logical input.

The equality for $j=0$ is exact fault-free implementation, and assumption~(i) supplies $p_{\mathrm{acc},0}=p_{\mathrm{acc}}>0$.
Assumption~(ii) gives the equality for $j=1$: accepted faulty outputs have the same logical action $U$, while certain rejection is the case $p_{\mathrm{acc},1}=0$.
To make input-independence explicit, temporarily write $p_j(\rho)=\operatorname{Tr}\mathcal E_j(\rho)$.
For two distinct density operators $\rho,\sigma$, linearity and the exact conditional output give
\[
p_j\!\left(\frac{\rho+\sigma}{2}\right)(\rho+\sigma)=p_j(\rho)\rho+p_j(\sigma)\sigma.
\]
Distinct unit-trace density operators are linearly independent, so comparison of their coefficients gives $p_j(\rho)=p_j(\sigma)$.
Thus the constants $p_{\mathrm{acc},j}$ used above are independent of the logical input, including when the accepted map is zero.
\par

For an operator $K$ on the logical qubits, define its Choi vector by
\[
|J_K\rangle=(I\otimes K)|\Phi_k\rangle,\qquad |\Phi_k\rangle=2^{-k/2}\sum_{x\in\{0,1\}^k}|x\rangle|x\rangle.
\]
Writing $K_{j,\omega}$ for the branch operator of refined accepted record $\omega$, the Choi operators of the two maps obey 
\[
\sum_{\omega\in\Omega_{\mathrm{acc}}}|J_{K_{j,\omega}}\rangle\langle J_{K_{j,\omega}}|=p_{\mathrm{acc},j}|J_U\rangle\langle J_U|.
\]
Taking the expectation in any vector orthogonal to $|J_U\rangle$ gives a sum of nonnegative squared moduli equal to zero.
For $\langle v|J_U\rangle=0$, this expectation is 
\[
\sum_{\omega\in\Omega_{\mathrm{acc}}}|\langle v|J_{K_{j,\omega}}\rangle|^2=p_{\mathrm{acc},j}|\langle v|J_U\rangle|^2=0.
\]
Every summand is nonnegative, so $\langle v|J_{K_{j,\omega}}\rangle=0$ for every $\omega$ and every such $v$.
Thus every branch Choi vector is proportional to $|J_U\rangle$.
The branch Choi vector is simply $|J_{K_{j,\omega}}\rangle=(I\otimes K_{j,\omega})|\Phi_k\rangle$, namely the output when that branch acts on half of $k$ Bell pairs.
Since $K\mapsto|J_K\rangle$ is injective, each refined accepted record satisfies 
\[
K_{0,\omega}=\alpha_\omega U,\qquad K_{1,\omega}=\beta_\omega U,
\]
where $\alpha_\omega,\beta_\omega\in\mathbb C$ depend on the record, not on the logical input.
Their squared moduli are that record's fault-free and faulty probabilities.
For every normalized $|\psi\rangle$, unitarity of $U$ gives
\[
\begin{aligned}
\Pr(\omega\mid\text{no fault})&=\|K_{0,\omega}|\psi\rangle\|^2=|\alpha_\omega|^2,\\
\Pr(\omega\mid Z\text{ fault})&=\|K_{1,\omega}|\psi\rangle\|^2=|\beta_\omega|^2.
\end{aligned}
\]
In particular, $p_{\mathrm{acc},0}=\sum_{\omega\in\Omega_{\mathrm{acc}}}|\alpha_\omega|^2$ and $p_{\mathrm{acc},1}=\sum_{\omega\in\Omega_{\mathrm{acc}}}|\beta_\omega|^2$.
Certain rejection of the fault means $p_{\mathrm{acc},1}=0$, so every accepted faulty branch is zero and $\beta_\omega=0$.
By assumption~(i), $\sum_{\omega\in\Omega_{\mathrm{acc}}}|\alpha_\omega|^2=p_{\mathrm{acc}}>0$, so there is an accepted refined record $\omega$ with $\alpha_\omega\ne0$; fix this $\omega$ for the remainder of the proof, retaining its label.

The coefficient matrix of the two equations for $K_{0,\omega},K_{1,\omega}$ is $\left(\begin{smallmatrix}a&b\\b&a\end{smallmatrix}\right)$.
Explicitly, the two linear combinations above can be written as 
\[
    \begin{pmatrix}K_{0,\omega}\\
    K_{1,\omega}\end{pmatrix}
    =
    \begin{pmatrix}a&b\\
    b&a\end{pmatrix}
    \begin{pmatrix}K_{I,\omega}\\
    K_{Z,\omega}\end{pmatrix}.
\]
Here $a$ and $b$ are the coefficients of the physical $T$-type gate defined above, and $(\pm i)^2=-1$ gives $b^2=-\sin^2(\pi/8)$.
For either sign of $b$, its determinant is
\[
a^2-b^2=\cos^2(\pi/8)+\sin^2(\pi/8)=1.
\]
The inverse matrix is therefore
\[
\begin{pmatrix}a&b\\b&a\end{pmatrix}^{-1}
=
\frac{1}{a^2-b^2}\begin{pmatrix}a&-b\\-b&a\end{pmatrix}
=
\begin{pmatrix}a&-b\\-b&a\end{pmatrix}.
\]
Inverting this matrix gives
\[
\begin{aligned}
K_{I,\omega}&=aK_{0,\omega}-bK_{1,\omega}=(a\alpha_\omega-b\beta_\omega)U,\\
K_{Z,\omega}&=-bK_{0,\omega}+aK_{1,\omega}=(-b\alpha_\omega+a\beta_\omega)U.
\end{aligned}
\]
At least one of $a\alpha_\omega-b\beta_\omega$ and $-b\alpha_\omega+a\beta_\omega$ is nonzero: if both vanished, then $K_{I,\omega}=K_{Z,\omega}=0$ and hence $K_{0,\omega}=aK_{I,\omega}+bK_{Z,\omega}=0$, contrary to our choice $K_{0,\omega}=\alpha_\omega U$ with $\alpha_\omega\ne0$.
Choose a nonzero operator $K_{\star,\omega}=\gamma_\omega U\in\{K_{I,\omega},K_{Z,\omega}\}$, thus $\gamma_\omega\in\{a\alpha_\omega-b\beta_\omega,-b\alpha_\omega+a\beta_\omega\}$.
It is realized by replacing {the sole physical $T$ or $T^\dagger$ gate} by $I$ or $Z$ and retaining only the fixed refined record $\omega=(m,d)$, so it is a postselected branch made entirely of stabilizer operations.

Applying this branch to the second half of the stabilizer Bell state gives 
\[
(I\otimes K_{\star,\omega})|\Phi_k\rangle=\gamma_\omega(I\otimes U)|\Phi_k\rangle=\gamma_\omega|J_U\rangle.
\]
Stabilizer ancillas, Clifford gates, and Pauli measurements with specified outcomes take a pure stabilizer state to zero or to a scalar multiple of a pure stabilizer state.
Since $\gamma_\omega\ne0$, $|J_U\rangle$ must therefore be a stabilizer state.

To see why this is impossible for a non-Clifford $U$, note that both reduced states of $|J_U\rangle$ are maximally mixed.
Its stabilizer therefore contains no nonidentity element supported on only one half.
Its $4^k$ elements consequently have distinct Pauli labels on the first half, covering all $k$-qubit Pauli labels modulo phase.
Hence, for every Hermitian Pauli $P$, there is a Hermitian Pauli $Q$ such that
\[
(P^{\mathsf T}\otimes Q)|J_U\rangle=|J_U\rangle,
\]
where the transpose is in the computational basis.
The Bell-state identity $(P^{\mathsf T}\otimes I)|\Phi_k\rangle=(I\otimes P)|\Phi_k\rangle$ then gives $QUP=U$, and hence $Q=UPU^\dagger$.
Thus $U$ conjugates every Pauli to a Pauli and is Clifford, contradicting the assumed non-Clifford action.
\end{proof}

\subsection{Faults in compiled blocks}

We show how a fault in a compiled unitary circuit can spread within its support, and relate the probability of a faulty compiled block to the elementary noise parameter. Lemma~\ref{lem:compiled-support} is used for the Golay rotation and controlled blocks in Theorem~\ref{thm:golay-complete}.

\begin{lemma}[{Fault support after a compiled block}]\label{lem:compiled-support}
Let $U=U_m\cdots U_1$ be a unitary circuit whose gates act only on a set $S$ of qubits.
Each $U_j$ acts only on $S$.
A single internal fault acting only on the qubits participating in its faulty location is equivalent at its output to an arbitrary error on $S$ after the ideal $U$, including for an input entangled with a reference.
The fault may affect both qubits of a two-qubit gate; it need not be a single-qubit error.
\end{lemma}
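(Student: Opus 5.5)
The plan is to push the single fault forward through the remaining gates, which are all ideal. Let the faulty location be gate $U_j$ of $U=U_m\cdots U_1$. Model the faulty location as the ideal gate $U_j$ followed by a completely positive map $\mathcal N$ that acts only on the qubits of that location; this set, call it $P_j$, is contained in $S$. Write $\mathcal N$ in Kraus form, $\mathcal N(\rho)=\sum_a N_a\rho N_a^\dagger$. Each $N_a$ acts as the identity outside $P_j$, and in particular as the identity on any reference system $R$. If the model instead places the error before $U_j$, or lets it replace $U_j$ entirely, we absorb $U_j$ by writing $N'_a=N_aU_j$ or $N'_a=U_jN_aU_j^\dagger$. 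In every case the result is still an operator supported on $P_j$.

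The faulty circuit then has Kraus operators
\[
F_a=U_m\cdots U_{j+1}\,N_a\,U_j\cdots U_1 .
\]
Inserting the identity $U_{>j}^\dagger U_{>j}$, where $U_{>j}:=U_m\cdots U_{j+1}$, gives
\[
F_a=E_a\,U,\qquad E_a:=U_{>j}\,N_a\,U_{>j}^\dagger .
\]
All three factors of $E_a$ act trivially outside $S$, so $E_a$ is supported on $S$. The faulty channel is therefore $\rho\mapsto\sum_aE_aU\rho U^\dagger E_a^\dagger$, which is the ideal $U$ followed by a map with Kraus operators supported on $S$. The argument never expands $N_a$ in Paulis and never assumes a Pauli fault, so a two-qubit fault, or a coherent or non-unitary one, is covered in the same way.

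It remains to check entanglement with a reference. Every operator involved has the form $X_{S}\otimes I_{\bar S}\otimes I_R$, and this form is preserved under products and conjugation. The identity $F_a=E_aU$ therefore holds as an operator identity on $S\cup\bar S\cup R$, with $E_a$ acting trivially on both $\bar S$ and $R$, so it holds for every input state, including inputs entangled with $R$.

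The word \emph{arbitrary} in the statement is only an upper bound: the lemma asserts that the output is equivalent to some error on $S$, and does not require that every error on $S$ be realizable by a single fault. I expect no real obstacle. The only point needing care is the convention for where the fault sits relative to its gate, which the rewriting $N'_a$ above settles.
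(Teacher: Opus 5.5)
Your proposal is correct and is essentially the paper's proof: you represent the fault by a Kraus operator $N_a$ immediately after $U_j$, rewrite $U_{>j}N_aU_j\cdots U_1=(U_{>j}N_aU_{>j}^\dagger)U$, and note that the conjugated error is supported on $S$ and acts trivially on the reference. Your extra treatment of other fault conventions has a small slip in the formulas: an error $N_a$ before $U_j$ gives $N'_a=U_jN_aU_j^\dagger$, and a fault replacing $U_j$ by $N_a$ gives $N'_a=N_aU_j^\dagger$, not $N_aU_j$; since all of these are supported on $P_j$, the argument is unaffected.
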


\begin{proof}
{Represent a faulty gate by an error Kraus operator $E_a$ immediately after its ideal gate $U_j$, and put $U_{>j}=U_m\cdots U_{j+1}$. Then
\[
U_{>j}E_aU_j\cdots U_1
=\bigl(U_{>j}E_aU_{>j}^\dagger\bigr)U.
\]
Every suffix gate acts within $S$, so the effective error is supported on $S$. The equality holds for every Kraus operator and acts trivially on an external reference.}
\end{proof}

A single-fault guarantee therefore transfers to this compilation only when the proof handles the full operator space on $S$. 
Here single-fault tolerance has the meaning defined in Sec.~\ref{sec:bare}, including its incoming-error budget.
An elementary two-qubit fault may spread over all of $S$. Individual basis changes and CNOT gates need not preserve the static intermediate code. Preparations and measurements retain their stated verified implementations; the specified supports and operation order remain part of the circuit proof.

Let $D_i$ be compiled blocks with disjoint sets of elementary locations, containing $m_i$ locations each, and call a block faulty when at least one of its locations is faulty.
The index $i$ labels circuit blocks. Their elementary locations are disjoint, but their qubit supports may overlap.
For any specified $r$ blocks, a union bound over one faulty location in each block and Eq.~\eqref{eq:local-stochastic-budget} give
\begin{equation}
    \Pr(\text{all selected blocks are faulty})
    \le p^r\prod_{i=1}^r m_i\le(mp)^r,
\label{eq:compiled-block-rate}
\end{equation}
where $m=\max_i m_i$.
The parameter $p$ is the elementary local-stochastic noise parameter in Eq.~\eqref{eq:local-stochastic-budget}.
The selected elementary locations are distinct, so no independence assumption is needed. This bound concerns fault occurrences; correctness also requires the stated support conditions. The logical-failure bounds count the complete elementary schedules, including conditional operations, recovery and waits.

\section{Shared primitives and the fixed 22-qubit construction}\label{app:selective}
Here we give the code constructions and protected circuits supporting Sec.~\ref{sec:selective}.
We first establish the distance formula for selective concatenation and specify its component codes before applying it to examples.
The fixed 22-qubit gate is then built from protected correction, Pauli measurement, and circuits for logical Clifford gates and transfer between encodings, followed by its finite schedule and recursive error bound.
\par

\subsection{Selective concatenation}\label{app:selective-details}

Here we define selective concatenation, meaning that only chosen outer-code qubits are encoded in inner codes, and derive the resulting distance formula.
{Let $\mathcal{S}^{\mathrm{out}}$ be the stabilizer group of the outer code on $n$ qubits, and let $\mathcal{S}_i^{\mathrm{in}}$ be the stabilizer group of the one-logical-qubit inner code at outer qubit $i$. 
Different blocks may use different inner codes.
An unencoded qubit is treated as a one-qubit inner code with $\mathcal{S}_i^{\mathrm{in}}=\{I\}$.}
Then, we lift each outer Pauli by replacing its factors with compatible inner logical representatives. 
We denote the physical lift of the outer Pauli $P$ by $\widehat P$ and retain the overall phase of the outer Pauli $P$.
{The resulting stabilizer group $\mathcal{S}^{\rm enc}$ is generated by the inner checks from all $\mathcal{S}_i^{\mathrm{in}}$ and the lifts of the outer checks in $\mathcal{S}^{\mathrm{out}}$.}

\begin{proposition}[Distance with unequal inner codes]\label{rep:weights}\label{prop:selective-lift}
    For an outer stabilizer group $\mathcal{S}^{\mathrm{out}}$ and inner
    stabilizer groups $\mathcal{S}_i^{\mathrm{in}}$, fix compatible physical
    inner logical representatives
    $\bar P_i\in\cN(\mathcal{S}_i^{\mathrm{in}})$ of the single-qubit labels
    $P\in\{X,Y,Z\}$, with $\bar I_i=I$. Define
    \[
    \lambda_i(P)=
    \min_{R_i\in\bar P_i\mathcal{S}_i^{\mathrm{in}}}\wt(R_i),
    \qquad \lambda_i(I)=0.
    \]
    Thus the minimum is over physical Paulis in the inner logical
    coset corresponding to $P$.
    An unencoded {qubit} has $\lambda_i(P)=1$ for $P\ne I$. 
    Here $P$ is the single-qubit Pauli on outer physical qubit $i$; its inner representative acts as $P$ on the one logical qubit of inner block $i$.
    {The distance $d_{\rm enc}$ of the code with stabilizer group $\mathcal{S}^{\rm enc}$ is}
    \begin{equation}
        d_{\rm enc}=\min_{E\in\cN(\mathcal{S}^{\mathrm{out}})\setminus \mathcal{S}^{\mathrm{out}}}\sum_i\lambda_i(E_i).
    \label{rep:distance-formula}
    \end{equation}
\end{proposition}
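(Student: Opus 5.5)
The plan is to describe the normalizer and stabilizer of the concatenated code in terms of outer-level data, and then minimize weight blockwise.

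Write $\Phi$ for the lift map sending an outer Pauli $P$ to $\widehat P=\bigotimes_i\bar P_i$. Inner representatives of distinct labels multiply like the labels, up to inner stabilizers and phases, so $\Phi$ is a homomorphism modulo $\mathcal S^{\rm in}:=\prod_i\mathcal S_i^{\rm in}$. It also preserves commutation signs, because $\bar P_i$ and $\bar Q_i$ commute or anticommute exactly as $P$ and $Q$ do.

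The first step is to show
\[
\cN(\mathcal S^{\rm enc})=\Phi\bigl(\cN(\mathcal S^{\rm out})\bigr)\,\mathcal S^{\rm in},
\qquad
\mathcal S^{\rm enc}=\Phi(\mathcal S^{\rm out})\,\mathcal S^{\rm in}.
\]
The second identity is the definition of $\mathcal S^{\rm enc}$. For the first, take a physical Pauli $F$ commuting with $\mathcal S^{\rm enc}$. Its restriction $F_i$ to block $i$ commutes with $\mathcal S_i^{\rm in}$, because the other blocks' factors do not act on block $i$. Hence $F_i\in\cN(\mathcal S_i^{\rm in})$. Each inner code encodes one qubit, so $F_i=\bar{E_i}\,g_i$ for a unique label $E_i\in\{I,X,Y,Z\}$ and some $g_i\in\mathcal S_i^{\rm in}$.

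Set $E=\bigotimes_iE_i$. Then $F$ and $\widehat E$ differ by an element of $\mathcal S^{\rm in}$, which commutes with lifted outer checks. Commutation of $F$ with each lifted check $\widehat g$ therefore equals commutation of $E$ with $g$, so $E\in\cN(\mathcal S^{\rm out})$. The converse inclusion holds by the same sign preservation.

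Next I would show that $F\in\mathcal S^{\rm enc}$ if and only if $E\in\mathcal S^{\rm out}$, modulo phase. The "if" direction is immediate. For the "only if" direction, suppose $F=\widehat g\,s$ with $g\in\mathcal S^{\rm out}$ and $s\in\mathcal S^{\rm in}$. Blockwise, $\bar{E_i}$ and $\bar{g_i}$ then lie in the same inner logical coset, and a one-qubit inner code has no nontrivial logical operator inside its stabilizer. Hence $E_i=g_i$ for every $i$, and $E=g$. This step is the main obstacle: the phase conventions and the uniqueness of labels modulo $\mathcal S_i^{\rm in}$ must be handled carefully. For unencoded qubits the claim is trivial, since there $\mathcal S_i^{\rm in}=\{I\}$.

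The outcome is that nontrivial logicals of the encoded code are exactly the products $\widehat E\,s$ with $E\in\cN(\mathcal S^{\rm out})\setminus\mathcal S^{\rm out}$ and $s\in\mathcal S^{\rm in}$.

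Finally, the weight of such an operator is additive over blocks because the blocks are disjoint. For fixed $E$, each block's factor ranges independently over the coset $\bar{E_i}\mathcal S_i^{\rm in}$. The minimum weight for fixed $E$ is therefore $\sum_i\lambda_i(E_i)$, with $\lambda_i(I)=0$ attained by the identity, since $I\in\mathcal S_i^{\rm in}$. Minimizing over $E$ gives Eq.~\eqref{rep:distance-formula}. For an unencoded qubit the coset is the single-qubit Pauli itself, so $\lambda_i=1$.
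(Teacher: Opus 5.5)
Your proposal is correct and follows essentially the same route as the paper. Both arguments label each block of a physical normalizer element by its inner logical coset, giving $E=\bigotimes_iE_i$. Both then use sign preservation under lifting to show that the element is a nontrivial encoded logical exactly when $E\in\cN(\mathcal S^{\rm out})\setminus\mathcal S^{\rm out}$, and finally minimize weight independently over the disjoint cosets $\bar{E_i}\mathcal S_i^{\rm in}$. Your appeal to uniqueness of labels modulo $\mathcal S_i^{\rm in}$ for the \emph{only if} direction of stabilizer membership makes explicit a step the paper states only briefly.
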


\begin{proof}
    {Let $\widetilde E\in\cN(\mathcal{S}^{\rm enc})$ be a physical
    Pauli on the concatenated code. Its restriction
    $\widetilde E_i$ to inner block $i$ belongs to
    $\cN(\mathcal{S}_i^{\mathrm{in}})$. Let $E_i\in\{I,X,Y,Z\}$ be the
    single-qubit label of its inner logical coset, so
    $\widetilde E_i\in\overline{E_i}\mathcal{S}_i^{\mathrm{in}}$, where
    $\overline{E_i}$ denotes the fixed inner representative of
    that label. The labels define the outer physical Pauli
    $E=\bigotimes_iE_i$; $E_i$ is a single-qubit factor of $E$,
    whereas $\widetilde E_i$ acts on the whole inner block.}
    
    {For $g=\bigotimes_i g_i\in \mathcal{S}^{\mathrm{out}}$, its lift
    $\widetilde g=\bigotimes_i\overline{g_i}$ is a check of
    $\mathcal{S}^{\rm enc}$. The commutation sign of $\widetilde E$ with
    $\widetilde g$ equals that of $E$ with $g$. Hence the lifted
    outer-check constraints are exactly
    $E\in\cN(\mathcal{S}^{\mathrm{out}})$.
    Moreover, $\widetilde E\in \mathcal{S}^{\rm enc}$ exactly when
    $E\in \mathcal{S}^{\mathrm{out}}$: its label is then an outer stabilizer,
    and it differs from that stabilizer's lift by a product of
    inner stabilizers from the $\mathcal{S}_i^{\mathrm{in}}$.
    Thus $\widetilde E\in\cN(\mathcal{S}^{\rm enc})\setminus \mathcal{S}^{\rm enc}$
    exactly when
    $E\in\cN(\mathcal{S}^{\mathrm{out}})\setminus \mathcal{S}^{\mathrm{out}}$.}
    
    Because the inner blocks are disjoint,
    \[
    \wt(\widetilde E)=\sum_i\wt(\widetilde E_i)
    \geq\sum_i\lambda_i(E_i).
    \]
    Minimizing over nontrivial concatenated logical Paulis gives
    the lower bound in Eq.~\eqref{rep:distance-formula}.
    
    {Conversely, fix an outer logical Pauli
    $E=\bigotimes_iE_i\in
    \cN(\mathcal{S}^{\mathrm{out}})\setminus \mathcal{S}^{\mathrm{out}}$.
    In each inner block choose
    $R_i\in\overline{E_i}\mathcal{S}_i^{\mathrm{in}}$ with
    $\wt(R_i)=\lambda_i(E_i)$, taking $R_i=I$ when $E_i=I$.
    The choices are independent because the blocks are disjoint
    and multiplication by inner stabilizers does not change
    the labels $E_i$ or their outer commutation conditions.
    The physical Pauli $\widetilde E=\bigotimes_iR_i$ therefore
    belongs to $\cN(\mathcal{S}^{\rm enc})\setminus \mathcal{S}^{\rm enc}$ and has
    weight $\sum_i\lambda_i(E_i)$. Minimizing over the outer
    logical Paulis $E$ proves equality in
    Eq.~\eqref{rep:distance-formula}.}
\end{proof}

For the outer code used before and after the gate and the intermediate outer code, let us denote their stabilizer groups as $\mathcal{S}^{\mathrm{out}}$ and $\mathcal{S}_0^{\mathrm{out}}$, respectively.
Using the same inner blocks to lift these two groups gives
the concatenated stabilizers $\mathcal{S}^{\rm enc}$ and
$\mathcal{S}_0^{\rm enc}$, respectively. Both contain all inner
stabilizers $\mathcal{S}_i^{\mathrm{in}}$.

{Let $J\subseteq\{1,\ldots,n\}$ index the outer physical
qubits encoded in a common $\qcode{m}{1}{\Delta}$ inner code.
Leaving the other qubits unencoded gives
$n-|J|+m|J|=n+|J|(m-1)$ physical qubits. 
Write $d_J$ and $\delta_J$ for the distances of the
concatenated codes with stabilizers $\mathcal{S}^{\rm enc}$ and
$\mathcal{S}_0^{\rm enc}$, respectively.
Since every nonidentity
inner logical has weight at least $\Delta$,
Eq.~\eqref{rep:distance-formula} gives}
\begin{equation}
\begin{aligned}
    d_J&\geq
    \min_{E\in\cN(\mathcal{S}^{\mathrm{out}})\setminus \mathcal{S}^{\mathrm{out}}}
    w_J(E),\qquad
    \delta_J\geq
    \min_{E\in\cN(\mathcal{S}_0^{\mathrm{out}})\setminus \mathcal{S}_0^{\mathrm{out}}}
    w_J(E),\\
    w_J(E)&=\sum_{j\in\supp(E)}
    \bigl(\Delta\ind[j\in J]+\ind[j\notin J]\bigr).
\end{aligned}
\label{eq:weighted-weight}
\end{equation}
{Uniform concatenation encodes every outer qubit in the
same inner code, so $J=\{1,\ldots,n\}$. Then
$w_J(E)=\Delta\wt(E)$, and the two minima in
Eq.~\eqref{eq:weighted-weight} give $d_J\geq d\Delta$ and
$\delta_J\geq\delta\Delta$.}
{Here $d$ and $\delta$ are the distances of the
outer codes with stabilizers $\mathcal{S}^{\mathrm{out}}$ and
$\mathcal{S}_0^{\mathrm{out}}$, respectively.}

\subsection{Steane and Reed--Muller component codes}\label{app:component-codes}
Here we fix the signed component-code conventions used in the selective examples and the fixed code.
Use outer Steane supports $H_1=\{1,4,5,7\}$, $H_2=\{2,4,6,7\}$, and $H_3=\{3,5,6,7\}$. Write $x_i=X_{H_i}$ and $z_i=Z_{H_i}$, and let $\cC_0$ have stabilizer $\langle x_1,x_2,x_3,z_1,z_2,z_3\rangle$. Take

\begin{equation}\begin{gathered}
 L=Z_1Z_2Z_4,\quad A=X_1Z_2,\quad B=Y_1Z_4,\quad h=z_1,\\
 G=Y_1Z_2Z_4Z_5Z_7,\quad M=X_1Z_5Z_7,\quad
 \cS_0=\langle x_1,x_3,z_2,z_3,x_2z_1\rangle.
\end{gathered}\label{pm:steane}
\end{equation}

The intermediate code $\cD_0$ is stabilized by $\cS_0$. The signed product $x_2z_1$, rather than $x_2$ alone, is a retained generator.

We label the 15 physical qubits by the nonzero vectors of $\mathbb F_2^4$, using the binary expansion of each integer $1,\ldots,15$. This numbering is used to specify the supports of the checks and logical operators below.
For each $i=1,\ldots,4$, we define the length-$15$ binary string $u^{(i)}=(v_i)_{v\in\mathbb F_2^4\setminus\{0\}}$: its entry at qubit $v$ is the $i$th binary coordinate of $v$.
Then, let $V=\operatorname{span}_{\mathbb F_2} \{u^{(1)},u^{(2)},u^{(3)},u^{(4)}\}$ and $W=V+\langle\mathbf1\rangle$. 
The stabilizer has four independent $X$ checks from $V$ and ten independent $Z$ checks from $W^\perp$.
For the check-weight bounds below, choose the ten $Z$ checks with support strings $u^{(i)}$, $1\leq i\leq4$, and $u^{(ij)}=(v_iv_j)_{v\in\mathbb F_2^4\setminus\{0\}}$, $1\leq i<j\leq4$. These strings form a basis of $W^\perp$: their defining monomials are linearly independent, and their products with the constant and coordinate functions have even-weight evaluation strings, so they are orthogonal to $W$. The four coordinate strings have weight eight and the six quadratic strings have weight four.
\begin{equation}
 |b\rangle_{\rm RM}=\frac14\sum_{v\in V}|v+b\mathbf1\rangle,\qquad b\in\{0,1\}.
\label{rep:rm-basis}
\end{equation}
Compatible Hermitian logical representatives are
\begin{equation}
 \bar X=X_1\cdots X_7,\qquad \bar Z=Z_1Z_2Z_3,\qquad
 \bar Y=i\bar X\bar Z=-Y_1Y_2Y_3X_4X_5X_6X_7.
\label{rep:rm-reps}
\end{equation}

\begin{lemma}[Pauli-specific logical weights and transversal $T$]\label{rep:rm}
    The minimum logical weights in this convention are $(d_X,d_Y,d_Z)=(7,7,3)$, and $T^{\dagger\otimes15}$ implements the logical $T$.
\end{lemma}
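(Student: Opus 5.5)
The claim has two parts: the Pauli-specific minimum weights $(d_X,d_Y,d_Z)=(7,7,3)$, and the statement that $T^{\dagger\otimes15}$ acts as logical $T$. I will work throughout in the polynomial description of the labelling by $\mathbb F_2^4\setminus\{0\}$. Here $V$ is the span of the four coordinate functions (the punctured first-order Reed--Muller code without constants), $W=V+\langle\mathbf1\rangle$ is the punctured $\RM(1,4)$, and $W^\perp$ is the even-weight subcode of the punctured $\RM(2,4)$. The code is CSS: $X$ checks from $V$ and $Z$ checks from $W^\perp$. Consequently the $Z$-type logicals are $V^\perp\setminus W^\perp$, and the $X$-type logicals are $W\setminus V$, i.e. the complements $\mathbf1+v$ of $v\in V$.

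\textbf{Weights.} For $d_X$, a logical $X$ representative has $X$-part in $\mathbf1+V$ and may also carry $Z$-stabilizer factors; those change only the $Z$-support and cannot lower the weight below that of its $X$-part. Every $v\in V$ has weight $0$ or $8$, so $\mathbf1+v$ has weight $15$ or $7$, giving $d_X=7$, attained by $\bar X=X_1\cdots X_7$. For $d_Y$, the $X$-part of any $Y$-coset representative again lies in $\mathbf1+V$, so the weight is at least $7$; the stated $\bar Y$ has weight $7$, so $d_Y=7$. For $d_Z$, I need the minimum weight of $V^\perp\setminus W^\perp$. This set is the odd-weight part of the punctured $\RM(2,4)$, whose minimum distance is $3$. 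The lower bound comes from $V^\perp$ having minimum distance $3$: three distinct nonzero vectors of $\mathbb F_2^4$ sum to zero only if they are collinear, and no two distinct vectors sum to zero. The upper bound is realized by $Z_1Z_2Z_3$, since $1+2+3=0$ in binary. Finally, I check that each stated representative really has the claimed coset label and that $\bar Y=i\bar X\bar Z$ has the displayed signs: $X_jZ_j=-iY_j$ on qubits $1,2,3$.

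\textbf{Transversal $T$.} I apply $T^{\dagger\otimes15}$ to $|b\rangle_{\rm RM}$ in Eq.~\eqref{rep:rm-basis}. Each basis string $x=v+b\mathbf1$ acquires the phase $\omega^{-\wt(x)}$ with $\omega=e^{i\pi/4}$. It therefore suffices to show that $\wt(v+b\mathbf1)\equiv 7b\pmod 8$ for every $v\in V$. This holds because $\wt(v)\in\{0,8\}$, and $\wt(\mathbf1+v)=15-\wt(v)\in\{15,7\}$, both $\equiv 7$. The transversal layer thus maps $|b\rangle\mapsto\omega^{-7b}|b\rangle=\omega^{b}|b\rangle$, which is logical $T$, as stated. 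If there is an obstacle, it lies only in matching conventions: in this convention the codewords for $b=1$ have weight $\equiv7$ rather than $\equiv1$ modulo $8$, which is exactly why $T^\dagger$ rather than $T$ is needed.
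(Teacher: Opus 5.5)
Your proof is correct and follows the paper's argument essentially step for step. The $X$-part of any $X$- or $Y$-coset representative lies in $\mathbf1+V$, with weight $7$ or $15$. The $Z$-part of any $Z$- or $Y$-coset representative is an odd-weight string whose labels sum to zero, so it has weight at least $3$, attained on labels $1,2,3$. Finally, $T^{\dagger\otimes15}$ acts by the phase $e^{-i\pi\,\wt(x)/4}$, and $\wt(v+b\mathbf1)\equiv7b \pmod 8$. The one point to state explicitly is that for $d_Z$ the representatives may also carry $X$-stabilizer factors; your observation from the $d_X$ case, that the weight is at least that of the relevant part, already covers this, and the paper notes it as well.
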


\begin{proof}
We write a Pauli, up to phase, as $X^xZ^z$, where
$x,z\in\mathbb F_2^{15}$ specify its $X$ and $Z$ supports.
Commutation with the $Z$ checks requires
$x\in(W^\perp)^\perp=W$, and commutation with the $X$ checks
requires $z\in V^\perp$. The $X$ stabilizers have $x\in V$,
so the $X$ support of a logical $X$ or $Y$ lies in
$W\setminus V=\mathbf1+V$. Every nonzero string in $V$ has
weight eight, since a nonzero linear function on
$\mathbb F_2^4$ is one on eight vectors. The strings in
$\mathbf1+V$ therefore have weight seven or fifteen.
The weight of $X^xZ^z$ is at least the weight of $x$, so
$d_X,d_Y\geq7$. Both bounds are attained by the representatives
in Eq.~\eqref{rep:rm-reps}.

The $Z$ stabilizers have
$z\in W^\perp=V^\perp\cap\mathbf1^\perp$, namely the
even-weight strings orthogonal to $V$. Thus the $Z$ support
of a logical $Z$ or $Y$ is an odd-weight string in $V^\perp$.
Orthogonality to the four strings $u^{(i)}$ means that the
binary vector labels of its occupied qubits sum to zero.
A string of weight one cannot satisfy this condition because
the zero label is absent, and a string of weight two would
require two identical labels. Thus $d_Z\geq3$, including
representatives with both $X$ and $Z$ support. For distinct
nonzero labels $u,v$, the three qubits labelled $u,v,u+v$
give an odd-weight string whose labels sum to zero.
The choice $1,2,3$ gives the displayed $\bar Z$, so $d_Z=3$.
Together these bounds and representatives give
$(d_X,d_Y,d_Z)=(7,7,3)$.

Up to a common global phase, $T^\dagger$ multiplies
each $|1\rangle$ by $e^{-i\pi/4}$ and leaves $|0\rangle$
unchanged. A computational-basis string of weight $w$ therefore
acquires phase $e^{-i\pi w/4}$ under
$T^{\dagger\otimes15}$. This phase is one for the weights
zero and eight in $V$, and it is $e^{i\pi/4}$ for the weights
fifteen and seven in $\mathbf1+V$. Substitution in
Eq.~\eqref{rep:rm-basis} gives
\[
T^{\dagger\otimes15}|b\rangle_{\rm RM}
=e^{i\pi b/4}|b\rangle_{\rm RM},\qquad b\in\{0,1\},
\]
up to a common global phase. Thus the logical $T$ is implemented.
Since the physical gates act separately on each qubit, an error
on one qubit remains supported on that qubit after conjugation.
\end{proof}

For later circuit bounds, one explicit $Z$-check basis has supports $\{2,3,4,5\}$, $\{1,3,4,6\}$, $\{1,2,4,7\}$, $\{2,3,8,9\}$, $\{1,3,8,10\}$, $\{1,2,8,11\}$, $\{1,2,3,4,8,12\}$, $\{1,4,8,13\}$, $\{2,4,8,14\}$, and $\{3,4,8,15\}$. Each is even and its coordinate labels sum to zero. Together with the four weight-eight $X$ checks these give maximum check weight eight.

\subsection{Selective-concatenation examples}
For the corollary below, $\mathcal{S}^{\mathrm{out}}$ is the Steane stabilizer group of Eq.~\eqref{pm:steane}.
Its outer logical operator factors are $A=X_1Z_2$ and $B=Y_1Z_4$, where $AB=iL$ and $L=Z_1Z_2Z_4$. 
The retained outer stabilizer group is $\mathcal{S}_0^{\mathrm{out}}=\{g\in \mathcal{S}^{\mathrm{out}}:[g,A]=0\}$.

\begin{corollary}[{Two selective concatenations of the Steane code}]\label{cor:49}\label{cor:35}
    For each $i\in J=\{1,2,4\}$, take
    $\mathcal{S}_i^{\mathrm{in}}$ to be the stabilizer group of a separate
    $\qcode{15}{1}{3}$ Reed--Muller block, with the logical
    conventions of Eqs.~\eqref{rep:rm-basis}
    and~\eqref{rep:rm-reps}; set $\mathcal{S}_i^{\mathrm{in}}=\{I\}$
    on the unencoded qubits $i\notin J$.
    The concatenated codes with stabilizers $\mathcal{S}^{\rm enc}$ and $\mathcal{S}_0^{\rm enc}$ then have parameters $\qcode{49}{1}{5}$ and $\qcode{49}{2}{3}$, respectively.
    The same construction with $J=\{2,4\}$ gives $\qcode{35}{1}{3}$ and $\qcode{35}{2}{3}$ codes with stabilizers $\mathcal{S}^{\rm enc}$ and $\mathcal{S}_0^{\rm enc}$, respectively.
\end{corollary}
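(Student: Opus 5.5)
The plan is to read off all four sets of parameters from Proposition~\ref{rep:weights}, using the Reed--Muller inner weights of Lemma~\ref{rep:rm}. On an encoded qubit $i\in J$ these are $\lambda_i(X)=\lambda_i(Y)=7$ and $\lambda_i(Z)=3$; on an unencoded qubit, $\lambda_i(P)=1$ for $P\neq I$. The qubit counts are immediate: $7-3+3\cdot15=49$ and $7-2+2\cdot15=35$.

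For the numbers of logical qubits, each Reed--Muller block contributes $14$ independent inner checks. The lifted outer checks stay independent of the inner ones, because their labels are independent in the outer code. Hence $k$ is the same as for the outer code: $1$ for $\mathcal S^{\rm out}$, and $2$ for $\mathcal S_0^{\rm out}$, which has five generators. So it remains to compute the two minima in Eq.~\eqref{rep:distance-formula}. For the code with stabilizer $\mathcal S^{\rm enc}$, the minimum runs over $\cN(\mathcal S^{\rm out})\setminus\mathcal S^{\rm out}$. For the code with stabilizer $\mathcal S_0^{\rm enc}$, it runs over $\cN(\mathcal S_0^{\rm out})\setminus\mathcal S_0^{\rm out}$. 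By the proof of Theorem~\ref{thm:intermediate-distance}, the latter set is $(\cN(\mathcal S^{\rm out})\setminus\mathcal S^{\rm out})\sqcup(\mathcal S^{\rm out}\setminus\mathcal S_0^{\rm out})\sqcup A\cN(\mathcal S^{\rm out})$.

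\emph{Case $J=\{1,2,4\}$, code $\mathcal S^{\rm enc}$.} For the upper bound, use the Fano-plane structure of the Steane code: the weight-three $Z$ logicals are supported on the seven lines. The line $\{1,2,4\}$ is $\supp(L)$, so every other line meets $J$ in exactly one point. A $Z$ logical on such a line, for instance $Z_2Z_5Z_7$ (using $Z_{H_2}$), has weighted cost $3+1+1=5$.

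For the lower bound, every encoded qubit costs at least $3$, so the weighted weight is at least $\wt(E)+2|\supp(E)\cap J|$. This is already $\geq5$ whenever the support meets $J$, since $\wt(E)\geq3$. If the support misses $J$, then $E$ is supported on $\{3,5,6,7\}=H_3$. I will check directly that no nontrivial logical lives there: the Paulis on $H_3$ commuting with $x_1,x_2,z_1,z_2$ are generated by $X_{H_3}$ and $Z_{H_3}$, which are stabilizers. This gives $d=5$.

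\emph{Case $J=\{2,4\}$, code $\mathcal S^{\rm enc}$.} Here a line avoiding qubits $2$ and $4$ (for example $\{3,5,1\}$, if it is a line) gives weighted weight $3$. The outer distance $3$ together with $\lambda\geq1$ gives the matching lower bound, so $d=3$.

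\emph{Intermediate codes.} The three classes are handled separately.
\begin{enumerate}[label=\arabic*.]
\item The logical class $\cN(\mathcal S^{\rm out})\setminus\mathcal S^{\rm out}$ already has weighted minimum $5$ (for $J=\{1,2,4\}$) and $3$ (for $J=\{2,4\}$), as computed above.
\item For the stabilizer class $\mathcal S^{\rm out}\setminus\mathcal S_0^{\rm out}$, every nonidentity outer stabilizer has weight $4$, so its weighted weight is at least $4$.
\item For the coset $A\cN(\mathcal S^{\rm out})$, which contains the $128$ Paulis of syndrome $s$, I must show that no element has weighted weight below $3$. An upper bound of $3$ is attained; for $J=\{2,4\}$ it already comes from class 1. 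Elements of weighted weight $\leq2$ must avoid $J$ and have outer weight $\leq2$ on the unencoded qubits. I will list all such Paulis and compare their syndromes with $s=\syn(X_1Z_2)$.
\end{enumerate}

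For $J=\{1,2,4\}$ this check is quick: weight-$1$ Paulis have syndromes of weight at most two, and each weight-$2$ candidate on $\{3,5,6,7\}$ can be checked against $s$. The same check is needed for $J=\{2,4\}$, where qubit $1$ is now unencoded. Its conclusion is that every $E$ with $\syn(E)=s$ and outer weight $2$ uses qubit $2$ or qubit $4$; for example $A=X_1Z_2$ and $B=Y_1Z_4$ do. Each such element then has weighted weight at least $1+3=4$.

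The step I expect to be the main obstacle is this finite enumeration of the weight-$\leq2$ elements of the syndrome-$s$ coset avoiding $J$. I would carry it out from the Steane syndrome table: a weight-two Pauli has syndrome $s$ exactly when the syndromes of its two factors sum to $s$. This makes the enumeration a short table lookup rather than a search over all $128$ coset elements.
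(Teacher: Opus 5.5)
Your plan follows the paper's proof. You apply Proposition~\ref{rep:weights} with the Reed--Muller weights $(7,7,3)$, use the fact that low-weight outer logicals meet $J$, and exhibit explicit lifted witnesses: a $Z$ operator on a Fano line (meeting $J$ once, or missing it for $J=\{2,4\}$) and a syndrome-$s$ element avoiding $J$. Your three-class split of $\cN(\mathcal S_0^{\rm out})\setminus\mathcal S_0^{\rm out}$ is a more explicit form of the paper's observation that the only outer intermediate logicals of weight at most two are $X_1Z_2$, $Y_1Z_4$ and $Y_2X_4$, all meeting $\{2,4\}$. Your count of $n$ and $k$, which the paper leaves implicit, is correct. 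As written, however, one bound is not proved and several supporting claims are false.

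\textbf{Missing witness.} The upper bound $\delta_J\le3$ for $J=\{1,2,4\}$ is asserted but not shown. There, classes 1 and 2 have weighted weight at least $5$ and $4$, so the witness must lie in the syndrome-$s$ coset. Your enumeration only treats weighted weight at most two, so it never produces one. Use the paper's witness $Y_3X_5Z_6$. It avoids $J$, and $X_1Z_2\cdot Y_3X_5Z_6\propto X_1X_3X_5\,Z_2Z_3Z_6$ is a product of line operators, hence lies in $\cN(\mathcal S^{\rm out})$.

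\textbf{Repairable slips.}
\begin{enumerate}
\item On $H_3$, the Paulis commuting with $x_1,x_2,z_1,z_2$ also include $X_3$, $Z_3$ and $X_5X_6X_7$. You must impose $x_3$ and $z_3$ as well. Alternatively, argue as the paper does: nontrivial Steane logicals have odd weight, weight-three ones lie on lines, and every line meets $\{1,2,4\}$.
\item Outer stabilizers have weight $4$ or $6$; for example $z_1x_3\in\mathcal S^{\rm out}\setminus\mathcal S_0^{\rm out}$ has weight six. Write ``at least four''.
\item Single-qubit syndromes do not have weight at most two: $Y_7$ anticommutes with all six checks. Even if they did, this would not exclude $s$, which itself has weight two. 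The correct reason is that $\syn(E)=s$ forces the $X$-part of $E$ to have column sum $(1,0,0)$ (that of qubit $1$) and its $Z$-part to have column sum $(0,1,0)$ (that of qubit $2$) in the Steane check matrix. No single qubit can supply both.
\item The coset has $2^8=256$ elements, not $128$; this is harmless.
\end{enumerate}
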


\begin{proof}
Every outer logical Pauli
$E\in\cN(\mathcal{S}^{\mathrm{out}})\setminus \mathcal{S}^{\mathrm{out}}$
has odd weight of at least three.
Every such weight-three Pauli $E$ satisfies
$\supp(E)\cap\{1,2,4\}\ne\varnothing$.
For $J=\{1,2,4\}$, each encoded nonidentity factor has minimum
weight at least three, so
\[
\sum_i\lambda_i(E_i)\geq
\begin{cases}
3+2=5,&\wt(E)=3,\\
\wt(E)\geq5,&\wt(E)\geq5.
\end{cases}
\]
Equation~\eqref{rep:distance-formula} therefore gives $d_J\geq5$.

The outer logical
$E_Z=Z_1Z_3Z_5\in
\cN(\mathcal{S}^{\mathrm{out}})\setminus \mathcal{S}^{\mathrm{out}}$
has the physical lift $\widetilde E_Z=\bar Z_1Z_3Z_5$.
Here $\bar Z_1\in
\cN(\mathcal{S}_1^{\mathrm{in}})\setminus \mathcal{S}_1^{\mathrm{in}}$
is a minimum-weight inner logical $Z$ on block $1$,
while $Z_3,Z_5$ act on unencoded physical qubits.
Lemma~\ref{rep:rm} gives $\wt(\bar Z_1)=3$.
\footnote{
Eq.~\eqref{rep:rm-reps} gives an explicit representative of $\bar Z_1$.}
Thus $\widetilde E_Z\in \cN(\mathcal{S}^{\rm enc})\setminus \mathcal{S}^{\rm enc}$ has weight $3+1+1=5$ and proves equality.

The set $\cN(\mathcal{S}_0^{\mathrm{out}})\setminus \mathcal{S}_0^{\mathrm{out}}$ contains no weight-one Pauli; its weight-two Paulis are $X_1Z_2,Y_1Z_4,Y_2X_4$.
Each has support intersecting both $J=\{1,2,4\}$ and $J=\{2,4\}$, so its lifted weight is at least $2+2=4$.
Every higher-weight outer intermediate logical has lifted weight at least three.
Equation~\eqref{rep:distance-formula} thus gives
$\delta_J\geq3$ for both choices of $J$.
{The outer intermediate logical
$E_0=Y_3X_5Z_6\in
\cN(\mathcal{S}_0^{\mathrm{out}})\setminus \mathcal{S}_0^{\mathrm{out}}$
avoids both selected sets. Its physical lift $\widetilde E_0$
acts by the same Paulis on the unencoded qubits $3,5,6$
and belongs to
$\cN(\mathcal{S}_0^{\rm enc})\setminus \mathcal{S}_0^{\rm enc}$.
It has weight three and proves equality.}

{For $J=\{2,4\}$, every
$E\in\cN(\mathcal{S}^{\mathrm{out}})\setminus \mathcal{S}^{\mathrm{out}}$
has weight at least three, and lifting cannot lower its
weight. The outer logical $E_Z=Z_1Z_3Z_5$ avoids $J$, so
its physical lift in
$\cN(\mathcal{S}^{\rm enc})\setminus \mathcal{S}^{\rm enc}$
acts on the unencoded qubits $1,3,5$ and has weight three.
Thus $d_J=3$.}
\end{proof}

The  49-qubit $\qcode{49}{1}{5}$ code above is the nonuniform Steane--Reed--Muller code of Refs.~\cite{nikahd2017nonuniform,chamberland2017overhead}. 

For the $J=\{2,4\}$ construction of Corollary~\ref{cor:35}, the lifted non-Clifford factor is $\widehat B=Y_1\bar Z_4$, with the pure-$Z$ representative of Eq.~\eqref{rep:rm-reps} on Reed--Muller block $4$.
A Clifford parity map conjugates this Pauli product to $Z_1$; explicitly, take \[C=\left(\prod_{j\in\supp(\bar Z_4)}\operatorname{CNOT}_{j\to1}\right)H_1S_1^\dagger.\]
Here $j$ labels a physical qubit in block $4$, and the target $1$ is the unencoded outer qubit.
The gates $S_1^\dagger$ and $H_1$ first map $Y_1$ to $Z_1$; the CNOTs then collect the $Z$ parity onto qubit $1$.
Thus $C\widehat B C^\dagger=Z_1$, and applying $C$, then a single-qubit rotation, then $C^\dagger$ implements 
\[
C^\dagger R_{Z_1}(\theta)C=R_{\widehat B}(\theta).
\]
A $Z_1$ fault immediately after its physical $T$ becomes $\widehat B$ after undoing $C$, and the  second rotation turns it into the logical fault of Proposition~\ref{prop:bare-failure}.
A pre-rotation check cannot detect this newly created error.
Proposition~\ref{pm:oneT} excludes exact repair using only additional stabilizer operations.
It does not prove that fifteen $T$-type gates are optimal, and does not exclude extra non-Clifford resources or restricted noise models.
\par

We next apply selective concatenation to the high-rate BCH family of Proposition~\ref{cor:bch}.
Only the seven qubits supporting the chosen logical Pauli are encoded, so the additional number of physical qubits is independent of $m$.
In the next corollary, $\mathcal{S}^{\mathrm{out}}$ denotes the BCH stabilizer group of Proposition~\ref{cor:bch}, and $\mathcal{S}_0^{\mathrm{out}}$ its retained subgroup for the  chosen factorization.
The inner and concatenated groups use the same notation as above.

\begin{corollary}[Selective BCH family]\label{cor:selective-bch}
{Choose a minimum-weight logical Pauli $L$ and a
 balanced factorization for the $\qcode{2^m-1}{2^m-1-6m}{7}$ BCH code of Proposition~\ref{cor:bch}. 
Use lifts of these outer factors in both concatenated constructions below.
Encoding each of the seven qubits in $J=\supp(L)$ in a separate $\qcode{15}{1}{3}$ Reed--Muller block gives codes with}
\begin{equation}
\begin{aligned}
    N_m&=2^m+97, & K_m&=2^m-1-6m,\\
    d&\geq7, & \delta&\geq3,
\end{aligned}
\label{eq:selective-bch}
\end{equation}
and $K_m/N_m\to1$. 
{Here $N_m$ and $K_m$ are the resulting numbers of physical and logical qubits of the  code used before and after the gate, while $d=d_J$ and $\delta=\delta_J$ are the  distances before and during the gate for the chosen  balanced factorization.
The intermediate code has the same $N_m$ physical qubits and $K_m+1$ logical qubits.}
Uniform concatenation with the same {$\qcode{15}{1}{3}$ Reed--Muller} inner code gives $\qcode{15(2^m-1)}{2^m-1-6m}{\geq21}$ codes with $\delta\geq9$ for lifts of the  balanced factorization of every minimum-weight outer-code logical and rate $\to1/15$.
\end{corollary}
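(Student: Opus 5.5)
The plan is to apply Proposition~\ref{rep:weights}, in its weighted form Eq.~\eqref{eq:weighted-weight}, to the outer BCH code and its retained subgroup. Fix a minimum-weight logical $L$ of weight seven and a balanced factorization. By Proposition~\ref{cor:bch} the outer code is pure with $d=7$, and Theorem~\ref{thm:pure}(a) gives outer intermediate distance $\delta_{\rm out}\in\{3,4\}$.

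\textbf{Counting.} The physical and logical counts are immediate. Replacing the seven qubits of $J=\supp(L)$ by fifteen-qubit blocks gives
\[
N_m=2^m-1+7\cdot14=2^m+97 .
\]
Concatenation with one-logical-qubit inner codes preserves the number of logical qubits, so $K_m=2^m-1-6m$. The intermediate code has $K_m+1$ logical qubits, since $\mathcal S_0^{\rm enc}$ omits exactly the lift of $h$. Because $6m+98=o(2^m)$, we get $K_m/N_m\to1$.

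\textbf{Distances of the selective code.} Every $\lambda_i(P)$ with $P\neq I$ is at least one, so lifting cannot lower weights. Each nontrivial outer logical, of weight at least $7$, therefore has lifted weight at least $7$, and $d_J\ge7$. For the intermediate code, every element of $\cN(\mathcal S_0^{\rm out})\setminus\mathcal S_0^{\rm out}$ has outer weight at least $\delta_{\rm out}\ge3$, which gives $\delta_J\ge3$. No finer analysis of which supports meet $J$ is needed for these stated bounds.

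\textbf{Uniform concatenation.} Take $J$ to be all $2^m-1$ qubits. Then $w_J(E)\ge3\wt(E)$ because $\Delta=3$. Eq.~\eqref{eq:weighted-weight} gives $d\ge21$ and $\delta\ge3\cdot3=9$, for the lifts of any balanced factorization of any minimum-weight logical. The rate is $(2^m-1-6m)/(15(2^m-1))\to1/15$.

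\textbf{Main obstacle.} The step needing most care is that the codes with stabilizers $\mathcal S^{\rm enc}$ and $\mathcal S_0^{\rm enc}$ really are the codes before and during the lifted gate. Concretely, the lifts $\widehat A,\widehat B$ must satisfy $\widehat A\widehat B=i\widehat L$ with the correct phases, using the compatible representatives $\bar Y=i\bar X\bar Z$. The commutation signs of lifts with lifted checks must also equal the outer signs, so that the retained group of the lifted factor is exactly the lift of $\mathcal S_0^{\rm out}$ together with the inner stabilizers. Both facts follow from the sign-preservation argument inside the proof of Proposition~\ref{rep:weights}. I would cite that argument explicitly and then conclude.
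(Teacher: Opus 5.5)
Your proposal is correct and follows the paper's proof essentially step for step. It uses the same qubit and logical-qubit counts and the same bound $\lambda_i\geq1$ applied to the outer distance $7$ and outer intermediate distance $\geq3$ (giving $d_J\geq7$ and $\delta_J\geq3$), and the same bound $\lambda_i\geq3$ for the uniform bounds $21$ and $9$ and the rate limit $1/15$.

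On the consistency point you flag as the main obstacle, the paper only notes that the lifted factors commute with every inner stabilizer, so that $\mathcal{S}_0^{\rm enc}$ keeps all inner checks and drops one lifted outer check. The phase identity $\widehat A\widehat B=i\widehat L$ follows from the compatible inner representatives rather than from the commutation-sign argument. In any case it is not needed for the code parameters stated in the corollary.
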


\begin{proof}
The lifted outer factors commute with every inner stabilizer.
First, note that the intermediate stabilizer $\mathcal{S}_0^{\rm enc}$ retains all inner checks but has one fewer independent lifted outer check (i.e. the omitted outer stabilizer $h$).
It therefore has $N_m-K_m-1$ independent generators and encodes $N_m-(N_m-K_m-1)=K_m+1$ logical qubits.

Proposition~\ref{cor:bch} gives  original outer-code distance seven, and Theorem~\ref{thm:pure}(a) gives outer intermediate distance at least three for the  balanced factorization. These are the outer-code bounds used in Eq.~\eqref{rep:distance-formula}.
For selective concatenation, $\lambda_i(P)\geq1$ for every single-qubit Pauli $P\ne I$ on outer qubit $i$, so $\sum_i\lambda_i(E_i)\geq\wt(E)$. 
Taking the minima over the  logical operators of the original and intermediate outer codes gives $d_J\geq7$ and $\delta_J\geq3$, respectively.
Here $d_J$ and $\delta_J$ are the concatenated-code
distances defined before Eq.~\eqref{eq:weighted-weight}.
The number of physical qubits is $N_m=(2^m-1)+7(15-1)=2^m+97$; each inner block encodes one logical qubit, so $K_m=2^m-1-6m$ is unchanged and $K_m/N_m\to1$.

For uniform concatenation, every nonidentity factor has
$\lambda_i(E_i)\geq3$, so
$\sum_i\lambda_i(E_i)\geq3\wt(E)$. Consequently
Eq.~\eqref{rep:distance-formula} gives
\[
\begin{aligned}
    d_J&\geq
    3\min_{E\in\cN(\mathcal{S}^{\mathrm{out}})\setminus \mathcal{S}^{\mathrm{out}}}
    \wt(E)=3\cdot7 = 21
    \quad \text{and}\quad
    \delta_J&\geq
    3\min_{E\in\cN(\mathcal{S}_0^{\mathrm{out}})\setminus \mathcal{S}_0^{\mathrm{out}}}
    \wt(E)\geq3\cdot3 = 9.
\end{aligned}
\]
There are $15(2^m-1)$ physical qubits and the same $K_m$
logical qubits, giving rate $K_m/[15(2^m-1)]\to1/15$.
\end{proof}

The selective family implements a rotation about one chosen logical Pauli per set of seven active qubits. Encoding every qubit suffices for an arbitrary logical basis and gives the uniform family with rate $1/15$; a smaller union of active qubits may suffice for a chosen logical basis. In either case syndrome extraction inherits dense BCH checks of weight $\Theta(2^m)$, so the family concerns code rate rather than the cost of a syndrome round.

\subsection{The fixed code and its protected circuit}\label{app:rep}\label{app:combined22}

Here we define the fixed 22-qubit code, prove the distances of the codes used during its logical $T$ gate, and specify the protected correction and measurement circuits. We then prove single-fault correctness, give a 33-qubit serial schedule, and establish the recursive error bound.
This supplies the definitions and proofs for Sec.~\ref{subsec:gate22}, including Theorem~\ref{thm:main22} and the circuit in Fig.~\ref{fig:gate22}.

The protected gate and serial state-preparation schedule establish the claims in Theorem~\ref{thm:main22}.
The protected $H$, $S$ and CNOT gates of Lemma~\ref{rep:transfer}, together with the logical $T$ gate of Theorem~\ref{thm:main22}, form a universal fault-tolerant gate set. Corollary~\ref{c22:recursion} gives its recursive error bound.
The recursive construction uses this complete finite set of circuits rather than the distance of the original code alone.
\par

\subsubsection{Signed code definitions}\label{app:c22-signed}

Here we use the component conventions of \SM{}~\ref{app:component-codes} to define the fixed code $\cC'_{22}$ and its intermediate code $\cD'_{22}$.
The signed generators and logical representatives determine the distances and check weights used in Sec.~\ref{subsec:gate22} and Fig.~\ref{fig:gate22}.
\par

Encode outer qubit $1$ by this Reed--Muller code, qubit $2$ by $|0\rangle\mapsto|00\rangle$, $|1\rangle\mapsto|11\rangle$, and leave qubits $3,4,5,6,7$ unencoded. The repetition check and representatives are
\begin{equation}\label{eqn:repetition_check_representatives}
 Z_{2a}Z_{2b},\qquad \widehat X_2=X_{2a}X_{2b},\quad
 \widehat Z_2=Z_{2a},\quad \widehat Y_2=Y_{2a}X_{2b},
\end{equation}
where the subscripts $2a$ and $2b$ label the two physical qubits encoding outer qubit $2$.
Let $V_{\rm enc}$ be this encoding isometry, using the Reed--Muller basis of Eq.~\eqref{rep:rm-basis} on outer qubit $1$.
For an outer Pauli $P$, its lift $\widehat P$ replaces $X_1,Y_1,Z_1$ by $\bar X_1,\bar Y_1,\bar Z_1$ from Eq.~\eqref{rep:rm-reps}, uses the repetition representatives in Eq.~{\eqref{eqn:repetition_check_representatives}} on qubit $2$, and leaves the factors on qubits $3,\ldots,7$ unchanged, preserving the overall phase.
For an outer code $Q$, the notation $\widehat Q$ denotes its image under the same encoding:
\begin{equation}
\widehat Q=\{V_{\rm enc}|\psi\rangle:|\psi\rangle\in Q\},\qquad
\widehat P V_{\rm enc}=V_{\rm enc}P.
\label{c22:lift}
\end{equation}
In particular, $\widehat{C_B\cD_0}$ and $\widehat{C_B\cC_0}$ mean that the outer code is first transformed by $C_B$ and then encoded by $V_{\rm enc}$.
Their stabilizers consist of the inner checks and the lifts of the corresponding signed outer checks, as in \SM{}~\ref{app:selective-details}.

Recall the outer factor $B=Y_1Z_4$ defined in Eq.~\eqref{pm:steane}.
Define
\begin{equation}
 C_B=\operatorname{CNOT}_{4\to1}H_1S_1^\dagger,\qquad C_BBC_B^\dagger=Z_1.
\label{rep:codes}
\end{equation}
The retained outer generators, with their Hermitian signs, are
\begin{center}
\begin{tabular}{cc}
$\cD_0$&$C_B\cD_0$\\\hline
$+XIIXXIX$&$+ZIIYXIX$\\
$+IIXIXXX$&$+IIXIXXX$\\
$+IZIZIZZ$&$+IZIZIZZ$\\
$+IIZIZZZ$&$+IIZIZZZ$\\
$-ZXIYZXY$&$-IXIYZXY$
\end{tabular}
\end{center}
Lift the second column and append the fourteen Reed--Muller checks and the repetition check. This gives twenty independent generators. 
Explicitly, the five lifted outer generators, in the order of the second column, are
\[
\begin{gathered}
\bar Z_1Y_4X_5X_7,\quad X_3X_5X_6X_7,\\
Z_{2a}Z_4Z_6Z_7,\quad Z_3Z_5Z_6Z_7,\\
-X_{2a}X_{2b}Y_4Z_5X_6Y_7.
\end{gathered}
\]
The fourteen Reed--Muller checks and the repetition check are independent because the two inner blocks are disjoint.
A product of the five lifted outer generators can be an inner stabilizer only if the corresponding product of outer generators is the identity; their independence therefore supplies five further independent checks.
\par
The code used before and after the gate below adds one signed constraint. The repetition stabilizer has weight two, so these codes are degenerate.

Conjugate the outer operators by $C_B$ before lifting. Primes denote these outer Clifford images:
\begin{align}
 A'&=Y_1Z_2Z_4,&B'&=Z_1,&L'&=X_1Z_2Z_4,\nonumber\\
 h'&=X_1Z_4Z_5Z_7,&G'&=Z_1Z_2Z_5Z_7,&M'&=Y_1Z_4Z_5Z_7.
\label{c22:operators}
\end{align}
All signs in this display are positive in the outer convention. In particular $A'B'=iL'$, $G'=iA'h'$, and $M'=A'h'L'$. 
Physical $\bar Y$ retains the minus sign in Eq.~\eqref{rep:rm-reps}. 
Thus, the three physical representatives are
\begin{equation}
\begin{gathered}
\widehat G'=\bar Z_1Z_{2a}Z_5Z_7,\qquad
\widehat M'=\bar Y_1Z_4Z_5Z_7,\qquad
\widehat h'=\bar X_1Z_4Z_5Z_7.
\end{gathered}
\label{c22:measurement-lifts}
\end{equation}

Then, define
\begin{equation}
 \cD'_{22}=\widehat{C_B\cD_0},\qquad
 \cC'_{22}=\widehat{C_B\cC_0}
 =\{|\psi\rangle\in\cD'_{22}:\widehat h'|\psi\rangle=|\psi\rangle\}.
\label{c22:codes}
\end{equation}
A convenient generating set for $\cC'_{22}$ is the twenty generators just specified, and $\widehat h'$. 
This basis choice matters for the following weight bound. The logical operators
\begin{equation}
 Z_{\rm log}=Z_{2a}Z_5Z_7=\widehat{h'L'},\qquad
 X_{\rm log}=X_{2a}X_{2b}X_5X_7
\label{c22:logical}
\end{equation}
commute with these stabilizers and anticommute with each other. On $\cC'_{22}$, $\widehat L'$ equals $Z_{\rm log}$. Thus the target rotation is a $T$ gate in this explicitly specified logical basis.

\begin{proposition}[Distances and check weights]\label{c22:distances}
The codes $\cC'_{22}$ and $\cD'_{22}$ have parameters $\qcode{22}{1}{3}$ and $\qcode{22}{2}{3}$, respectively.
Fixing either sign of $\widehat G'$, $\widehat M'$, or $\widehat h'$ from Eq.~\eqref{c22:measurement-lifts} in $\cD'_{22}$ gives a $\qcode{22}{1}{3}$ code.
With the representatives and generators just specified, the maximum retained-check weight is eight, the maximum check weight for $\cC'_{22}$ is ten, and
\begin{equation}
 \mathrm{wt}(\widehat G')=6,\qquad
 \mathrm{wt}(\widehat M')=\mathrm{wt}(\widehat h')=10.
\label{c22:weights}
\end{equation}
\end{proposition}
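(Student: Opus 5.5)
The plan is to reduce every distance claim to the outer Steane code through the selective-concatenation distance formula, Proposition~\ref{rep:weights}. Since $C_B$ is a Clifford on the outer qubits, I will transport all operators back to the unprimed frame, so that the outer code becomes $\cC_0$ or $\cD_0$ and the outer logical sets are known from Table~\ref{tab:split} and Eq.~\eqref{pm:steane}. The inner weights are then the only frame-dependent data. Because $C_B$ acts on outer qubits $1$ and $4$, I will not change frames literally. Instead I will use that an outer primed Pauli $E'$ has weight $\sum_i\lambda_i(E'_i)$. Here
\[
\lambda_1=(7,7,3)\quad\text{for } (X,Y,Z)
\]
by Lemma~\ref{rep:rm}. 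For the repetition block, $\lambda_2(Z)=1$ and $\lambda_2(X)=\lambda_2(Y)=2$. For unencoded qubits, $\lambda=1$.

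For the lower bounds $d,\delta\geq3$, I will take any nontrivial primed outer logical $E'$ of $\cC_0'=C_B\cC_0$ or $\cD_0'$. If $\wt(E')\geq3$, the bound follows since every $\lambda\geq1$. For $\cC_0'$, the outer weights are $C_B$-images of Steane logicals. A weight-$\leq2$ primed logical must therefore come from a Steane logical whose support meets $\{1,4\}$, where CNOT$_{4\to1}$ can cancel a factor. I will enumerate the few candidates and check that each hits qubit $1$ with $X$ or $Y$ (weight $\geq7$), or hits qubit $2$ with $X$ or $Y$ (weight $2$). This pushes the total to $\geq3$.

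For $\cD_0'$, the extra logicals are the primed images of the syndrome-$s$ coset and of $\cS\setminus\cS_0$. Their low-weight members are $A'=Y_1Z_2Z_4$, $B'=Z_1$, and the images of $Y_2X_4$ together with related short Paulis. I expect this finite enumeration of primed outer logicals of weight one or two to be the main obstacle. I would handle it by listing all weight-$\leq2$ Paulis on $7$ qubits in $\cN(\cS_0')$, either by hand via syndromes of the generator table or by citing an exhaustive check. For each candidate I would verify that its lift weight is at least $3$. The key case is $B'=Z_1$, whose lift $\bar Z_1$ has weight $3$; this is exactly why the Reed--Muller block is placed at qubit $1$.

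For the upper bounds, I will exhibit explicit weight-$3$ logicals. The operator $Z_{\rm log}=Z_{2a}Z_5Z_7$ serves $\cC'_{22}$, and $\widehat{B'}=\bar Z_1$ has weight $3$ in $\cD'_{22}$. The counts $k=1,2$ follow from $22-21$ and $22-20$ independent generators, with independence argued as in the text. For the fixed-sign codes, fixing $\widehat G'$, $\widehat M'$ or $\widehat h'$ adds one check that commutes with $\cS_0'$. The result has $k=1$, and its nontrivial logicals form a subset of the $\cD'_{22}$ logicals minus the new stabilizer coset. Its distance is therefore $\geq3$. For equality I will take a weight-$3$ logical commuting with the added check: $Z_{\rm log}$ commutes with $\widehat h'$, and I will find a similar one for $\widehat G'$ and $\widehat M'$ such as $\bar Z_1$ or $Z_{\rm log}$, checking commutation directly.

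Finally, the weights in Eq.~\eqref{c22:weights} are read off Eq.~\eqref{c22:measurement-lifts} using $\wt(\bar Z)=3$ and $\wt(\bar X)=\wt(\bar Y)=7$. This gives $3+1+1+1=6$ and $7+3=10$. The check-weight maxima come from the explicit generators: the Reed--Muller checks have weight at most $8$, and the lifted outer checks have weights $3+3=6$, $4$, $4$, $4$ and $2+4=6$. Adding $\widehat h'$ gives the maximum of $10$.
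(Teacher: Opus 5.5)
Your plan is correct and follows the paper's proof. You lift through Proposition~\ref{rep:weights}, enumerate the weight-$\leq2$ elements of $\cN(\cS'_0)$ from the single-qubit syndrome table, note that adding checks cannot lower the distance, exhibit weight-three witnesses, and read the weights off the lifts. That enumeration gives exactly $Z_1$ and $X_2Y_4$, not $A'$ or the image $X_1Y_2X_4$ of $Y_2X_4$, which have weight three after $C_B$; their lifts have weights $3$ and $2+1=3$, so the repetition block is as essential as the Reed--Muller block.

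Two small corrections for the equality witnesses. First, $\bar Z_1$ anticommutes with $\widehat M'$, so use $Z_{\rm log}$ for all three fixed codes; the paper uses $Z_{2a}Z_3Z_6$. Second, commutation alone does not show the witness is a nontrivial logical; you also need that it is not in the new stabilizer group. This holds because $Z_{\rm log}$ commutes with $\widehat h'$, while every product of $\widehat G'$ or $\widehat M'$ with a retained stabilizer anticommutes with $\widehat h'$.
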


\begin{proof}
    There are $n=15+2+5=22$ physical qubits and twenty independent retained generators, so $\cD'_{22}$ encodes $k=22-20=2$ logical qubits.
    Each of $\widehat h'$, $\widehat G'$, and $\widehat M'$ commutes with the retained generators and is independent of them: $\widehat h'$ anticommutes with $\widehat A'$, whereas $\widehat G'$ and $\widehat M'$ anticommute with $\widehat h'$, and all four of these operators commute with the retained generators.
    Fixing either sign of any of the three constraints therefore gives $k=22-21=1$; the positive $\widehat h'$ constraint gives $\cC'_{22}$.
    It remains to show that all these codes have distance three.
    Put $\cS'_0=C_B\cS_0C_B^\dagger$ for the retained outer stabilizer group.
    \par
    For the five generators in the $C_B\cD_0$ column, the complete single-qubit Pauli syndrome table is
    \begin{center}
    \begin{tabular}{c|ccccccc}
     &1&2&3&4&5&6&7\\\hline
    $X$&10000&00100&00010&10101&00011&00110&00111\\
    $Y$&10000&00101&01010&00100&11011&01111&11110\\
    $Z$&00000&00001&01000&10001&11000&01001&11001
    \end{tabular}
    \end{center}
    Write the five signed outer generators as $g_1,\ldots,g_5$ and define the syndrome bits by $g_jE=(-1)^{s_j(E)}Eg_j$.
    For any Paulis $E,F$,
    \[
    \begin{gathered}
    g_jEF=(-1)^{s_j(E)+s_j(F)}EFg_j,\\
    \syn_{\cS'_0}(EF)=\syn_{\cS'_0}(E)+\syn_{\cS'_0}(F),
    \end{gathered}
    \]
    with addition in $\mathbb F_2^5$.
    A zero syndrome therefore means that the Pauli commutes with every generator and belongs to $\cN(\cS'_0)$.
    In the table, $\syn_{\cS'_0}(Z_1)=00000$, giving a weight-one normalizer.
    The equal entries $\syn_{\cS'_0}(X_2)=\syn_{\cS'_0}(Y_4)=00100$ give $\syn_{\cS'_0}(X_2Y_4)=00100+00100=00000$; the product has weight two because its two nonidentity factors act on distinct qubits.
    These operators are outside $\cS'_0$: $Z_1$ anticommutes with the normalizer $h'$, and $X_2Y_4$ anticommutes with the normalizer $W=Z_2Z_3Z_6$.
    \par
    Thus the only outer logicals of weight at most two are $Z_1$ and $X_2Y_4$. Their lifted weights are three and $2+1=3$. Every other nontrivial outer logical has weight at least three, and each nonidentity inner logical has weight at least one. Equation~\eqref{rep:distance-formula} proves that $\cD'_{22}$ has distance at least three. Adding any of the stated signed constraints cannot decrease distance.
    
    For equality, $W=Z_2Z_3Z_6$ is an outer Steane logical, is unchanged by $C_B$, and its lift has physical weight three. It commutes with all retained and added checks. 
    Since $W$ lies outside the Steane stabilizer and is unchanged by $C_B$, it lies outside the conjugated outer stabilizer, so its lift lies outside the stabilizer of $\cC'_{22}$ by Proposition~\ref{prop:selective-lift}.
    It is also outside $\cS'_0G'$ and $\cS'_0M'$: every element of either coset anticommutes with $h'$, whereas $W$ commutes with $h'$. Thus it remains a nontrivial logical for both signs of each constraint, establishing all upper bounds. 
    Together with the physical-qubit and independent-generator counts at the start of the proof, the distance equalities give all the stated code parameters.
    
    The fourteen Reed--Muller checks have weight at most eight, with the four $X$ checks each of weight eight, the repetition check has weight two, and the five lifted outer checks have weights $6,4,4,4,6$.
    Hence the maximum retained-check weight is eight.
    {The code $\cC'_{22}$ adds} $\widehat h'=\bar X_1Z_4Z_5Z_7$, of weight ten.
    Hence the maximum check weight of this generating set for $\cC'_{22}$ is ten.
    The disjoint supports in Eq.~\eqref{c22:measurement-lifts} give $\mathrm{wt}(\widehat G')=3+1+1+1=6$ and $\mathrm{wt}(\widehat M')=\mathrm{wt}(\widehat h')=7+1+1+1=10$, proving Eq.~\eqref{c22:weights}.
\end{proof}

\begin{remark}[Role of the repetition block]\label{rep:minimal}\label{rep:distances}
Before conjugation by $C_B$, the only weight-two intermediate logicals are $X_1Z_2,Y_1Z_4,Y_2X_4$, with no weight-one logical. 
The same lift gives each weight at least three; $Z_2Z_3Z_6$ again attains three. Without the repetition block, the two respective intermediate codes retain $Y_2X_4$ and $X_2Y_4$ at weight two. Among choices adding one two-qubit repetition block with Reed--Muller fixed at qubit $1$ and this fixed $C_B$, precisely a $Z$-basis repetition at qubit $2$ or $4$ raises both remaining weights. Its $X$ and $Y$ logical weights must both be two, so its weight-one logical Pauli must be $Z$. This is a restricted code-design statement, not a lower bound on all logical-$T$ circuits. In the chosen code, $Z_{2a}$ and $Z_{2b}$ differ by a stabilizer; their undetectability by the repetition check alone causes no ambiguity after the retained outer checks are measured.
\end{remark}

{Lemma~\ref{rep:rm} supplies $\lambda_i(X)=\lambda_i(Y)=7$ and $\lambda_i(Z)=3$ for a Reed--Muller inner block in Eq.~\eqref{rep:distance-formula}.
These values enter the distance proof of
Proposition~\ref{c22:distances}, while the representatives in Eq.~\eqref{rep:rm-reps} give the measurement weights in Eq.~\eqref{c22:weights}. The transversal action supplies the fifteen-$T^\dagger$ layer in 
Theorem~\ref{thm:main22}; its preservation of one-qubit error support is used in that theorem's single-fault proof.}

\subsubsection{Protected correction and logical measurement}\label{app:c22-correction}

Here we specify the correction and logical-measurement circuits represented by the EC and measurement boxes in Fig.~\ref{fig:gate22}.
The proofs include correction of an incoming error, the range on arbitrary inputs, and the decoded logical measurement instrument required for the protected gate.
\par

We use the local stochastic circuit noise model of Eq.~\eqref{eq:local-stochastic-budget}, with reliable classical processing and flexible two-qubit connectivity. A  single measurement of a Pauli check uses a verified cat, with one distinct cat qubit controlling the Pauli factor at each data {qubit}.

{For an arbitrary local fault, the one-error space is the span of states obtained from the code by the identity 
or weight-one Pauli errors; the error may remain correlated with a reference.}

\begin{definition}[Located failure and abort]\label{def:located-failure}
The \emph{classical record} $\omega$ of a circuit procedure consists of its observed measurement outcomes, including ancilla verification, syndrome and logical-measurement outcomes, together with their circuit-step and attempt labels.
A \emph{located failure} is a reported unsuccessful termination: a specified stopping rule computes a flag $f(\omega)=1$, identifies the procedure invocation that failed, and certifies no successful output from that invocation.
When this failure terminates a gate invocation, we call it a \emph{located abort}.
The qualifier \emph{located} refers to the identified procedure or gate invocation; it does not identify a faulty elementary operation or assert that the remaining quantum state is correct.
Rejection of one candidate or one gate attempt is not an abort if the prescribed procedure continues with a replacement or recovery and retry.
\end{definition}
\par

Here we describe a bounded serial realization of the correction circuits. 
For a weight-$w$ Pauli check, prepare a cat by $H$ on its first qubit followed by the chain of CNOTs $1\to2,2\to3,\ldots,w-1\to w$. 
Verify each adjacent $Z_jZ_{j+1}$ parity once with a freshly reset verification ancilla in $|0\rangle$, using the two cat qubits as controls and the ancilla as target. 
Reject a candidate if any parity is odd. 
After verification, each cat qubit controls a single data Pauli, and all cat qubits are measured in $X$. 
Try at most two sequential candidates per request, resetting the first before a replacement. 
Both rejections are a located failure (Definition~\ref{def:located-failure}).
Figure~\mbox{\ref{fig:verified-cat}} shows the cat-state preparation and parity verification used in each check measurement.

\begin{figure*}[t]
\centering
\begin{minipage}[t]{0.48\textwidth}
\centering
{\textbf{(a)}}\par\smallskip
{\begin{quantikz}[row sep=0.15cm,column sep=0.24cm]
\lstick{$a_1:\,|0\rangle$} & \gate{H} & \ctrl{1} & \qw & \qw & \qw \\
\lstick{$a_2:\,|0\rangle$} & \qw & \targ{} & \ctrl{1} & \qw & \qw \\
\lstick{$a_3:\,|0\rangle$} & \qw & \qw & \targ{} & \ctrl{1} & \qw \\
\lstick{$a_4:\,|0\rangle$} & \qw & \qw & \qw & \targ{} & \qw
\end{quantikz}}
\end{minipage}\hfill
\begin{minipage}[t]{0.48\textwidth}
\centering
\textbf{(b)}\par\smallskip
\begin{quantikz}[row sep=0.15cm,column sep=0.24cm]
\lstick{$a_j$} & \ctrl{2} & \qw & \qw & \qw \\
\lstick{$a_{j+1}$} & \qw & \ctrl{1} & \qw & \qw \\
\lstick{$v:\,|0\rangle$} & \targ{} & \targ{} & \meter{} & \cw
\end{quantikz}
\end{minipage}
\caption{
Verified cat state preparation. (a) A four-qubit cat state is prepared by a Hadamard gate followed by a CNOT chain; the same chain is extended to $w$ qubits for a weight-$w$ check. (b) Each adjacent parity $Z_{a_j}Z_{a_{j+1}}$ is measured with the verification ancilla $v$, which is measured in the computational basis and reset to $|0\rangle$ before each pair. The cat state is accepted only if every measured parity is positive. A rejected candidate is replaced once, and a second rejection is a located failure  (Definition~\ref{def:located-failure}). Controlled Pauli gates between the cat state qubits and the data are applied only after verification.
}
\label{fig:verified-cat}
\end{figure*}

Let $Q$ be an $n$-qubit stabilizer code of distance at least three, with independent signed generators $g_1,\ldots,g_m$.
Use the serial cat-state preparation and verification of Fig.~\ref{fig:verified-cat}, with at most two candidates per check and the local fault model of Eq.~\eqref{eq:local-stochastic-budget}.
An accepted cat state is a candidate for which every verification parity is reported as $+1$.
 A single measurement of a check means one use of an accepted cat state, the controlled single-qubit Pauli couplings, and the product of the final $X$-measurement outcomes. Repetition and comparison of these outcomes are separate steps.

The complete syndrome is $s=(s_1,\ldots,s_m)\in\mathbb F_2^m$, where the measured eigenvalue of $g_j$ is $(-1)^{s_j}$.
For each $s$, choose a Hermitian Pauli $R_s$ satisfying $g_jR_s=(-1)^{s_j}R_sg_j$ for every $j$.
Set $R_0=I$ and choose a weight-one Pauli whenever the syndrome has such a representative; for all other syndromes, fix any solution of these binary commutation equations.
This choice can be made by first tabulating the syndromes of $I$ and the $3n$ Paulis $X_j,Y_j,Z_j$, and then solving the remaining cases by binary Gaussian elimination.
Measure every generator once per complete round, without applying a recovery between rounds, and stop when two consecutive syndrome vectors agree.
For the concatenated codes, each round includes all inner and lifted outer checks.
Apply the corresponding $R_s$ as a product of single-qubit Paulis.
Use at most four rounds, declaring a located failure (Definition~\ref{def:located-failure}) if no consecutive pair agrees; a located failure is a classical flag that this correction procedure has not completed successfully.
Rejecting both cat candidates likewise gives a located failure.

\begin{lemma}[Serial cat and correction properties]\label{c22:cat}
For the code $Q$ and serial correction procedure specified above, the following properties hold.
\begin{enumerate}[label=\arabic*.]
    \item With at most one fault, the accepted cat state has at most one $X$ component modulo $X^{\otimes w}$, and a  single check measurement introduces at most one data error, although its reported sign may be wrong.
    \item With no internal fault, the procedure corrects one incoming physical error.
    With one internal fault on an encoded input without incoming errors, it preserves the logical state and leaves at most one residual physical error.
    \item For an arbitrary input, its output lies in $Q$ with no fault and in the one-error space of $Q$ with at most one fault.
    \item {With at most one fault, at least one of the two cat candidates is accepted for every check, and two consecutive complete syndrome records agree within four rounds; hence neither stopping rule declares a located failure.}
\end{enumerate}
\end{lemma}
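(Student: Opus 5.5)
The plan is to prove the four items in order, since each later item uses the earlier ones. The tools are fault propagation through the serial cat circuit of Fig.~\ref{fig:verified-cat}, the distance-three property of $Q$, and the syndrome-agreement stopping rule.

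For item~1, I track a single fault through the CNOT chain. A fault at chain position $j$ propagates to an $X$ error on the cat qubits downstream of $j$. Modulo $X^{\otimes w}$, this is equivalent to an $X$ error on either the prefix or the suffix of the chain. Any such string of weight at least two has an adjacent pair with exactly one flipped qubit at its boundary. That pair's $Z_jZ_{j+1}$ parity is odd, so the candidate is rejected. A fault inside a verification step, or in the ancilla $v$, can at most flip a reported parity, leading to a harmless rejection, or leave a single-qubit error on one or two cat qubits. I then treat the two cases for a fault on an accepted cat.
\begin{itemize}
\item If the fault is before the data couplings, each cat qubit couples to a distinct data qubit. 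The residual $X$ component of weight at most one therefore maps to at most one data Pauli, and $Z$ components flip only the reported sign.
\item If the fault is during the couplings, it is a fault on one coupling gate, which touches one data qubit.
\end{itemize}

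For items~2 and~4, I first establish termination with at most one fault. A cat candidate can be rejected only if a fault occurs inside that candidate, so with at most one fault at least one of the two candidates is accepted. For the rounds, a single fault can corrupt the syndrome records of at most two consecutive rounds: the round containing it, and, if it leaves a data error, the change between that round and the next. Among four rounds, some consecutive pair is therefore fault-free and records the same data state. This pigeonhole argument over rounds is the step I expect to need the most care, because a data error introduced mid-round makes that round's record a mixture of syndromes from before and after the error.

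For correctness, I consider two cases.
\begin{itemize}
\item \emph{No internal fault.} The agreeing syndrome is that of the incoming weight-$\le1$ error $E$. Since $Q$ has distance $\ge3$, $R_s E$ lies in the stabilizer group. This is the standard argument that weight-one errors with equal syndrome differ by a stabilizer.
\item \emph{One internal fault and no incoming error.} The agreeing pair reflects either the clean data or the data carrying the single fault-induced error $F$ of weight at most one. In the first case, stopping occurs at $s=0$ and the residual error is at most $F$. In the second case, applying $R_s$ removes $F$ up to a stabilizer. A mixed situation, where the agreement precedes the error, leaves only $F$. In every case the logical state is preserved and at most one residual error remains.
\end{itemize}

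For item~3, with no fault the applied $R_s$ maps the post-measurement state into the $+1$ eigenspace of every $g_j$, because the measured records agree exactly and $R_s$ flips precisely the $-1$ signs. With one fault, the only possible discrepancy is a single weight-one error, or a sign error, occurring after or within the agreeing rounds. That error multiplies a state of $Q$ by at most one weight-one Pauli, which places the output in the one-error space of $Q$.
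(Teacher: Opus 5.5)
Your proposal is correct and follows essentially the paper's route. Ideal adjacent-parity verification leaves an accepted cat with at most one bit error; the paper argues more directly that ideal verification accepts only constant bit patterns, whereas you go through prefix/suffix propagation. One-to-one cat--data coupling limits each check measurement to one data error, a single fault leaves a fault-free consecutive pair among the four rounds, and the distance-three argument together with the sector identity $R_s\Pi_s=\Pi_0R_s$ gives items~2 and~3. The steps you leave implicit are routine and appear in the paper's proof: assign a wait fault between rounds to the following round, and extend from Pauli faults to arbitrary single-location faults and reference-entangled inputs by expanding each Kraus operator in Paulis and using linearity.
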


\begin{proof}
First consider a Pauli fault; arbitrary local faults are treated at the end.
Write the bit-error pattern of a $w$-qubit cat state as $x\in\mathbb F_2^w$, so verification of $Z_{a_j}Z_{a_{j+1}}$ gives $(-1)^{x_j+x_{j+1}}$ when verification is ideal.
If the fault occurs in the cat encoder, all subsequent verification operations are ideal, and acceptance requires $x_j=x_{j+1}$ for every $j$.
Thus $x=0$ or $x=\mathbf1$; the latter is equivalent to zero because $X^{\otimes w}$ stabilizes the ideal cat state.
If the fault occurs during verification, the encoder is ideal.
A verification CNOT touches only one cat qubit and the verification ancilla, so it can introduce an $X$ component on at most that cat qubit.
An $X$ on the target ancilla does not propagate to a later cat control, whereas a $Z$ on that ancilla propagates only $Z$ components to cat controls.
Resetting the ancilla after each parity prevents propagation to later pairs.
Preparation, measurement, and idle faults obey the same bit-error bound.
Phase errors on the cat state can change the product of its final $X$ outcomes, so its reported check sign need not be correct.

Each cat qubit couples to only one data qubit.
An $X$ component on that cat qubit can therefore propagate only the corresponding single-qubit Pauli to the data.
A fault in a controlled Pauli likewise touches only that cat qubit and its data target, and no later coupling uses that cat qubit.
A data Pauli that anticommutes with the controlled Pauli propagates a $Z$ to the cat control, which cannot spread to another data qubit.
Consequently one  single check measurement introduces at most one data error, and an existing data error does not spread to additional data qubits.
This proves statement~1 for a Pauli fault.
A candidate prepared and verified without faults is always accepted; hence rejecting two candidates requires at least two faults.
This establishes the cat-candidate bound in statement~4.

Let $\Pi_s=\prod_{j=1}^m(I+(-1)^{s_j}g_j)/2$ be the projector onto syndrome $s$.
In a fault-free round the commuting check measurements project any input onto one of these sectors.
All subsequent fault-free rounds give the same syndrome until a data error occurs.
For a Pauli $E$, write $\syn(E)$ for its commutation syndrome, so that $E\Pi_s=\Pi_{s+\syn(E)}E$.
The chosen recovery satisfies $R_s\Pi_s=\Pi_0R_s$, and therefore maps the measured sector into $Q$.
This proves the arbitrary-input range claim in statement~3 with no fault.
If the input is $E|\psi\rangle$ with $|\psi\rangle\in Q$ and $\wt(E)\leq1$, the observed syndrome is $s=\syn(E)$ and $\wt(R_s)\leq1$.
The Pauli $R_sE$ has zero syndrome and weight at most two, so distance at least three implies that it is a stabilizer up to phase.
Thus the incoming error is corrected without changing the logical state, including when several weight-one errors share a syndrome.
This proves the fault-free part of statement~2.

Now suppose the sole fault occurs during syndrome extraction.
Only its containing round can have an incorrect or mixed syndrome record; the fault introduces at most one data Pauli $E$, and subsequent clean rounds have a fixed syndrome.
If the first round contains the fault, rounds two and three agree unless the procedure has already stopped.
If the second round contains the fault, rounds three and four agree unless it has already stopped.
If neither of the first two rounds contains a fault, their records agree and the procedure stops after round two.
A fault in a wait between rounds is assigned to the following round, while a fault after the stopping decision is an output or recovery fault.
This proves the four-round bound and completes statement~4 for a Pauli fault.

At least one round in the accepted equal pair is fault-free.
If its last round is fault-free, the final state has the reported syndrome $s$, so $R_s$ returns it to $Q$.
If its last round contains the fault, the preceding clean round has already projected onto syndrome $s$.
The final state is then $E|\phi_s\rangle$ with $\Pi_s|\phi_s\rangle=|\phi_s\rangle$ and $\wt(E)\leq1$, so $R_sE|\phi_s\rangle=\pm E R_s|\phi_s\rangle$ lies in $EQ$.
The same argument covers a fault after the last syndrome readout.
This proves the arbitrary-input range claim in statement~3 for a Pauli fault.
On a clean encoded input, the agreed syndrome is either $0$ before the error or $\syn(E)$ after it.
In the first case $R_0=I$ leaves only $E$; in the second, $R_{\syn(E)}E$ is a stabilizer by the distance-three argument above.
Thus neither case changes the logical state.
If the sole fault is instead in the final product of single-qubit recovery gates, all syndrome rounds are clean and that fault leaves at most one physical error after the ideal recovery.
This completes statements~2 and~3 for a Pauli fault.

Finally, expand each Kraus operator of an arbitrary local fault in Paulis on the qubits at its location.
The preceding arguments hold for each term and each recorded branch; by linearity their outputs lie in the span of $Q$ and its single-qubit-error spaces, with the stated logical action whenever the input is correctable.
Tensoring every step with the identity on a reference gives the same conclusions for entangled inputs.
This extends statements~1--4 to arbitrary local faults.
\end{proof}

Logical measurement records are compared only after accounting for every intervening Pauli recovery. Ordinary error correction during these measurements uses only the twenty retained checks; it must not fix the additional logical measurement outcome.

\begin{lemma}[Protected Pauli measurement]\label{pm:measurement}
For a stabilizer code $Q$ of distance at least three and Hermitian $P\in\cN(\cS_Q)$, use three measurements of $P$ using verified cat states, with $Q$ error correction as in Lemma~\ref{c22:cat} before the first and after each measurement. Take the majority of the three outcomes in the corrected Pauli frame. This implements the ideal decoded logical measurement, with at most one residual physical error, whenever the number of incoming physical errors plus internal faults is at most one.
\end{lemma}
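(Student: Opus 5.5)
We prove Lemma~\ref{pm:measurement} by a case split on where the single fault (or incoming error) lies, using Lemma~\ref{c22:cat} for the correction rounds and its statement~1 for the individual cat measurements. Label the steps
\[
\mathrm{EC}_0,\ \mathrm{M}_1,\ \mathrm{EC}_1,\ \mathrm{M}_2,\ \mathrm{EC}_2,\ \mathrm{M}_3,\ \mathrm{EC}_3 .
\]
Each $\mathrm{M}_i$ is one measurement of $P$ using a verified cat state. Because $P\in\cN(\cS_Q)$, an ideal measurement of $P$ on a state in $Q$ is the ideal decoded logical measurement, and it leaves the state in $Q$. We track outcomes in the corrected Pauli frame: if a recovery $R_s$ applied between measurements anticommutes with $P$, the later outcomes are flipped before the comparison. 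The fault-free run then gives three equal outcomes equal to the ideal one. The post-measurement state has the correct logical projection and lies in $Q$.

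\emph{Case 1: incoming error, no internal fault.} By Lemma~\ref{c22:cat} statement~2, $\mathrm{EC}_0$ removes the incoming error. Everything afterward is ideal, so the three outcomes agree.

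\emph{Case 2: fault in some $\mathrm{EC}_j$.} By statement~2, the logical state is preserved and at most one physical error $E$ remains. If $j\ge1$, a later $\mathrm{EC}$ removes $E$ before the next measurement (or $E$ is the allowed residual when $j=3$). The only subtle point is when $E$ sits on the data during some $\mathrm{M}_i$ and anticommutes with $P$, flipping that outcome. This can happen only for $\mathrm{M}_{j+1}$, and the next correction removes $E$. So at most one outcome is wrong, and the majority is correct.

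\emph{Case 3: fault in $\mathrm{M}_i$.} By statement~1, the reported sign may be wrong and at most one data error is introduced. Since $P$ commutes with the stabilizers, the logical projection already performed, or performed by the other clean measurements, is unaffected. A flipped sign affects one vote only. The data error is corrected by $\mathrm{EC}_i$ (statement~2), or it is the single residual error.

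The main obstacle is showing that a wrong vote never biases the logical state itself. Suppose the faulty measurement $\mathrm{M}_1$ reports $-u$ when the state is really projected onto outcome $u$. Then $\mathrm{M}_2$ and $\mathrm{M}_3$ are clean, both report $u$, and the majority agrees with the actual projection. Suppose instead that a fault makes the ``projection'' at $\mathrm{M}_1$ partial or a mixture. The next clean measurement completes the projection consistently: the projector onto outcome $u$ is $(I+uP)/2$, and repeated measurement is idempotent. We also need that a single error cannot disturb the state in a way that a logical $P$-error creates two disagreeing clean votes. This holds because any physical error of weight at most two that commutes with the stabilizers is a stabilizer (distance $\ge3$). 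Consequently, the frame-corrected clean outcomes are all equal.

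Finally, for arbitrary local faults we expand the Kraus operators in Paulis and argue branch by branch, as at the end of Lemma~\ref{c22:cat}. Each branch has the same decoded outcome and the same logical action. Linearity then gives the ideal decoded instrument with at most one residual physical error, also for inputs entangled with a reference.
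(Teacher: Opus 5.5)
Your proposal is correct and follows essentially the same route as the paper's proof. It uses the same case split (incoming error, faulty correction, faulty single measurement), the same key observation that only one record can disagree with the sign fixed by the other two measurements (the faulty measurement's own record, or the fault-free measurement immediately after a faulty correction), and the same Pauli expansion of arbitrary local faults.

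Three details should be tightened.

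\begin{itemize}
\item Your claim that all frame-corrected clean outcomes are equal contradicts your own Case~2. It should read ``at most one record disagrees.''
\item The sign flip applies only to recoveries tracked in the Pauli frame. If you flip later outcomes after a physically executed $R_s$, the majority comes out reversed in your Case~1 whenever the incoming error anticommutes with $P$.
\item The final step should note, as the paper does, that each record's decoded branch is $c_{\lambda,\omega}\Pi_\lambda(P)$. Trace preservation, with no located failure under one fault, then forces $\sum_\omega|c_{\lambda,\omega}|^2=1$, which is what gives the exact Born probabilities.
\end{itemize}
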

\begin{proof}
Write $\Pi_\lambda(P)=(I+\lambda P)/2$, with $\lambda\in\{+1,-1\}$.
Since $P$ commutes with the stabilizers of $Q$, each projected state $\Pi_\lambda(P)|\psi\rangle$ remains in $Q$ for $|\psi\rangle\in Q$.
Hence the ordinary correction circuits of Lemma~\ref{c22:cat} apply after each measurement without selecting a sign of $P$.
The ideal measurement has probability $\operatorname{Tr}(\Pi_\lambda(P)\rho)$ and unnormalized output $\Pi_\lambda(P)\rho\Pi_\lambda(P)$ for every encoded density operator $\rho$.

First suppose all circuit locations are ideal.
The leading correction removes a possible incoming error.
The  first measurement projects onto one eigenspace of $P$, the intervening corrections act trivially on the encoded state, and the next two measurements give the same sign $\lambda$.
Thus the majority and the quantum output agree with the ideal measurement.

Now suppose the input is clean and there is one Pauli fault in a  single check measurement.
Propagating it through the remaining Clifford gates gives an output data Pauli $F$ of weight at most one and possible flips of the cat readouts.
For reported sign $r$, the corresponding branch is proportional to $F\Pi_{\epsilon r}(P)$, where $\epsilon\in\{+1,-1\}$ accounts for these flips and Pauli commutations.
The following clean correction removes $F$, leaving the ideal projected state with sign $\lambda=\epsilon r$.
If this is the first  single check measurement, the next two clean measurements both give $\lambda$.
If it is the second or third extraction, the earlier clean measurement has already fixed $\lambda$, and the other  clean measurement record has the same value.
Only the record of the faulty extraction can disagree, so the majority is correct and the final state has the required projection.

If the fault is in the leading correction or a correction  between these measurements, Lemma~\ref{c22:cat} leaves at most one data Pauli $F$ and preserves the logical state.
Define $\epsilon_F\in\{+1,-1\}$ by $PF=\epsilon_FFP$.
The next  clean measurement obeys $\Pi_r(P)F=F\Pi_{\epsilon_F r}(P)$, and the following clean correction removes $F$.
Thus only that  next measurement record can be reversed relative to the logical sign $\lambda$; the other two records and the corrected quantum output have sign $\lambda$.
A fault in a wait has the same effect as a one-qubit error immediately before the next operation.
A fault in the final correction leaves at most one residual error and cannot change the already determined majority.

The comparisons above are made in a common Pauli frame.
If a known Pauli $E$ is tracked rather than physically removed, define $\epsilon_E\in\{+1,-1\}$ by $PE=\epsilon_E EP$, and replace the  measured sign $r$ by $\epsilon_E r$ before comparison.
Indeed, $\Pi_r(P)E=E\Pi_{\epsilon_E r}(P)$.
This accounts for every intervening tracked recovery and does not require knowledge of the unknown fault.

For an arbitrary local fault, expand each Kraus operator in local Paulis.
For each fixed full record and final majority $\lambda$, the corrected logical action of every term above is a scalar multiple of $\Pi_\lambda(P)$, up to a correctable final error.
After ideal decoding, the branch operators therefore have the form $c_{\lambda,\omega}\Pi_\lambda(P)$, where $\omega$ includes the other recorded outcomes and any decoder syndrome.
With at most one fault the finite procedure has no located failure (Definition~\ref{def:located-failure}), so trace preservation gives $\sum_\omega|c_{\lambda,\omega}|^2=1$ for each nonzero eigenspace.
Summing these branches yields exactly $\Pi_\lambda(P)\rho\Pi_\lambda(P)$ and its Born probability, with the possible single-error residual already bounded above.
This also proves the statement for inputs entangled with a reference, because the same operator identities hold after tensoring with the identity on that reference.
\end{proof}

\subsubsection{Verified stabilizer resources and Clifford gates}

Here we construct verified ancillary stabilizer states and use them for protected logical Clifford operations, including transfers between encodings.
These operations supply the Clifford part of the universal fault-tolerant gate set stated in Sec.~\ref{sec:intro}.
\par

Here verification means testing all independent stabilizer generators of a known target state, together with the flag checks specified below, and accepting the candidate only when every generator has its prescribed sign and every flag is zero.

\begin{lemma}[Verification of a known stabilizer state]\label{rep:verification}
An $m$-qubit stabilizer state can be prepared and verified with two extra qubits so that a single fault either rejects the candidate or leaves at most one physical error on an accepted candidate. 
Prepare and verify one candidate, and, only if it is rejected, reset the same qubits and prepare and verify a second candidate; accept the first successful candidate and declare a located failure  (Definition~\ref{def:located-failure}) if both are rejected.
If $L_{\rm cand}$ is the maximum number of locations in one prescribed preparation-and-verification attempt, including resets and all data and ancillary idles, then each branch has at most $2L_{\rm cand}$ locations, and rejecting both candidates requires at least two faults.

\end{lemma}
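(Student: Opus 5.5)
The plan is to adapt the flag-verification argument already used for the cat states in Lemma~\ref{c22:cat}, using one encoding circuit for the known state and one verification gadget per stabilizer generator. First fix a Clifford encoding circuit $E$ that produces the target state $|\phi\rangle$ from $|0\rangle^{\otimes m}$. Fix independent signed generators $s_1,\ldots,s_m$ of its stabilizer. Under a single fault in $E$, Lemma~\ref{lem:compiled-support} lets us write the output as $F|\phi\rangle$ with $F$ an arbitrary operator on the support of $E$. Expanding in Paulis, every component is $P|\phi\rangle$ with $P$ determined modulo the stabilizer. The first step is the standard reduction: $P|\phi\rangle$ equals $|\phi\rangle$ up to phase exactly when $P$ commutes with all $s_j$. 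Otherwise some $s_j$ anticommutes with $P$, and an ideal measurement of that $s_j$ rejects.

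The verification is then run with the two extra qubits. One serves as a syndrome ancilla. The other is a flag, as in the flag constructions cited in Sec.~\ref{sec:selective}. Both are reset between generators. Each $s_j$ is measured by coupling every data qubit in its support, in sequence, to the syndrome ancilla, with the flag coupled after the first and before the last coupling. The candidate is accepted only if all $m$ outcomes match their prescribed signs and every flag reads zero.

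The case analysis goes as follows. A fault in $E$ leaves all verification ideal, so any nontrivial Pauli component is caught by the previous step. A fault inside the measurement of $s_j$ either flips its outcome or the flag, which rejects, or acts on the data. A data-side fault of high weight can only come from the syndrome ancilla back-propagating through later couplings. The flag is placed precisely so that every such fault sets the flag. Any fault that leaves the flag at zero therefore produces at most one data Pauli, up to multiplication by $s_j$. Later generators are then measured ideally. A residual single-qubit error that anticommutes with some later $s_k$ causes rejection, and one that does not is acceptable, since at most one error was required anyway. Preparation, measurement, reset and idle faults reduce to single-qubit Paulis on one qubit. Arbitrary Kraus faults follow by linearity, exactly as at the end of the proof of Lemma~\ref{c22:cat}.

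The counting claims are direct. Each branch executes at most two full attempts, and each attempt has at most $L_{\rm cand}$ locations, giving $2L_{\rm cand}$. A fault-free attempt is always accepted, so rejecting both candidates needs a fault in each. The two sets of locations are disjoint because the reset separates them, so a double rejection requires at least two faults.

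The main obstacle is the flag step: showing that a single fault on the syndrome ancilla can never spread to two or more data qubits unless the flag fires. This has to hold for arbitrary Pauli types in $s_j$, not only $X$- or $Z$-type checks. I would first try the standard two-flag-CNOT placement of Chao--Reichardt. I would verify it by listing, for each ancilla fault position, the propagated data error modulo $s_j$, and showing it has weight at most one or triggers the flag. If the general case resists, a fallback is to bound weight modulo the full stabilizer of $|\phi\rangle$ rather than modulo $s_j$ alone.
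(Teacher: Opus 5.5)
Your proposal is correct and follows essentially the same route as the paper's proof. The shared ingredients are a fixed Clifford encoder whose single faults are caught by the subsequent ideal full-stabilizer verification, and one flagged measurement per signed generator using a syndrome ancilla and a flag, both reset between checks. From these, any accepted single fault leaves at most one data error modulo the measured generator. The two-candidate count then follows from acceptance of a fault-free attempt and from the reset between attempts.

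The only difference is where the flag CNOTs sit. The paper places its two $\mathrm{CNOT}_{a\to f}$ gates before and after all the controlled-$P_j$ couplings, with $a$ prepared in $|+\rangle$ as the control. This already removes your worry about general Pauli types: only an $X_a$ component ever reaches the data, and it propagates as the factors $P_j$ whatever their type. Your Chao--Reichardt placement also works. It simply lets through an unflagged weight-$(w-1)$ error, which is equivalent to weight one modulo $s_j$.
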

\begin{proof}
Prepare the candidate by a fixed Clifford encoder. For each independent signed generator $P=\prod_jP_j$, prepare $a$ in $|+\rangle$ and $f$ in $|0\rangle$, apply $\operatorname{CNOT}_{a\to f}$, then $\prod_j\operatorname{controlled}_a(P_j)$, then $\operatorname{CNOT}_{a\to f}$. Measure $a$ in $X$ and $f$ in $Z$, accepting only the prescribed generator sign and zero flag. Reset both verification ancillas between checks. A fault in the encoder is followed by fault-free full stabilizer verification, which rejects its propagated Pauli unless it stabilizes the target state. With a clean encoder, a sole verification fault can spread an $X_a$ error to later data, but also flips the flag. A fault at the first flag CNOT can hide that flip only while propagating the complete measured stabilizer, harmless on the ideal state. A last-flag-CNOT fault has no remaining data interaction. A data error propagates only a $Z_a$ component back to the control and does not spread to other data. Thus every accepted case leaves at most one data error modulo the measured stabilizer. Later clean checks cannot spread it.
A fault-free attempt is always accepted, and a full reset prevents a fault in the first attempt from spoiling a fault-free second attempt; therefore two rejections require at least two faults.
 Pauli expansion proves the arbitrary-fault statement. This verifier applies to a known ancillary stabilizer state; it is not a rule for postselecting unknown data.
\end{proof}

\begin{lemma}[Protected Clifford transfer]\label{rep:transfer}
Let $Q$ and $Q'$ be stabilizer codes with parameters $\qcode{n}{k}{d}$ and $\qcode{n'}{k}{d'}$, respectively, where $d,d'\geq3$.
For Clifford encodings $E_Q$ and $E_{Q'}$, let $U:Q\to Q'$ be any encoded Clifford isometry, meaning that $UE_Q=E_{Q'}C$ for a $k$-qubit Clifford unitary $C$.
Such a $U$ has a finite stabilizer-only teleportation implementation that corrects one incoming error with no internal fault, or tolerates one fault on a clean input, leaving at most one output error.
\end{lemma}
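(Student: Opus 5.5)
I will implement $U$ by one-bit/Knill-style teleportation through a verified encoded resource state, so that every quantum operation is transversal or a verified-cat Pauli measurement.

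\emph{Resource.} Consider the $(n+n')$-qubit state $|\Phi_U\rangle=(I\otimes U)|\Phi\rangle_{QQ}$, where $|\Phi\rangle_{QQ}$ is the encoded Bell state on two $Q$ blocks, i.e.\ $E_Q\otimes E_{Q'}$ applied to $(I\otimes C)|\Phi_k\rangle$. Since $C$ is Clifford and both encoders are Clifford, $|\Phi_U\rangle$ is a known stabilizer state. It is therefore prepared and checked by Lemma~\ref{rep:verification}. A single fault either rejects the candidate, with one retry, or leaves at most one physical error on the accepted resource.

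\emph{Teleportation.} I will first run the correction procedure of Lemma~\ref{c22:cat} on the input $Q$ block. Then I measure the encoded Bell observables $\bar X_j\otimes\bar X_j$ and $\bar Z_j\otimes\bar Z_j$ between the input block and the first half of the resource. For this I use the standard transversal realization: a physical transversal CNOT, valid when $Q$ is CSS, or in general a sequence of protected Pauli measurements as in Lemma~\ref{pm:measurement} on the joint code $Q\otimes Q$. This is followed by destructive single-qubit measurements decoded classically through the $Q$ syndrome, which corrects one flipped bit since $d\ge3$.

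\emph{Feed-forward.} The outcomes $(a,b)$ determine a logical Pauli $P_{ab}$ on the input side. The output block then holds $U P_{ab}$ applied to the input. Since $C$ is Clifford, $U P_{ab}=P'_{ab}U$ with $P'_{ab}$ an encoded Pauli of $Q'$, computed classically. It is applied as a Pauli frame update, followed by a final round of correction of Lemma~\ref{c22:cat} on $Q'$.

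\emph{Fault analysis.} I will do a case split over the single fault location, expanding an arbitrary local fault in Paulis by linearity as in the earlier lemmas.
\begin{itemize}
\item \emph{Incoming error, no fault.} The error is removed by the leading correction.
\item \emph{Fault in the resource.} It produces at most one physical error on the resource. If it lies on the output half, it passes to the output as a single correctable error. If it lies on the first half, it can only flip a decoded Bell outcome if it is a logical error, which is impossible for weight one with $d\ge3$. So the classical decoding of the destructive measurement absorbs it.
\item \emph{Fault in a transversal layer or single-qubit measurement.} It affects one qubit per block. On the measured blocks this is absorbed by the classical decoding. On the output block it becomes at most one residual error.
\item \emph{Fault in a correction round.} This case is covered by Lemma~\ref{c22:cat}.
\end{itemize}

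\emph{Main obstacle.} I expect the hardest step to be the non-CSS case, where transversal CNOT is not a logical Bell measurement. There I would instead measure each of the $2k$ logical Bell Paulis via Lemma~\ref{pm:measurement} on the combined code. This requires checking that a fault during these measurements cannot yield correlated errors across the input and output halves that defeat the final $Q'$ correction. Since each cat qubit touches one data qubit, a fault spreads to at most one qubit in total, which I will use to conclude.
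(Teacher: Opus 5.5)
Your proposal is correct, but for non-CSS codes it follows a different and heavier route than the paper. The paper never splits into CSS and non-CSS cases. It encodes the measured half of the resource in the complex-conjugate code, taking $|\Omega_U\rangle=2^{-k/2}\sum_x E_Q^*|x\rangle\otimes UE_Q|x\rangle$ rather than your $(E_Q\otimes UE_Q)|\Phi_k\rangle$. Because $E_Q^{\mathsf T}E_Q^*=(E_Q^\dagger E_Q)^*$ is the logical identity, one has $\langle\Phi_n|(E_Q\otimes E_Q^*)=2^{-(n-k)/2}\langle\Phi_k|$. The ordinary pairwise physical Bell measurement (CNOT, $H$, and two $Z$ readouts at each coordinate) is therefore already a logical Bell measurement for every stabilizer code. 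This includes codes such as $\cC'_{22}$, whose generators contain $Y$ factors and complex signs, which is the case the paper actually needs. A single input error, measured-half resource error, or faulty pairwise circuit then changes only one coordinate's two Bell bits. That is one Pauli error on the Bell record, which a distance-three decoder resolves, so no leading correction is needed; for $U=I$ the circuit is exactly Knill correction.

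Your fallback is also valid: you measure the $2k$ commuting logical Bell observables on the distance-$\geq3$ code $Q\otimes Q$ sequentially with Lemma~\ref{pm:measurement}. One fault affects at most one of these protected measurements, and its residual is removed by the next interleaved correction. None of these operations touches the output half, so the correlated-error concern in your last paragraph does not arise. The measured blocks are simply discarded, so no destructive readout is needed on that route.

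Your route costs $2k$ nondestructive logical measurements, each with three verified extractions and interleaved correction rounds, where the paper uses one transversal layer decoded classically. In exchange, it reuses Lemma~\ref{pm:measurement} directly and never needs the conjugate code $Q^*$.
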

\begin{proof}
For a Clifford encoding $E_Q$, prepare and verify the stabilizer state
\begin{equation}
 |\Omega_U\rangle=2^{-k/2}\sum_{x\in\mathbb F_2^k}E_Q^*|x\rangle\otimes UE_Q|x\rangle.
\label{rep:resource}
\end{equation}
Here $*$ denotes entrywise complex conjugation in the physical and logical computational bases, and
\[
Q^*=\{|\psi\rangle^*:|\psi\rangle\in Q\}=\operatorname{im}E_Q^*.
\]
If $Q$ has signed stabilizer generators $g_j$, then $Q^*$ has generators $g_j^*$; a generator with an odd number of $Y$ factors changes sign because $Y^*=-Y$.
Thus $Q^*$ has the same code parameters as $Q$, although its signed stabilizers can differ.
\par

Resource state $|\Omega_U\rangle$ is obtained from $k$ Bell pairs by the Clifford encodings $E_Q^*$ and $E_{Q'}C$, so it is a known stabilizer state to which Lemma~\ref{rep:verification} applies.
The first half of $|\Omega_U\rangle$ is encoded in $Q^*$, as required by the Bell contraction. 
At each coordinate, apply a CNOT from the input qubit to the corresponding qubit of this half, then a Hadamard on the input qubit and a $Z$ measurement of both qubits. A single input error, measured-half resource error, or fault in one such pairwise circuit affects only that coordinate's two Bell bits. These changes are equivalent to one Pauli error in the complete Bell record, whose logical outcome can therefore be recovered because $d\geq3$.

The full Bell measurement outcomes, together with a distance-three decoder, determine the logical byproduct uniquely: any two error hypotheses of weight at most one with the observed syndrome differ by a stabilizer, so they give the same corrected logical Bell measurement outcomes.
For syndromes without such an error hypothesis, fix any Pauli representative with that syndrome, as in Lemma~\ref{c22:cat}, so the decoder and its logical Pauli correction are defined for every set of Bell measurement outcomes.
 Conjugate the decoded logical byproduct by $U$ and apply its physical Pauli representative on $Q'$. An output-half resource error or a fault in that Pauli correction leaves one output error. Lemma~\ref{rep:verification} bounds the total resource error across both halves. These cases prove the claim by Pauli linearity. With $U=I$ this is Knill error correction; ideal Bell contraction projects arbitrary inputs into the output code, and one faulty component leaves that output in the span of the code and its single-qubit-error spaces, supplying the standard error-correction range property.
Pauli expansion covers arbitrary local faults, and tensoring these branch identities with the identity on a reference system proves the same statements for inputs entangled with that system.
\end{proof}

The same transfer construction implements an encoded Clifford on several blocks: verify its encoded Bell resource, decode the Bell record separately on each input block, and apply the resulting Pauli byproducts to the output blocks. With no internal fault, one incoming error in each input block is corrected. On clean encoded inputs, one internal fault leaves at most one error in each output block; in this verifier it leaves at most one error across the entire resource. The output also has the stated range property for arbitrary inputs. This supplies protected $H$, $S$, and CNOT gates on the fixed code, including stabilizer generators with complex matrix entries. Every resource is a known stabilizer state, so preparation and verification use only Clifford operations~\cite{gottesman1999demonstrating,knill2005quantum}. There is no direct fanout from an unencoded control to unverified encoded data. Pauli byproducts are resolved before a non-Clifford layer, physically or by adapting its rotation signs.

\subsubsection{Proof of the protected fixed-code gate (Theorem~\ref{thm:main22})}\label{app:gate_single_fault_proof}

Here we assemble the transversal layer, error correction using the retained checks and adaptive Pauli measurements into the circuit of Fig.~\ref{fig:gate22}.
We prove all claims of Theorem~\ref{thm:main22}: the logical action, fifteen physical $T$-type gates, single-fault guarantee, 33-qubit space bound, and state preparation.

The proof combines the transversal action in Lemma~\ref{rep:rm} and the ideal completion in Theorem~\ref{pm:adaptive} with the correction and measurement guarantees in Lemmas~\ref{c22:cat} and~\ref{pm:measurement}.
\par

On an input state encoded in $\cC'_{22}$, we consider a circuit performing the following steps.
\begin{enumerate}[label=\arabic*.]
    \item Apply error correction using the full stabilizer for $\cC'_{22}$.
    \item Apply the fifteen physical $T^\dagger$ gates on Reed-Muller block $1$.
    \item Apply error correction using the retained checks for $\cD'_{22}$.
    \item Measure $\widehat G'$ and record its outcome $y$.
    \item If $y=-1$, measure $\widehat M'$ and record its outcome $r$.
    \item Measure $\widehat h'$ and record its outcome $z$.
    \item Apply $\widehat A'^{(1-z)/2}$, and additionally apply $\widehat L'$ when $y=-1$ and $rz=-1$.
    \item Apply error correction using the full stabilizer for $\cC'_{22}$.
\end{enumerate}
Every displayed logical measurement is the full protected measurement of Lemma~\ref{pm:measurement}. 
All ordinary syndrome extraction during these measurements uses only the twenty retained checks. 
In particular $h'$ is not fixed prematurely.
\par

\begin{proof}[Proof of Theorem~\ref{thm:main22}]
\emph{Logical action and single-fault correctness.}
The transversal layer in step~2 implements $R_{\widehat B'}(\pi/4)$ by Lemma~\ref{rep:rm}. It preserves $\cD'_{22}$ because the lifted outer retained generators have only $I$ or $Z$ at qubit $1$. The exact gate then follows from Theorem~\ref{pm:adaptive} applied in steps~4--7 to $\widehat A'$, $\widehat B'$, $\widehat L'$, $\widehat h'$, $\widehat G'$, and $\widehat M'$, obtained by conjugating the outer operators as in Eq.~\eqref{c22:operators} and then lifting them as in Eq.~\eqref{c22:lift}.
For fault tolerance, fault-free error correction in step~1 removes one incoming error. 
If step~1 contains the sole fault, its residual error stays on one qubit during the transversal layer in step~2, including an error on an unencoded or repetition qubit, which is untouched by the transversal layer. 
A fault at a physical $T^\dagger$ in step~2 also remains supported on one qubit. The following clean error correction using the retained checks in step~3 removes it. 
If that correction is faulty, the fault-free correction at the start of the protected $\widehat G'$ measurement in step~4 removes the residual. Lemma~\ref{pm:measurement} handles each measurement in steps~4--6 and its record; successive components have clean leading corrections if the earlier component contains the sole fault. 
Final Pauli corrections in step~7 do not spread errors. Trailing {error correction using the full stabilizer} in step~8 removes any preceding residual, or leaves at most one if it is itself faulty. 
A single fault cannot cause both allowed ancilla-preparation attempts to fail. Pauli expansion covers arbitrary single-location noise and entangled inputs.

\emph{Space and state preparation.}
Every retained check, check of $\cC'_{22}$, or displayed logical check has weight at most ten by Proposition~\ref{c22:distances}. All check requests, correction rounds, and candidate attempts are serial. The verification ancilla is reused after reset;  no extra cat state or output resource is prepared for a later step. Repetition of a measurement increases time but does not increase this simultaneous space.
The maximum simultaneous occupancy is therefore $22+10+1=33$ qubits: the data block, at most ten cat qubits and one verification ancilla.
For a logical $|+\rangle$, prepare the $22$-qubit stabilizer state specified by the checks of $\cC'_{22}$ and $X_{\rm log}=+1$ using a fixed Clifford encoder. 
Verify its complete known stabilizer with the same cats and reject any incorrect outcome. The added logical check has weight four. A fault in the encoder is followed by fault-free complete verification, so an accepted propagated Pauli is a stabilizer. A sole verification fault either rejects or leaves one data error, removed by the fault-free correction in step~1. Two sequential complete state candidates make this step finite without increasing peak space. Thus the accepted logical state followed by the gate is $T|+\rangle$, with the same single-fault guarantee.
\end{proof}

The $33$-qubit bound concerns a serial gate or state preparation on the fixed code. Conversion to another memory code, routing on restricted connectivity, and parallel preparation have separate costs. For comparison, the generic verified-teleportation construction of Lemma~\ref{rep:transfer} transfers Steane to Reed--Muller and back using a $7+15=22$-qubit resource: the two transfer peaks are $7+22+2=31$ and $15+22+2=39$, while final Steane correction needs $7+14+2=23$. These give a $39$-qubit upper bound for that schedule only. Specialized code-switching circuits use fewer qubits; Ref.~\cite{heussen2024efficient} reports a 24-qubit implementation with flag-qubit reuse. The $33$-qubit bound therefore does not establish a qubit advantage over specialized code switching.

To count circuit locations, let $C_Q,C_D$ denote the elementary-location costs of the stated correction circuits for $\cC'_{22}$ and $\cD'_{22}$, and let $C_{\rm meas}(w)=4C_D+3C_{\rm raw}(w)$. A padded longest branch has
\begin{equation}
 N\leq 2C_Q+C_D+15+C_{\rm meas}(6)+C_{\rm meas}(10)+C_{\rm meas}(10)+C_{\rm other}.
\label{c22:cost}
\end{equation}
This deliberately retains separate neighboring corrections. The optional weight-ten measurement occurs with probability $1/2$ ideally. Its three check measurements, together with those of the other measurements, use $48$ or $78$ data couplings, averaging $63$, excluding all syndrome extraction and cat verification. In a raw weight-$w$ cat attempt, the encoder uses $w-1$ CNOTs, verification uses $2(w-1)$ CNOTs, and data coupling uses $w$ controlled Paulis; hence the fault-free count is $4w-3$ before local basis changes, measurements and idles. Capped retries and delays must be included in $C_{\rm raw}$ and $C_{\rm other}$. The gate-failure-or-abort probability is bounded by $\binom N2p^2$, but no useful numerical coefficient is assigned without counting the full schedule.

\subsubsection{A finite set of circuits and error suppression by recursive concatenation}\label{app:ft-budget}

Here we recursively concatenate the fixed code $Q_{22}=\cC'_{22}$: each data qubit is replaced by a 22-qubit block, and each elementary circuit location by the corresponding protected encoded circuit.
Corollary~\mbox{\ref{c22:recursion}} derives the error recurrence in Eq.~\mbox{\eqref{eq:recursive-error}}, proving the arbitrary accuracy below a positive threshold stated in Sec.~\mbox{\ref{subsec:gate22}}.

The same argument gives the $22^\ell$ data-qubit and $15^\ell$ physical $T$-type counts stated there, excluding ancilla and Clifford overhead.
The $T$-type counts refer to the scheduled locations; a branch stopped after failure executes at most this number of gates.
Equations~\mbox{\eqref{eq:computation-error-budget}} and~\mbox{\eqref{eq:target-level}} then relate the concatenation level to a target failure-or-abort probability for a computation.

\par

The primitive properties used in the concatenation argument of Corollary~\mbox{\ref{c22:recursion}} below are summarized in the following table.

\begin{center}
\begin{tabular}{p{0.39\columnwidth}p{0.53\columnwidth}}
Required property & Proof\\\hline
Correction and arbitrary-input range & Lemmas~\ref{c22:cat} and~\ref{rep:transfer}\\
Verified stabilizer preparation & Lemma~\ref{rep:verification}\\
Clifford gates and blockwise decoding & Lemma~\ref{rep:transfer} and the following paragraph\\
Logical measurement and output & Lemma~\ref{pm:measurement}\\
Logical $T$ and its inverse & Theorems~\ref{thm:main22} and~\ref{pm:adaptive}, with the inverse rule\\
Bounded preparation and correction & Two-candidate and four-round caps in Lemma~\ref{c22:cat}
\end{tabular}
\end{center}

To implement the inverse logical unitary $R_{Z_{\rm log}}(-\pi/4)=R_{Z_{\rm log}}(\pi/4)^\dagger$ on the same fixed code $\cC'_{22}$, replace fifteen physical $T^\dagger$ gates in the circuit implementing $R_{Z_{\rm log}}(\pi/4)$ by $T$ gates, and use the additional $\widehat L'$ correction when $y=-1$ and $rz=+1$.
The inverse rule following Theorem~\ref{pm:adaptive} proves the action. Complex conjugating the entire circuit would also conjugate the stabilizer generators of the fixed code with complex matrix entries, so it would not by itself prove an inverse gate on this fixed code.  All other circuits in this set, including resource encoders and verification, are Clifford circuits. Replacing their elementary locations by the corresponding encoded Clifford circuits  gives a closed set of circuits without extra non-Clifford resources.

\emph{Completion after a reported failure.}\label{app:recursive-failures}
Fix each invocation's output registers and return time in advance. A gate or correction performed on the input register returns that register. Reserve the prescribed output half for teleportation even when resource preparation fails; preparation and reset likewise have designated output registers. A destructive measurement returns only its bit, while a nondestructive measurement also retains its data. Hold all ordinary output bits until the return time.

On reaching its own stopping failure, an invocation stops its remaining intended operations, passes its designated output registers to the padding idles, sets each required outgoing bit to zero (sign $+1$), and uses the identity for any unreported Pauli correction. Other registers are discarded. No additional attempt is made to repair or replace the output, whose state need not be encoded or correct. The parent continues using these ordinary outputs; the child failure flag is recorded only for diagnosis and does not enter a parent stopping rule. Only an outermost invocation's own stopping failure aborts the computation. A discarded register has no subsequent coupling before its next scheduled preparation or reset, itself subject to this rule.

At the elementary level padding uses physical idles; at higher levels it uses the protected idle circuits in this set, with the same completion rule at the lower level. This ends at physical idles and is finite at every level. Every initialization, preparation attempt, verification, reset, conditional operation and wait is included in the schedule, including waits on reserved outputs and the full padding circuits after failure. All these slots contribute to $N_\star$; no additional protected reset or recovery procedure is assumed.

A rectangle is a protected logical operation followed by error correction; its extended rectangle also includes leading error correction.
Shared corrections are assigned by processing extended rectangles backwards in circuit order.
When a later bad extended rectangle includes a shared correction, remove that correction and its assigned lower-level locations from the earlier extended rectangle.
Apply this assignment at every lower level, excluding from an earlier truncated extended rectangle every elementary location assigned to a later one.
At level one, call an assigned extended rectangle bad if it contains at least two physical faults; recursively, call it bad if it contains at least two bad child extended rectangles with disjoint assigned supports.
A rectangle is called bad when its assigned, possibly truncated extended rectangle is bad.
For every set $R$ of level-$\ell$ rectangle positions fixed independently of the fault pattern, $\eta_\ell$ is an upper bound satisfying $\Pr(\text{all positions in }R\text{ are bad after level reduction})\leq\eta_\ell^{|R|}$.
\par

\begin{corollary}[Arbitrary accuracy on the fixed code]\label{c22:recursion}
Use the fixed code $Q_{22}=\cC'_{22}$, the gate of Theorem~\ref{thm:main22}, {the logical $T^\dagger$ gate on the same code specified above}, and the protected Clifford, preparation, measurement, and correction circuits specified here. 
For each requested cat state or known stabilizer resource, use at most two sequential preparation-and-verification attempts as in Lemmas~\ref{c22:cat} and~\ref{rep:verification}, and limit repeated syndrome extraction to four complete rounds as in Lemma~\ref{c22:cat}.
Declare a failure if both candidates reject or no consecutive pair of complete syndrome records agrees, and pad shorter branches with idles.
Below the outermost level, handle this failure by the local completion rule of \SM{}~\ref{app:recursive-failures}.
Fix a padded finite schedule for  each circuit in this set, with a specified location slot for every possible preparation attempt, verification, conditional operation, and required idle, and let $N_\star\geq2$ bound the number of lower-level location slots in every extended rectangle.
This includes all initialization, reset and failure-branch slots specified in \SM{}~\ref{app:recursive-failures}.
 Set
\begin{equation}
 A=\binom{N_\star}{2},\qquad p_{\rm th}=A^{-1},\qquad\eta_0=p.
\label{eq:pair-bound}
\end{equation}
Under Eq.~\eqref{eq:local-stochastic-budget}, the joint local-stochastic strength of bad truncated level-$\ell$ rectangles obeys
\begin{equation}
 \eta_{\ell+1}\leq A\eta_\ell^2,\qquad
 \eta_\ell\leq A^{-1}(Ap)^{2^\ell}
 =p_{\rm th}(p/p_{\rm th})^{2^\ell},\qquad p<p_{\rm th}.
\label{eq:recursive-error}
\end{equation}
The level-$\ell$  code used before and after the gate has $22^\ell$ data qubits and distance at least $3^\ell$. Its logical-$T$ circuit contains $15^\ell$ scheduled physical $T$-type gates. A bad level-$\ell$ rectangle requires at least $2^\ell$ physical faults; this is a lower bound, not an assertion that a malignant set of exactly that size exists.
\end{corollary}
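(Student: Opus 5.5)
The plan is a standard Aliferis--Gottesman--Preskill style level-reduction argument, specialized to the finite circuit set listed in the table above. It has two parts. The first is exactness under good rectangles: if a level-one extended rectangle contains at most one faulty location, then its rectangle is correct. This means that, after ideal decoding, it acts as the ideal elementary operation it replaces. We verify this from the single-fault guarantees already proved. Lemma~\ref{c22:cat} covers correction, including the arbitrary-input range property needed for the leading correction to map into the one-error space. Lemma~\ref{rep:verification} covers resources, Lemma~\ref{rep:transfer} covers Cliffords, Lemma~\ref{pm:measurement} covers measurements, and Theorem~\ref{thm:main22} together with the inverse rule covers $T$ and $T^\dagger$. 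The bounded retry and round caps ensure that a single fault never triggers a stopping failure (statement~4 of Lemma~\ref{c22:cat}). The completion rule of \SM{}~\ref{app:recursive-failures} guarantees that even a failed child returns ordinary outputs on fixed registers at a fixed time. As a result, the schedule, and hence $N_\star$, is independent of the fault pattern. Truncation by backward assignment of shared corrections is handled exactly as in AGP: a truncated good extended rectangle still contains a correct rectangle, because its leading correction either remains or has been absorbed into the preceding good one.

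The second part is the level reduction. We replace each good level-one rectangle by its ideal level-zero operation, and each bad one by a faulty level-zero location. Because the code and all circuits are closed under encoding (every non-$T$ circuit is Clifford, and $T^\dagger$ is in the set), the reduced circuit is again a circuit from the same finite set, one level lower. It remains to show that the bad rectangles form a local stochastic fault set with strength $A\eta_\ell^2$. For a fixed set $R$ of rectangle positions, each bad rectangle contains two bad children within its assigned support. Since the truncation makes assigned supports disjoint across rectangles, a union bound over the at most $\binom{N_\star}{2}$ pairs per rectangle gives at most $A^{|R|}$ choices of $2|R|$ distinct child positions. The local stochastic hypothesis at level $\ell$ then yields the bound $(A\eta_\ell^2)^{|R|}$. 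Induction from $\eta_0=p$ gives the closed form $\eta_\ell\leq A^{-1}(Ap)^{2^\ell}$ by direct substitution.

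The counting claims follow by induction. Each data qubit becomes a $22$-qubit block, giving $22^\ell$ data qubits. The distance is multiplicative under concatenation with the same code, by Proposition~\ref{prop:selective-lift} applied with all $\lambda_i\geq3^{\ell-1}$, which gives at least $3^\ell$. The only non-Clifford locations in a level-$\ell$ logical $T$ are the fifteen level-$(\ell-1)$ $T^\dagger$ gadgets, each of which contains $15^{\ell-1}$ physical $T$-type gates after the inverse substitution. A bad level-$\ell$ rectangle contains two disjoint bad level-$(\ell-1)$ rectangles, so by induction it contains at least $2^\ell$ faults.

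The main obstacle is making the good-rectangle correctness statement hold uniformly for every circuit in the set, including failure-padding branches and conditional operations. In particular, it must hold at higher levels, where the "faults" are bad child rectangles acting arbitrarily on encoded blocks rather than physical locations. The approach is to argue that the level-one proofs only ever used that a fault acts arbitrarily on its location's qubits, which transfers verbatim after level reduction. We also check that the designated-output convention prevents a failed child from coupling to registers outside its support.
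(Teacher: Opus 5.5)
Your proposal follows the paper's proof essentially step for step: good rectangles are correct by the single-fault lemmas and Theorem~\ref{thm:main22} with the inverse rule, the local completion rule makes a failed child one arbitrary faulty operation, the standard backward truncation and level reduction uses a union bound over at most $A^{|R|}$ choices of disjoint bad child pairs to give $(A\eta_\ell^2)^{|R|}$, and the same inductions give the $22^\ell$, $3^\ell$, $15^\ell$ and $2^\ell$ counts. One small slip: truncation removes the \emph{trailing} correction of the earlier extended rectangle, namely the correction it shares with a later bad rectangle, not its leading correction; a good truncated rectangle remains correct because an ideal correction can be reinserted immediately before the ideal decoder without changing its decoded output, which is how the paper argues it.
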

\begin{proof}
Lemma~\ref{c22:cat} supplies both correction and arbitrary-input range: a fault-free correction maps arbitrary inputs into the code, and one faulty correction maps them into its one-error space. Lemma~\ref{rep:transfer} supplies the same properties for Knill correction and protected Clifford gates, with separate decoding on each input block. Verified known logical stabilizer states give preparation rectangles; Lemma~\ref{pm:measurement}, followed by discarding the block when appropriate, gives measurement rectangles. Idles are enclosed by correction. These finite circuits supply the standard preparation, measurement, gate, and error-correction properties for distance-three extended rectangles. Theorem~\ref{thm:main22} supplies the non-Clifford gate property, including one incoming error or one internal fault.

The local completion acts only on an invocation's inputs and reserved ancillary registers, followed by tracing out unused registers. Summing its recorded branches gives a trace-preserving quantum operation with the prescribed quantum and classical outputs, also for inputs entangled with a reference. Faults in its local padding cannot enlarge these supports, and neither an early output bit nor a failure flag controls another invocation. Thus a failed child is one arbitrary faulty operation on its participating blocks after level reduction. Its output need not be encoded; the arbitrary-input correction and range properties apply. The completion branch is never reached with at most one fault by Lemmas~\ref{c22:cat}--\ref{rep:transfer}, so their single-fault properties are unchanged.

Apply the standard backwards truncation and level-reduction procedure for overlapping extended rectangles~\cite{aliferis2006quantum}. A correction shared by neighboring rectangles is assigned in that procedure rather than counted as two independent sources of noise.
Choose ideal decoders with the same syndrome-to-Pauli recovery convention as the corrections above.
In checking decoded correctness, restore removed trailing corrections level by level by ideal corrections; inserting the corresponding ideal correction immediately before an ideal decoder does not change its decoded output.
The correction and range properties above therefore apply to the completed rectangle, and the restored corrections add no bad child locations.
 The stated rectangle properties imply that an incorrect truncated parent rectangle contains at least two bad child rectangles. Exhaustion of two candidate preparations also requires two bad children; failure of the complete-record stopping rule is likewise counted as a located failure  (Definition~\ref{def:located-failure}) and cannot occur with one bad child. 
These are the parent's own stopping tests: a child flag triggers neither, and an unsuccessful child is already included as one local faulty operation. Hence an outermost abort also requires a bad parent rectangle.
Fix $r$ parent-rectangle positions independently of the fault pattern.
On the event that all are bad after level reduction, the procedure supplies disjoint assigned child supports.
 Choose a pair of bad children in each parent. The joint lower-level local-stochastic bound is at most $\eta_\ell^{2r}$ for each such choice, and there are at most $A^r$ choices. Thus the joint upper bound is $(A\eta_\ell^2)^r$, establishing the recurrence without assuming independent failures of overlapping rectangles. Counting all pairs also covers fault sets of size three or more; a count of malignant pairs alone would need a separate treatment of larger minimal malignant sets.

Induction gives Eq.~\eqref{eq:recursive-error}. Concatenation replaces each data qubit by a distance-three 22-qubit block, giving $22^\ell$ data qubits and distance at least $3^\ell$. Each non-Clifford gate has exactly fifteen scheduled child $T$-type gates, and all remaining circuits are Clifford, giving $15^\ell$ scheduled physical $T$-type locations. Applying the two-child argument at each level gives the physical-fault lower bound.
\end{proof}

For a computation with $G$ logical locations, using local completion and counting only outermost stopping failures as aborts, the failure-or-abort probability is at most
\begin{equation}
 P_{\rm fail}\leq G\eta_\ell.
\label{eq:computation-error-budget}
\end{equation}
For $0<\epsilon<Gp_{\rm th}$ and $0<p<p_{\rm th}$, it suffices to choose
\begin{equation}
 \ell=\max\!\left\{0,\left\lceil\log_2\!\left(
 \frac{\ln(Gp_{\rm th}/\epsilon)}{\ln(p_{\rm th}/p)}\right)\right\rceil\right\}.
\label{eq:target-level}
\end{equation}
If $Q$ is the maximum number of logical data qubits and $G_T$ is the number of logical $T$-type gates, the corresponding counts are $Q22^\ell$ data qubits and $G_T15^\ell$ scheduled physical $T$-type gates, excluding ancillas and Clifford volume. If each rectangle has at most $V$ child locations, the complete circuit volume is at most a constant times $GV^\ell$. The $33$-qubit peak is a level-one serial bound; it does not imply total space $33^\ell$. No numerical threshold, malignant-pair coefficient, or cycle time is determined without the complete elementary schedule.

The signed generators, syndrome table, witnesses, correction rule and branch identities above specify the data used in the analytical code and circuit proofs.

\subsubsection{A protected 35-qubit example}

Here we give the related selective-concatenation constructions mentioned in Sec.~\ref{sec:selective}, using Reed--Muller blocks on the support of $B$.
Remark~{\ref{pm:ft}} encodes outer Steane qubits $1$ and $4$ to obtain a protected 35-qubit gate, and Corollary~{\ref{pm:support}} extends the circuit to a $w$-qubit rotation support under its intermediate-distance hypothesis.

For the 35-qubit example, Fig.~\mbox{\ref{fig:support35}} shows the two transversal CNOT layers and intervening inner correction that distinguish its rotation circuit from the single transversal layer in Fig.~\mbox{\ref{fig:gate22}}.

\begin{remark}[Encoding both qubits of the rotation support]\label{pm:distances}\label{pm:ft}
Encode outer qubits $1,4$ in Reed--Muller blocks, using the fixed basis $V_1=V_{\rm RM}R_X(\pi/2)$ at qubit $1$ and the ordinary basis at qubit $4$. Thus $(X,Y,Z)_1$ maps to $(\bar X,\bar Z,-\bar Y)_1$, and $\widehat B=\bar Z_1\bar Z_4$. This gives  $\qcode{35}{1}{3}$ and $\qcode{35}{2}{3}$ codes before and during the gate, respectively. Each weight-two outer intermediate logical listed above meets $\{1,4\}$; Eq.~\eqref{rep:distance-formula} gives the lower bound, and $Z_2Z_3Z_6$ gives equality, also for the $G$-, $M$-, and $h$-fixed codes by the witness argument above. The fixed encoding signs are part of the code definition.

Apply a transversal CNOT layer $4\to1$, inner correction on both blocks, fifteen $T^\dagger$ gates on block $1$, inner correction, the inverse CNOT layer, and inner correction. A faulty CNOT leaves at most one error in each block, removed by the next clean inner corrections. A faulty $T$ stays on one qubit; an error from a faulty correction remains single-qubit through the next transversal rotation or spreads to at most one qubit per block through the next CNOT, and is then corrected. Unencoded qubits are outside the rotation support. Complete the gate with leading and trailing  error correction using the full stabilizer and Theorem~\ref{pm:adaptive}, using Lemma~\ref{pm:measurement} on the retained code. This proves the one-fault claim with fifteen $T$-type gates and thirty rotation-circuit CNOTs. These counts exclude extraction and ancillary resources.

\begin{figure}[t]
\centering
\resizebox{0.45\columnwidth}{!}{%
\begin{quantikz}[row sep=0.24cm,column sep=0.18cm]
\lstick{$\mathrm{RM}_1$} & \targ{} & \gate{\mathrm{EC}} & \gate{(T^\dagger)^{\otimes15}} & \gate{\mathrm{EC}} & \targ{} & \gate{\mathrm{EC}} & \qw \\
\lstick{$\mathrm{RM}_4$} & \ctrl{-1} & \gate{\mathrm{EC}} & \qw & \gate{\mathrm{EC}} & \ctrl{-1} & \gate{\mathrm{EC}} & \qw
\end{quantikz}}
\caption{Protected non-Clifford rotation for the 35-qubit example. The wires $\mathrm{RM}_1$ and $\mathrm{RM}_4$ represent the 15-qubit Reed--Muller blocks encoding outer qubits $1$ and $4$, respectively. Each displayed CNOT is a transversal layer of fifteen CNOTs, and each $\mathrm{EC}$ is protected inner-code error correction. The five unencoded outer qubits are untouched by this rotation. Leading and trailing  error correction using the full stabilizer and  completion by protected Pauli measurements and corrections are specified in Remark~\ref{pm:ft}. 
}
\label{fig:support35}
\end{figure}

\end{remark}

\begin{corollary}[Encoding the non-Clifford support]\label{pm:support}
Consider Pauli $B$, which is the factor in Eq.~\eqref{eq:factor} used for the non-Clifford rotation $R_B(\pi/4)$, and $w=\wt(B)$ for the $n$-qubit outer code.
Encode precisely the $w$ qubits of $B$ in Reed--Muller blocks, choosing fixed bases mapping its factors to logical $Z$.
If the lifted intermediate code has distance at least three, the measurement construction has a one-fault implementation on $n+14w$ data qubits, with fifteen physical $T$-type gates and $30(w-1)$ physical CNOTs in its rotation circuit. The distance hypothesis holds automatically if the original intermediate distance is at least three.
\end{corollary}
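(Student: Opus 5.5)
We generalize Remark~\ref{pm:ft}: the $w=2$ case is exactly that construction. The proof has three parts: the encoding and qubit count, the rotation circuit and its gate counts, and the single-fault analysis.

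\emph{Encoding and counts.} Write $\supp(B)=\{q_1,\ldots,q_w\}$ and $B=\bigotimes_j B_{q_j}$. At each $q_j$, fix the Reed--Muller basis $V_{\rm RM}F_j^\dagger$, where $F_j$ is the single-qubit Clifford of Fig.~\ref{fig:parity-compilation} with $F_jB_{q_j}F_j^\dagger=Z$. Then the lift is
\[
\widehat B=\pm\bar Z_{q_1}\cdots\bar Z_{q_w}.
\]
Any sign is absorbed into the rotation direction, that is, into the choice between $T$ and $T^\dagger$. Replacing $w$ outer qubits by $15$-qubit blocks gives $n+14w$ data qubits. The lifts of $A$, $L$, $h$, $G$, $M$ and of $\cS_0$ come from Proposition~\ref{prop:selective-lift}, so the identities of Theorem~\ref{pm:adaptive} hold on the encoded code exactly as on the outer code.

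\emph{Rotation circuit.} Apply transversal CNOT layers from blocks $q_1,\ldots,q_{w-1}$ into block $q_w$. Each layer is logically $\mathrm{CNOT}_{q_j\to q_w}$, since the Reed--Muller code is CSS and transversal CNOT acts logically. Together the layers map $\widehat B$ to $\bar Z_{q_w}$. Insert inner error correction on the two blocks involved after each layer. Then apply $T^{\dagger\otimes15}$ on block $q_w$, which implements the logical $T$ by Lemma~\ref{rep:rm}, correct, undo the CNOT layers in reverse order with correction after each, and finish with full correction. This uses fifteen $T$-type gates and $2(w-1)\times 15=30(w-1)$ CNOTs, which gives the stated counts. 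The completion is by the protected measurements of Lemma~\ref{pm:measurement}, run on the lifted intermediate code. That code has distance at least three by hypothesis, and Lemma~\ref{pm:measurement} requires distance at least three. If the original intermediate distance is at least three, Eq.~\eqref{eq:weighted-weight} gives lifted distance at least three, because every $\lambda_i\geq1$. This proves the last sentence of the statement.

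\emph{Single-fault analysis.} A faulty transversal CNOT leaves at most one error per block. The next clean inner correction on each Reed--Muller block removes it, since each block has distance three and Lemma~\ref{c22:cat} applies. A faulty $T^\dagger$ leaves a single-qubit error. A residual from a faulty inner correction spreads to at most one qubit per block through the next transversal layer, and the subsequent clean corrections remove it. The unencoded qubits are untouched by the rotation, and an incoming error on them is removed by the leading full correction.

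The main obstacle is \emph{inner correction in the middle of the CNOT network}. It must not disturb the outer logical content. It does not, because inner checks commute with every lifted operator and transversal CNOT maps inner stabilizers of Reed--Muller blocks to inner stabilizers; this needs checking for the $X$-checks from $V$ and the $Z$-checks from $W^\perp$, which coincide across blocks. Once that holds, the argument of Remark~\ref{pm:ft} goes through verbatim, with Theorem~\ref{pm:adaptive} and Lemma~\ref{pm:measurement} giving the final logical action up to one residual error.
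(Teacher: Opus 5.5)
Your argument is the paper's proof, written out in more detail. Both use $w-1$ transversal CNOT layers to collect $\widehat B$ onto one block, with correction by inner checks only after each transversal layer. The paper imposes the same restriction, because the intermediate parity maps need not preserve the lifted outer checks. Both then apply one transversal $T^\dagger$ layer, undo the CNOT layers, complete the gate with the protected measurements of Lemma~\ref{pm:measurement} on the lifted intermediate code, and use $\lambda_i\geq1$ in the distance formula for the last sentence.

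Two small points need fixing.

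First, under the paper's lift convention $\widehat P V_{\rm enc}=V_{\rm enc}P$, the block basis should be $V_{\rm RM}F_j$, not $V_{\rm RM}F_j^\dagger$; compare $V_{\rm RM}R_X(\pi/2)$ in Remark~\ref{pm:ft}. With your choice, a $Y$ factor with $F_j=HS^\dagger$ lifts to $\bar X$, because $F_j^\dagger YF_j=X$. The transversal $T^\dagger$ layer would then no longer implement $R_{\widehat B}(\pi/4)$.

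Second, the correction that closes the rotation circuit must use only the inner checks and the lifted $\cS_0$. Error correction with the full lifted stabilizer belongs only before the parity circuit and after the adaptive completion. If your ``finish with full correction'' means the trailing correction, this is what the paper does. If it means correction immediately after the rotation, it would measure $\widehat h$. That projects $\cos(\pi/8)|\psi\rangle-i\sin(\pi/8)\widehat B|\psi\rangle$ onto one of its two terms and removes the non-Clifford action.
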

\begin{proof}
Begin with  error correction using the full stabilizer, use inner-code correction after every transversal CNOT layer and after the transversal non-Clifford layer, and finish the  completion by adaptive Pauli measurements and corrections with  error correction using the full stabilizer.
Only inner checks are measured inside the parity circuit, because the intermediate parity maps need not preserve the lifted outer checks.

A parity map uses $w-1$ transversal CNOT layers and its inverse uses another $w-1$, each followed by inner correction. A faulty transversal two-block location leaves at most one error in each touched block, which the following clean inner corrections remove. A sole fault in correction leaves at most one residual in its block; each later transversal layer sends it to at most one qubit per touched block, again followed by clean correction. The transversal non-Clifford layer preserves single-qubit support. Thus every single fault is corrected before an unprotected spread can occur. All remaining operations are protected measurements on the lifted intermediate code. The distance formula of Proposition~\ref{rep:weights} shows that a lift cannot decrease distance when every nonidentity inner logical has weight at least one.
\end{proof}

For the  Golay factorization, $w=4$ gives $79$ data qubits, $90$ CNOTs in the rotation circuit, and fifteen $T$-type gates. 
This support-encoding example is distinct from the protected 23-data-qubit Golay circuit of Sec.~\ref{sec:golay}, which we will discuss in detail in \SM{}~\mbox{\ref{app:golay-completion}} below.

\section{Golay error detection and protected implementation}\label{app:golay}

Here we give the filtering and circuit proofs supporting Sec.~\ref{sec:golay}, using the Golay code data already specified in \SM{}~\ref{app:golay-details}.
We first prove ideal filtering, exact restart recovery and the calibration limitation.
The direct circuit is then specified completely, including  completion by Pauli measurements and corrections, resource counts and rejected-branch recovery, before its single-fault theorem and probability bound.
\par

\subsection{Exact filtering and restart recovery}\label{app:golay-extension}

All measurements in this subsection are ideal. Let $\cC$ be a pure stabilizer code of distance $d$, let $B$ have nonzero syndrome and support $\Omega$ of size $w<d$, and choose $h\in\cS$ anticommuting with $B$. The retained stabilizer $\cS_0$ consists of the elements of $\cS$ commuting with $B$.
Set $\cD=\cC\oplus B\cC$, the code stabilized by $\cS_0$.
 Write

\begin{equation}
\begin{gathered}
    U_\theta=R_B(\theta),\qquad
    G_\theta=U_\theta hU_\theta^\dagger
           =h\cos\theta-iBh\sin\theta,\\
    P_0=P_{\cD},\qquad
    Q_\theta=P_0\frac{I+G_\theta}{2}
       =U_\theta\PC U_\theta^\dagger.
\label{ge:full-projector}
\end{gathered}
\end{equation}

\phantomsection\label{app:proof-transported}
\begin{proof}[Proof of Lemma~\ref{lem:transported}]
    Here $P_0$ imposes all retained checks. Since $\{B,h\}=0$, $G_\theta$ is Hermitian, $G_\theta^2=I$, and $G_\theta U_\theta\PC=U_\theta\PC$. At $\theta=\pi/4$ this gives Lemma~\ref{lem:transported} and
    \begin{equation}
        G_B=(h-iBh)/\sqrt2.
    \label{gc:transported}
    \end{equation}
    Hermiticity and $G_\theta^2=I$ follow from $h=h^\dagger$ and $h^2=I$ by unitary conjugation, while
    \[
    G_\theta U_\theta\PC=U_\theta h\PC=U_\theta\PC.
    \]
    Every $g\in\cS_0$ commutes with $B$ and hence with $U_\theta$, so $gU_\theta\PC=U_\theta g\PC=U_\theta\PC$.
    At $\theta=\pi/4$, the identity $G_B=R_B(\pi/2)h$ also shows that $G_B$ is Clifford, and its two distinct nonzero Pauli terms in Eq.~\eqref{gc:transported} show that it is not a Pauli.
\end{proof}

\phantomsection\label{app:proof-local-filter}
\begin{proof}[Proof of Restricted rank and centralizer (Theorem~\ref{thm:local-filter}).]
    If a Pauli $F$ on $\Omega$ commutes with $\cS_0$, either $F$ or $BF$ normalizes $\cS$, according to its commutation sign with $h$.
    Indeed, $\cS=\langle\cS_0,h\rangle$: use $F$ if $[F,h]=0$, and use $BF$ if $\{F,h\}=0$, because both $B$ and $F$ then anticommute with $h$. 
    That normalizer has weight at most $w<d$; purity forces it to be the identity. The restricted centralizer is therefore $\{I,B\}$ modulo phase. 
    Its binary dimension is one, so the restricted check span has rank $2w-1$.
    Explicitly, if $W\subseteq\mathbb F_2^{2w}$ is the span of the restricted checks, then $W^\perp=\operatorname{span}\{B\}$ and nondegeneracy of the symplectic form gives $\dim W=2w-\dim W^\perp=2w-1$. 
    The retained checks reject every other Pauli on $\Omega$, and $\{B,G_\theta\}=0$ rejects $B$. 
\end{proof}
    
For Golay codes the same argument gives
\begin{equation}
\begin{aligned}
&\supp(F)\subseteq\Omega_B\cup J,\quad |J|\leq2,\quad
F\in\cN(\cS_0)\\
&\hspace{25mm}\Longrightarrow\quad F\in\{I,B\}\quad\text{up to phase},
\end{aligned}
\label{gc:centralizer}
\end{equation}
because $|\Omega_B|+|J|\leq6<7$. Let $\mathcal F_{\Omega,r}$ denote the Hermitian Pauli basis operators with arbitrary action on $\Omega$ and nonidentity action on at most $r$ outside {qubits}. These statements concern errors within the qubit Hilbert space.

\begin{theorem}[Exact accepted channel]\label{ge:channel-filter}
Suppose $w+r<d$. Let a channel after $U_\theta$ have Kraus operators in $\operatorname{span}\mathcal F_{\Omega,r}$, written as $K_a=\sum_{F\in\mathcal F_{\Omega,r}}k_{a,F}F$ in the Hermitian Pauli basis. Acceptance of all the checks in Eq.~\eqref{ge:full-projector} gives
\begin{equation}
Q_\theta K_aU_\theta\PC=k_{a,I}U_\theta\PC.
\label{ge:kraus-filter}
\end{equation}
Its probability is $q=\sum_a|k_{a,I}|^2$, independently of the encoded input. When $q>0$, the accepted output is exactly the ideal output, also for an input entangled with an external reference. For $r=0$, the same statement holds if the full retained syndrome is replaced by any retained checks whose restricted centralizer on $\Omega$ is $\{I,B\}$.
\end{theorem}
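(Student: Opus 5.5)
The plan is to reduce Eq.~\eqref{ge:kraus-filter} to a term-by-term statement about single Pauli basis operators. By linearity it suffices to show that, for every $F\in\mathcal F_{\Omega,r}$,
\[
Q_\theta F U_\theta\PC=\delta_{F,I}\,U_\theta\PC .
\]
The Kraus identity then follows by summing with the coefficients $k_{a,F}$. The case $F=I$ is immediate: $Q_\theta=U_\theta\PC U_\theta^\dagger$ by Eq.~\eqref{ge:full-projector}, so $Q_\theta U_\theta\PC=U_\theta\PC$.

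For $F\neq I$ I would split into three cases.

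\emph{Case 1: $F$ anticommutes with some $g\in\cS_0$.} Since $g$ commutes with $U_\theta$, the vector $FU_\theta\PC$ lies in the $-1$ eigenspace of $g$. It is therefore annihilated by $P_0$, and hence by $Q_\theta$, because $Q_\theta=P_0Q_\theta$, with $P_0$ commuting with $G_\theta$.

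\emph{Case 2: $F$ commutes with all of $\cS_0$.} Then $F\in\cN(\cS_0)$ with $\supp F\subseteq\Omega\cup J$, $|J|\le r$. I would repeat the argument of Theorem~\ref{thm:local-filter}. Either $F$ or $BF$ lies in $\cN(\cS)$, according to the sign of $F$ with $h$, and its weight is at most $w+r<d$. Purity then forces it to be the identity up to phase, so $F\in\{I,B\}$; this is the generalization of Eq.~\eqref{gc:centralizer}.

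\emph{Case 3: $F=B$.} Here $BU_\theta\PC=U_\theta B\PC$, and $B\PC$ lies in the $h=-1$ sector. Conjugating back, $U_\theta^\dagger Q_\theta U_\theta=\PC$ kills $B\PC$ because $B\cC\perp\cC$, as $\syn(B)\neq0$. So $Q_\theta BU_\theta\PC=0$.

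This case analysis is the main step. The only delicate point is to check that purity is applied with the correct weight bound $w+r$, and that $BF$ still has support inside $\Omega\cup J$; it does, since $\supp B=\Omega$.

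With Eq.~\eqref{ge:kraus-filter} in hand, the accepted unnormalized state is $\sum_a k_{a,I}^{\phantom{*}}k_{a,I}^{*}\,U_\theta\rho U_\theta^\dagger$. Its trace gives $q=\sum_a|k_{a,I}|^2$ for every normalized encoded $\rho$, so $q$ is input-independent, and normalizing gives the ideal output. For an entangled reference, I would tensor every operator with $I_{\rm ref}$. The identity is an operator equation on the system factor, so it persists, and the same computation applies to $\rho_{\rm sys,ref}$.

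For the $r=0$ variant, replace $P_0$ by the projector onto the chosen retained checks. In Case 2 the hypothesis that the restricted centralizer on $\Omega$ is $\{I,B\}$ replaces the purity argument directly. Case 3 is unchanged, since it uses only $G_\theta$ together with checks commuting with $U_\theta$. One point to verify is that the modified accepted projector still contains $U_\theta\PC$ in its range, so that the $F=I$ term is preserved. It does, because every retained check and $G_\theta$ fix $U_\theta\PC$ by Lemma~\ref{lem:transported}.
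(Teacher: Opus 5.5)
Your proof is correct, but for the main claim it takes a different route from the paper. The paper does not split into cases. It writes $Q_\theta FU_\theta\PC=U_\theta\PC\,(U_\theta^\dagger FU_\theta)\,\PC$, observes that $U_\theta^\dagger FU_\theta$ is traceless and supported on at most $w+r<d$ qubits, and applies purity in the form $\PC E\PC=0$ for every nonidentity Pauli $E$ of weight below $d$. That argument is shorter. Once the accepted projector is identified with $U_\theta\PC U_\theta^\dagger$, it uses nothing about $U_\theta$ beyond its support. However, it does not say which check removes which error component.

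Your case analysis exhibits that division of labor explicitly. The retained checks annihilate every component outside $\{I,B\}$; this is your extension of Theorem~\ref{thm:local-filter} to $\Omega\cup J$, i.e.\ Eq.~\eqref{gc:centralizer}. The transported check $G_\theta$ then annihilates $B$. The price is that you use that $\cS_0$ commutes with $U_\theta$ and $G_\theta$. This is exactly the argument the paper reserves for the reduced-check $r=0$ variant, so you prove both parts of the theorem with a single mechanism.

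Two bookkeeping points. First, in Case~1 the identity you actually need is $Q_\theta=\tfrac12(I+G_\theta)P_0$. It follows from the commutation you cite, whereas the stated $Q_\theta=P_0Q_\theta$ alone does not suffice. Second, in the reduced-check variant $U_\theta^\dagger Q_\theta U_\theta$ is no longer $\PC$. Case~3 should therefore be run via $\{B,G_\theta\}=0$, or via the factor $\tfrac12(I+h)$ acting on $B\PC$, which lies in the $h=-1$ sector. You essentially indicate this already.
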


\begin{proof}
For a nonidentity $F\in\mathcal F_{\Omega,r}$, the operator $U_\theta^\dagger F U_\theta$ is traceless and supported on at most $w+r<d$ qubits. Purity implies $\PC E\PC=0$ for every nonidentity Pauli $E$ on such a support. Expanding $U_\theta^\dagger F U_\theta$ in Paulis therefore gives $\PC U_\theta^\dagger F U_\theta\PC=0$. For $F=I$ the result is $\PC$. Equation~\eqref{ge:kraus-filter} follows by linearity. Summing the accepted Kraus maps gives $qU_\theta\rho U_\theta^\dagger$ for every encoded density operator $\rho$, including a reference system.

For the smaller set of retained checks and $r=0$, every $F\notin\{I,B\}$ is killed by a retained-check projector. The remaining $B$ is killed by $(I+G_\theta)/2$, because it anticommutes with $G_\theta$. This proves the same Kraus identity.
\end{proof}

For a channel confined to the four Golay qubits in the rotation support, write its Kraus operators as $K_{a,\Omega_B}\otimes I_{\Omega_B^c}$. Then
\begin{equation}
q=\frac{1}{4^4}\sum_a\bigl|\operatorname{Tr}K_{a,\Omega_B}\bigr|^2.
\label{ge:trace-acceptance}
\end{equation}
In particular, $K=R_B(\varepsilon)$ gives $q=\cos^2(\varepsilon/2)$ and the ideal accepted rotation.
Any sequence of operations confined to $\Omega_B$ is included by composition; acceptance can be zero.
The full $21$ retained checks and $G_B$ give the same exact accepted channel for errors on $\Omega_B$ and at most two outside qubits.

\phantomsection\label{app:proof-angle-filter}
\begin{proof}[Proof of Lemma~\ref{lem:angle-filter}]
Set $U_B=R_B(\pi/4)$ and $\Pi_+=(I+G_B)/2=U_B(I+h)U_B^\dagger/2$.
Since $h\PC=\PC$ and $hB\PC=-B\PC$, we have
\begin{align*}
\Pi_+R_B(\pi/4+\varepsilon)\PC
&=U_B\frac{I+h}{2}R_B(\varepsilon)\PC\\
&=U_B\frac{I+h}{2}\left(\cos(\varepsilon/2)I-i\sin(\varepsilon/2)B\right)\PC\\
&=\cos(\varepsilon/2)U_B\PC.
\end{align*}
Applying this identity to $|\psi\rangle\in\cC$ proves Eq.~\eqref{eq:angle-filter}, and its squared norm is the acceptance probability $\cos^2(\varepsilon/2)$ for a normalized input.
\end{proof}

\begin{theorem}[Exact restart recovery]\label{ge:restart}\label{lem:safe-interruption}
Under the assumptions preceding Theorem~\ref{ge:channel-filter}, suppose $w+2t<d$. Let the rotation followed by the error have Kraus operators $K_aU_\theta$, with $K_a\in\operatorname{span}\mathcal F_{\Omega,t}$. A full syndrome measurement of $\cC$ followed by a decoder restricted to $\mathcal F_{\Omega,t}$ recovers the original encoded input exactly. The conclusion remains valid after any outcome of ideal retained-check and $G_\phi$ measurements, for arbitrary measurement angle $\phi$. It concerns recovery of the input for a restart, rather than completion of the desired non-Clifford gate.
\end{theorem}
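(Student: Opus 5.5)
The plan is to reduce exact restart recovery to the Knill--Laflamme condition for the code $\cC$ against an error set conjugated by the rotation. First I rewrite each branch. Every Kraus operator has the form $K_aU_\theta=U_\theta(U_\theta^\dagger K_aU_\theta)$. Since $U_\theta=\cos(\theta/2)I-i\sin(\theta/2)B$ is supported on $\Omega$, conjugating any $F\in\mathcal F_{\Omega,t}$ gives an operator in $\operatorname{span}\mathcal F_{\Omega,t}$. Moreover, $U_\theta$ itself lies in $\operatorname{span}\{I,B\}\subseteq\operatorname{span}\mathcal F_{\Omega,t}$. Hence $K_aU_\theta\in\operatorname{span}\mathcal F_{\Omega,t}\cdot\operatorname{span}\mathcal F_{\Omega,t}$, which is contained in $\operatorname{span}\mathcal F_{\Omega,t}$ again, because products of Paulis with arbitrary action on $\Omega$ and at most $t$ outside qubits each have at most $2t$ outside qubits. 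I will therefore work with the correctable set $\mathcal E=\operatorname{span}\mathcal F_{\Omega,t}$ acting directly on $\PC$.

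The key Knill--Laflamme step is the following. For $F,F'\in\mathcal F_{\Omega,t}$, the product $F^\dagger F'$ is a Pauli supported on $\Omega$ together with at most $2t$ outside qubits, so its weight is at most $w+2t<d$. Purity then gives $\PC E\PC=0$ for every nonidentity Pauli $E$ of weight below $d$, exactly as in the proof of Theorem~\ref{ge:channel-filter}. Thus $\PC F^\dagger F'\PC=\delta_{F,F'}\PC$ up to phase, so $\mathcal F_{\Omega,t}$ is a nondegenerate correctable set. Every Pauli in it has a distinct syndrome; if two shared a syndrome, their product would be a normalizer element of weight below $d$, hence trivial by purity. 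Consequently, the full syndrome measurement followed by the restricted decoder, which applies the unique $F\in\mathcal F_{\Omega,t}$ with the observed syndrome, maps every element of $\operatorname{span}\mathcal F_{\Omega,t}$ restricted to $\PC$ to a scalar multiple of $\PC$. The scalars are input independent, so summing over Kraus operators and syndromes returns $\rho$ exactly, including for inputs entangled with a reference by linearity.

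For the second claim, I would show that the intermediate ideal measurements keep the branches inside $\operatorname{span}\mathcal F_{\Omega,t}$. The retained-check projectors are polynomials in stabilizers. Their action on $K_aU_\theta\PC$ can be handled by commuting each check past the Pauli expansion of $K_aU_\theta$: a check $g\in\cS_0$ acts on $F\PC$ as the sign $\pm F\PC$, so a projector merely reweights the terms. For $(I\pm G_\phi)/2$, note that $G_\phi=h\cos\phi-iBh\sin\phi$ with $h\PC=\PC$. Pushing $h$ through $F$ gives a sign, so $G_\phi F\PC=\pm(\cos\phi\, F-i\sin\phi\, BF)\PC$ up to those signs. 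Since $BF\in\mathcal F_{\Omega,t}$ up to phase, the result again lies in $\operatorname{span}\mathcal F_{\Omega,t}\PC$. Each post-measurement branch operator is therefore still in the correctable span, and Theorem~\ref{thm:record-kl}, applied with $U$ the identity, yields exact recovery for every record.

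The main obstacle is this last step: bookkeeping the signs when moving $h$ and the retained checks through mixed Pauli sums without leaving the span, particularly when measurements are interleaved with further errors. I would first verify the single-measurement case explicitly and then argue by induction on the number of measurements, noting that the span is closed under each step.
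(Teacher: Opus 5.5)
Your proposal is correct and follows the paper's proof. Both write $K_aU_\theta\PC$ as a combination of terms $F\PC$ with $F\in\mathcal F_{\Omega,t}$, and both show that distinct hypotheses have distinct full syndromes, because their product has weight at most $w+2t<d$ and purity keeps it out of the normalizer. Both also note that the retained-check and $(I\pm G_\phi)/2$ projectors preserve $\operatorname{span}\{F\PC:F\in\mathcal F_{\Omega,t}\}$, since $hF\PC=\pm F\PC$ and $BF\in\mathcal F_{\Omega,t}$ up to phase.

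One justification needs fixing. $\operatorname{span}\mathcal F_{\Omega,t}\cdot\operatorname{span}\mathcal F_{\Omega,t}$ is \emph{not} contained in $\operatorname{span}\mathcal F_{\Omega,t}$ for $t\geq1$: a product can act on $2t$ outside qubits. The first step should instead use $K_aU_\theta=\cos(\theta/2)K_a-i\sin(\theta/2)K_aB$ together with closure of $\mathcal F_{\Omega,t}$ under multiplication by $B$. For the same reason, your worry about errors interleaved with the measurements lies outside the theorem's hypotheses, and the span would not be closed under such errors.
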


\begin{proof}
The set $\mathcal F_{\Omega,t}$ is closed, up to phase, under multiplication by $B$, so $K_aU_\theta\PC$ is a linear combination of $F\PC$ with $F\in\mathcal F_{\Omega,t}$. Two distinct such Paulis $F,F'$ have a nonidentity product supported on at most $w+2t<d$ qubits. By purity this product cannot be a normalizer. Their full $\cC$ syndromes are therefore distinct. The full syndrome measurement selects a unique $F\PC$ in each nonzero branch, and applying $F^\dagger$ recovers the input. The coefficients can depend on the noise and measured outcomes but not on the logical input. This is the usual error-and-erasure application of the Knill--Laflamme condition~\cite{knill1997theory}.

An ideal retained check $g$ acts on $F\PC$ by its commutation sign with $F$, since $g\PC=\PC$. Also, $hF\PC=\pm F\PC$ and $BhF\PC=\pm BF\PC$. Consequently every projector $(I\pm G_\phi)/2$ preserves $\operatorname{span}\{F\PC:F\in\mathcal F_{\Omega,t}\}$. Conditioning on any sequence of those ideal outcomes cannot leave this correctable space. The same syndrome measurement and recovery therefore apply after monitor rejection.
\end{proof}

For Golay, $w=4$, $d=7$, and $t=1$. The decoder uses
\begin{equation}
\begin{aligned}
\mathcal F_{\Omega_B,1}
=\{P_{\Omega_B}P_q:\;&P_{\Omega_B}\in\cP(\Omega_B),\ q\notin\Omega_B,\\
&P_q\in\{I,X_q,Y_q,Z_q\}\}.
\end{aligned}
\label{gc:recovery-family}
\end{equation}
After removing duplicates there are
\begin{equation}
4^4[1+3(23-4)]=14\,848
\label{ge:restart-count}
\end{equation}
hypotheses, each with a distinct original-code syndrome. Thus the same recovery covers arbitrary errors on all four qubits in the rotation support and one unknown outside qubit, including coherent superpositions. A minimum-weight decoder uninformed of $\Omega_B$ need not recover this family. 
Theorem~\ref{ge:restart} assumes ideal recovery; faults in its circuit implementation are included in \SM{}~\ref{app:gc-recovery} and the proof of Theorem~\ref{thm:golay-complete}.

\begin{proposition}[A monitor does not remove its own angle error]\label{ge:calibration}
If the actual rotation angle is $\theta+\varepsilon$ and the ideal transported-check measurement is calibrated at $\theta+\delta$, then
\begin{equation}
Q_{\theta+\delta}R_B(\theta+\varepsilon)\PC
=\cos\!\left(\frac{\varepsilon-\delta}{2}\right)
 R_B(\theta+\delta)\PC.
\label{ge:miscalibration}
\end{equation}
The accepted rotation, when its acceptance probability is nonzero, is exactly $R_B(\theta+\delta)$, where $\theta+\delta$ is the calibration angle defining the measured transported check $G_{\theta+\delta}$.
Its residual angle error relative to the desired rotation $R_B(\theta)$ is therefore $\delta$, independently of $\varepsilon$.
\end{proposition}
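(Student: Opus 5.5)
The plan is to mirror the proof of Lemma~\ref{lem:angle-filter}, replacing the nominal rotation $U_B=U_\theta$ by the calibration rotation $U_{\theta+\delta}$. By Eq.~\eqref{ge:full-projector},
\[
Q_{\theta+\delta}=U_{\theta+\delta}\PC U_{\theta+\delta}^\dagger .
\]
Rotations about the same Pauli commute and add their angles, so $U_{\theta+\delta}^\dagger R_B(\theta+\varepsilon)=R_B(\varepsilon-\delta)$. The left side of Eq.~\eqref{ge:miscalibration} therefore becomes
\[
U_{\theta+\delta}\,\PC R_B(\varepsilon-\delta)\PC .
\]

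Next I will expand the residual rotation as $R_B(\varepsilon-\delta)=\cos\frac{\varepsilon-\delta}{2}I-i\sin\frac{\varepsilon-\delta}{2}B$. The only fact needed is $\PC B\PC=0$. It holds because $B$ has nonzero syndrome: $B$ maps $\cC$ into the orthogonal syndrome-$s$ sector, exactly as in the orthogonality argument in the proof of Theorem~\ref{thm:two-rotations}. Equivalently, $h$ anticommutes with $B$ and fixes $\PC$, so $\PC B\PC=\PC hBh\PC=-\PC B\PC$. Hence $\PC R_B(\varepsilon-\delta)\PC=\cos\frac{\varepsilon-\delta}{2}\PC$, which gives Eq.~\eqref{ge:miscalibration} directly.

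For the interpretive claims, I apply the identity to a normalized $|\psi\rangle\in\cC$. The accepted vector is $\cos\frac{\varepsilon-\delta}{2}R_B(\theta+\delta)|\psi\rangle$, and $R_B(\theta+\delta)$ is unitary. So the acceptance probability is $\cos^2\frac{\varepsilon-\delta}{2}$, independent of the input. Whenever this probability is nonzero, the normalized accepted state is exactly $R_B(\theta+\delta)|\psi\rangle$. This is the desired $R_B(\theta)|\psi\rangle$ followed by the residual rotation $R_B(\delta)$, so the residual error is $\delta$ whatever $\varepsilon$ is. Linearity extends the conclusion to inputs entangled with a reference.

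There is no real obstacle; the whole argument is a short calculation. The one point needing care is confirming that $Q_{\theta+\delta}$ as defined, with $P_0$ imposing the retained checks and $(I+G_{\theta+\delta})/2$, equals the conjugated projector $U_{\theta+\delta}\PC U_{\theta+\delta}^\dagger$ at the shifted angle. This follows as in Eq.~\eqref{ge:full-projector}: $\cS_0$ commutes with $B$, and $G_\phi=U_\phi hU_\phi^\dagger$ for every $\phi$.
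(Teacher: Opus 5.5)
Your proposal is correct and follows the paper's own proof. Both rewrite $Q_{\theta+\delta}$ as $R_B(\theta+\delta)\PC R_B(\theta+\delta)^\dagger$ via Eq.~\eqref{ge:full-projector}, reduce the left side to $R_B(\theta+\delta)\PC R_B(\varepsilon-\delta)\PC$, and finish with $\PC B\PC=0$. Your additional check that $P_0(I+G_\phi)/2=U_\phi\PC U_\phi^\dagger$ holds at every angle, and your $h$-anticommutation argument for $\PC B\PC=0$, are valid details that the paper leaves implicit.
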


\begin{proof}
Using Eq.~\eqref{ge:full-projector}, the left side is $R_B(\theta+\delta)\PC R_B(\varepsilon-\delta)\PC$. Since $B$ has nonzero syndrome, $\PC B\PC=0$, and expansion of the remaining rotation gives Eq.~\eqref{ge:miscalibration}.
\end{proof}

\subsection{\texorpdfstring{A protected direct Golay gate}{A protected direct Golay gate}}
\label{app:golay-completion}

The direct Golay circuit specified in \SM{}~\ref{app:gc-control}--\ref{app:gc-recovery} and illustrated in Fig.~\ref{fig:golay-complete} protects $G_B$ with two verified cat controls and replaces $R_A(\pi/2)$ by the completion by Pauli measurements and corrections of Theorem~\ref{pm:adaptive}. 
It acts on the original $23$-qubit block without an encoded magic-state resource and requires only one- and two-qubit Clifford+$T$ gates. 
Transporting the check through the circuit follows Ref.~\cite{dasu2026flagging}; {the single-fault correctness proof of Theorem~\ref{thm:golay-complete} uses} the full retained syndrome and Theorem~\ref{ge:restart}.

\begin{figure}[!htbp]
\centering
\begin{tikzpicture}[font=\small,box/.style={draw,rounded corners=2pt,align=center,inner sep=5pt},flow/.style={-{Stealth[length=2mm]},thick}]
\node[anchor=west] at (-1.7,1.55) {(a) Complete gate, with attempt number $j\in\{1,2\}$};
\node[box,text width=2.7cm,minimum height=1.55cm] (lead) at (0,0) {Full Golay correction};
\node[above=2pt of lead] {Encoded input; $j=1$};
\node[box,text width=3.4cm,minimum height=1.55cm] (monitor) at (4.05,0) {Prepare cats $a,c$\\with verification\\ Rotation and monitor in (b)};
\node[box,text width=6.0cm,minimum height=1.55cm] (finish) at (10.25,0) { Correction using retained checks\\Protected $G\to y$; if $y=-1$, $M\to r$\\Protected $h\to z$; $A$ if $z=-1$\\$L$ if $y=-1$ and $rz=-1$\\Full Golay correction; $R_L(\pi/4)$ output};
\node[box,text width=3.1cm] (recover) at (0,-2.25) {Full Golay syndrome\\Apply $F^\dagger$ for its modeled hypothesis $F$; set $j=2$};
\node[box,text width=2.0cm] (abort) at (8.15,-2.25) {Located abort};
\node[box,text width=3.5cm] (caps) at (11.85,-2.25) {Cat or correction cap without success, or no recovery hypothesis};
\draw[flow] (lead) -- (monitor);
\draw[flow] (monitor) -- node[above]{accept} (finish);
\draw[flow] (monitor.south) |- node[pos=.7,above]{reject, $j=1$} (recover.east);
\draw[flow] (recover.north) -- node[left]{restart} (lead.south);
\draw[flow] (monitor.south) -- (4.05,-1.6) -| (abort.north);
\node[above] at (6.0,-1.6) {reject, $j=2$};
\draw[flow] (caps.west) -- (abort.east);
\node[anchor=west] at (-1.7,-3.55) {(b)  Rotation and monitor: chronological order follows the arrows};
\node[box,text width=3.1cm] (raw) at (0,-4.45) {$U=R_B(\pi/4)$};
\node[box,text width=3.1cm] (ha) at (4,-4.45) {$C_a(h)$};
\node[box,text width=3.1cm] (hc1) at (8,-4.45) {$C_c(h)$};
\node[box,text width=3.1cm] (w) at (12,-4.45) {$W=C_{a_1}(U^2)$};
\node[box,text width=3.1cm] (hc2) at (12,-5.65) {$C_c(h)$};
\node[box,text width=3.1cm] (terminal) at (8,-5.65) {$V_1,V_2,V_3,V_4$};
\node[box,text width=3.1cm] (phase) at (4,-5.65) {$CS^\dagger_{a_1,c_1}$};
\node[box,text width=3.1cm] (flags) at (0,-5.65) {Six $ZZ$ checks\\Eight $X$ readouts\\Flags $f_a,f_c$};
\node[box,text width=3.1cm] (retained) at (0,-7.65) {Reset/reuse ancillas\\Measure $g_1,\ldots,g_{21}$\\Record $\boldsymbol t$};
\node[box,text width=7.3cm] (accept) at (6.7,-7.65) {Accept iff all six $ZZ$ results, $f_a,f_c$,\\and $t_1,\ldots,t_{21}$ are $+1$};
\draw[flow] (raw) -- (ha);
\draw[flow] (ha) -- (hc1);
\draw[flow] (hc1) -- (w);
\draw[flow] (w) -- (hc2);
\draw[flow] (hc2) -- (terminal);
\draw[flow] (terminal) -- (phase);
\draw[flow] (phase) -- (flags);
\draw[flow] (flags) -- (retained);
\draw[flow] (retained) -- (accept);
\end{tikzpicture}%
\caption{Complete protected direct Golay gate of Theorem~\ref{thm:golay-complete}. Panel (a) includes the accepted and rejected branches;  panel (b) expands the rotation and monitor using Eqs.~\eqref{gc:pairs}--\eqref{gc:terminal}, with the initial controlled-$h$ of Eq.~\eqref{gc:sequence} omitted. The $21$ generators $g_i$ are listed in Table~\ref{tab:gc-retained}. Each logical measurement of $G$, $M$ or $h$ uses three verified extractions and  error correction using the retained checks before the first and after each, as specified in \SM{}~\ref{app:gc-completion}; the conditional Pauli corrections are also specified there. On first rejection, the full syndrome selects the recovery hypothesis of Eq.~\eqref{gc:recovery-family}, as explained in \SM{}~\ref{app:gc-recovery}. All branches use the same ancillas serially, with at most two candidates per requested cat and four rounds per correction; reaching a cap without success or obtaining no recovery hypothesis gives the located abort of Definition~\ref{def:located-failure}.}
\label{fig:golay-complete}
\par
\end{figure}

\subsubsection{Control preparation and circuit order}
\label{app:gc-control}

Prepare two four-qubit cat registers $a=(a_1,\ldots,a_4)$ and $c=(c_1,\ldots,c_4)$ in $(|0000\rangle+|1111\rangle)/\sqrt2$. Use the same assignment of data pairs for both registers:
\begin{equation}
(1,2),\qquad(14,9),\qquad(15,16),\qquad(21,17).
\label{gc:pairs}
\end{equation}
Each  cat-state qubit therefore couples to one qubit in the rotation support and one qubit outside $\Omega_B$. Define $C_a(h)$ by the eight CZ gates from each $a_j$ to its two assigned data qubits, and define $C_c(h)$ similarly. On the cat repetition-code spaces these implement controlled $h$. Prepare each cat with a fixed Clifford encoder and verify its three adjacent $ZZ$ parities before any data coupling. A parity uses one reset ancilla in $|0\rangle$, two CNOTs from the corresponding  cat-state qubits, and a $Z$ readout. Allow two serial candidates for each requested cat. 
Two failed candidates are a located abort (Definition~\ref{def:located-failure}) and require at least two faults. 
Both monitor cats are ready before the raw rotation.
For this Golay circuit, the flag in Definition~\ref{def:located-failure} declares a located abort if both candidates for any requested cat are rejected, if a correction or recovery procedure reaches four syndrome rounds without two consecutive identical records, if the full syndrome used for restart recovery has no representative in Eq.~\eqref{gc:recovery-family}, or if the second monitor attempt is rejected.
These stopping rules apply also to leading and trailing correction and to the correction steps in the  completion by protected Pauli measurements and corrections.
A first monitor rejection instead initiates the input recovery of \SM{}~\ref{app:gc-recovery} and, if that recovery completes, a second attempt.
\par

In chronological order, consider the sequence whose implemented portion is shown in Fig.~\ref{fig:golay-complete}(b)
\begin{equation}
\begin{aligned}
C_a(h),\ U,\ C_a(h),\ C_c(h),\ W,\ C_c(h),\ V,\\
W=C_{a_1}(U^2),\qquad V=C_{a,c}(U^4).
\end{aligned}
\label{gc:sequence}
\end{equation}
The first $C_a(h)$ is only needed to state an operator identity on arbitrary data; omit it in the gate, since it acts trivially on a fault-free input in $\cC$. The raw rotation is supported on $\Omega_B$. Implement $W$ entirely within $\Omega_B\cup\{a_1\}$ using
\begin{equation}
W=R_B(\pi/4)R_{Z_{a_1}B}(-\pi/4).
\label{gc:W}
\end{equation}
Each rotation has a parity-ladder implementation with one physical $T$-type gate. By Lemma~\ref{lem:compiled-support}, a fault within either completed unitary block is equivalent at its output to an arbitrary error on its stated support; no protected non-Clifford primitive is assumed.

For $V$, write $(B_1,B_2,B_3,B_4)=(Y_1,Z_{14},Z_{15},Z_{21})$. Apply four separate doubly controlled rotations
\begin{equation}
V_j=CC_{a_j,c_j}R_{B_j}(\pi),\qquad j=1,\ldots,4,
\label{gc:terminal}
\end{equation}
and then apply $CS^\dagger$ to $a_1,c_1$. Each $V_j$ touches only one data qubit. On the two cat code spaces the four rotations supply $B^{ac}$, and the final control phase supplies $(-i)^{ac}$; their product is $U^{4ac}$. This compilation must not use a shared data parity ladder for the four terminal targets. The control phase occurs after all terminal data couplings. These support and ordering conditions are used in the recovery proof.

\begin{lemma}[Ideal coherent-control identity]\label{lem:gc-identity}
The full sequence in Eq.~\eqref{gc:sequence} equals $U$ on the data and the identity on the two logical cat controls. After omission of its first $C_a(h)$, the portion after the raw rotation implements $C_a(G_B)$, independently of $c$.
\end{lemma}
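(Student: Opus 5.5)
I will verify Lemma~\ref{lem:gc-identity} by working on the two cat code spaces, where each register acts as a single logical control bit. On the span of $|0000\rangle$ and $|1111\rangle$, $C_a(h)$ is $|0\rangle\langle0|_a\otimes I+|1\rangle\langle1|_a\otimes h$, because the eight CZ gates from a register in $|1111\rangle$ apply $Z$ on the union of the assigned pairs in Eq.~\eqref{gc:pairs}. That union is $\{1,2,9,14,15,16,17,21\}=\supp(h)$, so the result is exactly $h$. The same holds for $C_c(h)$. Likewise $W=C_{a_1}(U^2)$ acts as controlled-$U^2$ on the logical bit $a$. From Eq.~\eqref{gc:W}, with $R_{Z_{a_1}B}(-\pi/4)$ equal to $R_B(\mp\pi/4)$ for $a_1=0,1$, $W$ gives $I$ on $a=0$ and $R_B(\pi/2)=U^2$ on $a=1$. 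Finally, $V=\prod_jV_j$ followed by $CS^\dagger_{a_1,c_1}$ acts as $U^{4ac}$. Here $U^4=R_B(\pi)=-iB$, and $\prod_j(-iB_j)$ combined with the phase equals $-iB$ up to the sign convention for $Y_1$. I will check that sign once explicitly.

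With these reductions, the full sequence is a product of operators block-diagonal in $(a,c)\in\{0,1\}^2$. I will evaluate each block, reading right to left in operator order (the chronological order is $C_a(h),U,C_a(h),C_c(h),W,C_c(h),V$). The $(a,c)$ block is
\[
U^{4ac}\,h^{c}\,U^{2a}\,h^{c}\,h^{a}\,U\,h^{a}.
\]
The key relation is $hUh=U^\dagger$, which holds since $\{h,B\}=0$. More generally, $hU^mh=U^{-m}$.

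\begin{itemize}
\item $(0,0)$: the block is $U$.
\item $(1,0)$: it is $U^2\,hUh=U^2U^{-1}=U$.
\item $(0,1)$: it is $h\,U\,h\cdots$. Here $h^c$ acts on both sides of the identity in the $W$ slot, so the $c$ conjugations cancel, leaving $U$.
\item $(1,1)$: it is $U^4\,hU^2h\,hUh=U^4U^{-2}U^{-1}=U$.
\end{itemize}

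So every block is $U$, which proves the first claim. For the second claim, I drop the first $C_a(h)$ and act on inputs $U\PC$. The remaining operator has block $U^{4ac}h^cU^{2a}h^ch^a$. Using $h\PC=\PC$ and $U^\dagger hU=\ldots$, I rewrite $h^aU=UU^{-1}h^aU$. I will then show that the $a=1$ block equals $G_BU=UhU^\dagger U$, independent of $c$. This follows from the identity $G_B=R_B(\pi/2)h=U^2h$ (proved with Lemma~\ref{lem:transported}). For $c=0$ the block is $U^2h$. For $c=1$ it is $U^4hU^2hh=U^4U^{-2}h=U^2h$. In both cases the block equals $G_B$ as an operator, for every data input. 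The $a=0$ block is $U^{0}h^chh^c\to$ identity after cancellation. Thus the portion implements $C_a(G_B)$.

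I expect the main obstacle to be bookkeeping rather than concept. Three points need care. First, the phases in $V$: the doubly controlled $R_{B_j}(\pi)$ each give $-i$, so the four together give $(-i)^4=1$ times $B$, and $CS^\dagger$ must supply the extra $-i$, together with the sign of $\bar Y$-type factors in $B=Y_1Z_{14}Z_{15}Z_{21}$. Second, checking that the CZ assignment reproduces $h$ exactly rather than a stabilizer-equivalent operator. Third, keeping operator order distinct from chronological order. I will verify the phase of $U^{4ac}$ first, since an error there would spoil only the $(1,1)$ block.
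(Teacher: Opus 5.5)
Your proposal is correct and follows the paper's proof. Like the paper, you fix the logical cat values $(a,c)$, write the data block as $U^{4ac}h^cU^{2a}h^ch^aUh^a$, and use $hUh=U^{-1}$ to obtain $U$ in all four blocks; the paper packages this as the exponent identity $(-1)^a+2a(-1)^c+4ac=1$ and gets the second claim by cancelling the raw rotation on the right, $Uh^aU^\dagger=G_B^a$, whereas you compare directly with $G_B=U^2h$. Two cosmetic fixes: the $a=0$ post-rotation block is $h^ch^c=I$ (your displayed expression has a stray $h$), and there is no $Y_1$ sign issue because $\prod_jB_j=B$ exactly, so the $(1,1)$ terminal block is $(-i)^4B\cdot(-i)=-iB=U^4$.
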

\begin{proof}
For fixed cat values $a,c\in\{0,1\}$, the data operation is
\begin{equation}
U^{4ac}h^cU^{2a}h^ch^aUh^a
=U^{(-1)^a+2a(-1)^c+4ac}=U,
\label{gc:identity}
\end{equation}
where $hUh=U^{-1}$ and the exponent equals one in all four cases. Removing the rightmost $h^a$ leaves $Uh^a$. Factoring off the raw rotation on the right gives the operation after the rotation $Uh^aU^\dagger=G_B^a$.
\end{proof}

After $V$, remeasure all adjacent $ZZ$ parities of both cats, then measure their eight qubits in $X$. Let $f_a,f_c$ be the products within the two registers. 
Accept the cat stage only when all six parity checks and both products are positive. 
Reset the eight cat-state qubits and reuse them for one complete verified-cat extraction of $\cS_0$, with one-to-one cat--data couplings. 
Let $g_1,\ldots,g_{21}$ be the signed generators in Table~\ref{tab:gc-retained}.
For each $g_i$, prepare a verified eight-qubit cat, apply the eight controlled single-qubit Pauli factors of $g_i$ with one  cat-state qubit per supported data qubit, and measure all eight  cat-state qubits in $X$.
Their product $t_i\in\{\pm1\}$ is the reported outcome of measuring $g_i$; reset and reuse the ancillas between generators.
The retained measurement record is $\boldsymbol t=(t_1,\ldots,t_{21})$.
\par
Accept this stage precisely when $t_i=+1$ for every $i$, equivalently when all $21$ retained syndrome bits are zero, as shown in Fig.~\ref{fig:golay-complete}(b). 
A negative final cat parity, $f_a=-1$, $f_c=-1$, or any $t_i=-1$ rejects the current attempt.
After the first rejection, measure a full original-code syndrome, including $h$, and apply the inverse of its hypothesis in Eq.~\eqref{gc:recovery-family} to recover the original input before restarting, as specified in \SM{}~\ref{app:gc-recovery} and Fig.~\ref{fig:golay-complete}(a).
The second rejection gives a located abort as in Definition~\ref{def:located-failure}.

\subsubsection{\texorpdfstring{Measurement-based completion and resource counts}{Measurement-based completion and resource counts}}
\label{app:gc-completion}

After acceptance, correct using only $\cS_0$, then follow the accepted branch in Fig.~\ref{fig:golay-complete}(a). Representatives for the  completion by protected Pauli measurements and corrections are
\begin{align}
G=iAh&\equiv Y_1Z_4Z_5Z_8Z_{16}\pmod{\cS_0},\nonumber\\
M=AhL=-iBh&=X_1Z_2Z_9Z_{16}Z_{17},\nonumber\\
h&=Z_1Z_2Z_9Z_{14}Z_{15}Z_{16}Z_{17}Z_{21}.
\label{gc:completion}
\end{align}
Measure $G$ with result $y$. If $y=-1$, measure $M$ with result $r$. Measure $h$ with result $z$, apply $A$ when $z=-1$, and additionally apply $L$ when $y=-1$ and $rz=-1$. Theorem~\ref{pm:adaptive} proves that every corrected branch implements $R_L(\pi/4)$.

Protect each logical measurement by three verified  check measurements, with  error correction using the retained checks before the first and after each, and compare signs in the corrected Pauli frame as in Lemma~\ref{pm:measurement}. Correction measures only $\cS_0$ and does not fix $h$ prematurely. Repeat  complete syndrome measurement rounds until two consecutive records agree, with a cap of four rounds, as in Lemma~\ref{c22:cat}; declare a located abort (Definition~\ref{def:located-failure}) if no agreement is found. With at most one fault, rounds before that fault have one fixed syndrome, rounds after it have a fixed syndrome, and only its containing round can have a mixed record; agreement is obtained by round four. Verified one-to-one extraction leaves at most one data error and at most one  unreliable individual measurement record. The intervening clean correction removes the data error, and the other two logical-measurement records determine the correct decoded result. This procedure replaces the  unprotected second rotation entirely.

The monitor uses $23$  data qubits, two four-qubit cats, and one verification ancilla. After the monitor, reuse the eight  cat-state qubits for weight-eight syndrome extraction and the weight-five or weight-eight logical checks in Eq.~\eqref{gc:completion}.  All generators of the original Golay code can also be chosen weight eight. Thus
\begin{equation}
n_{\rm peak}=23+4+4+1=32.
\label{gc:peak}
\end{equation}
This count assumes serial ancilla reuse with reset and flexible two-qubit connectivity. Routing, parallel ancilla preparation, and conversion to another memory code have separate costs. The $23$  data qubits are unchanged, although the data alone need not lie in a fixed stabilizer code during a partial controlled-check operation.

For a terminal $Z$ target with computational bit $t$, the phase identity
\begin{equation}
t-(a\mathbin\oplus t)-(c\mathbin\oplus t)
 +(a\mathbin\oplus c\mathbin\oplus t)=(-2+4t)ac
\label{gc:phase-polynomial}
\end{equation}
gives an exact four-$T$ circuit for $CC\,R_Z(\pi)$: in chronological order apply $T_t$, $\mathrm{CX}_{a,t}$, $T_t^\dagger$, $\mathrm{CX}_{c,t}$, $T_t$, $\mathrm{CX}_{a,t}$, $T_t^\dagger$, $\mathrm{CX}_{c,t}$. For the $Y$ target, put $F=R_X(-\pi/2)$ and apply $F^\dagger$ before this circuit and $F$ after it. The last $CS^\dagger$ uses three $T$-type gates: apply $T^\dagger$ to both controls, compute their XOR into one control, apply $T$ there, and uncompute. Therefore
\begin{equation}
N_T=1+2+4\cdot4+3=22
\label{gc:T-count}
\end{equation}
per attempt, counting the raw rotation as its one-$T$ compilation. Two attempts use at most $44$ $T$-type gates; retries of cat preparation consume none. No optimality or space--time advantage is asserted.

\subsubsection{\texorpdfstring{Recovery and a finite schedule}{Recovery and a finite schedule}}
\label{app:gc-recovery}

Every rejected single-fault branch has a Kraus operator in the span of Paulis on $\Omega_B$ and at most one unknown outside qubit, acting on the original input. To see this, move complete factors of $h$ to the input, where they act trivially. All rotations and terminal data operations are supported on $\Omega_B$. A cat bit fault contributes at most the one outside qubit assigned to its wire in Eq.~\eqref{gc:pairs}. A fault in $V_j$ may affect both cat registers, but all controlled-$h$ passes have already finished. It cannot propagate to further outside data qubits. A sole retained-extraction fault adds at most one data qubit. An encoder fault that fails initial cat verification is discarded before data coupling and uses the allowed replacement.

Theorem~\ref{ge:restart} therefore applies to every rejected branch, using the $14\,848$ hypotheses of Eq.~\eqref{gc:recovery-family}. Full original-code syndrome extraction and the inverse of its unique hypothesis recover the input, including entanglement with a reference.
For arbitrary observed records, complete the decoder by declaring a located abort (Definition~\ref{def:located-failure}) if the extracted syndrome has no hypothesis in Eq.~\eqref{gc:recovery-family}; the single-fault recovery guarantee ensures that this case cannot occur in the one-fault regime. This recovery uses the known rotation support; it is not unrestricted minimum-weight decoding.

A fault-free history never reaches restart recovery. With at most one fault, a reached recovery branch has already used that fault, so recovery is ideal in this part of the proof. Recovery operations remain locations in the noise model; a fault during recovery together with the earlier rejection is a two-fault event. The same verified-cat and repeated-syndrome procedure can be used for full Golay recovery. Allow two gate attempts, as shown in Fig.~\ref{fig:golay-complete}(a): recover after the first rejection, and declare a located abort (Definition~\ref{def:located-failure}) after the second. With one fault, the second attempt is ideal whenever it is needed. Together with the two-candidate cap for every cat in \SM{}~\ref{app:gc-control} and the four-round correction cap in \SM{}~\ref{app:gc-completion}, this gives a fixed finite schedule.

\subsubsection{\texorpdfstring{Single-fault correctness}{Single-fault correctness}}
\label{app:gc-single-fault}

Use local stochastic circuit noise, including preparations, measurements, resets, and waits, with reliable classical processing. A specified set of $r$ faulty locations has probability at most $p^r$, and a faulty location may act arbitrarily on its participating qubits. All displayed gates can be compiled into one- and two-qubit Clifford+$T$ gates.

\begin{theorem}[Protected direct Golay gate]\label{thm:golay-complete}
Use leading Golay correction, the rotation and monitor of \SM{}~\ref{app:gc-control}, error correction using the retained checks, the completion by protected Pauli measurements and corrections of \SM{}~\ref{app:gc-completion}, and trailing Golay correction. 
Here the leading and trailing Golay corrections use all $22$ independent stabilizer checks of the original $\qcode{23}{1}{7}$ code, before the rotation and after the final Pauli corrections, respectively.
The completion measures $G$ with outcome $y$, measures $M$ with outcome $r$ if $y=-1$, and measures $h$ with outcome $z$, then applies $A$ if $z=-1$ and additionally $L$ if $y=-1$ and $rz=-1$.
Each logical measurement is protected by three verified check measurements, with correction using $\cS_0$ before the first and after each, as specified in \SM{}~\ref{app:gc-completion}.

If the number of incoming physical errors plus internal faulty locations is at most one, an attempt either accepts the ideal $R_L(\pi/4)$ output up to at most one correctable physical error, or recovers the unknown encoded input exactly for a restart. A fault-free attempt is never rejected. With the stated serial schedule, one attempt uses at most $32$ simultaneous physical qubits and, when fully compiled, $22$ single-qubit $T$-type gates. Two gate attempts and at most two candidates for each fresh cat give a finite circuit whose logical failure or located abort requires at least two faults.
\end{theorem}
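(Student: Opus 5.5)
The plan is to split the attempt into its chronological stages and do a case analysis on where the single fault (or the single incoming error) sits. I will first treat Pauli faults and then extend to arbitrary local faults and entangled inputs by Pauli expansion of the Kraus operators, as in Lemma~\ref{c22:cat}. The stages are: leading Golay correction; cat preparation and verification; the raw rotation $U$ together with the monitor of Eq.~\eqref{gc:sequence}; retained-check extraction and the acceptance test; correction using $\cS_0$; the protected measurements of $G$, $M$ and $h$ with Pauli feed-forward; and trailing Golay correction.

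\textbf{No faults.} With no faults, Lemma~\ref{lem:gc-identity} shows that the monitor implements $C_a(G_B)$ after $U$. By Lemma~\ref{lem:transported}, the verified cats then return $f_a=f_c=+1$ and all $t_i=+1$, so a fault-free attempt is accepted. Theorem~\ref{pm:adaptive} then gives $R_L(\pi/4)$ in every corrected branch.

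\textbf{Incoming error.} A single incoming error is removed by the fault-free leading correction, using distance seven and Lemma~\ref{c22:cat}.

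\textbf{Faults before or after the monitor.} A fault in the leading correction leaves at most one residual error, which then enters the monitor stage; I will treat it together with monitor faults. Faults after acceptance are handled stage by stage. Lemma~\ref{c22:cat} covers the corrections using $\cS_0$, since $\cD$ has distance four. Lemma~\ref{pm:measurement} covers the three-extraction majority measurements on the retained code. The trailing Golay correction removes or leaves at most one residual error. Pauli feed-forward does not spread errors.

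\textbf{Faults in the rotation and monitor.} This is the core step. By Lemma~\ref{lem:compiled-support}, a fault inside $U$ or $W$ is an arbitrary error on $\Omega_B$ or $\Omega_B\cup\{a_1\}$. By the pair assignment of Eq.~\eqref{gc:pairs}, a cat bit fault adds at most one outside data qubit. A fault in $V_j$ touches one data qubit, and a retained-extraction fault adds at most one data qubit. So every single-fault branch acts on the input as an element of $\operatorname{span}\mathcal F_{\Omega_B,1}$, possibly together with flipped cat or record outcomes.

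For accepted branches I want Theorem~\ref{ge:channel-filter}: by Eq.~\eqref{gc:centralizer}, an error on $\Omega_B$ plus at most two outside qubits that survives all $21$ retained checks and $G_B$ is a multiple of the ideal output. What must be shown is that the physical acceptance test reproduces the ideal projector $Q_{\pi/4}$ up to correctable residuals. I would check this in cases:
\begin{itemize}
\item an $X$-type cat fault before the $C_c(h)$ passes is caught by the final $ZZ$ parities or by $f_c$, which is the role of the second control $c$;
\item a phase fault on a cat can flip $f_a$, but then rejection is harmless;
\item a fault during the retained extraction can wrongly accept only if the data error it leaves is a single Pauli, which is then correctable by the following $\cS_0$ correction.
\end{itemize}

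\textbf{Rejected branches.} Theorem~\ref{ge:restart} with $w=4$, $t=1$ and $d=7$ gives exact recovery of the input from the unique hypothesis among the $14\,848$ in Eq.~\eqref{gc:recovery-family}. Because the single fault has then been spent, the recovery and the second attempt are ideal.

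\textbf{Main obstacle.} I expect the hardest part to be this accepted-branch analysis of correlated faults, in particular faults in $V_j$ or $CS^\dagger$ that hit both cat registers and the data simultaneously. The plan is to enumerate fault locations in order along Eq.~\eqref{gc:sequence}, use the fact that all $C_c(h)$ passes finish before $V$, and verify in each case that either a flag fires or the residual data error lies in $\mathcal F_{\Omega_B,1}$ and commutes with $\cS_0$. In the second case Eq.~\eqref{gc:centralizer} forces that residual to be $I$ or $B$, and the $G_B$ outcome of the ideal $c$-independent control removes $B$.

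\textbf{Resource counts and finiteness.} The count of $32$ qubits follows from Eq.~\eqref{gc:peak} and the count of $22$ $T$-type gates from Eq.~\eqref{gc:T-count}. The two-fault requirement for failure or abort follows from the caps: two rejected cat candidates, four syndrome rounds without agreement, a syndrome with no recovery hypothesis, or two rejected attempts each need at least two faults, by Lemma~\ref{c22:cat} and the recovery argument above.
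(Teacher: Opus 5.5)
Your overall decomposition matches the paper's: stage-by-stage single-fault cases, Pauli expansion of Kraus operators, Theorem~\ref{ge:restart} for rejected branches, and Eqs.~\eqref{gc:peak} and~\eqref{gc:T-count} for resources. However, the accepted-branch argument has a genuine gap at exactly the correlated-fault case the monitor is designed for.

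\textbf{The gap.} Your closing step reduces a residual in $\operatorname{span}\mathcal F_{\Omega_B,1}$ that commutes with $\cS_0$ to $I$ or $B$ via Eq.~\eqref{gc:centralizer}, and then lets ``the $G_B$ outcome'' (the flag $f_a$) remove $B$. That works for a $B$ component created in the raw rotation $U$. Such a $B$ anticommutes with the single implemented $C_a(h)$ pass and flips $f_a$; it meets both $C_c(h)$ passes, so $f_c$ is unchanged.

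It fails for a fault inside the compiled block $W=C_{a_1}(U^2)$. By Lemma~\ref{lem:compiled-support}, such a fault can leave an arbitrary error on $\Omega_B\cup\{a_1\}$, including the weight-four $B$. Because $B$ commutes with $W$ and with every $V_j$ and anticommutes with $h$, a $B$ emerging from $W$ meets only the second $C_c(h)$ pass. It therefore leaves $f_a=+1$ and flips only $f_c$. Conversely, the component $Z_{a_1}$ flips only $f_a$.

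The paper treats this case separately. After the final $ZZ$ parities and retained extraction, only the components $B^bZ_{a_1}^s$ remain, with flag pairs $(f_a,f_c)$ given by $I\mapsto(+,+)$, $Z_{a_1}\mapsto(-,+)$, $B\mapsto(+,-)$, $Z_{a_1}B\mapsto(-,-)$. Only the identity is accepted, and neither flag alone suffices. You also cannot shortcut this with Theorem~\ref{ge:channel-filter}: it presumes an ideal $G_B$ measurement, whereas here the fault sits inside the circuit that realizes it.

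\textbf{Role of the second cat.} You have misassigned it. Bit-flip faults on either register are not detected by $f_c$, which is a product of $X$ readouts. They are caught by the six final $ZZ$ parities. Every data--control block is diagonal in the cat computational basis, so a nonconstant pattern persists to final verification; the same argument removes the $X$ and $Y$ components on $a_1$ from a $W$ fault. The register $c$ exists to catch the $B$ component introduced inside $W$, which $f_a$ misses.

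\textbf{Main obstacle.} Your designated obstacle is likewise misplaced. A $V_j$ fault without cat bit errors leaves a single-qubit data error after all $h$ passes have finished, and the retained checks detect it because $\cD$ has distance four. A fault in $CS^\dagger$ acts only on controls.

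\textbf{Rejected branches.} To place these in $\operatorname{span}\mathcal F_{\Omega_B,1}$, state explicitly that complete factors of $h$ are moved onto the input, where $h\PC=\PC$. Outside support then arises only from a partial $h$ factor caused by a cat bit fault, and by Eq.~\eqref{gc:pairs} that touches a single outside qubit.
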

\begin{proof}
A clean leading correction removes one incoming error. If leading correction contains the sole fault, it leaves at most one physical error, which is covered by the following cases. First suppose the fault precedes final retained extraction, so the subsequent retained checks are ideal.

\emph{Cat bit errors.} A cat-encoder fault followed by ideal verification is rejected unless the bit pattern is constant. The all-one pattern is the cat stabilizer. If the only fault is in parity verification, the preceding cat is ideal; a parity CNOT can introduce at most one cat bit error. A verifier phase can propagate to two cat-state qubits but cannot create two dangerous bit errors. All completed data-control blocks are diagonal in their cat computational controls. A nonconstant bit pattern persists until final parity verification and is rejected. The same argument removes the $X$ and $Y$ components of a fault on $a_1$ inside $W$, and nonconstant bit components from a terminal $V_j$.

\emph{Raw rotation and preceding data errors.} Expand a fault on $\Omega_B$ in Paulis. With no cat bit error, the subsequent circuit changes a component $E$ only within the data span of $E$ and $BE$, with possible control phases. These two terms have the same retained syndrome. Equation~\eqref{gc:centralizer} leaves only $I$ and $B$; the $B$ component flips $f_a$ and is rejected. A single error before the rotation or on a waiting outside data qubit obeys the same support argument.

\emph{Faults inside $W$.} Absorb any fault in its compilation into an error after the completed block, supported on $\Omega_B\cup\{a_1\}$. Final cat parity verification removes its bit components. Retained extraction then leaves only $B^bZ_{a_1}^s$. Their flag pairs are
\begin{center}
\begin{tabular}{ccc}
Component&$f_a$&$f_c$\\\hline
$I$&$+$&$+$\\
$Z_{a_1}$&$-$&$+$\\
$B$&$+$&$-$\\
$Z_{a_1}B$&$-$&$-$
\end{tabular}
\end{center}
Thus only the identity survives. The second cat is essential because the $B$ component has $f_a=+1$. Neither $B$ nor the correlated component $Z_{a_1}B$ passes both tests.

\emph{Controlled-$h$ and terminal faults.} In the absence of a cat bit error, a faulty controlled-$h$ gate gives a single-qubit data operator $E_q$ and cat phases. Its suffix produces only $E_q$ and $BE_q$. Every nonidentity $E_q$ has nonzero retained syndrome because $\cD$ has distance four; the same is true of $BE_q$. A pure control phase either changes a flag or leaves the data ideal. A terminal $V_j$ fault without cat bit errors affects only its one data target, so retained extraction detects its nonidentity data components. The other terminal blocks never touch that target. A fault in the last $CS^\dagger$ acts only on controls.

\emph{Extraction, readout, and waits.} If final cat verification or readout contains the sole fault, the data entering that stage are ideal and disentangled from the cats. The fault can therefore cause only harmless acceptance or false rejection. A sole fault in final retained extraction acts on ideal data and leaves at most one physical residual, possibly with an incorrect syndrome record. If accepted, the following clean  error correction using the retained checks removes it. Wait faults on data or controls are included in the preceding support and cat-bit cases.

Consequently an accepted monitor gives $U|\psi\rangle$, possibly with one correctable residual from retained extraction. The argument applies termwise to every fault Kraus operator and therefore includes coherent single-location faults and inputs entangled with a reference system. 
The recovery argument above in \SM{}~\ref{app:gc-recovery} proves exact restoration in rejected branches.
Protected Pauli measurements then implement Theorem~\ref{pm:adaptive}; leading corrections remove any earlier residual, and trailing correction either removes the sole preceding error or leaves at most one error if it is itself faulty.
\end{proof}

This theorem establishes fault tolerance of the monitor for arbitrary unknown inputs in $\cC$, whose ideal state after the non-Clifford rotation lies in the $G_B=+1$ eigenspace.
It does not assert a protected classical eigenvalue for an independent measurement of $G_B$ on arbitrary states in $\cD$: a pure phase fault on the readout cat can flip that label. In the gate, a nominal negative eigenvalue already requires another error.

\subsubsection{{Failure probability and scope}}
\label{app:gc-failure}
Here we convert the single-fault guarantee of Theorem~\ref{thm:golay-complete} into the quadratic failure-or-abort bound quoted in Sec.~\ref{subsec:golayprotected}.
The counting includes the complete two-attempt circuit and all capped preparation, correction and recovery branches of \SM{}~\ref{app:gc-control}--\ref{app:gc-recovery}.
We also distinguish this analytical circuit bound from the algebraic checks of the construction and explain why it does not imply a numerical threshold, a recursive Golay architecture or protection against a shared calibration error as in Eq.~\eqref{ge:miscalibration}.
\par

For an ideally encoded input, pad the adaptive branches into a fixed finite set of $N_G$ elementary locations, including recovery, ancilla candidates, classical-feed-forward waits, and idles.
To determine $N_G$, first fix the elementary Clifford+$T$ decomposition of every operation and the numbers of idle locations required during measurement and classical feed-forward.
Include slots for both gate attempts, both candidates for every requested cat state, all four allowed rounds of each correction, restart recovery, and every conditional Pauli measurement or correction, padding unused branches with idles as needed.
Count one location for each elementary one- or two-qubit gate, preparation, measurement or reset, and for each required one-qubit idle; a two-qubit gate counts as one location.
Summing these slots gives $N_G$, with shared prefix operations counted once and all padding counted.
The displayed circuit identities and retry caps bound this count, but do not assign it a unique numerical value without these decomposition and timing choices.
A union bound over fault pairs gives
\begin{equation}
\Pr(\text{logical failure or located abort})
\leq \binom{N_G}{2}p^2.
\label{gc:failure-bound}
\end{equation}
Let $\mathsf{F}_i$ be the event that location $i\in\{1,\ldots,N_G\}$ is faulty, and let $\mathsf{Bad}$ denote logical failure or located abort.
For an ideally encoded input, Theorem~\ref{thm:golay-complete} implies that $\mathsf{Bad}$ requires at least two faulty locations, so
\[
\mathsf{Bad}\subseteq\bigcup_{1\leq i<j\leq N_G}(\mathsf{F}_i\cap\mathsf{F}_j).
\]
The union bound and the local-stochastic condition of Eq.~\eqref{eq:local-stochastic-budget} therefore give
\begin{align*}
\Pr(\mathsf{Bad})
&\leq\sum_{1\leq i<j\leq N_G}\Pr(\mathsf{F}_i\cap\mathsf{F}_j)
\leq\sum_{1\leq i<j\leq N_G}p^2
=\binom{N_G}{2}p^2.
\end{align*}
No independence assumption is used, and fault sets containing three or more locations are included because they contain at least one pair.
\par
Correctable output residuals are excluded from logical failure. 
This coarse bound counts the complete two-attempt circuit of Theorem~\ref{thm:golay-complete}, including the recovery and finite caps in \SM{}~\ref{app:gc-recovery}; its numerical coefficient follows after counting $N_G$ as specified above.
A malignant-pair analysis instead identifies pairs of elementary locations for which some allowed faults cause logical failure or abort when every other location is ideal.
It can refine the two-fault contribution, but fault sets of three or more locations require a separate bound; no such pair classification is performed here.

Checks of the Golay gate of Theorem~\ref{thm:golay-complete} use the signed code in \SM{}~\ref{app:golay-details}, the retained centralizer in Eq.~\eqref{gc:centralizer}, the recovery family in Eq.~\eqref{gc:recovery-family}, and the Pauli representatives in Eq.~\eqref{gc:completion}.
The coherent-control and elementary-gate identities are given in Lemma~\ref{lem:gc-identity} and Eqs.~\eqref{gc:W} and~\eqref{gc:phase-polynomial}.
They verify the algebra and the stated supports, and do not constitute a noisy-circuit simulation. 
The single-fault proof uses those supports together with verified-cat extraction and completion by protected Pauli measurements and corrections; a useful error coefficient requires the scheduled location count or a malignant-pair analysis.

This is a single-fault guarantee. Golay distance seven alone does not give three-fault correctness for this circuit, and the monitor alone supplies neither a numerical threshold nor a recursive construction. Recursion additionally requires a complete set of protected Clifford, preparation, measurement, and  correction circuits with the necessary rectangle properties. Comparisons with other methods must include ancillas, retries, extraction time, and Clifford locations. Single-location coherent faults are included, whereas calibration errors shared across several non-Clifford locations are multi-location noise. Equation~\eqref{ge:miscalibration} shows why acceptance does not remove a common angle error shared by the rotation and its check circuit.

\section{{Code design}}\label{app:resources}

Here we quantify the rate and syndrome-extraction costs discussed in Sec.~\ref{sec:discussion} for the BCH codes of Proposition~\ref{cor:bch} and their selective concatenations.
Equation~\eqref{eq:active-union-rate} gives the rate when several chosen logical supports are encoded, and Proposition~\ref{prop:logical-access-rate} bounds the cost of encoding the supports of a complete commuting logical basis.
Equation~\eqref{eq:bch-check-interactions} gives the number of direct data--ancilla interactions needed to measure an independent generator basis.
These bounds explain the implementation costs accompanying the high-rate family in Sec.~\ref{sec:intermediate} and the seven-site selective construction of Corollary~\ref{cor:selective-bch}.
\par

\subsection{High-rate code examples and extraction cost}\label{app:rate-design}

\subsubsection{\texorpdfstring{How many logical rotations share the selective protection?}{How many logical rotations share the selective protection?}}
We ask whether selectively encoding several logical Pauli supports can preserve a high code rate. The following bound shows why protecting a complete commuting logical basis costs more than protecting a few selected rotations.
Let an outer $\qcode{n}{K}{d}$ code have a chosen collection of logical Pauli axes $L_1,\ldots,L_r$, and actively encode their combined support $J=\bigcup_a\supp(L_a)$ in 15-qubit inner blocks. Writing $h=|J|$, the data-code rate is exactly
\begin{equation}
R_J=\frac{K}{n+14h},\qquad
1-R_J=\frac{n-K+14h}{n+14h}.
\label{eq:active-union-rate}
\end{equation}
For the BCH family, $n-K=6m$. Therefore $h=o(n)$ preserves $R_J\to1$. In particular, $r=o(n)$ chosen weight-seven axes have $h\leq7r=o(n)$. This counts the code hosting those rotations; it neither establishes their fault tolerance nor guarantees that the chosen axes form a full logical basis.

\begin{proposition}[Rate cost of a full commuting logical basis]
\label{prop:logical-access-rate}
Suppose the same active set $J$ contains the supports of $K$ independent, mutually commuting logical Pauli representatives, one for each member of a complete logical $Z$ basis. Then $h\geq K$ and the selective construction obeys
\begin{equation}
R_J\leq\frac{K}{n+14K}.
\label{eq:full-logical-access-rate}
\end{equation}
For a family with $K/n\to1$, its asymptotic upper bound is $1/15$.
\end{proposition}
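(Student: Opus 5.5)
The plan is a linear-algebra counting argument over $\mathbb F_2$, applied to $Z$-type support vectors on the active set $J$.

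First, reduce the logical representatives to supports inside $J$. By hypothesis each of the $K$ independent, mutually commuting representatives $\bar Z_1,\ldots,\bar Z_K$ of a complete logical $Z$ basis is supported in $J$. Their restriction to $J$ is therefore the whole operator. I would map each representative to its symplectic vector in $\mathbb F_2^{2h}$, using only the coordinates on $J$.

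Second, show these $K$ vectors are linearly independent. Independence of the logical classes means no nontrivial product of them lies in $\cS$; in particular, no nontrivial product is the identity up to phase. Since the operators are supported in $J$, a nontrivial product that is trivial on $J$ would be the identity everywhere, a contradiction. So their $J$-restrictions span a $K$-dimensional subspace.

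Third, use commutation to bound the dimension. The representatives pairwise commute, and each commutes with itself, so their span is isotropic in the symplectic space $\mathbb F_2^{2h}$. An isotropic subspace of a $2h$-dimensional nondegenerate symplectic space has dimension at most $h$. Hence $K\leq h$.

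One could also argue via complementary operators: there are $X$-type logicals $\bar X_j$ anticommuting exactly with $\bar Z_j$. Their restrictions to $J$ would give a $2K$-dimensional space on which the symplectic form is nondegenerate, again forcing $2K\leq 2h$. The isotropic bound is cleaner, however, and avoids assuming that the $\bar X_j$ are supported in $J$.

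Finally, substitute $h\geq K$ into Eq.~\eqref{eq:active-union-rate}. Since $R_J=K/(n+14h)$ decreases in $h$, we get $R_J\leq K/(n+14K)$. Dividing numerator and denominator by $n$ gives $(K/n)/(1+14K/n)$, which tends to $1/15$ when $K/n\to1$.

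The main obstacle is the second step: the linear independence of the restricted vectors must come from independence of the logical classes rather than of the operators as group elements. Because the supports lie entirely in $J$, restriction loses no information, so the step is immediate. I would state this point explicitly.
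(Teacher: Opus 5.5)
Your proposal is correct and follows the paper's proof. Both arguments view the $K$ representatives as symplectic vectors in the $2h$-dimensional space of Paulis on $J$, deduce their linear independence from independence modulo the stabilizer, and use pairwise commutation to bound the isotropic span by $h$. Both then substitute $h\geq K$ into Eq.~\eqref{eq:active-union-rate}; your explicit monotonicity and limit computation for the $1/15$ asymptotic fills in a step the paper leaves implicit.
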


\begin{proof}
Modulo phases, Paulis supported on $J$ form a $2h$-dimensional binary symplectic space. Independence of the logical representatives modulo the stabilizer implies their independence in this physical space. Their pairwise commutation makes their span isotropic, whose dimension is at most $h$. Hence $K\leq h$, and substitution into Eq.~\eqref{eq:active-union-rate} proves the bound.
\end{proof}

The restriction to full-support active encoding is essential: a construction that leaves some of a selected axis unencoded is outside this argument. The proposition is a limitation of this selective implementation rule, not a general bound on fault-tolerant computation with high-rate codes.

\subsubsection{\texorpdfstring{Dense checks and the cost of a syndrome round}{Dense checks and the cost of a syndrome round}}

For the BCH CSS code, let $d_m^\perp$ denote the minimum distance of its classical dual. Every nonidentity stabilizer has weight at least $d_m^\perp$: its nonzero $X$ or $Z$ component belongs to that dual, and the Pauli support contains the support of that component. There are $n-K=6m$ independent stabilizer generators. 

Choose independent generators $g_1,\ldots,g_{6m}$.
A \emph{direct controlled-Pauli coupling} is the two-qubit gate $|0\rangle\langle0|_a\otimes I_i+|1\rangle\langle1|_a\otimes P_i$, with syndrome ancilla $a$ as control and data qubit $i$ as target.
There is one such coupling per supported data qubit in each generator measurement.
For one measurement of each generator by these direct couplings, the \emph{interaction count} is defined as $W_m=\sum_{j=1}^{6m}\wt(g_j)$.
It excludes ancilla preparation and verification, readout, idles and repeated measurements.
For each $j$, the nonidentity generator $g_j$ has a nonzero $X$ or $Z$ component in the classical dual, so $\wt(g_j)\geq d_m^\perp$.
It acts on only $n$ data qubits, so $\wt(g_j)\leq n$.
Adding these inequalities for $j=1,\ldots,6m$ gives
\[
    \begin{aligned}
    6m\,d_m^\perp&=\sum_{j=1}^{6m}d_m^\perp\leq\sum_{j=1}^{6m}\wt(g_j)=W_m,\\
    W_m&=\sum_{j=1}^{6m}\wt(g_j)\leq\sum_{j=1}^{6m}n=6mn.
    \end{aligned}
\]
Thus, we have
\begin{equation}
    6m\,d_m^\perp\leq W_m\leq6mn.
\label{eq:bch-check-interactions}
\end{equation}

The upper bound counts one coupling per supported data qubit in one measurement of each generator; verification and repetition can increase the actual circuit cost. 
The Reed--Muller bound $d_m^\perp\geq2^{m-2}=(n+1)/4$ in the proof of Proposition~\ref{cor:bch} gives $d_m^\perp=\Theta(n)$~\cite{macwilliams1977theory}, so this interaction count $W_m$ is $\Theta(n\log n)$.
Explicitly, substituting $d_m^\perp\geq(n+1)/4$ and $m=\log_2(n+1)$ into Eq.~\eqref{eq:bch-check-interactions} gives
\[
    \frac32(n+1)\log_2(n+1)\leq W_m\leq6n\log_2(n+1).
\]
Since $K=n-6m=\Theta(n)$, this is $W_m/K=\Theta(\log n)$ couplings per encoded qubit.
If a data qubit participates in at most one two-qubit coupling per time step, the coupling depth is at least $W_m/n=\Omega(\log n)$, before connectivity, preparation, or verification costs.

The seven-site selective lift preserves this asymptotic count: lifting an outer check cannot reduce its support, increases it by at most a fixed factor, and adds only a fixed number of inner checks. Thus $K/N\to1$ does not make a syndrome round constant-cost per encoded qubit. Equation~\eqref{eq:bch-check-interactions} applies to separate direct generator measurements; it is not an architecture-independent lower bound against subsystem measurements or other extraction schemes.

\end{document}